\definecolor{blueviolet}{rgb}{0.2, 0.2, 0.6}
\definecolor{webgreen}{rgb}{0,.5,0}
\definecolor{webbrown}{rgb}{.6,0,0}
\definecolor{deepblue}{rgb}{0,0,0.5}
\definecolor{deepred}{rgb}{0.6,0,0}
\definecolor{deepgreen}{rgb}{0,0.5,0}
\DeclareFixedFont{\ttb}{T1}{txtt}{bx}{n}{9} 
\DeclareFixedFont{\ttm}{T1}{txtt}{m}{n}{9}  
\newcommand\pythonstyle{\lstset{
language=Python,
basicstyle=\ttm,
morekeywords={self},              
keywordstyle=\ttb\color{deepblue},
emph={MyClass,__init__},          
emphstyle=\ttb\color{deepred},    
stringstyle=\color{deepgreen},
frame=tb,                         
showstringspaces=false
}}
\newcommand\pythoninline[1]{{\pythonstyle\lstinline!#1!}}
\definecolor{orange}{RGB}{255,127,0}
\def\bra#1{\ensuremath{\mathinner{\langle{#1}|}}}
\def\ket#1{\ensuremath{\mathinner{|{#1}\rangle}}}
\newcommand{\ketbra}[2]{\lvert #1 \rangle \! \langle #2 \rvert}
\newcommand{\norm}[1]{\left\lVert#1\right\rVert}
\newcommand{\tr}{\text{Tr}}
\DeclareMathOperator{\SWAP}{SWAP}
\newtheorem{proposition}{Proposition}
\newtheorem{conjecture}{Conjecture}
\DeclareMathOperator{\Tr}{tr}
\newtheoremstyle{custom}
{3pt}
{3pt}
{}
{}
{\bfseries}
{.}
{.5em}
{}
\theoremstyle{custom}
\newtheorem{theorem}{Theorem}
\newtheorem{corollary}{Corollary}
\newtheorem{definition}{Definition}
\newtheorem{lemma}{Lemma}
\newtheorem{fact}{Fact}
\newcommand{\Id}{I}
\algrenewcommand\alglinenumber[1]{\sf\scriptsize\color{blue}{#1}}
\algrenewcommand\algorithmicrequire{\textbf{Input:}}
\algrenewcommand\algorithmicensure{\textbf{Output:}}
\begin{document}

\title{Generative quantum advantage for classical and quantum problems}

\author{Hsin-Yuan Huang}
    \email[Corresponding author: ]{hsinyuan@caltech.edu}
	\affiliation{Google Quantum AI, 340 Main Street, Venice, CA 90291, USA}
    \affiliation{Department of Physics, California Institute of Technology, Pasadena 91125, CA, USA}
\author{Michael Broughton}
	\affiliation{Google Quantum AI, 340 Main Street, Venice, CA 90291, USA}
\author{Norhan Eassa}
	\affiliation{Google Quantum AI, 340 Main Street, Venice, CA 90291, USA}
        \affiliation{Department of Physics and Astronomy, Purdue University, West Lafayette, IN 47906, USA}
    \author{Hartmut Neven}
	\affiliation{Google Quantum AI, 340 Main Street, Venice, CA 90291, USA}
    \author{Ryan Babbush}
	\affiliation{Google Quantum AI, 340 Main Street, Venice, CA 90291, USA}
\author{Jarrod R.~McClean}
    \email[Corresponding author: ]{jmcclean@google.com}
	\affiliation{Google Quantum AI, 340 Main Street, Venice, CA 90291, USA}
\date{\today}

\begin{abstract}
Recent breakthroughs in generative machine learning, powered by massive computational resources, have demonstrated unprecedented human-like capabilities. While beyond-classical quantum experiments can generate samples from classically intractable distributions, their complexity has thwarted all efforts toward efficient learning. This challenge has hindered demonstrations of \emph{generative quantum advantage}: the ability of quantum computers to learn and generate desired outputs substantially better than classical computers. We resolve this challenge by introducing families of generative quantum models that are hard to simulate classically, are efficiently trainable, exhibit no barren plateaus or proliferating local minima, and can learn to generate distributions beyond the reach of classical computers. Using a $68$-qubit superconducting quantum processor, we demonstrate these capabilities in two scenarios: learning classically intractable probability distributions and learning quantum circuits for accelerated physical simulation. Our results establish that both learning and sampling can be performed efficiently in the beyond-classical regime, opening new possibilities for quantum-enhanced generative models with provable advantage.
\end{abstract}

\maketitle



\section{Introduction}
Fueled by the overwhelming growth of computational power, recent progress in machine learning has continued to redefine the boundaries of what was once deemed impossible. Much of this progress has focused on large language models (LLMs)~\cite{vaswani2017attention} and diffusion models~\cite{ho2020denoising}, which represent specific types of generative machine learning models~\cite{goodfellow2016deep}. The promise of quantum computers to dramatically expand our computational capabilities for certain problems has naturally led to speculation about how quantum technology will reshape the landscape of machine learning. Here, we establish a new type of advantage for generative tasks supported by both theory and experiment, with strong implications for both classical and quantum models.

Perhaps due to its potential impact, quantum machine learning~\cite{biamonte2017quantum} is a field characterized by equal amounts of optimism, skepticism, and ambiguity. It encompasses both how quantum technology may impact machine learning and how machine learning may impact quantum computing. When data arrives as quantum states, exponential advantages have been both proven mathematically and demonstrated experimentally~\cite{huang2021information,chen2022exponential,huang2022quantum}. However, when data arrives as classical information to be processed by quantum computers, the story becomes much more subtle~\cite{aaronson2015read,huang2020power,schuld2022quantum,abbas2023quantum,jaques2023qram}.

Modern machine learning is driven by empirical successes on large datasets with immense computing power. However, the current state of quantum computers more closely resembles the early development of neural networks: absent the requisite computational power to test ideas empirically, we must resort to proofs in simplified computational frameworks. Using learning theory, there are approaches with provable exponential advantages on classical data~\cite{bshouty1995learning,arunachalam2017guest,arunachalam2020quantum,atici2007quantum,cross2015quantum,bernstein1993quantum,lewis2025quantum}, but these often rely on embedding somewhat artificial problems, such as discrete logarithm, into learning problems or assume models of fast access like \textsf{QRAM} that remain controversial in their practicality~\cite{jaques2023qram}. Moreover, these same frameworks quickly conclude the absence of advantages when faced with general distributions~\cite{arunachalam2018optimal,atici2005improved,servedio2004equivalences,zhang2010improved}, echoing early pessimistic and distribution-specific hardness results~\cite{brutzkus2017globally, Safran2018-xn, Daniely2020-eu, Kiani2024-qt} in theoretical classical machine learning. This leaves a considerable gap in our understanding of quantum advantages for empirical and less structured distributions, which this work helps resolve.

\begin{figure*}[t]
    \centering
    \includegraphics[width=\textwidth]{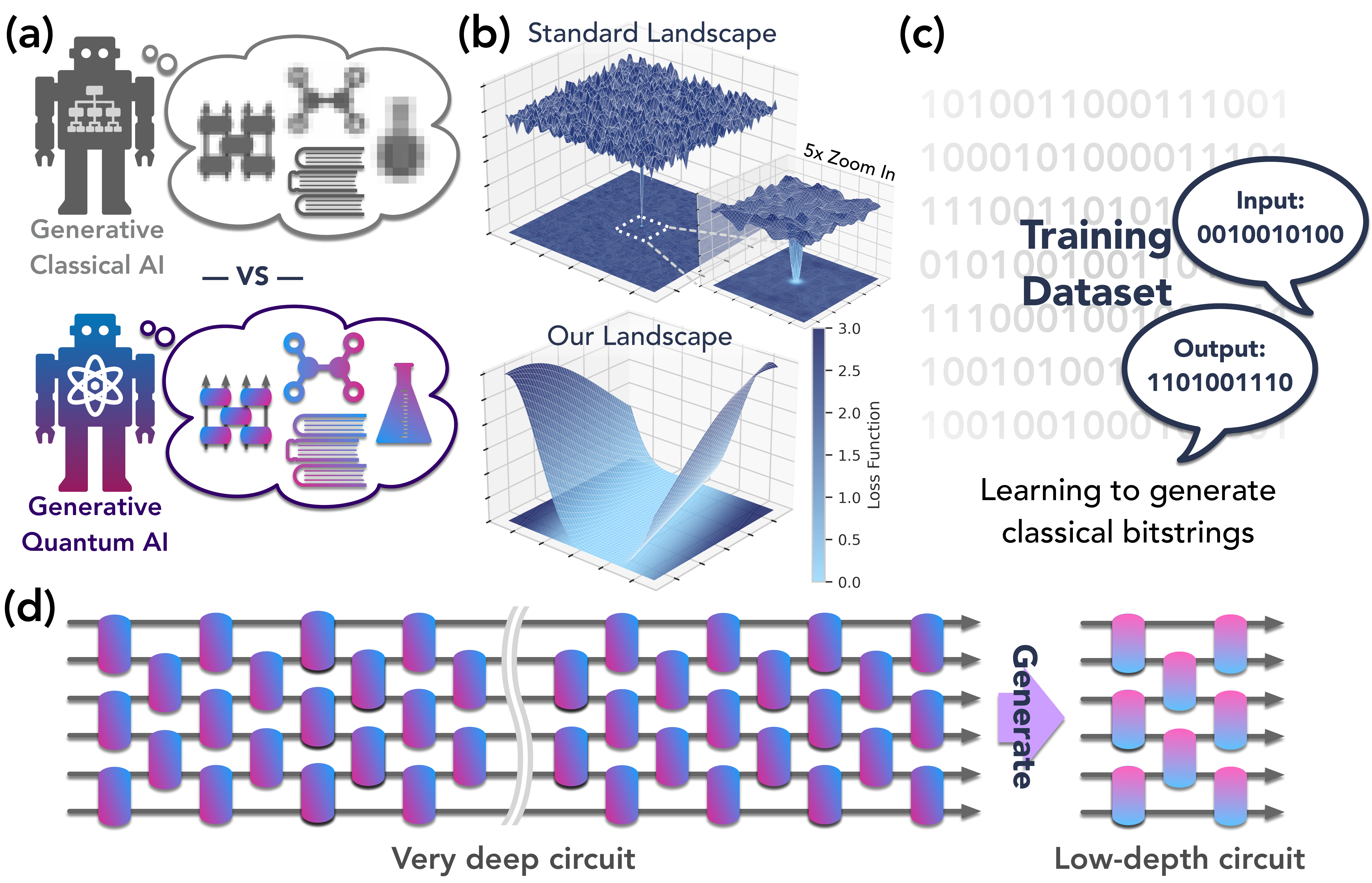}
    \caption{\textbf{Generative quantum advantage overview.} \textbf{(a)} Generative AI enabled by quantum technology allows learning and sampling from distributions that are classically hard to sample from. \textbf{(b)} Standard approaches to learning generative quantum AI models suffer from poor training landscapes due to issues such as proliferation of suboptimal local minima and barren plateaus, in contrast to the techniques proposed here that enable favorable landscapes with efficient training even in classically hard regimes. \textbf{(c)} As a key application, we theoretically and experimentally demonstrate advantage in sampling from classical distributions that are classically intractable, scaling up to $816$ shallow qubits with inferred results beyond $34{,}000$ shallow qubits. This is enabled by an exact deep-to-shallow circuit mapping that allows exact sampling from very deep 2D circuits, which we prove are universal. \textbf{(d)} We prove that learning to generate compressed quantum circuits for physical simulation is classically hard. This stands in sharp contrast to the efficient techniques for learning to generate low-depth circuits demonstrated here experimentally on a real device using up to $40$ physical qubits.}
    \label{fig:schematic_cartoon}
\end{figure*}

Generative quantum advantage represents opportunities for quantum computers to expand what is possible in tasks where the goal is to generate new samples from a target probability distribution. Key first steps along this path have been the experimental demonstrations of beyond-classical circuit sampling~\cite{boixo2018characterizing,arute2019quantum,morvan2024phase,gao2025establishing}, which provide evidence for the existence of models where quantum computers exhibit significant advantages in inference. However, the same structural properties of these distributions that support their difficulty in classical inference led many to believe that efficiently learning these models from data would not be possible, severing the link required to view these as learning models rather than simple circuits. Without the ability to both efficiently train and sample (conduct inference) from these distributions, one cannot claim a generative quantum advantage for these tasks. Many works have aimed to frame and establish generative quantum advantage~\cite{perdomo2018opportunities,benedetti2019generative,coyle2020born,sweke2021quantum,gao2022enhancing,zhu2022generative,niu2022entangling,riofrio2024characterization,rudolph2024trainability,hibat2024framework}. However, to date, these methods either embed Shor's algorithm into a learning context inaccessible to experimental demonstration, or fail to establish provably efficient training due to well-known issues such as barren plateaus~\cite{mcclean2018barren,larocca2025barren} or the proliferation of local minima~\cite{anschuetz2022quantum}. In contrast, we experimentally demonstrate provably efficient learning and sampling from distributions accessible to near-term devices while quantifying precise beyond-classical resource crossovers not present in existing theoretical works. We achieve this through an exact mapping we develop between families of deep circuits and shallow circuits in conjunction with experimental results.

In this work, we close the loop initiated by beyond-classical demonstrations to achieve full generative quantum advantage by showing experimentally and theoretically that there exist distributions that may be both learned and sampled efficiently through quantum models, for which sampling is provably beyond the reach of classical computers. This advantage builds upon recent progress in efficiently learning shallow quantum circuits~\cite{huang2024learning,landau2024learning,fefferman2024anti,haah2025short} and the established difficulty in sampling from them~\cite{neill2018blueprint,boixo2018characterizing,haferkamp2020closing,movassagh2023hardness}. Remarkably, the training of these quantum models is classically efficient, but inference requires a quantum computer. The generative quantum advantage is enabled by a quantum model equipped with efficient classical training but no efficient classical inference.

The work is structured as follows. We first define and explain the concept of generative quantum advantage, establishing a theoretical foundation with several key results about the power of shallow quantum circuits and the necessity of quantum computers for efficient model compression. We then review the sewing technique~\cite{huang2024learning} and demonstrate that, in conjunction with a local cost function, it provably simplifies model parameter landscapes by eliminating both barren plateaus~\cite{mcclean2018barren,larocca2025barren} and local minima~\cite{anschuetz2022quantum} in shallow circuits.  To substantiate these theoretical insights, we explicitly define a problem where we achieve generative quantum advantage on a classical distribution that is both easy-to-train and hard-to-sample from without a quantum computer. We support this with experimental demonstrations that map to an effective $816$ shallow qubits, qubits in the shallow representation of the quantum model that we sample using deep quantum circuits with $68$ physical qubits, and an analytical mapping that leverages existing beyond-classical results~\cite{morvan2024phase} to train and emulate in the beyond-classical regime up to an effective $34{,}304$ shallow qubits. This provides an upper bound on the required quantum resources, as more optimal mappings may be discovered in the future. We then present an experimental demonstration of generative quantum advantage in quantum circuit compression, providing insights into how the sewing technique transforms the learning landscape. An overview of our results and major objectives is presented in Fig.~\ref{fig:schematic_cartoon}. We conclude with a discussion of potential applications and future directions for generative quantum advantage.

\section{Generative quantum advantage}

Quantum advantage manifests in various forms~\cite{huang2025vast}. Traditional approaches have focused on computational speedups for specific problems, such as integer factoring or quantum simulation. However, as both quantum technologies and artificial intelligence advance, a new frontier of quantum advantage is emerging. We study an emerging paradigm termed \emph{generative quantum advantage}, which we anticipate will become increasingly significant as quantum technologies and generative machine learning continue to advance. This form of advantage focuses on the ability to learn and generate complex patterns.

\begin{definition}[Generative problem]
A generative problem involves learning to produce outputs that match specific patterns. The learner must develop the ability to generate new outputs that follow these patterns from samples. The patterns may represent mappings from inputs to outputs, underlying distributions, transformations between representations, or circuits that achieve desired behaviors. The patterns can be defined explicitly through specifications or implicitly through examples.
\end{definition}

We emphasize that both learning and inference are essential in our definition of a generative problem. Previous experimental demonstrations have established strong evidence of quantum advantage in generating classically hard probability distributions~\cite{arute2019quantum,morvan2024phase,gao2025establishing}, but useful generative advantage requires efficient learning from data as well. Having established what constitutes a generative problem, we now characterize when quantum computers offer an advantage in solving such problems. The characterization of generative quantum advantage encompasses both the quality and efficiency of the generation process.

\begin{definition}[Generative quantum advantage]
A generative problem exhibits quantum advantage when a quantum computer can learn to generate the desired outputs substantially better than any classical computer. Here, \emph{better} means achieving reduced sample complexity, higher accuracy, faster learning or generation time, or the capability to generate outputs that are practically infeasible for classical computers. The outputs may include classical data, quantum states, circuits, algorithms, or other well-defined objects.
\end{definition}

To illustrate these concepts concretely, we present three fundamental tasks that exemplify the potential for generative quantum advantage. The first involves generating classical bitstrings according to complex probability distributions, the second focuses on generating improved simulation circuits by learning hidden structures, and the third addresses the generation of quantum states under unitary transformations.

\subsubsection*{Example task 1: Learning to generate bitstrings}

\begin{definition}[Learning to generate bitstrings]
Consider an unknown conditional distribution $p(y|x)$ mapping $n$-bit input strings $x$ to $m$-bit output strings $y$. The goal is to learn a generative model that, given any input string $x \in \{0,1\}^n$, can produce output strings $y \in \{0,1\}^m$ sampled from $p(y|x)$. The quality of the generative model is determined by how well its output distribution approximates $p(y|x)$ across different $x$'s.
\end{definition}

This task is facilitated by our model of Generative QNNs, which elevates QNNs to a generative setting through a particular encoding of the input and output processes. We prove these models are universal for all classical conditional probabilities $p(y|x)$. The technical definition of Generative QNN is given in Appendix~\ref{app:sec-gen-adv} as Definition~\ref{def:gen-qnn} and the proof of their universality as Proposition~\ref{app:prop-universal-approx}.

\begin{theorem}[Informal: Classically hard, quantumly easy generative models]\label{thm-inf:qadv-classical}
Under standard complexity-theoretic conjectures, there exist distributions $p(y|x)$ mapping classical $n$-bit strings to $m$-bit strings that a quantum computer can efficiently learn to generate using classical data samples, but are hard to generate with classical computers.
\end{theorem}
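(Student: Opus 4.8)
The plan is to combine three ingredients: the universality of Generative QNNs (Proposition~\ref{app:prop-universal-approx}), the efficient classical learnability of shallow quantum circuits~\cite{huang2024learning}, and the conjectured classical hardness of sampling from such circuits~\cite{boixo2018characterizing,haferkamp2020closing,movassagh2023hardness}. Concretely, I would fix a family of constant-depth (or low-depth) 2D quantum circuits $U$ on $n$ qubits whose output distributions, when measured in the computational basis, are known to be hard to sample from classically up to small total-variation error under standard conjectures (e.g. anticoncentration plus $\#\mathsf{P}$-hardness of the associated output probabilities, as used in the random circuit sampling literature). I would then define the target conditional distribution $p(y|x)$ by letting $x$ index a circuit $U_x$ in this family (or simply a fixed representative for all $x$, with $x$ playing the role of a trivial/benign label), and setting $p(y|x)$ to be the Born distribution $|\langle y | U_x | 0^n\rangle|^2$ over $m=n$ output bits. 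Encoding $U_x$ into the Generative QNN framework via Definition~\ref{def:gen-qnn} produces a quantum model that exactly realizes $p(y|x)$, establishing the "quantumly easy to sample" half.

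For the learning half, the key observation is that the hypothesis class is exactly the class of shallow quantum circuits, and the recent results on learning shallow circuits from data~\cite{huang2024learning,landau2024learning} give a classical algorithm that, from polynomially many samples of the form $(x, y\sim p(\cdot|x))$ together with the ability to evaluate the model (or from measurement data in the relevant access model), outputs a description of a circuit $\widetilde U_x$ whose output distribution is $\varepsilon$-close to $p(y|x)$ in the relevant metric, in time $\mathrm{poly}(n,1/\varepsilon)$. Crucially this learning procedure is classically efficient — it is the \emph{inference} step, i.e. actually drawing samples from the learned $\widetilde U_x$, that requires running the circuit on a quantum device. So I would invoke the sewing-based learning guarantee stated earlier in the paper to assert that a quantum learner (whose training subroutine is in fact classical) recovers a generative model $\varepsilon$-close to $p(y|x)$ with polynomial sample and time complexity, and no barren plateaus or proliferating local minima obstruct the optimization.

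Finally, for the classical-hardness half, I would argue by contradiction: suppose a classical algorithm could, after any polynomial-time learning phase from samples of $p(y|x)$, produce (implicitly, via a sampler) outputs whose distribution is within small total-variation distance of $p(y|x)$ for the chosen $x$. Since the samples themselves are drawn from $p(y|x)$ and the circuit family is explicit, a classical learner that succeeds would yield a classical sampler for the Born distribution of the shallow circuit family, contradicting the assumed hardness of classical sampling for that family under the stated complexity-theoretic conjectures (Stockmeyer-counting/$\#\mathsf{P}$-hardness together with anticoncentration, or the phase-transition-based hardness of~\cite{morvan2024phase}). One subtlety to handle carefully is matching the \emph{error regime}: the learning guarantee produces an $\varepsilon$-approximate model, and the hardness results must apply at that same $\varepsilon$ in total-variation distance — this is available in the average-case/anticoncentration hardness framework for the right circuit families, so I would choose the family specifically so that learnability and approximate-sampling-hardness coexist at a common inverse-polynomial $\varepsilon$.

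The main obstacle I expect is precisely this reconciliation of approximation scales and access models: one must exhibit a single circuit family that is simultaneously (i) efficiently learnable to inverse-polynomial TV error by the classical training procedure, and (ii) classically hard to sample from to that same inverse-polynomial TV error under believed conjectures, (iii) while being natural enough to be demonstrable and expressible as a Generative QNN. Shallow 2D circuits are the sweet spot — they are learnable by~\cite{huang2024learning} and are believed hard to sample from — but pinning down the hardness at the learnability error tolerance (rather than, say, exponentially small error) is the delicate step, and I would lean on the anticoncentration-based average-case hardness machinery, adapted to the shallow-circuit setting, to close that gap.
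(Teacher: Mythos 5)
Your overall strategy --- instantiate $p(y|x)$ with a shallow quantum circuit family that is classically learnable but classically hard to sample from --- is the same as the paper's (Theorems~\ref{thm:qadv-tomo-shallow-qnn} and~\ref{thm:qadv-inst-deep-qnn}), but there is a genuine gap in your learnability half, and it stems from how you propose to use the input $x$. You suggest letting $x$ index a circuit $U_x$, ``or simply a fixed representative for all $x$, with $x$ playing the role of a trivial/benign label.'' Under either encoding the classical samples $(x_i, y_i)$ do \emph{not} constitute the randomized-measurement dataset that the shallow-circuit learning results require (Section~\ref{sec:dataset-learn-shallow-qc} and Theorem~\ref{thm:main-learn-QNC0}): that algorithm needs random product-state inputs and random local measurement bases, and if $x$ is trivial the learner only ever sees samples from the single hard Born distribution --- exactly the data from which the paper notes nothing can be learned locally. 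The paper's construction is designed around this point: for tomographically-complete shallow QNNs (Definition~\ref{def:tomo-comp-gen-QNNs}) the bits of $x$ literally encode the single-qubit input states and measurement bases, so uniformly random $x$'s turn the training set into precisely the dataset of Section~\ref{sec:dataset-learn-shallow-qc}; for IDQNNs each bit of $x$ toggles between teleport-forward and measure-and-reset, and the ``local decoupling'' inputs (a $0$ surrounded by $1$'s) isolate single-qubit statistics that determine each $\theta_i$ (Appendix~\ref{app:qadv-inst-deep-qnn-qease}). The hard instance is then the specific input $x = 0^n$, which by Lemma~\ref{lemma:deep-from-shallow} and Corollary~\ref{cor:inst-deep-QNN-any} reproduces sampling from a random deep circuit. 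Without this conditional structure on $x$, the ``efficiently learnable'' and ``hard to sample'' halves cannot coexist in one distribution, and your proof would fail at the training step.

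On the hardness half you take a heavier route than the paper actually does. You propose to invoke anticoncentration plus average-case $\#\mathsf{P}$-hardness to obtain hardness at the inverse-polynomial total-variation error achieved by the learner, and you correctly flag reconciling the two error scales as the delicate step. The paper sidesteps this: it states sampling hardness for its specific family as Conjecture~\ref{conj:Hard-RCS}, proves only the exact-sampling version from non-collapse of the non-uniform polynomial hierarchy (Proposition~\ref{prop:Hard-RCS}, via $\mathsf{PostBQP}=\mathsf{PP}$, Toda's theorem, and $\mathsf{P}^{\mathsf{PostBPP/poly}} \subseteq \mathsf{\Sigma_3/poly}$), and explicitly defers approximate-sampling hardness to further conjectures. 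Note also that the reduction must be non-uniform: the polynomial-size training set is absorbed as an advice string into a polynomial-size classical circuit, which is why the conjecture is phrased against circuit families rather than uniform algorithms; your contradiction sketch should make that step explicit rather than passing directly from ``successful classical learner'' to ``classical sampler.''
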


We prove this theorem for two families of models we leverage later in this work: tomographically-complete shallow QNNs and instantaneously deep QNNs (IDQNNs), presented in Appendix~\ref{app:sec-gen-adv} as Theorems~\ref{thm:qadv-tomo-shallow-qnn} and~\ref{thm:qadv-inst-deep-qnn}. We demonstrate that these quantum models can be learned from a modest number of classical data samples. Furthermore, an efficient classical learning algorithm suffices to learn these quantum models. However, performing inference and generation from these quantum models is computationally inefficient for classical machines. In particular, the quantum models are designed such that the more $0$'s there are in the input bitstring $x$, the harder it becomes to classically sample from $p(y | x)$, with $x = 0^n$ being the classically hardest case. The training dataset provided to both the classical and quantum generative ML models contains many output bitstrings from $x = 0^n$, along with many output bitstrings with varying fractions of $0$ in $x$. The details are given in Appendix~\ref{app:num-learning-method}.

Our experimental demonstrations for classical bitstring sampling leverage the properties of a special class of shallow generative QNNs along with a shallow-to-deep exact mapping sketched in Fig.~\ref{fig:classical_learning}(b) to optimize experimental resources. This family of shallow quantum generative models permits very efficient classical training, yet are remarkably powerful. They can generate distributions with highly non-local correlations that are classically hard to sample from. With a sufficient number of qubits, generating an output bitstring from the shallow quantum model is equivalent to sampling a deep quantum model and generating a bitstring from the deep quantum model. For this reason, in addition to their constant-time execution, we term them instantaneously deep quantum neural networks (IDQNNs) and define them formally in Definition~\ref{def:inst-deep-QNN} of Appendix~\ref{app:inst-deep-QNN}. IDQNNs yield the following:

\begin{proposition}[Informal: IDQNN samples and generates from deep quantum circuits] \label{prop-inf:idqnn-universal}
For any deep quantum circuit $C$, there exists an IDQNN that samples a deep circuit $D$ from a family containing $C$ and generates a bitstring from the deep circuit $D$ in constant time with at most polynomial overhead in the size of the circuit $C$.
\end{proposition}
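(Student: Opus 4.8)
The plan is to realize the informal ``deep-to-shallow'' mapping via measurement-based computation and then to read off the family $\cF_C$ from the (non-adaptive) measurement record. Recall from Definition~\ref{def:inst-deep-QNN} that an IDQNN is a constant-depth circuit on a 2D qubit array with a fixed encoding of the classical input into the first layer and a computational-basis readout. So it suffices to re-express ``prepare $C\ket{x}$ and measure in the computational basis'' as ``run a constant-depth 2D circuit and measure,'' at the cost of $\mathrm{poly}(|C|)$ ancillas and of replacing $C$ by a random member of an explicitly bounded family of deep circuits.

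\emph{Construction.} (1) Flatten $C$, say $G$ gates on $k$ wires, to a 1D-nearest-neighbor circuit; the routing (SWAP) overhead is $\mathrm{poly}(G,k)$. (2) Pass to the graph-state form: build a brickwork-type cluster state $\ket{\Psi_C}$ on $N = O(G)$ qubits laid out on a 2D grid whose single-qubit measurement pattern implements $C$. The key point is that $\ket{\Psi_C}$ is prepared in depth $O(1)$ — initialize $\ket{+}^{\otimes N}$ with one layer of Hadamards, then apply the graph edges as $\mathrm{CZ}$ gates, which split into $O(1)$ parallel layers because the grid graph is bounded-degree and hence edge-colorable with $O(1)$ colors. (3) Convert each single-qubit measurement basis into an $X$-$Y$-plane rotation (one further depth-$O(1)$ layer) followed by a $Z$-measurement, and inject $x$ into the input nodes exactly as prescribed by Definition~\ref{def:inst-deep-QNN}. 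The result is a constant-depth, non-adaptive, computational-readout circuit on $\mathrm{poly}(|C|)$ qubits — i.e.\ an IDQNN — and it executes in constant time precisely because no feed-forward is used.

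\emph{Output and the family.} In the textbook measurement pattern the intermediate outcomes spawn byproduct operators that adaptive feed-forward would cancel; here they are simply left in place. Thus for each measurement record $s$ the circuit actually realized, call it $D_s$, is the gate skeleton of $C$ with $O(G)$ local modifications indexed by $s$: Pauli insertions at the boundaries of Clifford gates and a $\pm$ sign on the rotation angle at each non-Clifford ($T$-type) location. Define $\cF_C := \{\, D_s : s \in \{0,1\}^{O(G)} \,\}$. Then $|\cF_C| \le 2^{O(G)}$ with each $D_s$ a $\mathrm{poly}(G)$-gate deep circuit, so $\cF_C$ is polynomially describable; and the record for which every byproduct is trivial realizes $C$ exactly, so $C \in \cF_C$. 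Hence a single constant-depth run of the IDQNN on input $x$ samples $s$ — equivalently, samples $D \in \cF_C$ — and then outputs a bitstring distributed as the computational-basis measurement of $D\ket{x}$, which is the claim. (Since the byproducts are known functions of the recorded $s$, classical post-processing recovers samples of $C$ itself, but this is not needed for the statement.)

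\emph{Main obstacle.} The crux is dropping feed-forward: standard measurement-based computation requires adaptivity so that byproducts stay Pauli and so that $C$ is implemented on the nose. The work is to show that without adaptivity the realized map still lies in a polynomially-describable family that contains $C$ — i.e.\ that the uncancelled corrections remain within a controlled set rather than cascading into uncontrolled operators as they propagate through later non-Clifford gates, and that some record realizes $C$. The remaining bookkeeping — that the routing, ancilla count, and the number of $\mathrm{CZ}$/rotation layers are genuinely polynomial and constant respectively, and that the input encoding and readout match Definition~\ref{def:inst-deep-QNN} verbatim so the object is literally an IDQNN — is routine but should be checked.
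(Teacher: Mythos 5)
Your proposal is correct and follows essentially the same route as the paper: the formal result (Lemma~\ref{lemma:deep-from-shallow}, Fact~\ref{fact:univ-HRZCZ}, Definition~\ref{def:family-dqc}, Corollary~\ref{cor:inst-deep-QNN-any}) likewise runs the MBQC teleportation pattern without feed-forward, lets the measurement record index the deep circuit actually realized, and identifies $C$ with the trivial-byproduct member, with polynomial overhead coming from circuit compilation. The only substantive difference is bookkeeping: the paper first compiles $C$ into layered rounds of $H$, $R_Z$, and $CZ$ so that the uncancelled byproducts are literal single-qubit $Z$ insertions at fixed locations (the family $\{D(y')\}$), which sidesteps the byproduct-propagation concern you flag as the main obstacle, whereas you track propagated Paulis and rotation-angle sign flips — both parameterizations yield a valid polynomially-describable family containing $C$.
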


This property is stated formally and proven in Appendix~\ref{app:inst-deep-QNN}. This remarkable fact relies on principles used in measurement-based quantum computing (MBQC)~\cite{briegel2009measurement} and insights regarding the power of shallow quantum circuits~\cite{bravyi2018quantum,watts2019exponential,gao2017quantum, bermejo2018architectures, haferkamp2020closing, bergamaschi2024quantum}. In contrast to some previous works, these shallow quantum circuits not only exhibit a separation against restricted classical computational models, but are intractable for any polynomial-time classical algorithms. Assuming that the non-uniform polynomial hierarchy does not collapse, sampling a deep quantum circuit $D$ and generating a bitstring from $D$ is classically hard, as articulated by Conjecture~\ref{conj:Hard-RCS} and proven in Proposition~\ref{prop:Hard-RCS} in Appendix~\ref{app:quantum-advantage-bitstring}. Hence, from Proposition~\ref{prop-inf:idqnn-universal}, accurately sampling from the output of a wide shallow circuit is also classically hard. Unlike shallow classical analogs, these circuits can use their quantum resources to probabilistically sample into the future from circuits that would otherwise require very long execution times. Probabilistic sampling is key here, as generating from a particular deep quantum circuit would require feedforward as well as non-constant depth similar to standard MBQC~\cite{briegel2009measurement}.

While the family of shallow circuits we use is quite powerful, there are other problems where a different type of shallow quantum model can be motivated. In the more general setting, we require an ancilla system to implement inference on the learned quantum generative model, also known as the sewing technique in Ref.~\cite{huang2024learning} and depicted in Fig.~\ref{fig:mbl_results}. Broadly speaking, the sewing technique is a simple method for using ancilla qubits to provably stitch together the action of local unitaries into the unique global unitary transformation for the entire system, even when the local unitaries do not commute and contain additional degrees of freedom. This enables an efficient form of divide-and-conquer learning for shallow circuits. To illustrate the more general learning scenario, we consider a second task: the ability to learn a constant-depth representation of any quantum circuit assuming it has one.

\subsubsection*{Example task 2: Learning to generate compressed simulation circuits}

\begin{definition}[Learning to generate compressed simulation circuits]
Given an input circuit $C$ that simulates a physical evolution, the goal of the generative model is to produce an alternative circuit $C'$ that implements a similar transformation while requiring significantly less execution time, as quantified by circuit depth. The quality of the generative model is determined by the extent to which it reduces circuit depth while maintaining high fidelity between the transformations implemented by $C$ and $C'$.
\end{definition}

At first glance, one might suspect that the promise of the existence of a constant-depth representation for a circuit could enable a classical compiler of quantum circuits to efficiently discover that representation. Indeed, this would be a natural goal for any quantum circuit compiler. However, we prove here that no such efficient classical compiler can exist in the general case, and we reveal that identifying a constant-depth representation efficiently requires a quantum computer, even when one is promised that such a representation exists.

\begin{theorem}[Informal: Learning to generate compressed simulation circuit is classically hard]\label{thm-inf:qadv-compress}
Given a general polynomial-sized input circuit $C$ that is promised to have a constant-depth representation $C'$, finding such a constant-depth circuit representation is not classically efficient unless $\mathsf{BPP}=\mathsf{BQP}$.
\end{theorem}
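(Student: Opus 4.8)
\emph{Proof plan.} The plan is to argue the contrapositive: an efficient classical algorithm that compiles a promised-compressible circuit down to constant depth would place $\mathsf{BQP}\subseteq\mathsf{BPP}$, hence $\mathsf{BPP}=\mathsf{BQP}$. So assume $\mathcal{A}$ is a randomized polynomial-time classical algorithm that, on any polynomial-size circuit $C$ carrying the promise that it admits a constant-depth representation, outputs such a representation $C'$. Fix an arbitrary $L\in\mathsf{BQP}$ with polynomial-time quantum verifier; I will decide $L$ in $\mathsf{BPP}$ using one call to $\mathcal{A}$.

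\emph{Construction (amplify, then purify into a near-product state).} Given an instance $x$, amplify the verifier to a circuit $\bar V_x$ on $n=\mathrm{poly}(|x|)$ qubits whose designated output bit equals $b:=[x\in L]$ with probability at least $1-2^{-n}$. Introduce a fresh ancilla $a$ and set $C_x:=(\bar V_x^\dagger\otimes I)\,\mathrm{CNOT}_{1\to a}\,(\bar V_x\otimes I)$: run $\bar V_x$, coherently copy the output bit onto $a$, uncompute $\bar V_x$. A short estimate gives $\big\|\,C_x\ket{0^{n}}\ket{0}_a-\ket{0^{n}}\!\otimes\ket{b}_a\,\big\|\le 2^{-\Omega(n)}$, so $C_x$ prepares a state exponentially close to the \emph{product} state $\ket{0^n}\!\otimes\ket{b}_a$. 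Hence $C_x$ is promised to have a constant-depth representation --- e.g.\ the depth-one circuit applying $X$ on line $a$ iff $b=1$ --- and $\mathcal{A}$, being correct under this promise, must return some constant-depth $C'_x$ whose output on $\ket{0^{n+1}}$ has fidelity at least $1-\delta$ with $C_x\ket{0^{n+1}}$ for a small constant $\delta$.

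\emph{Extraction.} Since $C'_x$ has depth $O(1)$, the reduced state of $C'_x\ket{0^{n+1}}$ on line $a$ depends only on the $O(1)$ gates in the backward light cone of $a$, so it is computed exactly by contracting a constant-size tensor network, in $O(1)$ classical time. That reduced state is within trace distance $\sqrt{2\delta}$ of $\ketbra{b}{b}$, so for $\delta$ a small enough constant we read off $b=[x\in L]$ with certainty. The entire procedure --- polynomial-time classical amplification and circuit bookkeeping, one call to $\mathcal{A}$, and an $O(1)$-time marginal computation --- is in $\mathsf{BPP}$; as $L$ was arbitrary, $\mathsf{BQP}\subseteq\mathsf{BPP}$, and with the trivial reverse inclusion, $\mathsf{BPP}=\mathsf{BQP}$. (The matching positive direction --- that a quantum computer \emph{can} compile efficiently given the promise --- follows from the shallow-circuit learning and sewing results invoked above, which is why the threshold is precisely $\mathsf{BPP}$ versus $\mathsf{BQP}$ rather than a one-sided inclusion.)

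\emph{Main obstacle.} The hard part is making this airtight against the formal definition of ``constant-depth representation.'' The construction reproduces the intended \emph{output state} up to exponentially small error --- the natural figure of merit for a simulation circuit, and the notion under which this paper's positive results are stated --- which is harmless. If instead one demands that $C'$ approximate $C$ as a \emph{unitary on all inputs}, the amplify-and-uncompute gadget is insufficient, because it only controls the behavior of $C_x$ on $\ket{0^{n+1}}$; handling that case requires embedding the $\mathsf{BQP}$ computation inside a deep circuit that is genuinely unitarily equivalent to a shallow one. Settling on the right notion is the crux; the remaining pieces --- the amplification, the copy-and-uncompute error bound, and the light-cone marginal computation --- are routine.
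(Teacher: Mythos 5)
You have the right skeleton --- reduce a $\mathsf{BQP}$ decision to circuit compression via a compute--copy--uncompute gadget, then read the answer off the constant-depth output through its $O(1)$ light cone --- and you correctly locate the crux in your final paragraph. But the obstacle you flag is a genuine gap, not a matter of taste, and your resolution of it points the wrong way. The paper's task definition requires $C'$ to implement \emph{a quantum channel inverse-polynomially close to the target unitary $U$}, and the promise is that $U$ itself admits a constant-depth implementation; the positive results you invoke (the shallow-circuit learning theorem) are likewise stated in diamond distance, not in terms of the output state on $\ket{0^{n+1}}$. Under that definition your $C_x=(\bar V_x^\dagger\otimes I)\,\mathrm{CNOT}_{1\to a}\,(\bar V_x\otimes I)$ does not satisfy the promise: as a unitary on all $n+1$ qubits it equals $\tfrac{1}{2}(I+\bar V_x^\dagger Z_1\bar V_x)\otimes I_a+\tfrac{1}{2}(I-\bar V_x^\dagger Z_1\bar V_x)\otimes X_a$, and the Heisenberg-evolved operator $\bar V_x^\dagger Z_1\bar V_x$ generically has support on all $n$ qubits, so $C_x$ is far from every constant-depth unitary. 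The hypothesized compiler $\mathcal{A}$ is therefore under no obligation to return anything useful on your instances, and the reduction collapses.

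The missing idea is a separation of registers: the verifier's workspace must be declared \emph{ancilla} qubits of the input circuit (initialized to $\ket{0}$ and uncomputed), so that the unitary to be compressed is only the induced map on a designated system register. The paper's construction runs $n$ parallel copies of the given circuit $D$ on $n^2$ ancillas, computes the \emph{majority} of their output bits, uses that bit to control an $X$ on the first system qubit, and then uncomputes all copies; postselection on the ancillas returning to $\ket{0^{n^2}}$ fails only with exponentially small probability, so the induced map on the $n$-qubit system is exponentially close in diamond norm to either $X_1$ or $I$ --- a genuine depth-one unitary --- and the promise holds for \emph{all} inputs, with the majority vote doing the amplification you performed separately. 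One then classically evaluates $C'^{\dagger}Z_1C'$ (constant light cone) to decide which case occurred, which is the Heisenberg-picture version of your marginal computation. Your gadget can be repaired in the same spirit by making the $n$ verifier qubits ancillas and taking the single qubit $a$ as the system, but as written the reduction does not go through.
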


The precise statement is given as Theorem~\ref{thm:quantum-adv-compressing} in Appendix~\ref{app:quantum-adv-compressing}, along with the accompanying proof. While some circuits may still be amenable to efficient compression via classical compilers or other methods, this result proves that in the general case, the circuit learning and sewing techniques presented here are not efficient without a quantum computer. This theorem also has important implications for questions related to quantum circuit hiding and compilation, which have been of significant interest in the context of verifiable quantum advantage~\cite{aaronson2024verifiable}. In particular, there can exist polynomial-depth representations of constant-depth circuits that no efficient classical adversary can reduce to their constant-depth representation, even with the explicit promise that such a representation exists.

\begin{figure}[t]
    \centering
    \includegraphics[width=0.98\textwidth]{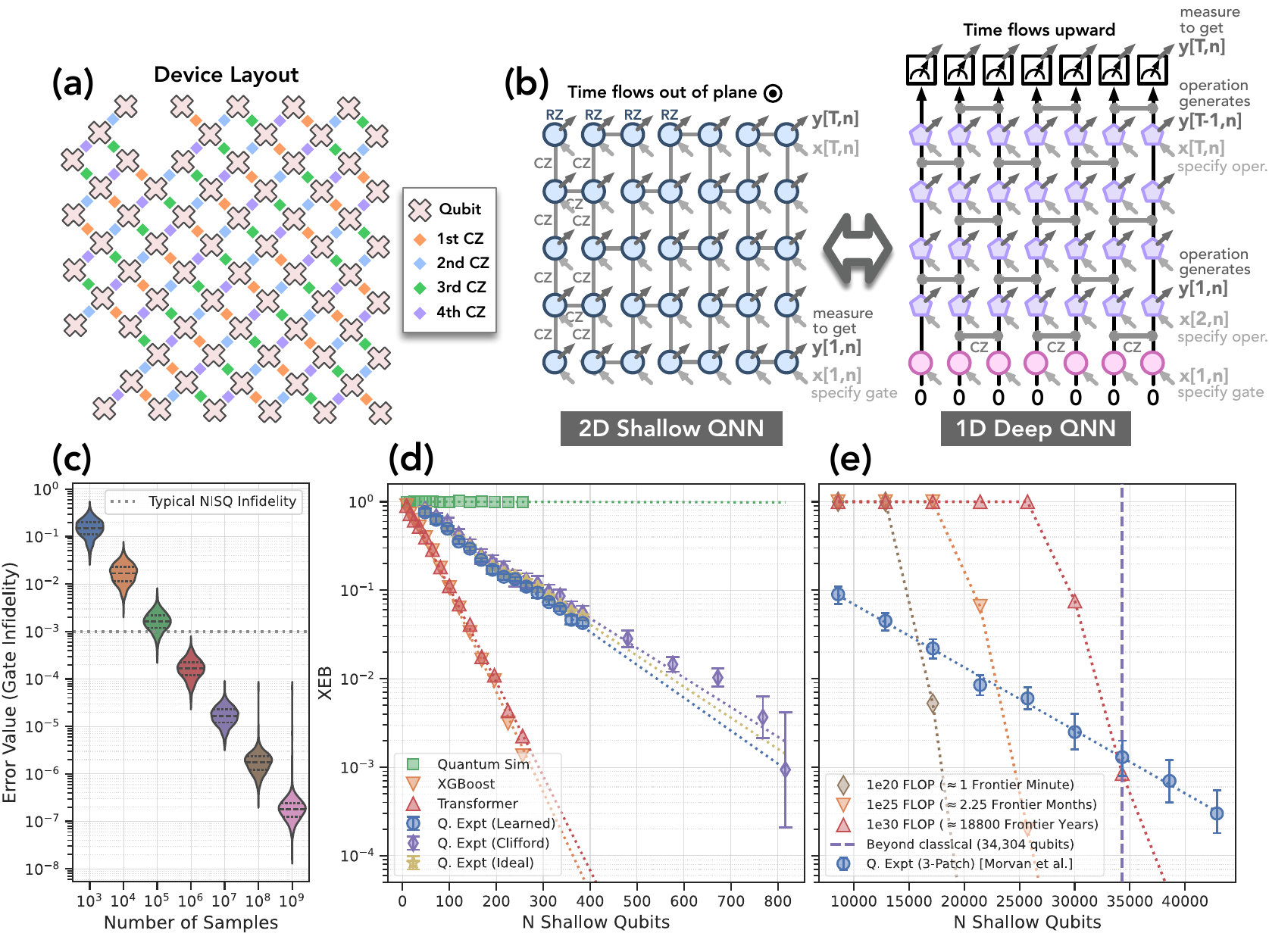}
    \caption{\textbf{Training and inference on classically-hard-to-sample distributions.} \textbf{(a)} A round of CZ gates on the device layout used for this experiment. There are 68 physical qubits on the grid, which correspond to up to 816 shallow qubits in the 3D shallow representation. \textbf{(b)} Schematic correspondence from a 2D shallow circuit to a 1D deep circuit, illustrating the mapping used experimentally between 3D shallow circuits and 2D deep circuits; see Appendix~\ref{app:example_IDQNN_mapping} for details. \textbf{(c)} The number of classical samples required to reach a given gate fidelity in a generative QNN with classically easy training and beyond-classical inference. The training dataset contains output bitstrings from $x=000\ldots 0$ as well as output bitstrings for random inputs $x$ with varying fractions of $0$'s. \textbf{(d)} Performance of quantum and classical models on hardware and simulation as quantified by XEB score. Inference is performed on $x=000\ldots 0$ to remove experimental requirements for feed-forward and maximize quantum-classical separation. \textbf{(e)} Performance inferred in the beyond-classical regime using our mapping and data from Ref.~\cite{morvan2024phase} with 67 physical qubits, evaluated on $x=000 \ldots 0$. Even with hardware noise, quantum models outperform current classical models by a significant margin.}
\label{fig:classical_learning}
\end{figure}

\subsubsection*{Example task 3: Learning to generate quantum states}

In the previous task, we assumed access to some representation of the quantum model to be learned, though perhaps not the most efficient representation. In the general family of generative learning, we may also consider cases where we can provide input states to a particular process and perform local measurements on the resulting states. We define this type of task as learning to generate quantum states.

\begin{definition}[Learning to generate quantum states]
Consider an unknown unitary transformation $U$ acting on $n$-qubit systems. The goal is to learn a quantum generative model that, given any input quantum state $\ket{\psi}$, can produce an output quantum state close to $U\ket{\psi}$. The quality of the quantum generative model is determined by how well its output state approximates $U\ket{\psi}$ across different input states $\ket{\psi}$.
\end{definition}

Until recently, whether this task was efficient for a quantum computer remained an open question even for any target unitary that has a shallow structure. As we show in the following theorem, this task is efficiently solvable on a quantum computer.

\begin{theorem}[Informal: Shallow quantum maps are efficient to learn with local measurements] \label{thm-inf:learning-shallow-circuits}
Given the ability to sample from an unknown shallow unitary $U$ using only local measurements, it is efficient to learn a circuit representation for implementing $U$ using a quantum computer.
\end{theorem}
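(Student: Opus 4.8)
The plan is to reduce the problem of learning $U$ to learning its action on constant-size patches, and then to reassemble those patches into a global circuit using the sewing technique. The starting observation is the light-cone structure of a depth-$d$ circuit: each output qubit depends only on the $O(c^d)$ input qubits in its backward light-cone, and perturbing any input qubit affects only the $O(c^d)$ output qubits in its forward light-cone, where $c$ is the (constant) local connectivity. Consequently, invoking the structure theorem for shallow circuits of Ref.~\cite{huang2024learning}, $U$ can be written, with the help of $O(n)$ sewing ancillas, as a constant number of layers each consisting of a tensor product of unitaries supported on $O(1)$-size disjoint blocks. It therefore suffices to learn each such local block to small error and to show that the sewing construction is robust to those errors.

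Next I would learn each block $B$ together with its light-cone $L$ as follows. Prepare a family of input states that are fixed outside $L$ and range over a tomographically complete set of product states on $L$ (for instance, random single-qubit stabilizer states), apply $U$, and perform classical-shadow tomography using only single-qubit measurements on the forward light-cone of $L$. Because that forward light-cone has size $O(1)$, each reduced state to be reconstructed lives on $O(1)$ qubits, so local-measurement tomography to trace-distance error $\delta$ costs $\mathrm{poly}(1/\delta)$ samples, with no dependence on $n$. From these statistics one reconstructs a local unitary $W_B$ whose action on $B$ matches that of $U$. Since the blocks have disjoint light-cones, the corresponding experiments run in parallel, so learning all $O(n)$ blocks to error $\delta = \epsilon/\mathrm{poly}(n)$ costs $\mathrm{poly}(n,1/\epsilon)$ samples and local measurements.

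Finally I would invoke the sewing technique to glue $\{W_B\}$ into an explicit ancilla-assisted circuit $\widetilde U$ with $\norm{\widetilde U - U}_\diamond \le \epsilon$. The technique stitches the local unitaries into the unique global transformation even though they do not commute and carry additional gauge degrees of freedom; one checks that its output depends on the $W_B$ in a Lipschitz manner, so the per-block error $\epsilon/\mathrm{poly}(n)$ propagates to total diamond error $\epsilon$. The resulting circuit has depth $O(1)$ (the constant circuit depth plus the constant-depth sewing overhead) on $n + O(n)$ qubits, hence is an efficient representation of $U$, and every step above uses only the ability to run $U$ on chosen inputs and to measure locally.

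The main obstacle, and the place where the ``only local measurements'' hypothesis bites, is the reconstruction step: one must show that measuring forward light-cones in enough local bases, over enough input configurations, forms a complete tomographic set for the relevant local isometry, and that whatever gauge freedom remains unfixed by local data is exactly the freedom the sewing construction is designed to absorb (cf.\ the remark that sewing works even when the local unitaries do not commute and contain extra degrees of freedom). Equally delicate is the error bookkeeping: overlapping light-cones and the sewing step can each amplify the per-block errors, so the parallel-tomography accuracy and the Lipschitz constant of the sewing map must be tracked carefully to keep the final diamond error below $\epsilon$ with only polynomial overhead.
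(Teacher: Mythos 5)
Your proposal follows essentially the same route as the paper's proof of Theorem~\ref{thm:main-learn-QNC0}: exploit the constant-size light-cones to reduce to local tomography from randomized product-state inputs and single-qubit measurements, reconstruct the local action (the paper does this via the Heisenberg-evolved observables $U^\dagger P_i U$ and per-qubit local inversions $V_i$, which makes precise the gauge freedom you flag), and then sew the pieces with $n$ ancillas, with the telescoping bound of Theorem~\ref{thm:sewing-accuracy} supplying exactly the Lipschitz/linear error accumulation you anticipate. The two delicate points you identify are precisely the ones the paper's Lemma~\ref{lem:learn-H-evolved-ops} and sewing error bound resolve, so the sketch is sound.
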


Precise scaling and definitions for this theorem are provided in Appendix~\ref{sec:math-learning-QNC0} as well as in Ref.~\cite{huang2024learning}. We support this theorem here for the first time with an experimental demonstration, along with additional theorems that provide concrete applications and illuminate the underlying mechanisms responsible for this efficiency.

\subsubsection*{Improved optimization landscape}

At the core of the ability to efficiently learn in all the given example tasks thus far is the capacity to improve the learning landscape through a divide-and-conquer methodology. If one attempts to learn the entire generative process across all qubits simultaneously, the landscape becomes riddled with suboptimal local minima. This occurs even when the target process is essentially entirely classical and the learning model is customized to be well-aligned with the target model. This challenge is highlighted by the following theorem:

\begin{theorem}[Informal: Local swap circuits have exponentially many suboptimal local minima] \label{thm-inf:bad-landscapes}
Consider the task of learning to generate quantum states for a circuit that performs SWAPs on a 1D line between qubits $i$ and $i+3$ with some probability, using a learning model parameterized by nearest-neighbor SWAPs on that same line. This learning landscape contains exponentially many suboptimal local minima with large basins of attraction.
\end{theorem}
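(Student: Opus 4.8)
The plan is to reduce the landscape question to a combinatorial statement about permutations and then assemble the suboptimal minima block by block. First I would fix a convenient probing set of inputs for the objective: the single-excitation basis states $\ket{e_k}$ that record where the excitation initially at site $k$ is routed. On such inputs a network of (partial) SWAPs acts, at every stationary point of interest, as a genuine permutation (or a mixture thereof), so the fidelity-type cost $L(\vec{\theta})$ becomes, up to reparametrization, a smooth real function whose value is a sum over sites of the mismatch (in expectation over any stochastic choices) between the model-induced routing of site $k$ and the target routing. The key structural observation is that each target move is the range-$3$ transposition $(i,i{+}3)$ whereas every model gate is a range-$1$ transposition, so reproducing a single target move exactly demands a coordinated ``bubble-sort'' pattern of five adjacent gates; this range mismatch forces $L$ to decompose into $\Omega(n)$ weakly coupled blocks --- one per potential target move --- with an exact product structure when the target moves are placed on non-overlapping windows.

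Second, within each block I would exhibit two relevant configurations: the ``correct'' one, which realizes the target's action on that block and contributes the minimal (possibly stochastic-irreducible) amount to $L$, and a ``trap'', most naturally the all-identity setting of the block's gates, which reproduces the no-move branch but misses the move branch and hence contributes a strictly larger $\Theta(1)$ amount. The crucial lemma is that the trap is a genuine local minimum of the \emph{full} objective, not merely of its block restriction: perturbing any single adjacent gate in the block to first order only inserts a range-$1$ transposition, which agrees with neither branch of the target (neither the identity nor the range-$3$ swap), so this perturbation strictly fails to decrease $L$; the cross-block contributions are either absent (non-overlapping windows) or controlled by a perturbative bound (overlapping windows). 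Taking independent choices of ``correct'' versus ``trap'' across the $\Omega(n)$ blocks, with at least one block trapped, produces $2^{\Omega(n)}-1$ distinct local minima, each strictly above the global minimum of $L$, which a sufficiently deep NN-SWAP model realizes only through the globally coordinated bubble-sort decomposition.

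Third, for the ``large basins'' claim I would upgrade this to a uniform statement: exhibit a constant $r_0>0$, independent of $n$ and of the remaining blocks' configurations, such that $L$ is strictly increasing as one moves away from the trap within an $r_0$-ball in that block's parameters; the product of these balls is then a basin of attraction, for gradient flow and for any reasonable local search, around the corresponding global trap configuration. Since a random initialization lands inside a given block's trap ball with some constant probability $q>0$ independently, with probability $1-(1-q)^{\Omega(n)}=1-2^{-\Omega(n)}$ at least one block starts trapped, and the barrier separating it from its ``correct'' setting cannot be crossed by descent --- so local optimization from a random start provably fails to reach the global optimum. The step I expect to be the main obstacle is precisely this last one together with the cross-block control in the second step: showing that the cost barrier around each trap is \emph{macroscopic} --- increasing by a constant, not an $n$-dependent vanishing amount, over an $n$-independent distance, and doing so \emph{uniformly over the configurations of the other $\Omega(n)$ blocks} --- which is exactly where the near-product structure forced by the range-$3$-versus-range-$1$ mismatch must be used quantitatively rather than schematically.
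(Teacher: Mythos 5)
Your skeleton matches the paper's: the formal version (Theorem~\ref{thm:exp-many-local-minima-training}) also tiles the line into $\Theta(n)$ disjoint $4$-qubit windows, assigns to each window either the coordinated five-gate adjacent-$\SWAP$ decomposition of $\SWAP_{4j+1,4j+4}$ (all five block angles at $\pi/2$) or the all-identity ``trap'' (all angles at $0$), indexes the $2^{|S|}$ configurations by bitstrings, and shows each non-all-ones configuration is a local minimum over an $\ell_\infty$-ball of radius $\pi/4$, strictly suboptimal by at least $1$ per untrapped block. Your cross-block worry is also moot in this construction: the target SWAPs act on disjoint windows, the inter-block linking parameters are set to zero, and the local cost decomposes exactly as a sum $C_S = \sum_j C_{S,j}$ of per-block contributions, so no perturbative control of overlapping windows is needed.

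However, the step you flag as the main obstacle is not a technicality you can defer --- it is the entire content of the proof, and the argument you offer in its place does not close it. Showing that perturbing \emph{a single} adjacent gate to first order fails to decrease the cost does not establish a local minimum over a constant-radius ball in all five block parameters: the danger is precisely a \emph{coordinated} perturbation of all five angles that partially implements the bubble-sort and gains on the endpoint qubits $a=4j+1$ and $d=4j+4$. The paper handles this by an exact evaluation of the local cost: the gain at the endpoints is a product of three factors, e.g.\ $\lambda_a = \sin^2(\theta_2)\sin^2(\theta_3)\sin^2(\theta_4)$, which is third order in the perturbation, while the penalty incurred on the \emph{interior} qubits $b,c$ of the window contains terms like $\sin^2(\theta_3)\cos^2(\theta_1)\cos^2(\theta_4)$, which are only first order (the $\cos^2$ factors stay bounded below throughout the ball). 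The resulting inequality $C_{S,j}(\vec{\theta}) \geq 1 = C_{S,j}(\vec{\theta}_x)$ for all $\|\vec{\theta}-\vec{\theta}_x\|_\infty < \pi/4$ is what makes the barrier macroscopic and uniform; nothing in your sketch produces the crucial observation that the cost of disturbing the two middle qubits dominates the benefit of partially routing the excitation from $a$ to $d$. Relatedly, restricting the probe to single-excitation states changes the objective away from the paper's stabilizer-averaged local fidelity cost, and it is exactly the contributions of the interior qubits under that full average that supply the dominating penalty terms, so your reduction risks discarding the mechanism the proof relies on.
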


A detailed statement is given in Appendix~\ref{app:exp-many-local-minima-training} as Theorem~\ref{thm:exp-many-local-minima-training}, along with the accompanying proof. This means that even though such a problem is essentially classical, free of barren plateaus~\cite{mcclean2018barren}, and employs a model with an inductive bias strongly aligned with the target, any treatment of these problems by standard optimization methods will require exponentially many restarts as a function of system size to succeed with high probability.

In contrast, our divide-and-conquer learning algorithm decomposes the generative QNN over $n$ qubits into overlapping circuit pieces acting on local qubit regions, as illustrated in Fig.~\ref{fig:mbl_results}(b). For shallow generative QNNs, each piece of the circuit acts on a constant number of qubits and can be learned individually by training each piece to locally invert the transformation of the target unitary. After learning each piece independently, they are coherently combined using the sewing technique with ancilla qubits to reconstruct a global $n$-qubit unitary representing the trained generative QNN. When learning these circuit pieces over local regions of qubits, the number of suboptimal basins of attraction becomes constant, and hence so does the success probability for random restarts in standard optimization strategies. This is formalized in the following theorem:

\begin{theorem}[Informal: Landscapes associated with divide-and-conquer learning are benign] \label{thm-inf:good-landscapes}
Given any constant-depth unitary $U$, our divide-and-conquer learning algorithms produce optimization landscapes that are either strongly convex or have a constant number of local minima regions with respect to system size.
\end{theorem}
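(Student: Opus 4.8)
The plan is to analyze the optimization landscape piece-by-piece, exploiting the fact that the divide-and-conquer algorithm reduces the global problem over $n$ qubits to a collection of independent subproblems, each acting on a constant-size region of qubits. First I would set up the cost function for a single piece: each local circuit piece is trained to invert the action of the target constant-depth unitary $U$ on a lightcone region, measured by a local cost of the form $1 - \tr(\rho_{\text{out}} \ketbra{0}{0})$ (or an average over a tomographically complete input set), where $\rho_{\text{out}}$ is the reduced state on the region after applying $U$ followed by the inverse-piece ansatz. Because $U$ has constant depth, the backward lightcone of any local region involves only $O(1)$ qubits, so the entire per-piece optimization lives on a Hilbert space of constant dimension, independent of $n$. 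This immediately kills barren plateaus (gradients are $\Theta(1)$) and bounds the number of critical points of the per-piece landscape by a quantity depending only on the region size and ansatz depth — hence constant in $n$.

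The two cases in the statement correspond to two regimes for the per-piece subproblem. In the favorable case, I would invoke the known over-parametrization / controllability results: when each piece's ansatz is expressive enough to reach the minimizer and the dynamical Lie algebra is the full $\mathfrak{su}(d)$ for the constant-size region, the landscape restricted to a neighborhood of the solution manifold is strongly convex (or, more precisely, satisfies a Polyak-Łojasiewicz / strong-convexity-transverse-to-the-symmetry-manifold condition), following the style of argument in the over-parametrized VQA literature. In the generic case, where the ansatz is not chosen to guarantee convexity, I would argue that the number of local minima of the per-piece cost is bounded by a constant $c(r,\ell)$ depending only on region radius $r$ and piece depth $\ell$ — essentially because it is a fixed real-algebraic optimization problem in a bounded number of variables — and then show that the global landscape's local minima are (unions of) products of the per-piece local minima. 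The key structural lemma here is that, thanks to the sewing construction, the global cost decomposes additively (up to controlled cross terms confined to overlap regions) over the pieces, so a global critical point projects to a per-piece critical point in each region; since each region contributes at most a constant number, and the sewing guarantees the assembled global minimum is the true target unitary, the total count of genuinely distinct local-minima regions is $O(1)$ in $n$.

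I would then need to handle the overlaps between adjacent pieces carefully: the sewing technique stitches non-commuting local unitaries using ancillas, and I must verify that the stitching step does not reintroduce spurious minima — i.e., that conditioned on each piece being at its local optimum, the ancilla-mediated combination is forced (deterministically, up to the controlled degrees of freedom the sewing is known to absorb) to the correct global unitary. This is where I would lean directly on the correctness guarantee of the sewing technique cited from Ref.~\cite{huang2024learning}: once the pieces locally invert $U$, the sewn circuit implements the unique global $U$, so the assembly contributes no new optimization freedom. Finally I would assemble the pieces: because the pieces are learned \emph{independently}, the global landscape is (essentially) a Cartesian product of the per-piece landscapes, and a product of landscapes each with at most $c$ critical regions and each either strongly convex or PL has, in aggregate, a number of local-minima \emph{regions} that stays constant in system size (the ambient dimension grows, but the combinatorial structure does not).

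The main obstacle I anticipate is the overlap/cross-term analysis in the second case: making precise the sense in which the global cost "decomposes" over overlapping local pieces, and ruling out that the couplings in the overlap regions create new basins of attraction that are not simply products of per-piece basins. Handling this cleanly will likely require choosing the cost function so that each piece's contribution depends only on qubits strictly inside its region (using the lightcone structure of constant-depth $U$ to make the regions genuinely decouple after a suitable partitioning), and then the product structure — and hence the constant bound on local-minima regions — follows. The strong-convexity half is comparatively routine given existing over-parametrization results applied on a constant-dimensional space; the effort is in the bookkeeping that transfers per-region statements to the global landscape via the sewing guarantee.
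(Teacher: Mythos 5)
Your second branch — bounding the number of local minima of each per-piece problem by a constant because it is a fixed (semi-)algebraic optimization in $O(1)$ variables over a compact domain — is essentially the paper's argument (Theorem~\ref{thm:benign-landscape}): the objective $f(V_i)=\sum_P\|V_i^\dagger \hat O_{i,P}V_i-P_i\|_\infty$ is locally Lipschitz and semi-algebraic on $\mathrm{SU}(2^k)$ with $k=\mathcal{O}(1)$, so its Clarke critical set has a finite number of connected components depending only on $k$, and the global basins occupy nonzero volume, so constant restarts suffice. Your worry about overlaps and cross terms is a non-issue in the paper's formulation: the theorem is only about the landscapes the algorithm actually optimizes, which are the independent per-piece problems; the sewing step is deterministic post-processing with its own error bound (Theorems~\ref{thm:direct-sewing-accuracy},~\ref{thm:sewing-accuracy}) and introduces no further optimization, so no product-landscape or cross-term analysis is required.

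The genuine gap is in your strongly-convex branch. You attach strong convexity to the unitary-ansatz piece via over-parametrization, full dynamical Lie algebra, and a PL condition near the solution manifold. That route would not deliver what the theorem claims: over-parametrized variational landscapes on a unitary manifold are not strongly convex (the cost is invariant under a continuous symmetry along the solution set, so the Hessian is degenerate there, and away from the solution set nothing forces convexity); at best you get a local Polyak-\L{}ojasiewicz inequality, which is a strictly weaker statement. In the paper the strongly-convex clause refers to a different optimization problem entirely: Algorithm~1 parameterizes the Heisenberg-evolved observables directly by their Pauli coefficients $\alpha_{P_i,Q}$ with no unitarity constraint, and the loss is literally $\frac{1}{N}\sum_\ell\|\vec\alpha-\vec d_\ell\|_2^2 = \|\vec\alpha-\bar d\|_2^2+\text{const}$, a linear least-squares problem with Hessian $2I$ — hence globally $2$-strongly convex with a unique minimizer (Theorem~\ref{thm:strongly-convex-landscape}). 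Without identifying that the ``strongly convex'' case arises from this linear reparameterization rather than from any property of a unitary ansatz, that half of your proof does not go through as stated.
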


A detailed proof and statements of this theorem are found in Appendix~\ref{app:benign-landscapes} as Theorems~\ref{thm:strongly-convex-landscape}~and~\ref{thm:benign-landscape}. This theorem implies that any strategy based on learning these overlapping pieces of circuits with random restarts and simple gradient descent will be efficient in learning the target unitary and, by extension, the quantum generative model. Hence, one way of understanding how these techniques achieve efficient learning is by observing how they render the landscapes benign by avoiding both barren plateaus and proliferation of local minima.

We support this numerically in a later experiment depicted in Fig.~\ref{fig:landscape_results}, where one can visually observe the dramatic improvement in the landscapes as well as success probabilities for random restarts. Indeed, even for a modest number of qubits, the overwhelming advantages of our learning techniques become apparent.

\section{Experimental results}

To close the loop on generative quantum advantage, it is essential to demonstrate efficient training and inference in a regime where sampling is classically hard. While our theoretical results establish exact mappings and conditions under which this can be achieved in an asymptotic sense, questions remain about specific applications as well as resource thresholds for advantage with real devices and the particular generative models we introduce. Here we describe and perform experiments on superconducting quantum processors using the models we have introduced to ground these theoretical insights and demonstrate evidence for generative quantum advantage.

\begin{figure}[t]
    \centering
    \includegraphics[width=\textwidth]{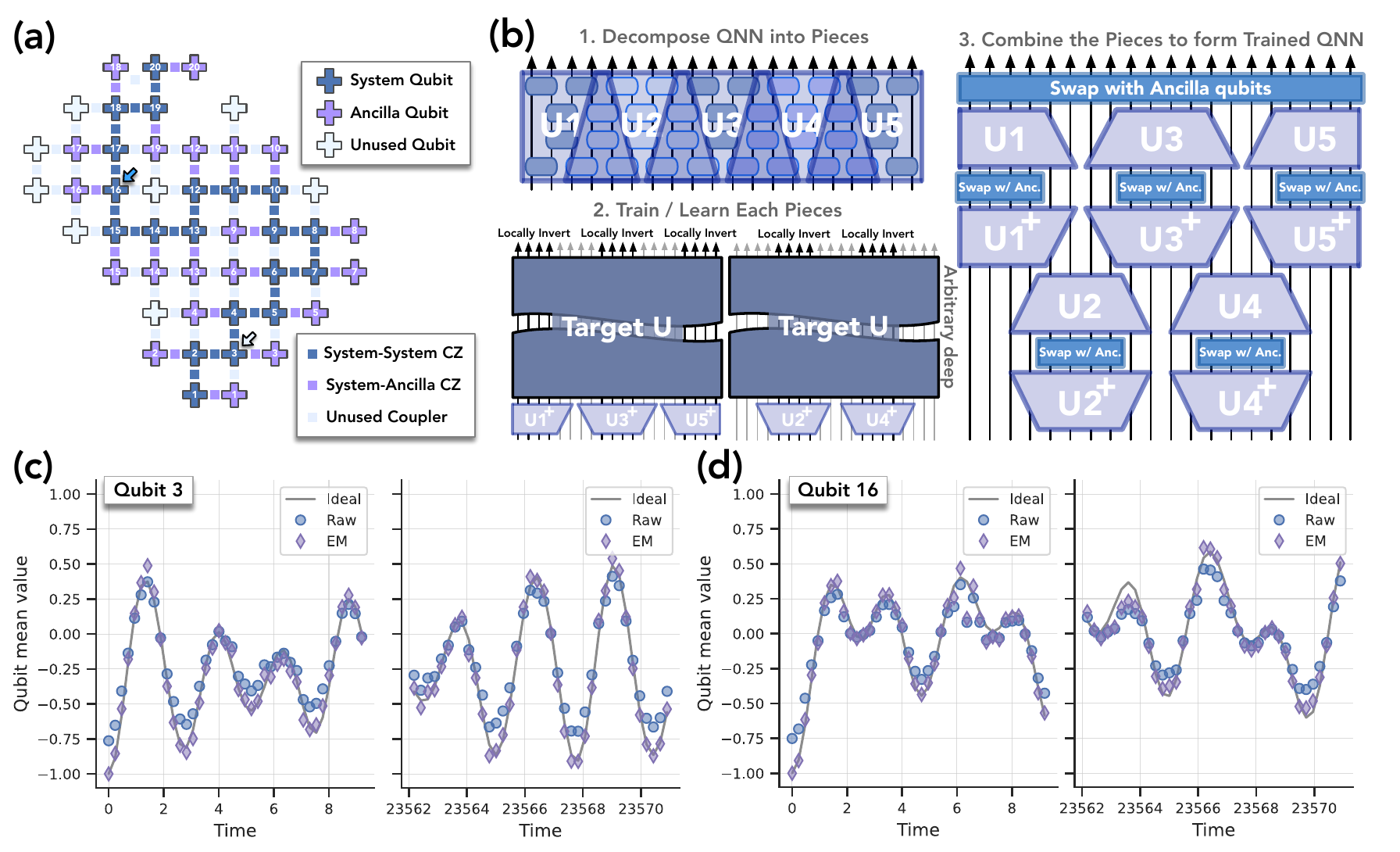}
    \caption{\textbf{Learning quantum generative models.} \textbf{(a)} The layout of system and ancilla qubits embedded into the superconducting qubit device used for learning with the sewing technique. Arrows indicate the qubits whose results are plotted in panels (c) and (d). \textbf{(b.1)} Decomposition of the QNN into circuit pieces, each containing a block of 4 output qubits. \textbf{(b.2)} Each circuit piece is trained to locally invert the target unitary of arbitrary depth. \textbf{(b.3)} The circuit pieces are combined using the sewing technique, demonstrating how learning in blocks of 4 enables parallelization requiring only 2 rounds of application. \textbf{(c)} Experimental results from the trained generative QNN showing the expected value of $Z_i$ for qubit 3, comparing raw data and error-mitigated results against the ideal theoretical prediction. High-quality learning and inference are demonstrated for both short times (left) and very long times (right). \textbf{(d)} The same comparison for qubit 16.}
    \label{fig:mbl_results}
\end{figure}

\subsubsection*{Experiments on learning to generate bitstrings}

Focusing first on the task of learning and generating samples from classically challenging distributions, we describe the experimental realization of an IDQNN in $(2+1)$D, utilizing a Google 2D superconducting chip architecture~\cite{arute2019quantum}. Additional device and performance details are provided in Appendix~\ref{app:device-details}. In the learning procedure, we generate data from an ideal set of pre-chosen parameters $\beta$ and learn these parameters to a precision of $10^{-3}$ through the procedure described in Appendix~\ref{app:qadv-inst-deep-qnn-qease}. We supply the same datasets to classical sequence-to-sequence learning models, namely transformers and XGBoost decision trees, for comparison. The time dimension is fixed to a depth of $d=12$ and we vary the physical qubits up to $68$, corresponding to an effective shallow dimension of $816$ qubits. We also consider a class of related models where $\beta$ is rounded to produce the nearest Clifford circuit, allowing for straightforward prediction of performance beyond sizes that are easily classically simulatable.

The results, as quantified by cross-entropy benchmarking (XEB; see also Appendix~\ref{app:cross-entropy}), measure the fidelity to the ideal distribution and are plotted in Fig.~\ref{fig:classical_learning}. More details on the specific experiments plotted in Fig.~\ref{fig:classical_learning} are provided in Appendix~\ref{app:classical_learning_expt}. The results for quantum models show that, although classical models can learn and generate the distribution for small sizes, as size increases the quantum models demonstrate a considerable performance advantage over the classical models, both of which use classical learning algorithms here. The classical models we tested aim to learn the generative model $p(y | x)$ directly, which has previously been shown to be hard~\cite{niu2020learnability}. In contrast, we train the parameters of the quantum models using classical computers, but generating new samples from the trained quantum models requires a quantum computer and remains classically hard. The Clifford circuit proxies track the performance of quantum models under real noise quite well and predict a sizable quantum performance advantage even with experimental noise at larger sizes.

However, while the IDQNN model can sample from classically hard distributions in the large size limit, it is designed to facilitate efficient learning with its structure and is not optimized to produce hard samples at very low qubit counts. Hence, $816$ shallow qubits are not yet expected to be in the beyond-classical regime. A more efficient quantum circuit simulator able to handle additional structure, such as a tensor network-based approach, may still be able to sample from these distributions at some sizes, constituting a quantum-inspired classical model for this dataset. This raises the question of where the crossover for beyond-classical sampling occurs for these models and whether we can demonstrate efficient learning in that regime as well.

To address this question directly, we use the fact that IDQNNs can sample and generate from deep random quantum circuits as shown in Corollary~\ref{cor:inst-deep-QNN-any} and the precise circuits and performance results from recent beyond-classical experimental demonstrations in Ref.~\cite{morvan2024phase}. Using our exact mapping, we decompose the beyond-classical circuits from Ref.~\cite{morvan2024phase} into an IDQNN form, such that learning and evaluation of any particular instance can be mapped between the two. Here, we are able to evaluate the overhead in the shallow representation of IDQNNs as well as the power of the RCS circuits, finding that an IDQNN representation of the beyond-classical limit is estimated to require $34{,}304$ shallow qubits, mapped to deep circuits on $67$ physical qubits.

In this representation, we can observe the performance of the trained model as a function of the number of samples in Fig.~\ref{fig:classical_learning}c, and we find that if we take typical NISQ infidelity of two-qubit gates today ($10^{-3}$), then around $10^6$ total samples suffice to learn these circuits to that precision. At current repetition rates, this requires only seconds for a quantum computer in both the learning and inference regimes, while the performance for comparable classical computations is shown in Fig.~\ref{fig:classical_learning}e. Even compared to the Frontier supercomputer equipped with approximately $18{,}000$ years of runtime, the quantum model mapped to real hardware today exhibits a performance advantage in learning and generating from this distribution. The inference evaluation is performed for the all-zero input $x=0^n$, whereas the training dataset contains output bitstrings from $x=0^n$ as well as many output bitstrings for random inputs $x$ with varying fractions of $0$'s. Observing outputs from different inputs is essential for efficient learning, as detailed in Appendix~\ref{app:num-learning-method}. This demonstration, together with our theorems, establishes experimental evidence of generative quantum advantage and further quantifies the crossover for our specific trainable model.

\begin{figure}[t]
    \centering
    \includegraphics[width=0.82\textwidth]{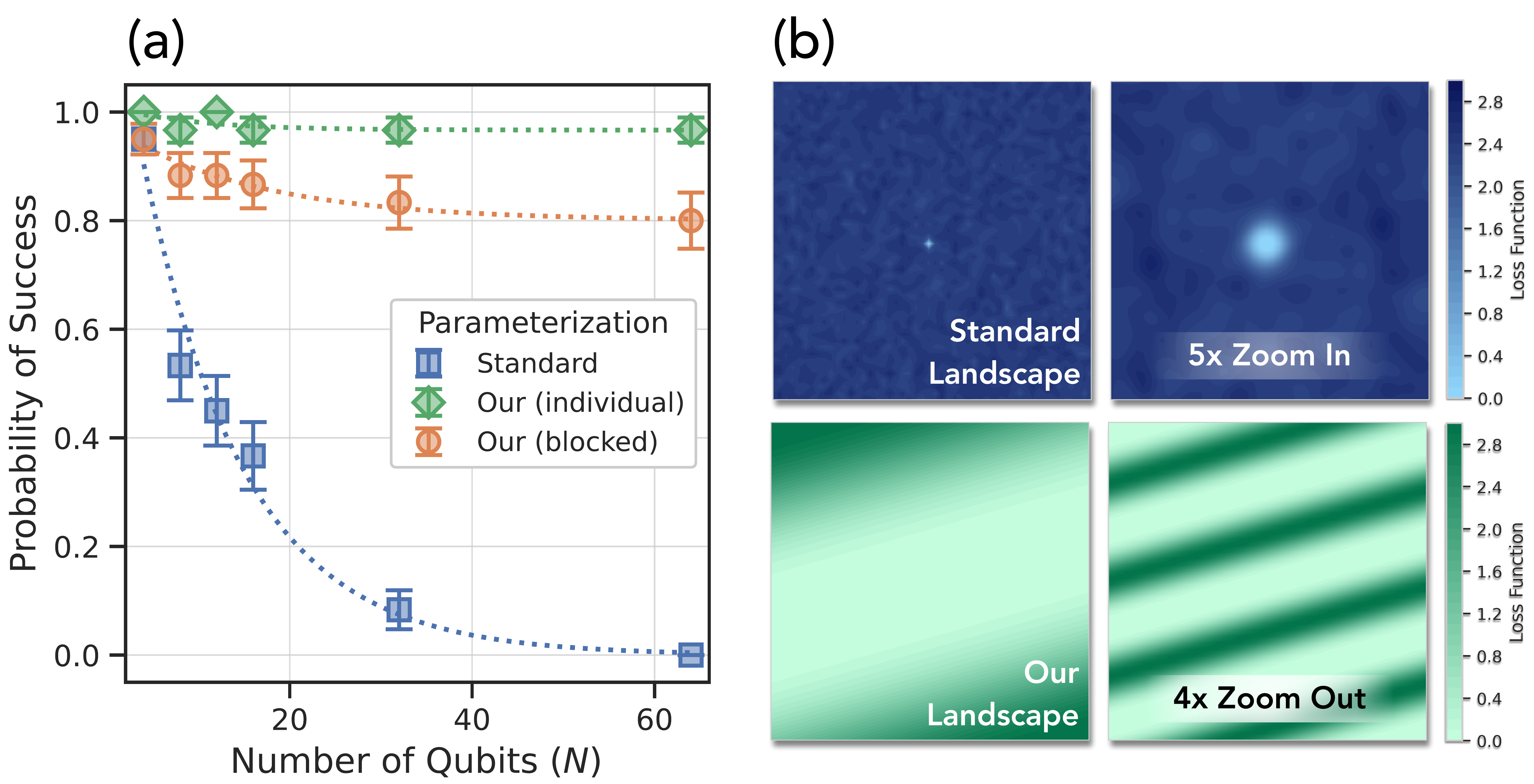}
    \caption{\textbf{Efficient learning from improved optimization landscapes.} \textbf{(a)} The probability of success for learning the SWAP-4 circuit using randomly restarted gradient descent with a local ansatz as a function of the number of qubits. Our divide-and-conquer-based learning algorithm based on learning local circuit pieces avoids the exponential decay in success probability observed with standard approaches. \textbf{(b)} Visualization of optimization landscapes for learning the SWAP-4 circuits with a local ansatz plotted in 2D. The divide-and-conquer nature of the sewing technique dramatically improves the learning landscape from a flat and chaotic terrain filled with local minima (top) to one with large basins of attraction where local minima are also global minima with zero loss function (bottom).}
    \label{fig:landscape_results}
\end{figure}

\subsubsection*{Experiments on learning to generate compressed simulation circuits}

Similar to many successes in classical machine learning, our experiments thus far have utilized a specialized model that facilitates efficient learning, inference, and compatibility with quantum hardware. While it has not been experimentally demonstrated until now, it has recently been shown that any shallow quantum circuit can be learned efficiently in the worst case with only product state inputs and local measurements, though this may require the use of $n$ ancilla qubits and greater depth in inference~\cite{huang2024learning}. This not only has applications in generative learning of classical distributions with potentially lower theoretical overhead than IDQNNs, but also opens an entirely new application domain. Namely, as proven in Theorem~\ref{thm-inf:qadv-compress}, even when given an explicit quantum circuit representation with the promise that a shallow representation exists, it is generally inefficient to find such a representation without a quantum computer. Hence this shallow learning technique can be used for circuit compression or compilation in cases where no classical approach would suffice. Moreover, it can be used to exploit hidden structure in problems to find more efficient representations automatically, which we demonstrate here.

We demonstrate, as an application of shallow circuit learning, the detection of hidden structure in physical dynamics. In doing so, we further develop the divide-and-conquer learning algorithm based on the sewing technique depicted in Fig.~\ref{fig:mbl_results} from Ref.~\cite{huang2024learning}. This allows the final inference sewing technique to be implemented at very low, constant depth and facilitates matching to the layout of the device as shown in Fig.~\ref{fig:mbl_results}. For this experiment in particular, we focus on learning hidden structure in the physical dynamics of a quantum system initially expressed as a quantum circuit, e.g., Trotter evolution. We assume we are given access to circuits representing $U=\exp(-i H t)$ in polynomial depth for $H=V^\dagger H' V$, where $H'$ has a simple, fast-forwardable structure, i.e., $H'=\sum_j h_j Z_j$, and $V$ is a constant-depth, random unitary. Details of $H$ and $U$ are given in Appendix~\ref{app:quantum-adv-compressing}.

While in specific cases, one may be able to successfully devise the shallow, hidden structure from the circuits alone classically, as proven in Theorem~\ref{thm-inf:qadv-compress}, in the most general case one must run the quantum circuits on a quantum computer to find the constant-depth representation efficiently. Here we invoke the general divide-and-conquer strategy to learn and perform inference of this physical system with evolution times $t$ ranging from very short to very long to demonstrate that this technique can locate this hidden, fast-forwardable structure even at very long times. The results and experimental layout are shown in Fig.~\ref{fig:mbl_results}. We observe that at both short and long times, we find a high-accuracy representation implementable in constant depth, and even with current experimental errors we are able to simulate the dynamics to high accuracy using 40 qubits, 20 for the system qubits and 20 for the ancilla register.

\subsubsection*{Numerical experiments on improved optimization landscape}

As shown in Theorems~\ref{thm-inf:bad-landscapes} and~\ref{thm-inf:good-landscapes}, a core element enabling efficient learning is the way in which decomposition and sewing change the landscape of the learning problem. One way of stating this is that the number of local minima becomes constant as a function of system size for any constant block size of qubits, such that simple restarted local optimization methods will be successful. As a final numerical experiment, we follow the example problem used in the proof of Theorems~\ref{thm-inf:bad-landscapes} and~\ref{thm-inf:good-landscapes} and evaluate the probability of success upon random restart for an extremely simple learning target with a model that is tailored to nearly match its structure. 

We observe from the results in Fig.~\ref{fig:landscape_results} that, as expected, learning the whole problem at once is exponentially unlikely to succeed, while learning using a divide-and-conquer strategy focusing on constant-sized blocks succeeds with high probability for any system size. Moreover, the visual depictions of the landscapes show a dramatic and clear improvement in that there are fewer, higher-quality minima with wider basins of attraction. This suggests promising directions for designing more general, efficient learning strategies for quantum models of all kinds.

\section{Discussion}

Through a combination of theoretical proofs and experimental demonstrations, we close the loop on generative quantum advantage that began with beyond-classical demonstrations of random circuit sampling and connects here through efficient learning in the same regimes. For a range of generative tasks, including sampling from classical distributions, compressing quantum circuits, and learning to generate quantum states, we provided evidence and helped quantify the resource crossover for definitive beyond-classical performance with modest resources. With the meteoric rise in interest for classical generative models, including LLMs and diffusion models today, the ability to learn and sample from entirely new and otherwise inaccessible distributions with relatively few quantum resources represents an exciting possibility.

That said, there is substantial work that remains to be done to bring these powerful foundations to fruition in modern machine learning. The precise remaining steps towards \emph{useful} generative quantum advantage remain an open question, but we provide some speculation. First, we believe continuing to expand the classes of models that are known to be both efficiently trainable and provably hard to sample from will be important. Perhaps as a next step, we could generalize the approaches here into a regime that naturally supports floating-point and integer numbers to provide inductive bias matches with real classical data. Second, sharpening the search for practical data distributions (either classical or quantum) that would be compatible with models that exhibit generative quantum advantage will be crucial. While continuing to curate classical data will be important, further investigation should be done to understand the interplay between quantum sensing apparatus coupled to qubits and their quantum data advantages \cite{huang2020power, huang2022quantum}. Finally, much of modern classical machine learning's success has hinged on the fortuitous empirical performance of certain models on real data. Quantum machine learning may follow a similar pattern where the truly optimal models that exhibit empirical or emergent advantage \cite{huang2025vast} benefit more from their trainability and hardware efficiency over their perceived expressivity.

The success of ML has been largely driven by the availability of data, and at its heart, the universe and its data are quantum.  Quantum machine learning has a unique opportunity to take advantage of this data in its natural form, and we believe that for generative processes involving quantum data, quantum computers will be essential.  While such a victory may also be empirical and emergent as in the history of classical ML, we hope this work sets the stage for broadly beneficial quantum generative advantage.

\vspace{-1em}
\section*{Acknowledgments}
\vspace{-1em}
The authors thank Sergio Boixo, Bill Huggins, and David Gosset for detailed comments and discussion on the draft.  The Google Quantum AI team fabricated the processor, built the cryogenic and control systems, optimized the processor performance, and provided the tools that enabled execution of this experiment.
We thank Salvatore Mandr{\`a} and Benjamin Villalonga for help with providing additional data and confirming frontier runtime estimates and methodologies from \cite{morvan2024phase} TABLE 1 when making the beyond classical plot in Figure \ref{fig:classical_learning} part (e).

\bibliographystyle{apsrev4-1_with_title_mod}
\bibliography{references}

\begin{thebibliography}{87}%
\makeatletter
\providecommand \@ifxundefined [1]{%
 \@ifx{#1\undefined}
}%
\providecommand \@ifnum [1]{%
 \ifnum #1\expandafter \@firstoftwo
 \else \expandafter \@secondoftwo
 \fi
}%
\providecommand \@ifx [1]{%
 \ifx #1\expandafter \@firstoftwo
 \else \expandafter \@secondoftwo
 \fi
}%
\providecommand \natexlab [1]{#1}%
\providecommand \enquote  [1]{``#1''}%
\providecommand \bibnamefont  [1]{#1}%
\providecommand \bibfnamefont [1]{#1}%
\providecommand \citenamefont [1]{#1}%
\providecommand \href@noop [0]{\@secondoftwo}%
\providecommand \href [0]{\begingroup \@sanitize@url \@href}%
\providecommand \@href[1]{\@@startlink{#1}\@@href}%
\providecommand \@@href[1]{\endgroup#1\@@endlink}%
\providecommand \@sanitize@url [0]{\catcode `\\12\catcode `\$12\catcode `\&12\catcode `\#12\catcode `\^12\catcode `\_12\catcode `\%12\relax}%
\providecommand \@@startlink[1]{}%
\providecommand \@@endlink[0]{}%
\providecommand \url  [0]{\begingroup\@sanitize@url \@url }%
\providecommand \@url [1]{\endgroup\@href {#1}{\urlprefix }}%
\providecommand \urlprefix  [0]{URL }%
\providecommand \Eprint [0]{\href }%
\providecommand \doibase [0]{http://dx.doi.org/}%
\providecommand \selectlanguage [0]{\@gobble}%
\providecommand \bibinfo  [0]{\@secondoftwo}%
\providecommand \bibfield  [0]{\@secondoftwo}%
\providecommand \translation [1]{[#1]}%
\providecommand \BibitemOpen [0]{}%
\providecommand \bibitemStop [0]{}%
\providecommand \bibitemNoStop [0]{.\EOS\space}%
\providecommand \EOS [0]{\spacefactor3000\relax}%
\providecommand \BibitemShut  [1]{\csname bibitem#1\endcsname}%
\let\auto@bib@innerbib\@empty
\bibitem [{\citenamefont {Vaswani}\ \emph {et~al.}(2017)\citenamefont {Vaswani}, \citenamefont {Shazeer}, \citenamefont {Parmar}, \citenamefont {Uszkoreit}, \citenamefont {Jones}, \citenamefont {Gomez}, \citenamefont {Kaiser},\ and\ \citenamefont {Polosukhin}}]{vaswani2017attention}%
  \BibitemOpen
  \bibfield  {author} {\bibinfo {author} {\bibfnamefont {A.}~\bibnamefont {Vaswani}}, \bibinfo {author} {\bibfnamefont {N.}~\bibnamefont {Shazeer}}, \bibinfo {author} {\bibfnamefont {N.}~\bibnamefont {Parmar}}, \bibinfo {author} {\bibfnamefont {J.}~\bibnamefont {Uszkoreit}}, \bibinfo {author} {\bibfnamefont {L.}~\bibnamefont {Jones}}, \bibinfo {author} {\bibfnamefont {A.~N.}\ \bibnamefont {Gomez}}, \bibinfo {author} {\bibfnamefont {{\L}.}~\bibnamefont {Kaiser}}, \ and\ \bibinfo {author} {\bibfnamefont {I.}~\bibnamefont {Polosukhin}},\ }\bibfield  {title} {\enquote {\bibinfo {title} {Attention is all you need},}\ }\href@noop {} {\bibfield  {journal} {\bibinfo  {journal} {Advances in neural information processing systems}\ }\textbf {\bibinfo {volume} {30}} (\bibinfo {year} {2017})}\BibitemShut {NoStop}%
\bibitem [{\citenamefont {Ho}\ \emph {et~al.}(2020)\citenamefont {Ho}, \citenamefont {Jain},\ and\ \citenamefont {Abbeel}}]{ho2020denoising}%
  \BibitemOpen
  \bibfield  {author} {\bibinfo {author} {\bibfnamefont {J.}~\bibnamefont {Ho}}, \bibinfo {author} {\bibfnamefont {A.}~\bibnamefont {Jain}}, \ and\ \bibinfo {author} {\bibfnamefont {P.}~\bibnamefont {Abbeel}},\ }\bibfield  {title} {\enquote {\bibinfo {title} {Denoising diffusion probabilistic models},}\ }\href@noop {} {\bibfield  {journal} {\bibinfo  {journal} {Advances in neural information processing systems}\ }\textbf {\bibinfo {volume} {33}},\ \bibinfo {pages} {6840} (\bibinfo {year} {2020})}\BibitemShut {NoStop}%
\bibitem [{\citenamefont {Goodfellow}\ \emph {et~al.}(2016)\citenamefont {Goodfellow}, \citenamefont {Bengio},\ and\ \citenamefont {Courville}}]{goodfellow2016deep}%
  \BibitemOpen
  \bibfield  {author} {\bibinfo {author} {\bibfnamefont {I.}~\bibnamefont {Goodfellow}}, \bibinfo {author} {\bibfnamefont {Y.}~\bibnamefont {Bengio}}, \ and\ \bibinfo {author} {\bibfnamefont {A.}~\bibnamefont {Courville}},\ }\href@noop {} {\emph {\bibinfo {title} {Deep Learning}}}\ (\bibinfo  {publisher} {MIT press},\ \bibinfo {year} {2016})\BibitemShut {NoStop}%
\bibitem [{\citenamefont {Biamonte}\ \emph {et~al.}(2017)\citenamefont {Biamonte}, \citenamefont {Wittek}, \citenamefont {Pancotti}, \citenamefont {Rebentrost}, \citenamefont {Wiebe},\ and\ \citenamefont {Lloyd}}]{biamonte2017quantum}%
  \BibitemOpen
  \bibfield  {author} {\bibinfo {author} {\bibfnamefont {J.}~\bibnamefont {Biamonte}}, \bibinfo {author} {\bibfnamefont {P.}~\bibnamefont {Wittek}}, \bibinfo {author} {\bibfnamefont {N.}~\bibnamefont {Pancotti}}, \bibinfo {author} {\bibfnamefont {P.}~\bibnamefont {Rebentrost}}, \bibinfo {author} {\bibfnamefont {N.}~\bibnamefont {Wiebe}}, \ and\ \bibinfo {author} {\bibfnamefont {S.}~\bibnamefont {Lloyd}},\ }\bibfield  {title} {\enquote {\bibinfo {title} {Quantum machine learning},}\ }\href@noop {} {\bibfield  {journal} {\bibinfo  {journal} {Nature}\ }\textbf {\bibinfo {volume} {549}},\ \bibinfo {pages} {195} (\bibinfo {year} {2017})}\BibitemShut {NoStop}%
\bibitem [{\citenamefont {Huang}\ \emph {et~al.}(2021{\natexlab{a}})\citenamefont {Huang}, \citenamefont {Kueng},\ and\ \citenamefont {Preskill}}]{huang2021information}%
  \BibitemOpen
  \bibfield  {author} {\bibinfo {author} {\bibfnamefont {H.-Y.}\ \bibnamefont {Huang}}, \bibinfo {author} {\bibfnamefont {R.}~\bibnamefont {Kueng}}, \ and\ \bibinfo {author} {\bibfnamefont {J.}~\bibnamefont {Preskill}},\ }\bibfield  {title} {\enquote {\bibinfo {title} {Information-theoretic bounds on quantum advantage in machine learning},}\ }\href {\doibase 10.1103/PhysRevLett.126.190505} {\bibfield  {journal} {\bibinfo  {journal} {Phys. Rev. Lett.}\ }\textbf {\bibinfo {volume} {126}},\ \bibinfo {pages} {190505} (\bibinfo {year} {2021}{\natexlab{a}})}\BibitemShut {NoStop}%
\bibitem [{\citenamefont {Chen}\ \emph {et~al.}(2022)\citenamefont {Chen}, \citenamefont {Cotler}, \citenamefont {Huang},\ and\ \citenamefont {Li}}]{chen2022exponential}%
  \BibitemOpen
  \bibfield  {author} {\bibinfo {author} {\bibfnamefont {S.}~\bibnamefont {Chen}}, \bibinfo {author} {\bibfnamefont {J.}~\bibnamefont {Cotler}}, \bibinfo {author} {\bibfnamefont {H.-Y.}\ \bibnamefont {Huang}}, \ and\ \bibinfo {author} {\bibfnamefont {J.}~\bibnamefont {Li}},\ }\bibfield  {title} {\enquote {\bibinfo {title} {Exponential separations between learning with and without quantum memory},}\ }in\ \href@noop {} {\emph {\bibinfo {booktitle} {2021 IEEE 62nd Annual Symposium on Foundations of Computer Science (FOCS)}}}\ (\bibinfo {organization} {IEEE},\ \bibinfo {year} {2022})\ pp.\ \bibinfo {pages} {574--585}\BibitemShut {NoStop}%
\bibitem [{\citenamefont {Huang}\ \emph {et~al.}(2022)\citenamefont {Huang}, \citenamefont {Broughton}, \citenamefont {Cotler}, \citenamefont {Chen}, \citenamefont {Li}, \citenamefont {Mohseni}, \citenamefont {Neven}, \citenamefont {Babbush}, \citenamefont {Kueng}, \citenamefont {Preskill} \emph {et~al.}}]{huang2022quantum}%
  \BibitemOpen
  \bibfield  {author} {\bibinfo {author} {\bibfnamefont {H.-Y.}\ \bibnamefont {Huang}}, \bibinfo {author} {\bibfnamefont {M.}~\bibnamefont {Broughton}}, \bibinfo {author} {\bibfnamefont {J.}~\bibnamefont {Cotler}}, \bibinfo {author} {\bibfnamefont {S.}~\bibnamefont {Chen}}, \bibinfo {author} {\bibfnamefont {J.}~\bibnamefont {Li}}, \bibinfo {author} {\bibfnamefont {M.}~\bibnamefont {Mohseni}}, \bibinfo {author} {\bibfnamefont {H.}~\bibnamefont {Neven}}, \bibinfo {author} {\bibfnamefont {R.}~\bibnamefont {Babbush}}, \bibinfo {author} {\bibfnamefont {R.}~\bibnamefont {Kueng}}, \bibinfo {author} {\bibfnamefont {J.}~\bibnamefont {Preskill}},  \emph {et~al.},\ }\bibfield  {title} {\enquote {\bibinfo {title} {Quantum advantage in learning from experiments},}\ }\href@noop {} {\bibfield  {journal} {\bibinfo  {journal} {Science}\ }\textbf {\bibinfo {volume} {376}},\ \bibinfo {pages} {1182} (\bibinfo {year} {2022})}\BibitemShut {NoStop}%
\bibitem [{\citenamefont {Aaronson}(2015)}]{aaronson2015read}%
  \BibitemOpen
  \bibfield  {author} {\bibinfo {author} {\bibfnamefont {S.}~\bibnamefont {Aaronson}},\ }\bibfield  {title} {\enquote {\bibinfo {title} {Read the fine print},}\ }\href@noop {} {\bibfield  {journal} {\bibinfo  {journal} {Nat. Phys.}\ }\textbf {\bibinfo {volume} {11}},\ \bibinfo {pages} {291} (\bibinfo {year} {2015})}\BibitemShut {NoStop}%
\bibitem [{\citenamefont {Huang}\ \emph {et~al.}(2021{\natexlab{b}})\citenamefont {Huang}, \citenamefont {Broughton}, \citenamefont {Mohseni}, \citenamefont {Babbush}, \citenamefont {Boixo}, \citenamefont {Neven},\ and\ \citenamefont {McClean}}]{huang2020power}%
  \BibitemOpen
  \bibfield  {author} {\bibinfo {author} {\bibfnamefont {H.-Y.}\ \bibnamefont {Huang}}, \bibinfo {author} {\bibfnamefont {M.}~\bibnamefont {Broughton}}, \bibinfo {author} {\bibfnamefont {M.}~\bibnamefont {Mohseni}}, \bibinfo {author} {\bibfnamefont {R.}~\bibnamefont {Babbush}}, \bibinfo {author} {\bibfnamefont {S.}~\bibnamefont {Boixo}}, \bibinfo {author} {\bibfnamefont {H.}~\bibnamefont {Neven}}, \ and\ \bibinfo {author} {\bibfnamefont {J.~R.}\ \bibnamefont {McClean}},\ }\bibfield  {title} {\enquote {\bibinfo {title} {Power of data in quantum machine learning},}\ }\href@noop {} {\bibfield  {journal} {\bibinfo  {journal} {Nat. Commun.}\ }\textbf {\bibinfo {volume} {12}},\ \bibinfo {pages} {1} (\bibinfo {year} {2021}{\natexlab{b}})}\BibitemShut {NoStop}%
\bibitem [{\citenamefont {Schuld}\ and\ \citenamefont {Killoran}(2022)}]{schuld2022quantum}%
  \BibitemOpen
  \bibfield  {author} {\bibinfo {author} {\bibfnamefont {M.}~\bibnamefont {Schuld}}\ and\ \bibinfo {author} {\bibfnamefont {N.}~\bibnamefont {Killoran}},\ }\bibfield  {title} {\enquote {\bibinfo {title} {Is quantum advantage the right goal for quantum machine learning?}}\ }\href@noop {} {\bibfield  {journal} {\bibinfo  {journal} {Prx Quantum}\ }\textbf {\bibinfo {volume} {3}},\ \bibinfo {pages} {030101} (\bibinfo {year} {2022})}\BibitemShut {NoStop}%
\bibitem [{\citenamefont {Abbas}\ \emph {et~al.}(2023)\citenamefont {Abbas}, \citenamefont {King}, \citenamefont {Huang}, \citenamefont {Huggins}, \citenamefont {Movassagh}, \citenamefont {Gilboa},\ and\ \citenamefont {McClean}}]{abbas2023quantum}%
  \BibitemOpen
  \bibfield  {author} {\bibinfo {author} {\bibfnamefont {A.}~\bibnamefont {Abbas}}, \bibinfo {author} {\bibfnamefont {R.}~\bibnamefont {King}}, \bibinfo {author} {\bibfnamefont {H.-Y.}\ \bibnamefont {Huang}}, \bibinfo {author} {\bibfnamefont {W.~J.}\ \bibnamefont {Huggins}}, \bibinfo {author} {\bibfnamefont {R.}~\bibnamefont {Movassagh}}, \bibinfo {author} {\bibfnamefont {D.}~\bibnamefont {Gilboa}}, \ and\ \bibinfo {author} {\bibfnamefont {J.}~\bibnamefont {McClean}},\ }\bibfield  {title} {\enquote {\bibinfo {title} {On quantum backpropagation, information reuse, and cheating measurement collapse},}\ }\href@noop {} {\bibfield  {journal} {\bibinfo  {journal} {Advances in Neural Information Processing Systems}\ }\textbf {\bibinfo {volume} {36}},\ \bibinfo {pages} {44792} (\bibinfo {year} {2023})}\BibitemShut {NoStop}%
\bibitem [{\citenamefont {Jaques}\ and\ \citenamefont {Rattew}(2023)}]{jaques2023qram}%
  \BibitemOpen
  \bibfield  {author} {\bibinfo {author} {\bibfnamefont {S.}~\bibnamefont {Jaques}}\ and\ \bibinfo {author} {\bibfnamefont {A.~G.}\ \bibnamefont {Rattew}},\ }\bibfield  {title} {\enquote {\bibinfo {title} {Qram: A survey and critique},}\ }\href@noop {} {\bibfield  {journal} {\bibinfo  {journal} {arXiv preprint arXiv:2305.10310}\ } (\bibinfo {year} {2023})}\BibitemShut {NoStop}%
\bibitem [{\citenamefont {Bshouty}\ and\ \citenamefont {Jackson}(1995)}]{bshouty1995learning}%
  \BibitemOpen
  \bibfield  {author} {\bibinfo {author} {\bibfnamefont {N.~H.}\ \bibnamefont {Bshouty}}\ and\ \bibinfo {author} {\bibfnamefont {J.~C.}\ \bibnamefont {Jackson}},\ }\bibfield  {title} {\enquote {\bibinfo {title} {Learning dnf over the uniform distribution using a quantum example oracle},}\ }in\ \href@noop {} {\emph {\bibinfo {booktitle} {Proceedings of the eighth annual conference on Computational learning theory}}}\ (\bibinfo {year} {1995})\ pp.\ \bibinfo {pages} {118--127}\BibitemShut {NoStop}%
\bibitem [{\citenamefont {Arunachalam}\ and\ \citenamefont {de~Wolf}(2017)}]{arunachalam2017guest}%
  \BibitemOpen
  \bibfield  {author} {\bibinfo {author} {\bibfnamefont {S.}~\bibnamefont {Arunachalam}}\ and\ \bibinfo {author} {\bibfnamefont {R.}~\bibnamefont {de~Wolf}},\ }\bibfield  {title} {\enquote {\bibinfo {title} {Guest column: A survey of quantum learning theory},}\ }\href@noop {} {\bibfield  {journal} {\bibinfo  {journal} {ACM SIGACT News}\ }\textbf {\bibinfo {volume} {48}},\ \bibinfo {pages} {41} (\bibinfo {year} {2017})}\BibitemShut {NoStop}%
\bibitem [{\citenamefont {Arunachalam}\ \emph {et~al.}(2020)\citenamefont {Arunachalam}, \citenamefont {Grilo},\ and\ \citenamefont {Yuen}}]{arunachalam2020quantum}%
  \BibitemOpen
  \bibfield  {author} {\bibinfo {author} {\bibfnamefont {S.}~\bibnamefont {Arunachalam}}, \bibinfo {author} {\bibfnamefont {A.~B.}\ \bibnamefont {Grilo}}, \ and\ \bibinfo {author} {\bibfnamefont {H.}~\bibnamefont {Yuen}},\ }\bibfield  {title} {\enquote {\bibinfo {title} {Quantum statistical query learning},}\ }\href@noop {} {\bibfield  {journal} {\bibinfo  {journal} {arXiv preprint arXiv:2002.08240}\ } (\bibinfo {year} {2020})}\BibitemShut {NoStop}%
\bibitem [{\citenamefont {At{\i}c{\i}}\ and\ \citenamefont {Servedio}(2007)}]{atici2007quantum}%
  \BibitemOpen
  \bibfield  {author} {\bibinfo {author} {\bibfnamefont {A.}~\bibnamefont {At{\i}c{\i}}}\ and\ \bibinfo {author} {\bibfnamefont {R.~A.}\ \bibnamefont {Servedio}},\ }\bibfield  {title} {\enquote {\bibinfo {title} {Quantum algorithms for learning and testing juntas},}\ }\href@noop {} {\bibfield  {journal} {\bibinfo  {journal} {Quantum Information Processing}\ }\textbf {\bibinfo {volume} {6}},\ \bibinfo {pages} {323} (\bibinfo {year} {2007})}\BibitemShut {NoStop}%
\bibitem [{\citenamefont {Cross}\ \emph {et~al.}(2015)\citenamefont {Cross}, \citenamefont {Smith},\ and\ \citenamefont {Smolin}}]{cross2015quantum}%
  \BibitemOpen
  \bibfield  {author} {\bibinfo {author} {\bibfnamefont {A.~W.}\ \bibnamefont {Cross}}, \bibinfo {author} {\bibfnamefont {G.}~\bibnamefont {Smith}}, \ and\ \bibinfo {author} {\bibfnamefont {J.~A.}\ \bibnamefont {Smolin}},\ }\bibfield  {title} {\enquote {\bibinfo {title} {Quantum learning robust against noise},}\ }\href@noop {} {\bibfield  {journal} {\bibinfo  {journal} {Physical Review A}\ }\textbf {\bibinfo {volume} {92}},\ \bibinfo {pages} {012327} (\bibinfo {year} {2015})}\BibitemShut {NoStop}%
\bibitem [{\citenamefont {Bernstein}\ and\ \citenamefont {Vazirani}(1993)}]{bernstein1993quantum}%
  \BibitemOpen
  \bibfield  {author} {\bibinfo {author} {\bibfnamefont {E.}~\bibnamefont {Bernstein}}\ and\ \bibinfo {author} {\bibfnamefont {U.}~\bibnamefont {Vazirani}},\ }\bibfield  {title} {\enquote {\bibinfo {title} {Quantum complexity theory},}\ }in\ \href@noop {} {\emph {\bibinfo {booktitle} {Proceedings of the twenty-fifth annual ACM symposium on Theory of computing}}}\ (\bibinfo {year} {1993})\ pp.\ \bibinfo {pages} {11--20}\BibitemShut {NoStop}%
\bibitem [{\citenamefont {Lewis}\ \emph {et~al.}(2025)\citenamefont {Lewis}, \citenamefont {Gilboa},\ and\ \citenamefont {McClean}}]{lewis2025quantum}%
  \BibitemOpen
  \bibfield  {author} {\bibinfo {author} {\bibfnamefont {L.}~\bibnamefont {Lewis}}, \bibinfo {author} {\bibfnamefont {D.}~\bibnamefont {Gilboa}}, \ and\ \bibinfo {author} {\bibfnamefont {J.~R.}\ \bibnamefont {McClean}},\ }\bibfield  {title} {\enquote {\bibinfo {title} {Quantum advantage for learning shallow neural networks with natural data distributions},}\ }\href@noop {} {\bibfield  {journal} {\bibinfo  {journal} {arXiv preprint arXiv:2503.20879}\ } (\bibinfo {year} {2025})}\BibitemShut {NoStop}%
\bibitem [{\citenamefont {Arunachalam}\ and\ \citenamefont {De~Wolf}(2018)}]{arunachalam2018optimal}%
  \BibitemOpen
  \bibfield  {author} {\bibinfo {author} {\bibfnamefont {S.}~\bibnamefont {Arunachalam}}\ and\ \bibinfo {author} {\bibfnamefont {R.}~\bibnamefont {De~Wolf}},\ }\bibfield  {title} {\enquote {\bibinfo {title} {Optimal quantum sample complexity of learning algorithms},}\ }\href@noop {} {\bibfield  {journal} {\bibinfo  {journal} {Journal of Machine Learning Research}\ }\textbf {\bibinfo {volume} {19}},\ \bibinfo {pages} {1} (\bibinfo {year} {2018})}\BibitemShut {NoStop}%
\bibitem [{\citenamefont {Atici}\ and\ \citenamefont {Servedio}(2005)}]{atici2005improved}%
  \BibitemOpen
  \bibfield  {author} {\bibinfo {author} {\bibfnamefont {A.}~\bibnamefont {Atici}}\ and\ \bibinfo {author} {\bibfnamefont {R.~A.}\ \bibnamefont {Servedio}},\ }\bibfield  {title} {\enquote {\bibinfo {title} {Improved bounds on quantum learning algorithms},}\ }\href@noop {} {\bibfield  {journal} {\bibinfo  {journal} {Quantum Information Processing}\ }\textbf {\bibinfo {volume} {4}},\ \bibinfo {pages} {355} (\bibinfo {year} {2005})}\BibitemShut {NoStop}%
\bibitem [{\citenamefont {Servedio}\ and\ \citenamefont {Gortler}(2004)}]{servedio2004equivalences}%
  \BibitemOpen
  \bibfield  {author} {\bibinfo {author} {\bibfnamefont {R.~A.}\ \bibnamefont {Servedio}}\ and\ \bibinfo {author} {\bibfnamefont {S.~J.}\ \bibnamefont {Gortler}},\ }\bibfield  {title} {\enquote {\bibinfo {title} {Equivalences and separations between quantum and classical learnability},}\ }\href@noop {} {\bibfield  {journal} {\bibinfo  {journal} {SIAM J. Comput.}\ }\textbf {\bibinfo {volume} {33}},\ \bibinfo {pages} {1067} (\bibinfo {year} {2004})}\BibitemShut {NoStop}%
\bibitem [{\citenamefont {Zhang}(2010)}]{zhang2010improved}%
  \BibitemOpen
  \bibfield  {author} {\bibinfo {author} {\bibfnamefont {C.}~\bibnamefont {Zhang}},\ }\bibfield  {title} {\enquote {\bibinfo {title} {An improved lower bound on query complexity for quantum pac learning},}\ }\href@noop {} {\bibfield  {journal} {\bibinfo  {journal} {Information Processing Letters}\ }\textbf {\bibinfo {volume} {111}},\ \bibinfo {pages} {40} (\bibinfo {year} {2010})}\BibitemShut {NoStop}%
\bibitem [{\citenamefont {Brutzkus}\ and\ \citenamefont {Globerson}(2017)}]{brutzkus2017globally}%
  \BibitemOpen
  \bibfield  {author} {\bibinfo {author} {\bibfnamefont {A.}~\bibnamefont {Brutzkus}}\ and\ \bibinfo {author} {\bibfnamefont {A.}~\bibnamefont {Globerson}},\ }\bibfield  {title} {\enquote {\bibinfo {title} {Globally optimal gradient descent for a convnet with gaussian inputs},}\ }in\ \href@noop {} {\emph {\bibinfo {booktitle} {International conference on machine learning}}}\ (\bibinfo {organization} {PMLR},\ \bibinfo {year} {2017})\ pp.\ \bibinfo {pages} {605--614}\BibitemShut {NoStop}%
\bibitem [{\citenamefont {Safran}\ and\ \citenamefont {Shamir}(2018)}]{Safran2018-xn}%
  \BibitemOpen
  \bibfield  {author} {\bibinfo {author} {\bibfnamefont {I.}~\bibnamefont {Safran}}\ and\ \bibinfo {author} {\bibfnamefont {O.}~\bibnamefont {Shamir}},\ }\bibfield  {title} {\enquote {\bibinfo {title} {Spurious local minima are common in two-layer {ReLU} neural networks},}\ }in\ \href@noop {} {\emph {\bibinfo {booktitle} {International Conference on Machine Learning}}}\ (\bibinfo  {publisher} {PMLR},\ \bibinfo {year} {2018})\ pp.\ \bibinfo {pages} {4433--4441}\BibitemShut {NoStop}%
\bibitem [{\citenamefont {Daniely}\ and\ \citenamefont {Vardi}(2020)}]{Daniely2020-eu}%
  \BibitemOpen
  \bibfield  {author} {\bibinfo {author} {\bibfnamefont {A.}~\bibnamefont {Daniely}}\ and\ \bibinfo {author} {\bibfnamefont {G.}~\bibnamefont {Vardi}},\ }\bibfield  {title} {\enquote {\bibinfo {title} {Hardness of learning neural networks with natural weights},}\ }\href@noop {} {\bibfield  {journal} {\bibinfo  {journal} {Advances in Neural Information Processing Systems}\ }\textbf {\bibinfo {volume} {33}},\ \bibinfo {pages} {930} (\bibinfo {year} {2020})}\BibitemShut {NoStop}%
\bibitem [{\citenamefont {Kiani}\ \emph {et~al.}(2024)\citenamefont {Kiani}, \citenamefont {Wang},\ and\ \citenamefont {Weber}}]{Kiani2024-qt}%
  \BibitemOpen
  \bibfield  {author} {\bibinfo {author} {\bibfnamefont {B.~T.}\ \bibnamefont {Kiani}}, \bibinfo {author} {\bibfnamefont {J.}~\bibnamefont {Wang}}, \ and\ \bibinfo {author} {\bibfnamefont {M.}~\bibnamefont {Weber}},\ }\bibfield  {title} {\enquote {\bibinfo {title} {Hardness of learning neural networks under the manifold hypothesis},}\ }\href@noop {} {\bibfield  {journal} {\bibinfo  {journal} {arXiv [cs.LG]}\ } (\bibinfo {year} {2024})}\BibitemShut {NoStop}%
\bibitem [{\citenamefont {Boixo}\ \emph {et~al.}(2018)\citenamefont {Boixo}, \citenamefont {Isakov}, \citenamefont {Smelyanskiy}, \citenamefont {Babbush}, \citenamefont {Ding}, \citenamefont {Jiang}, \citenamefont {Bremner}, \citenamefont {Martinis},\ and\ \citenamefont {Neven}}]{boixo2018characterizing}%
  \BibitemOpen
  \bibfield  {author} {\bibinfo {author} {\bibfnamefont {S.}~\bibnamefont {Boixo}}, \bibinfo {author} {\bibfnamefont {S.~V.}\ \bibnamefont {Isakov}}, \bibinfo {author} {\bibfnamefont {V.~N.}\ \bibnamefont {Smelyanskiy}}, \bibinfo {author} {\bibfnamefont {R.}~\bibnamefont {Babbush}}, \bibinfo {author} {\bibfnamefont {N.}~\bibnamefont {Ding}}, \bibinfo {author} {\bibfnamefont {Z.}~\bibnamefont {Jiang}}, \bibinfo {author} {\bibfnamefont {M.~J.}\ \bibnamefont {Bremner}}, \bibinfo {author} {\bibfnamefont {J.~M.}\ \bibnamefont {Martinis}}, \ and\ \bibinfo {author} {\bibfnamefont {H.}~\bibnamefont {Neven}},\ }\bibfield  {title} {\enquote {\bibinfo {title} {Characterizing quantum supremacy in near-term devices},}\ }\href@noop {} {\bibfield  {journal} {\bibinfo  {journal} {Nature Physics}\ }\textbf {\bibinfo {volume} {14}},\ \bibinfo {pages} {595} (\bibinfo {year} {2018})}\BibitemShut {NoStop}%
\bibitem [{\citenamefont {Arute}\ \emph {et~al.}(2019)\citenamefont {Arute}, \citenamefont {Arya}, \citenamefont {Babbush}, \citenamefont {Bacon}, \citenamefont {Bardin}, \citenamefont {Barends}, \citenamefont {Biswas}, \citenamefont {Boixo}, \citenamefont {Brandao}, \citenamefont {Buell} \emph {et~al.}}]{arute2019quantum}%
  \BibitemOpen
  \bibfield  {author} {\bibinfo {author} {\bibfnamefont {F.}~\bibnamefont {Arute}}, \bibinfo {author} {\bibfnamefont {K.}~\bibnamefont {Arya}}, \bibinfo {author} {\bibfnamefont {R.}~\bibnamefont {Babbush}}, \bibinfo {author} {\bibfnamefont {D.}~\bibnamefont {Bacon}}, \bibinfo {author} {\bibfnamefont {J.~C.}\ \bibnamefont {Bardin}}, \bibinfo {author} {\bibfnamefont {R.}~\bibnamefont {Barends}}, \bibinfo {author} {\bibfnamefont {R.}~\bibnamefont {Biswas}}, \bibinfo {author} {\bibfnamefont {S.}~\bibnamefont {Boixo}}, \bibinfo {author} {\bibfnamefont {F.~G.}\ \bibnamefont {Brandao}}, \bibinfo {author} {\bibfnamefont {D.~A.}\ \bibnamefont {Buell}},  \emph {et~al.},\ }\bibfield  {title} {\enquote {\bibinfo {title} {Quantum supremacy using a programmable superconducting processor},}\ }\href@noop {} {\bibfield  {journal} {\bibinfo  {journal} {Nature}\ }\textbf {\bibinfo {volume} {574}},\ \bibinfo {pages} {505} (\bibinfo {year} {2019})}\BibitemShut {NoStop}%
\bibitem [{\citenamefont {Morvan}\ \emph {et~al.}(2024)\citenamefont {Morvan}, \citenamefont {Villalonga}, \citenamefont {Mi}, \citenamefont {Mandra}, \citenamefont {Bengtsson}, \citenamefont {Klimov}, \citenamefont {Chen}, \citenamefont {Hong}, \citenamefont {Erickson}, \citenamefont {Drozdov} \emph {et~al.}}]{morvan2024phase}%
  \BibitemOpen
  \bibfield  {author} {\bibinfo {author} {\bibfnamefont {A.}~\bibnamefont {Morvan}}, \bibinfo {author} {\bibfnamefont {B.}~\bibnamefont {Villalonga}}, \bibinfo {author} {\bibfnamefont {X.}~\bibnamefont {Mi}}, \bibinfo {author} {\bibfnamefont {S.}~\bibnamefont {Mandra}}, \bibinfo {author} {\bibfnamefont {A.}~\bibnamefont {Bengtsson}}, \bibinfo {author} {\bibfnamefont {P.}~\bibnamefont {Klimov}}, \bibinfo {author} {\bibfnamefont {Z.}~\bibnamefont {Chen}}, \bibinfo {author} {\bibfnamefont {S.}~\bibnamefont {Hong}}, \bibinfo {author} {\bibfnamefont {C.}~\bibnamefont {Erickson}}, \bibinfo {author} {\bibfnamefont {I.}~\bibnamefont {Drozdov}},  \emph {et~al.},\ }\bibfield  {title} {\enquote {\bibinfo {title} {Phase transitions in random circuit sampling},}\ }\href@noop {} {\bibfield  {journal} {\bibinfo  {journal} {Nature}\ }\textbf {\bibinfo {volume} {634}},\ \bibinfo {pages} {328} (\bibinfo {year} {2024})}\BibitemShut {NoStop}%
\bibitem [{\citenamefont {Gao}\ \emph {et~al.}(2025)\citenamefont {Gao}, \citenamefont {Fan}, \citenamefont {Zha}, \citenamefont {Bei}, \citenamefont {Cai}, \citenamefont {Cai}, \citenamefont {Cao}, \citenamefont {Chen}, \citenamefont {Chen}, \citenamefont {Chen} \emph {et~al.}}]{gao2025establishing}%
  \BibitemOpen
  \bibfield  {author} {\bibinfo {author} {\bibfnamefont {D.}~\bibnamefont {Gao}}, \bibinfo {author} {\bibfnamefont {D.}~\bibnamefont {Fan}}, \bibinfo {author} {\bibfnamefont {C.}~\bibnamefont {Zha}}, \bibinfo {author} {\bibfnamefont {J.}~\bibnamefont {Bei}}, \bibinfo {author} {\bibfnamefont {G.}~\bibnamefont {Cai}}, \bibinfo {author} {\bibfnamefont {J.}~\bibnamefont {Cai}}, \bibinfo {author} {\bibfnamefont {S.}~\bibnamefont {Cao}}, \bibinfo {author} {\bibfnamefont {F.}~\bibnamefont {Chen}}, \bibinfo {author} {\bibfnamefont {J.}~\bibnamefont {Chen}}, \bibinfo {author} {\bibfnamefont {K.}~\bibnamefont {Chen}},  \emph {et~al.},\ }\bibfield  {title} {\enquote {\bibinfo {title} {Establishing a new benchmark in quantum computational advantage with 105-qubit zuchongzhi 3.0 processor},}\ }\href@noop {} {\bibfield  {journal} {\bibinfo  {journal} {Physical Review Letters}\ }\textbf {\bibinfo {volume} {134}},\ \bibinfo {pages} {090601} (\bibinfo {year} {2025})}\BibitemShut {NoStop}%
\bibitem [{\citenamefont {Perdomo-Ortiz}\ \emph {et~al.}(2018)\citenamefont {Perdomo-Ortiz}, \citenamefont {Benedetti}, \citenamefont {Realpe-G{\'o}mez},\ and\ \citenamefont {Biswas}}]{perdomo2018opportunities}%
  \BibitemOpen
  \bibfield  {author} {\bibinfo {author} {\bibfnamefont {A.}~\bibnamefont {Perdomo-Ortiz}}, \bibinfo {author} {\bibfnamefont {M.}~\bibnamefont {Benedetti}}, \bibinfo {author} {\bibfnamefont {J.}~\bibnamefont {Realpe-G{\'o}mez}}, \ and\ \bibinfo {author} {\bibfnamefont {R.}~\bibnamefont {Biswas}},\ }\bibfield  {title} {\enquote {\bibinfo {title} {Opportunities and challenges for quantum-assisted machine learning in near-term quantum computers},}\ }\href@noop {} {\bibfield  {journal} {\bibinfo  {journal} {Quantum Science and Technology}\ }\textbf {\bibinfo {volume} {3}},\ \bibinfo {pages} {030502} (\bibinfo {year} {2018})}\BibitemShut {NoStop}%
\bibitem [{\citenamefont {Benedetti}\ \emph {et~al.}(2019)\citenamefont {Benedetti}, \citenamefont {Garcia-Pintos}, \citenamefont {Perdomo}, \citenamefont {Leyton-Ortega}, \citenamefont {Nam},\ and\ \citenamefont {Perdomo-Ortiz}}]{benedetti2019generative}%
  \BibitemOpen
  \bibfield  {author} {\bibinfo {author} {\bibfnamefont {M.}~\bibnamefont {Benedetti}}, \bibinfo {author} {\bibfnamefont {D.}~\bibnamefont {Garcia-Pintos}}, \bibinfo {author} {\bibfnamefont {O.}~\bibnamefont {Perdomo}}, \bibinfo {author} {\bibfnamefont {V.}~\bibnamefont {Leyton-Ortega}}, \bibinfo {author} {\bibfnamefont {Y.}~\bibnamefont {Nam}}, \ and\ \bibinfo {author} {\bibfnamefont {A.}~\bibnamefont {Perdomo-Ortiz}},\ }\bibfield  {title} {\enquote {\bibinfo {title} {A generative modeling approach for benchmarking and training shallow quantum circuits},}\ }\href@noop {} {\bibfield  {journal} {\bibinfo  {journal} {npj Quantum Information}\ }\textbf {\bibinfo {volume} {5}},\ \bibinfo {pages} {45} (\bibinfo {year} {2019})}\BibitemShut {NoStop}%
\bibitem [{\citenamefont {Coyle}\ \emph {et~al.}(2020)\citenamefont {Coyle}, \citenamefont {Mills}, \citenamefont {Danos},\ and\ \citenamefont {Kashefi}}]{coyle2020born}%
  \BibitemOpen
  \bibfield  {author} {\bibinfo {author} {\bibfnamefont {B.}~\bibnamefont {Coyle}}, \bibinfo {author} {\bibfnamefont {D.}~\bibnamefont {Mills}}, \bibinfo {author} {\bibfnamefont {V.}~\bibnamefont {Danos}}, \ and\ \bibinfo {author} {\bibfnamefont {E.}~\bibnamefont {Kashefi}},\ }\bibfield  {title} {\enquote {\bibinfo {title} {The born supremacy: quantum advantage and training of an ising born machine},}\ }\href@noop {} {\bibfield  {journal} {\bibinfo  {journal} {npj Quantum Information}\ }\textbf {\bibinfo {volume} {6}},\ \bibinfo {pages} {60} (\bibinfo {year} {2020})}\BibitemShut {NoStop}%
\bibitem [{\citenamefont {Sweke}\ \emph {et~al.}(2021)\citenamefont {Sweke}, \citenamefont {Seifert}, \citenamefont {Hangleiter},\ and\ \citenamefont {Eisert}}]{sweke2021quantum}%
  \BibitemOpen
  \bibfield  {author} {\bibinfo {author} {\bibfnamefont {R.}~\bibnamefont {Sweke}}, \bibinfo {author} {\bibfnamefont {J.-P.}\ \bibnamefont {Seifert}}, \bibinfo {author} {\bibfnamefont {D.}~\bibnamefont {Hangleiter}}, \ and\ \bibinfo {author} {\bibfnamefont {J.}~\bibnamefont {Eisert}},\ }\bibfield  {title} {\enquote {\bibinfo {title} {On the quantum versus classical learnability of discrete distributions},}\ }\href@noop {} {\bibfield  {journal} {\bibinfo  {journal} {Quantum}\ }\textbf {\bibinfo {volume} {5}},\ \bibinfo {pages} {417} (\bibinfo {year} {2021})}\BibitemShut {NoStop}%
\bibitem [{\citenamefont {Gao}\ \emph {et~al.}(2022)\citenamefont {Gao}, \citenamefont {Anschuetz}, \citenamefont {Wang}, \citenamefont {Cirac},\ and\ \citenamefont {Lukin}}]{gao2022enhancing}%
  \BibitemOpen
  \bibfield  {author} {\bibinfo {author} {\bibfnamefont {X.}~\bibnamefont {Gao}}, \bibinfo {author} {\bibfnamefont {E.~R.}\ \bibnamefont {Anschuetz}}, \bibinfo {author} {\bibfnamefont {S.-T.}\ \bibnamefont {Wang}}, \bibinfo {author} {\bibfnamefont {J.~I.}\ \bibnamefont {Cirac}}, \ and\ \bibinfo {author} {\bibfnamefont {M.~D.}\ \bibnamefont {Lukin}},\ }\bibfield  {title} {\enquote {\bibinfo {title} {Enhancing generative models via quantum correlations},}\ }\href@noop {} {\bibfield  {journal} {\bibinfo  {journal} {Physical Review X}\ }\textbf {\bibinfo {volume} {12}},\ \bibinfo {pages} {021037} (\bibinfo {year} {2022})}\BibitemShut {NoStop}%
\bibitem [{\citenamefont {Zhu}\ \emph {et~al.}(2022)\citenamefont {Zhu}, \citenamefont {Johri}, \citenamefont {Bacon}, \citenamefont {Esencan}, \citenamefont {Kim}, \citenamefont {Muir}, \citenamefont {Murgai}, \citenamefont {Nguyen}, \citenamefont {Pisenti}, \citenamefont {Schouela} \emph {et~al.}}]{zhu2022generative}%
  \BibitemOpen
  \bibfield  {author} {\bibinfo {author} {\bibfnamefont {E.~Y.}\ \bibnamefont {Zhu}}, \bibinfo {author} {\bibfnamefont {S.}~\bibnamefont {Johri}}, \bibinfo {author} {\bibfnamefont {D.}~\bibnamefont {Bacon}}, \bibinfo {author} {\bibfnamefont {M.}~\bibnamefont {Esencan}}, \bibinfo {author} {\bibfnamefont {J.}~\bibnamefont {Kim}}, \bibinfo {author} {\bibfnamefont {M.}~\bibnamefont {Muir}}, \bibinfo {author} {\bibfnamefont {N.}~\bibnamefont {Murgai}}, \bibinfo {author} {\bibfnamefont {J.}~\bibnamefont {Nguyen}}, \bibinfo {author} {\bibfnamefont {N.}~\bibnamefont {Pisenti}}, \bibinfo {author} {\bibfnamefont {A.}~\bibnamefont {Schouela}},  \emph {et~al.},\ }\bibfield  {title} {\enquote {\bibinfo {title} {Generative quantum learning of joint probability distribution functions},}\ }\href@noop {} {\bibfield  {journal} {\bibinfo  {journal} {Physical Review Research}\ }\textbf {\bibinfo {volume} {4}},\ \bibinfo {pages} {043092} (\bibinfo {year} {2022})}\BibitemShut {NoStop}%
\bibitem [{\citenamefont {Niu}\ \emph {et~al.}(2022)\citenamefont {Niu}, \citenamefont {Zlokapa}, \citenamefont {Broughton}, \citenamefont {Boixo}, \citenamefont {Mohseni}, \citenamefont {Smelyanskyi},\ and\ \citenamefont {Neven}}]{niu2022entangling}%
  \BibitemOpen
  \bibfield  {author} {\bibinfo {author} {\bibfnamefont {M.~Y.}\ \bibnamefont {Niu}}, \bibinfo {author} {\bibfnamefont {A.}~\bibnamefont {Zlokapa}}, \bibinfo {author} {\bibfnamefont {M.}~\bibnamefont {Broughton}}, \bibinfo {author} {\bibfnamefont {S.}~\bibnamefont {Boixo}}, \bibinfo {author} {\bibfnamefont {M.}~\bibnamefont {Mohseni}}, \bibinfo {author} {\bibfnamefont {V.}~\bibnamefont {Smelyanskyi}}, \ and\ \bibinfo {author} {\bibfnamefont {H.}~\bibnamefont {Neven}},\ }\bibfield  {title} {\enquote {\bibinfo {title} {Entangling quantum generative adversarial networks},}\ }\href@noop {} {\bibfield  {journal} {\bibinfo  {journal} {Physical Review Letters}\ }\textbf {\bibinfo {volume} {128}},\ \bibinfo {pages} {220505} (\bibinfo {year} {2022})}\BibitemShut {NoStop}%
\bibitem [{\citenamefont {Riofrio}\ \emph {et~al.}(2024)\citenamefont {Riofrio}, \citenamefont {Mitevski}, \citenamefont {Jones}, \citenamefont {Krellner}, \citenamefont {Vuckovic}, \citenamefont {Doetsch}, \citenamefont {Klepsch}, \citenamefont {Ehmer},\ and\ \citenamefont {Luckow}}]{riofrio2024characterization}%
  \BibitemOpen
  \bibfield  {author} {\bibinfo {author} {\bibfnamefont {C.~A.}\ \bibnamefont {Riofrio}}, \bibinfo {author} {\bibfnamefont {O.}~\bibnamefont {Mitevski}}, \bibinfo {author} {\bibfnamefont {C.}~\bibnamefont {Jones}}, \bibinfo {author} {\bibfnamefont {F.}~\bibnamefont {Krellner}}, \bibinfo {author} {\bibfnamefont {A.}~\bibnamefont {Vuckovic}}, \bibinfo {author} {\bibfnamefont {J.}~\bibnamefont {Doetsch}}, \bibinfo {author} {\bibfnamefont {J.}~\bibnamefont {Klepsch}}, \bibinfo {author} {\bibfnamefont {T.}~\bibnamefont {Ehmer}}, \ and\ \bibinfo {author} {\bibfnamefont {A.}~\bibnamefont {Luckow}},\ }\bibfield  {title} {\enquote {\bibinfo {title} {A characterization of quantum generative models},}\ }\href@noop {} {\bibfield  {journal} {\bibinfo  {journal} {ACM Transactions on Quantum Computing}\ }\textbf {\bibinfo {volume} {5}},\ \bibinfo {pages} {1} (\bibinfo {year} {2024})}\BibitemShut {NoStop}%
\bibitem [{\citenamefont {Rudolph}\ \emph {et~al.}(2024)\citenamefont {Rudolph}, \citenamefont {Lerch}, \citenamefont {Thanasilp}, \citenamefont {Kiss}, \citenamefont {Shaya}, \citenamefont {Vallecorsa}, \citenamefont {Grossi},\ and\ \citenamefont {Holmes}}]{rudolph2024trainability}%
  \BibitemOpen
  \bibfield  {author} {\bibinfo {author} {\bibfnamefont {M.~S.}\ \bibnamefont {Rudolph}}, \bibinfo {author} {\bibfnamefont {S.}~\bibnamefont {Lerch}}, \bibinfo {author} {\bibfnamefont {S.}~\bibnamefont {Thanasilp}}, \bibinfo {author} {\bibfnamefont {O.}~\bibnamefont {Kiss}}, \bibinfo {author} {\bibfnamefont {O.}~\bibnamefont {Shaya}}, \bibinfo {author} {\bibfnamefont {S.}~\bibnamefont {Vallecorsa}}, \bibinfo {author} {\bibfnamefont {M.}~\bibnamefont {Grossi}}, \ and\ \bibinfo {author} {\bibfnamefont {Z.}~\bibnamefont {Holmes}},\ }\bibfield  {title} {\enquote {\bibinfo {title} {Trainability barriers and opportunities in quantum generative modeling},}\ }\href@noop {} {\bibfield  {journal} {\bibinfo  {journal} {npj Quantum Information}\ }\textbf {\bibinfo {volume} {10}},\ \bibinfo {pages} {116} (\bibinfo {year} {2024})}\BibitemShut {NoStop}%
\bibitem [{\citenamefont {Hibat-Allah}\ \emph {et~al.}(2024)\citenamefont {Hibat-Allah}, \citenamefont {Mauri}, \citenamefont {Carrasquilla},\ and\ \citenamefont {Perdomo-Ortiz}}]{hibat2024framework}%
  \BibitemOpen
  \bibfield  {author} {\bibinfo {author} {\bibfnamefont {M.}~\bibnamefont {Hibat-Allah}}, \bibinfo {author} {\bibfnamefont {M.}~\bibnamefont {Mauri}}, \bibinfo {author} {\bibfnamefont {J.}~\bibnamefont {Carrasquilla}}, \ and\ \bibinfo {author} {\bibfnamefont {A.}~\bibnamefont {Perdomo-Ortiz}},\ }\bibfield  {title} {\enquote {\bibinfo {title} {A framework for demonstrating practical quantum advantage: comparing quantum against classical generative models},}\ }\href@noop {} {\bibfield  {journal} {\bibinfo  {journal} {Communications Physics}\ }\textbf {\bibinfo {volume} {7}},\ \bibinfo {pages} {68} (\bibinfo {year} {2024})}\BibitemShut {NoStop}%
\bibitem [{\citenamefont {McClean}\ \emph {et~al.}(2018)\citenamefont {McClean}, \citenamefont {Boixo}, \citenamefont {Smelyanskiy}, \citenamefont {Babbush},\ and\ \citenamefont {Neven}}]{mcclean2018barren}%
  \BibitemOpen
  \bibfield  {author} {\bibinfo {author} {\bibfnamefont {J.~R.}\ \bibnamefont {McClean}}, \bibinfo {author} {\bibfnamefont {S.}~\bibnamefont {Boixo}}, \bibinfo {author} {\bibfnamefont {V.~N.}\ \bibnamefont {Smelyanskiy}}, \bibinfo {author} {\bibfnamefont {R.}~\bibnamefont {Babbush}}, \ and\ \bibinfo {author} {\bibfnamefont {H.}~\bibnamefont {Neven}},\ }\bibfield  {title} {\enquote {\bibinfo {title} {Barren plateaus in quantum neural network training landscapes},}\ }\href@noop {} {\bibfield  {journal} {\bibinfo  {journal} {Nature communications}\ }\textbf {\bibinfo {volume} {9}},\ \bibinfo {pages} {4812} (\bibinfo {year} {2018})}\BibitemShut {NoStop}%
\bibitem [{\citenamefont {Larocca}\ \emph {et~al.}(2025)\citenamefont {Larocca}, \citenamefont {Thanasilp}, \citenamefont {Wang}, \citenamefont {Sharma}, \citenamefont {Biamonte}, \citenamefont {Coles}, \citenamefont {Cincio}, \citenamefont {McClean}, \citenamefont {Holmes},\ and\ \citenamefont {Cerezo}}]{larocca2025barren}%
  \BibitemOpen
  \bibfield  {author} {\bibinfo {author} {\bibfnamefont {M.}~\bibnamefont {Larocca}}, \bibinfo {author} {\bibfnamefont {S.}~\bibnamefont {Thanasilp}}, \bibinfo {author} {\bibfnamefont {S.}~\bibnamefont {Wang}}, \bibinfo {author} {\bibfnamefont {K.}~\bibnamefont {Sharma}}, \bibinfo {author} {\bibfnamefont {J.}~\bibnamefont {Biamonte}}, \bibinfo {author} {\bibfnamefont {P.~J.}\ \bibnamefont {Coles}}, \bibinfo {author} {\bibfnamefont {L.}~\bibnamefont {Cincio}}, \bibinfo {author} {\bibfnamefont {J.~R.}\ \bibnamefont {McClean}}, \bibinfo {author} {\bibfnamefont {Z.}~\bibnamefont {Holmes}}, \ and\ \bibinfo {author} {\bibfnamefont {M.}~\bibnamefont {Cerezo}},\ }\bibfield  {title} {\enquote {\bibinfo {title} {Barren plateaus in variational quantum computing},}\ }\href@noop {} {\bibfield  {journal} {\bibinfo  {journal} {Nature Reviews Physics}\ ,\ \bibinfo {pages} {1}} (\bibinfo {year} {2025})}\BibitemShut {NoStop}%
\bibitem [{\citenamefont {Anschuetz}\ and\ \citenamefont {Kiani}(2022)}]{anschuetz2022quantum}%
  \BibitemOpen
  \bibfield  {author} {\bibinfo {author} {\bibfnamefont {E.~R.}\ \bibnamefont {Anschuetz}}\ and\ \bibinfo {author} {\bibfnamefont {B.~T.}\ \bibnamefont {Kiani}},\ }\bibfield  {title} {\enquote {\bibinfo {title} {Quantum variational algorithms are swamped with traps},}\ }\href@noop {} {\bibfield  {journal} {\bibinfo  {journal} {Nature Communications}\ }\textbf {\bibinfo {volume} {13}},\ \bibinfo {pages} {7760} (\bibinfo {year} {2022})}\BibitemShut {NoStop}%
\bibitem [{\citenamefont {Huang}\ \emph {et~al.}(2024)\citenamefont {Huang}, \citenamefont {Liu}, \citenamefont {Broughton}, \citenamefont {Kim}, \citenamefont {Anshu}, \citenamefont {Landau},\ and\ \citenamefont {McClean}}]{huang2024learning}%
  \BibitemOpen
  \bibfield  {author} {\bibinfo {author} {\bibfnamefont {H.-Y.}\ \bibnamefont {Huang}}, \bibinfo {author} {\bibfnamefont {Y.}~\bibnamefont {Liu}}, \bibinfo {author} {\bibfnamefont {M.}~\bibnamefont {Broughton}}, \bibinfo {author} {\bibfnamefont {I.}~\bibnamefont {Kim}}, \bibinfo {author} {\bibfnamefont {A.}~\bibnamefont {Anshu}}, \bibinfo {author} {\bibfnamefont {Z.}~\bibnamefont {Landau}}, \ and\ \bibinfo {author} {\bibfnamefont {J.~R.}\ \bibnamefont {McClean}},\ }\href@noop {} {\enquote {\bibinfo {title} {Learning shallow quantum circuits},}\ } (\bibinfo {year} {2024}),\ \Eprint {http://arxiv.org/abs/2401.10095} {arXiv:2401.10095 [quant-ph]} \BibitemShut {NoStop}%
\bibitem [{\citenamefont {Landau}\ and\ \citenamefont {Liu}(2024)}]{landau2024learning}%
  \BibitemOpen
  \bibfield  {author} {\bibinfo {author} {\bibfnamefont {Z.}~\bibnamefont {Landau}}\ and\ \bibinfo {author} {\bibfnamefont {Y.}~\bibnamefont {Liu}},\ }\bibfield  {title} {\enquote {\bibinfo {title} {Learning quantum states prepared by shallow circuits in polynomial time},}\ }\href@noop {} {\bibfield  {journal} {\bibinfo  {journal} {arXiv preprint arXiv:2410.23618}\ } (\bibinfo {year} {2024})}\BibitemShut {NoStop}%
\bibitem [{\citenamefont {Fefferman}\ \emph {et~al.}(2024)\citenamefont {Fefferman}, \citenamefont {Ghosh},\ and\ \citenamefont {Zhan}}]{fefferman2024anti}%
  \BibitemOpen
  \bibfield  {author} {\bibinfo {author} {\bibfnamefont {B.}~\bibnamefont {Fefferman}}, \bibinfo {author} {\bibfnamefont {S.}~\bibnamefont {Ghosh}}, \ and\ \bibinfo {author} {\bibfnamefont {W.}~\bibnamefont {Zhan}},\ }\bibfield  {title} {\enquote {\bibinfo {title} {Anti-concentration for the unitary haar measure and applications to random quantum circuits},}\ }\href@noop {} {\bibfield  {journal} {\bibinfo  {journal} {arXiv preprint arXiv:2407.19561}\ } (\bibinfo {year} {2024})}\BibitemShut {NoStop}%
\bibitem [{\citenamefont {Haah}(2025)}]{haah2025short}%
  \BibitemOpen
  \bibfield  {author} {\bibinfo {author} {\bibfnamefont {J.}~\bibnamefont {Haah}},\ }\bibfield  {title} {\enquote {\bibinfo {title} {Short remarks on shallow unitary circuits},}\ }\href@noop {} {\bibfield  {journal} {\bibinfo  {journal} {arXiv preprint arXiv:2504.14005}\ } (\bibinfo {year} {2025})}\BibitemShut {NoStop}%
\bibitem [{\citenamefont {Neill}\ \emph {et~al.}(2018)\citenamefont {Neill}, \citenamefont {Roushan}, \citenamefont {Kechedzhi}, \citenamefont {Boixo}, \citenamefont {Isakov}, \citenamefont {Smelyanskiy}, \citenamefont {Megrant}, \citenamefont {Chiaro}, \citenamefont {Dunsworth}, \citenamefont {Arya} \emph {et~al.}}]{neill2018blueprint}%
  \BibitemOpen
  \bibfield  {author} {\bibinfo {author} {\bibfnamefont {C.}~\bibnamefont {Neill}}, \bibinfo {author} {\bibfnamefont {P.}~\bibnamefont {Roushan}}, \bibinfo {author} {\bibfnamefont {K.}~\bibnamefont {Kechedzhi}}, \bibinfo {author} {\bibfnamefont {S.}~\bibnamefont {Boixo}}, \bibinfo {author} {\bibfnamefont {S.~V.}\ \bibnamefont {Isakov}}, \bibinfo {author} {\bibfnamefont {V.}~\bibnamefont {Smelyanskiy}}, \bibinfo {author} {\bibfnamefont {A.}~\bibnamefont {Megrant}}, \bibinfo {author} {\bibfnamefont {B.}~\bibnamefont {Chiaro}}, \bibinfo {author} {\bibfnamefont {A.}~\bibnamefont {Dunsworth}}, \bibinfo {author} {\bibfnamefont {K.}~\bibnamefont {Arya}},  \emph {et~al.},\ }\bibfield  {title} {\enquote {\bibinfo {title} {A blueprint for demonstrating quantum supremacy with superconducting qubits},}\ }\href@noop {} {\bibfield  {journal} {\bibinfo  {journal} {Science}\ }\textbf {\bibinfo {volume} {360}},\ \bibinfo {pages} {195} (\bibinfo {year} {2018})}\BibitemShut {NoStop}%
\bibitem [{\citenamefont {Haferkamp}\ \emph {et~al.}(2020)\citenamefont {Haferkamp}, \citenamefont {Hangleiter}, \citenamefont {Bouland}, \citenamefont {Fefferman}, \citenamefont {Eisert},\ and\ \citenamefont {Bermejo-Vega}}]{haferkamp2020closing}%
  \BibitemOpen
  \bibfield  {author} {\bibinfo {author} {\bibfnamefont {J.}~\bibnamefont {Haferkamp}}, \bibinfo {author} {\bibfnamefont {D.}~\bibnamefont {Hangleiter}}, \bibinfo {author} {\bibfnamefont {A.}~\bibnamefont {Bouland}}, \bibinfo {author} {\bibfnamefont {B.}~\bibnamefont {Fefferman}}, \bibinfo {author} {\bibfnamefont {J.}~\bibnamefont {Eisert}}, \ and\ \bibinfo {author} {\bibfnamefont {J.}~\bibnamefont {Bermejo-Vega}},\ }\bibfield  {title} {\enquote {\bibinfo {title} {Closing gaps of a quantum advantage with short-time hamiltonian dynamics},}\ }\href@noop {} {\bibfield  {journal} {\bibinfo  {journal} {Physical Review Letters}\ }\textbf {\bibinfo {volume} {125}},\ \bibinfo {pages} {250501} (\bibinfo {year} {2020})}\BibitemShut {NoStop}%
\bibitem [{\citenamefont {Movassagh}(2023)}]{movassagh2023hardness}%
  \BibitemOpen
  \bibfield  {author} {\bibinfo {author} {\bibfnamefont {R.}~\bibnamefont {Movassagh}},\ }\bibfield  {title} {\enquote {\bibinfo {title} {The hardness of random quantum circuits},}\ }\href@noop {} {\bibfield  {journal} {\bibinfo  {journal} {Nature Physics}\ }\textbf {\bibinfo {volume} {19}},\ \bibinfo {pages} {1719} (\bibinfo {year} {2023})}\BibitemShut {NoStop}%
\bibitem [{\citenamefont {Huang}\ \emph {et~al.}(2025)\citenamefont {Huang}, \citenamefont {Choi}, \citenamefont {McClean},\ and\ \citenamefont {Preskill}}]{huang2025vast}%
  \BibitemOpen
  \bibfield  {author} {\bibinfo {author} {\bibfnamefont {H.-Y.}\ \bibnamefont {Huang}}, \bibinfo {author} {\bibfnamefont {S.}~\bibnamefont {Choi}}, \bibinfo {author} {\bibfnamefont {J.~R.}\ \bibnamefont {McClean}}, \ and\ \bibinfo {author} {\bibfnamefont {J.}~\bibnamefont {Preskill}},\ }\bibfield  {title} {\enquote {\bibinfo {title} {The vast world of quantum advantage},}\ }\href@noop {} {\bibfield  {journal} {\bibinfo  {journal} {arXiv preprint arXiv:2508.05720}\ } (\bibinfo {year} {2025})}\BibitemShut {NoStop}%
\bibitem [{\citenamefont {Briegel}\ \emph {et~al.}(2009)\citenamefont {Briegel}, \citenamefont {Browne}, \citenamefont {D{\"u}r}, \citenamefont {Raussendorf},\ and\ \citenamefont {Van~den Nest}}]{briegel2009measurement}%
  \BibitemOpen
  \bibfield  {author} {\bibinfo {author} {\bibfnamefont {H.~J.}\ \bibnamefont {Briegel}}, \bibinfo {author} {\bibfnamefont {D.~E.}\ \bibnamefont {Browne}}, \bibinfo {author} {\bibfnamefont {W.}~\bibnamefont {D{\"u}r}}, \bibinfo {author} {\bibfnamefont {R.}~\bibnamefont {Raussendorf}}, \ and\ \bibinfo {author} {\bibfnamefont {M.}~\bibnamefont {Van~den Nest}},\ }\bibfield  {title} {\enquote {\bibinfo {title} {Measurement-based quantum computation},}\ }\href@noop {} {\bibfield  {journal} {\bibinfo  {journal} {Nature Physics}\ }\textbf {\bibinfo {volume} {5}},\ \bibinfo {pages} {19} (\bibinfo {year} {2009})}\BibitemShut {NoStop}%
\bibitem [{\citenamefont {Bravyi}\ \emph {et~al.}(2018)\citenamefont {Bravyi}, \citenamefont {Gosset},\ and\ \citenamefont {Koenig}}]{bravyi2018quantum}%
  \BibitemOpen
  \bibfield  {author} {\bibinfo {author} {\bibfnamefont {S.}~\bibnamefont {Bravyi}}, \bibinfo {author} {\bibfnamefont {D.}~\bibnamefont {Gosset}}, \ and\ \bibinfo {author} {\bibfnamefont {R.}~\bibnamefont {Koenig}},\ }\bibfield  {title} {\enquote {\bibinfo {title} {Quantum advantage with shallow circuits},}\ }\href@noop {} {\bibfield  {journal} {\bibinfo  {journal} {Science}\ }\textbf {\bibinfo {volume} {362}},\ \bibinfo {pages} {308} (\bibinfo {year} {2018})}\BibitemShut {NoStop}%
\bibitem [{\citenamefont {Watts}\ \emph {et~al.}(2019)\citenamefont {Watts}, \citenamefont {Kothari}, \citenamefont {Schaeffer},\ and\ \citenamefont {Tal}}]{watts2019exponential}%
  \BibitemOpen
  \bibfield  {author} {\bibinfo {author} {\bibfnamefont {A.~B.}\ \bibnamefont {Watts}}, \bibinfo {author} {\bibfnamefont {R.}~\bibnamefont {Kothari}}, \bibinfo {author} {\bibfnamefont {L.}~\bibnamefont {Schaeffer}}, \ and\ \bibinfo {author} {\bibfnamefont {A.}~\bibnamefont {Tal}},\ }\bibfield  {title} {\enquote {\bibinfo {title} {Exponential separation between shallow quantum circuits and unbounded fan-in shallow classical circuits},}\ }in\ \href {\doibase 10.1145/3313276.3316404} {\emph {\bibinfo {booktitle} {Proceedings of the 51st Annual ACM SIGACT Symposium on Theory of Computing}}},\ \bibinfo {series and number} {STOC 2019}\ (\bibinfo  {publisher} {Association for Computing Machinery},\ \bibinfo {address} {New York, NY, USA},\ \bibinfo {year} {2019})\ p.\ \bibinfo {pages} {515–526}\BibitemShut {NoStop}%
\bibitem [{\citenamefont {Gao}\ \emph {et~al.}(2017)\citenamefont {Gao}, \citenamefont {Wang},\ and\ \citenamefont {Duan}}]{gao2017quantum}%
  \BibitemOpen
  \bibfield  {author} {\bibinfo {author} {\bibfnamefont {X.}~\bibnamefont {Gao}}, \bibinfo {author} {\bibfnamefont {S.-T.}\ \bibnamefont {Wang}}, \ and\ \bibinfo {author} {\bibfnamefont {L.-M.}\ \bibnamefont {Duan}},\ }\bibfield  {title} {\enquote {\bibinfo {title} {Quantum supremacy for simulating a translation-invariant ising spin model},}\ }\href {\doibase 10.1103/PhysRevLett.118.040502} {\bibfield  {journal} {\bibinfo  {journal} {Phys. Rev. Lett.}\ }\textbf {\bibinfo {volume} {118}},\ \bibinfo {pages} {040502} (\bibinfo {year} {2017})}\BibitemShut {NoStop}%
\bibitem [{\citenamefont {Bermejo-Vega}\ \emph {et~al.}(2018)\citenamefont {Bermejo-Vega}, \citenamefont {Hangleiter}, \citenamefont {Schwarz}, \citenamefont {Raussendorf},\ and\ \citenamefont {Eisert}}]{bermejo2018architectures}%
  \BibitemOpen
  \bibfield  {author} {\bibinfo {author} {\bibfnamefont {J.}~\bibnamefont {Bermejo-Vega}}, \bibinfo {author} {\bibfnamefont {D.}~\bibnamefont {Hangleiter}}, \bibinfo {author} {\bibfnamefont {M.}~\bibnamefont {Schwarz}}, \bibinfo {author} {\bibfnamefont {R.}~\bibnamefont {Raussendorf}}, \ and\ \bibinfo {author} {\bibfnamefont {J.}~\bibnamefont {Eisert}},\ }\bibfield  {title} {\enquote {\bibinfo {title} {Architectures for quantum simulation showing a quantum speedup},}\ }\href@noop {} {\bibfield  {journal} {\bibinfo  {journal} {Physical Review X}\ }\textbf {\bibinfo {volume} {8}},\ \bibinfo {pages} {021010} (\bibinfo {year} {2018})}\BibitemShut {NoStop}%
\bibitem [{\citenamefont {Bergamaschi}\ \emph {et~al.}(2024)\citenamefont {Bergamaschi}, \citenamefont {Chen},\ and\ \citenamefont {Liu}}]{bergamaschi2024quantum}%
  \BibitemOpen
  \bibfield  {author} {\bibinfo {author} {\bibfnamefont {T.}~\bibnamefont {Bergamaschi}}, \bibinfo {author} {\bibfnamefont {C.-F.}\ \bibnamefont {Chen}}, \ and\ \bibinfo {author} {\bibfnamefont {Y.}~\bibnamefont {Liu}},\ }\bibfield  {title} {\enquote {\bibinfo {title} {Quantum computational advantage with constant-temperature gibbs sampling},}\ }in\ \href@noop {} {\emph {\bibinfo {booktitle} {2024 IEEE 65th Annual Symposium on Foundations of Computer Science (FOCS)}}}\ (\bibinfo {organization} {IEEE},\ \bibinfo {year} {2024})\ pp.\ \bibinfo {pages} {1063--1085}\BibitemShut {NoStop}%
\bibitem [{\citenamefont {Aaronson}\ and\ \citenamefont {Zhang}(2024)}]{aaronson2024verifiable}%
  \BibitemOpen
  \bibfield  {author} {\bibinfo {author} {\bibfnamefont {S.}~\bibnamefont {Aaronson}}\ and\ \bibinfo {author} {\bibfnamefont {Y.}~\bibnamefont {Zhang}},\ }\bibfield  {title} {\enquote {\bibinfo {title} {On verifiable quantum advantage with peaked circuit sampling},}\ }\href@noop {} {\bibfield  {journal} {\bibinfo  {journal} {arXiv preprint arXiv:2404.14493}\ } (\bibinfo {year} {2024})}\BibitemShut {NoStop}%
\bibitem [{\citenamefont {Niu}\ \emph {et~al.}(2020)\citenamefont {Niu}, \citenamefont {Dai}, \citenamefont {Li}, \citenamefont {Odena}, \citenamefont {Zhao}, \citenamefont {Smelyanskyi}, \citenamefont {Neven},\ and\ \citenamefont {Boixo}}]{niu2020learnability}%
  \BibitemOpen
  \bibfield  {author} {\bibinfo {author} {\bibfnamefont {M.~Y.}\ \bibnamefont {Niu}}, \bibinfo {author} {\bibfnamefont {A.~M.}\ \bibnamefont {Dai}}, \bibinfo {author} {\bibfnamefont {L.}~\bibnamefont {Li}}, \bibinfo {author} {\bibfnamefont {A.}~\bibnamefont {Odena}}, \bibinfo {author} {\bibfnamefont {Z.}~\bibnamefont {Zhao}}, \bibinfo {author} {\bibfnamefont {V.}~\bibnamefont {Smelyanskyi}}, \bibinfo {author} {\bibfnamefont {H.}~\bibnamefont {Neven}}, \ and\ \bibinfo {author} {\bibfnamefont {S.}~\bibnamefont {Boixo}},\ }\bibfield  {title} {\enquote {\bibinfo {title} {Learnability and complexity of quantum samples},}\ }\href@noop {} {\bibfield  {journal} {\bibinfo  {journal} {arXiv preprint arXiv:2010.11983}\ } (\bibinfo {year} {2020})}\BibitemShut {NoStop}%
\bibitem [{\citenamefont {Farhi}\ and\ \citenamefont {Neven}(2018)}]{farhi2018classification}%
  \BibitemOpen
  \bibfield  {author} {\bibinfo {author} {\bibfnamefont {E.}~\bibnamefont {Farhi}}\ and\ \bibinfo {author} {\bibfnamefont {H.}~\bibnamefont {Neven}},\ }\href@noop {} {\enquote {\bibinfo {title} {Classification with quantum neural networks on near term processors},}\ } (\bibinfo {year} {2018}),\ \Eprint {http://arxiv.org/abs/1802.06002} {arXiv:1802.06002 [quant-ph]} \BibitemShut {NoStop}%
\bibitem [{\citenamefont {Cerezo}\ \emph {et~al.}(2021)\citenamefont {Cerezo}, \citenamefont {Sone}, \citenamefont {Volkoff}, \citenamefont {Cincio},\ and\ \citenamefont {Coles}}]{cerezo2021cost}%
  \BibitemOpen
  \bibfield  {author} {\bibinfo {author} {\bibfnamefont {M.}~\bibnamefont {Cerezo}}, \bibinfo {author} {\bibfnamefont {A.}~\bibnamefont {Sone}}, \bibinfo {author} {\bibfnamefont {T.}~\bibnamefont {Volkoff}}, \bibinfo {author} {\bibfnamefont {L.}~\bibnamefont {Cincio}}, \ and\ \bibinfo {author} {\bibfnamefont {P.~J.}\ \bibnamefont {Coles}},\ }\bibfield  {title} {\enquote {\bibinfo {title} {Cost function dependent barren plateaus in shallow parametrized quantum circuits},}\ }\href@noop {} {\bibfield  {journal} {\bibinfo  {journal} {Nature communications}\ }\textbf {\bibinfo {volume} {12}},\ \bibinfo {pages} {1791} (\bibinfo {year} {2021})}\BibitemShut {NoStop}%
\bibitem [{\citenamefont {Huang}\ \emph {et~al.}(2020)\citenamefont {Huang}, \citenamefont {Kueng},\ and\ \citenamefont {Preskill}}]{huang2020predicting}%
  \BibitemOpen
  \bibfield  {author} {\bibinfo {author} {\bibfnamefont {H.-Y.}\ \bibnamefont {Huang}}, \bibinfo {author} {\bibfnamefont {R.}~\bibnamefont {Kueng}}, \ and\ \bibinfo {author} {\bibfnamefont {J.}~\bibnamefont {Preskill}},\ }\bibfield  {title} {\enquote {\bibinfo {title} {Predicting many properties of a quantum system from very few measurements},}\ }\href {\doibase 10.1038/s41567-020-0932-7} {\bibfield  {journal} {\bibinfo  {journal} {Nature Physics}\ }\textbf {\bibinfo {volume} {16}},\ \bibinfo {pages} {1050} (\bibinfo {year} {2020})}\BibitemShut {NoStop}%
\bibitem [{\citenamefont {Huang}\ \emph {et~al.}(2021{\natexlab{c}})\citenamefont {Huang}, \citenamefont {Kueng},\ and\ \citenamefont {Preskill}}]{huang2021efficient}%
  \BibitemOpen
  \bibfield  {author} {\bibinfo {author} {\bibfnamefont {H.-Y.}\ \bibnamefont {Huang}}, \bibinfo {author} {\bibfnamefont {R.}~\bibnamefont {Kueng}}, \ and\ \bibinfo {author} {\bibfnamefont {J.}~\bibnamefont {Preskill}},\ }\bibfield  {title} {\enquote {\bibinfo {title} {Efficient estimation of pauli observables by derandomization},}\ }\href@noop {} {\bibfield  {journal} {\bibinfo  {journal} {Physical review letters}\ }\textbf {\bibinfo {volume} {127 3}},\ \bibinfo {pages} {030503} (\bibinfo {year} {2021}{\natexlab{c}})}\BibitemShut {NoStop}%
\bibitem [{\citenamefont {Huang}\ \emph {et~al.}(2021{\natexlab{d}})\citenamefont {Huang}, \citenamefont {Kueng}, \citenamefont {Torlai}, \citenamefont {Albert},\ and\ \citenamefont {Preskill}}]{huang2021provably}%
  \BibitemOpen
  \bibfield  {author} {\bibinfo {author} {\bibfnamefont {H.-Y.}\ \bibnamefont {Huang}}, \bibinfo {author} {\bibfnamefont {R.}~\bibnamefont {Kueng}}, \bibinfo {author} {\bibfnamefont {G.}~\bibnamefont {Torlai}}, \bibinfo {author} {\bibfnamefont {V.~V.}\ \bibnamefont {Albert}}, \ and\ \bibinfo {author} {\bibfnamefont {J.}~\bibnamefont {Preskill}},\ }\bibfield  {title} {\enquote {\bibinfo {title} {Provably efficient machine learning for quantum many-body problems},}\ }\href@noop {} {\bibfield  {journal} {\bibinfo  {journal} {arXiv preprint arXiv:2106.12627}\ } (\bibinfo {year} {2021}{\natexlab{d}})}\BibitemShut {NoStop}%
\bibitem [{\citenamefont {Elben}\ \emph {et~al.}(2023)\citenamefont {Elben}, \citenamefont {Flammia}, \citenamefont {Huang}, \citenamefont {Kueng}, \citenamefont {Preskill}, \citenamefont {Vermersch},\ and\ \citenamefont {Zoller}}]{elben2022randomized}%
  \BibitemOpen
  \bibfield  {author} {\bibinfo {author} {\bibfnamefont {A.}~\bibnamefont {Elben}}, \bibinfo {author} {\bibfnamefont {S.~T.}\ \bibnamefont {Flammia}}, \bibinfo {author} {\bibfnamefont {H.-Y.}\ \bibnamefont {Huang}}, \bibinfo {author} {\bibfnamefont {R.}~\bibnamefont {Kueng}}, \bibinfo {author} {\bibfnamefont {J.}~\bibnamefont {Preskill}}, \bibinfo {author} {\bibfnamefont {B.}~\bibnamefont {Vermersch}}, \ and\ \bibinfo {author} {\bibfnamefont {P.}~\bibnamefont {Zoller}},\ }\bibfield  {title} {\enquote {\bibinfo {title} {The randomized measurement toolbox},}\ }\href {\doibase 10.1038/s42254-022-00535-2} {\bibfield  {journal} {\bibinfo  {journal} {Nature Reviews Physics}\ }\textbf {\bibinfo {volume} {5}},\ \bibinfo {pages} {9} (\bibinfo {year} {2023})}\BibitemShut {NoStop}%
\bibitem [{\citenamefont {Acharya}\ \emph {et~al.}(2023)\citenamefont {Acharya}, \citenamefont {Aleiner}, \citenamefont {Allen}, \citenamefont {Andersen}, \citenamefont {Ansmann}, \citenamefont {Arute}, \citenamefont {Arya}, \citenamefont {Asfaw}, \citenamefont {Atalaya}, \citenamefont {Babbush}, \citenamefont {Bacon}, \citenamefont {Bardin}, \citenamefont {Basso}, \citenamefont {Bengtsson}, \citenamefont {Boixo}, \citenamefont {Bortoli}, \citenamefont {Bourassa}, \citenamefont {Bovaird}, \citenamefont {Brill}, \citenamefont {Broughton}, \citenamefont {Buckley}, \citenamefont {Buell}, \citenamefont {Burger}, \citenamefont {Burkett}, \citenamefont {Bushnell}, \citenamefont {Chen}, \citenamefont {Chen}, \citenamefont {Chiaro}, \citenamefont {Cogan}, \citenamefont {Collins}, \citenamefont {Conner}, \citenamefont {Courtney}, \citenamefont {Crook}, \citenamefont {Curtin}, \citenamefont {Debroy}, \citenamefont {Del Toro~Barba}, \citenamefont {Demura}, \citenamefont {Dunsworth}, \citenamefont {Eppens}, \citenamefont
  {Erickson}, \citenamefont {Faoro}, \citenamefont {Farhi}, \citenamefont {Fatemi}, \citenamefont {Flores~Burgos}, \citenamefont {Forati}, \citenamefont {Fowler}, \citenamefont {Foxen}, \citenamefont {Giang}, \citenamefont {Gidney}, \citenamefont {Gilboa}, \citenamefont {Giustina}, \citenamefont {Grajales~Dau}, \citenamefont {Gross}, \citenamefont {Habegger}, \citenamefont {Hamilton}, \citenamefont {Harrigan}, \citenamefont {Harrington}, \citenamefont {Higgott}, \citenamefont {Hilton}, \citenamefont {Hoffmann}, \citenamefont {Hong}, \citenamefont {Huang}, \citenamefont {Huff}, \citenamefont {Huggins}, \citenamefont {Ioffe}, \citenamefont {Isakov}, \citenamefont {Iveland}, \citenamefont {Jeffrey}, \citenamefont {Jiang}, \citenamefont {Jones}, \citenamefont {Juhas}, \citenamefont {Kafri}, \citenamefont {Kechedzhi}, \citenamefont {Kelly}, \citenamefont {Khattar}, \citenamefont {Khezri}, \citenamefont {Kieferov{\'a}}, \citenamefont {Kim}, \citenamefont {Kitaev}, \citenamefont {Klimov}, \citenamefont {Klots},
  \citenamefont {Korotkov}, \citenamefont {Kostritsa}, \citenamefont {Kreikebaum}, \citenamefont {Landhuis}, \citenamefont {Laptev}, \citenamefont {Lau}, \citenamefont {Laws}, \citenamefont {Lee}, \citenamefont {Lee}, \citenamefont {Lester}, \citenamefont {Lill}, \citenamefont {Liu}, \citenamefont {Locharla}, \citenamefont {Lucero}, \citenamefont {Malone}, \citenamefont {Marshall}, \citenamefont {Martin}, \citenamefont {McClean}, \citenamefont {McCourt}, \citenamefont {McEwen}, \citenamefont {Megrant}, \citenamefont {Meurer~Costa}, \citenamefont {Mi}, \citenamefont {Miao}, \citenamefont {Mohseni}, \citenamefont {Montazeri}, \citenamefont {Morvan}, \citenamefont {Mount}, \citenamefont {Mruczkiewicz}, \citenamefont {Naaman}, \citenamefont {Neeley}, \citenamefont {Neill}, \citenamefont {Nersisyan}, \citenamefont {Neven}, \citenamefont {Newman}, \citenamefont {Ng}, \citenamefont {Nguyen}, \citenamefont {Nguyen}, \citenamefont {Niu}, \citenamefont {O'Brien}, \citenamefont {Opremcak}, \citenamefont {Platt},
  \citenamefont {Petukhov}, \citenamefont {Potter}, \citenamefont {Pryadko}, \citenamefont {Quintana}, \citenamefont {Roushan}, \citenamefont {Rubin}, \citenamefont {Saei}, \citenamefont {Sank}, \citenamefont {Sankaragomathi}, \citenamefont {Satzinger}, \citenamefont {Schurkus}, \citenamefont {Schuster}, \citenamefont {Shearn}, \citenamefont {Shorter}, \citenamefont {Shvarts}, \citenamefont {Skruzny}, \citenamefont {Smelyanskiy}, \citenamefont {Smith}, \citenamefont {Sterling}, \citenamefont {Strain}, \citenamefont {Szalay}, \citenamefont {Torres}, \citenamefont {Vidal}, \citenamefont {Villalonga}, \citenamefont {Vollgraff~Heidweiller}, \citenamefont {White}, \citenamefont {Xing}, \citenamefont {Yao}, \citenamefont {Yeh}, \citenamefont {Yoo}, \citenamefont {Young}, \citenamefont {Zalcman}, \citenamefont {Zhang}, \citenamefont {Zhu},\ and\ \citenamefont {AI}}]{google2023suppressing}%
  \BibitemOpen
  \bibfield  {author} {\bibinfo {author} {\bibfnamefont {R.}~\bibnamefont {Acharya}}, \bibinfo {author} {\bibfnamefont {I.}~\bibnamefont {Aleiner}}, \bibinfo {author} {\bibfnamefont {R.}~\bibnamefont {Allen}}, \bibinfo {author} {\bibfnamefont {T.~I.}\ \bibnamefont {Andersen}}, \bibinfo {author} {\bibfnamefont {M.}~\bibnamefont {Ansmann}}, \bibinfo {author} {\bibfnamefont {F.}~\bibnamefont {Arute}}, \bibinfo {author} {\bibfnamefont {K.}~\bibnamefont {Arya}}, \bibinfo {author} {\bibfnamefont {A.}~\bibnamefont {Asfaw}}, \bibinfo {author} {\bibfnamefont {J.}~\bibnamefont {Atalaya}}, \bibinfo {author} {\bibfnamefont {R.}~\bibnamefont {Babbush}}, \bibinfo {author} {\bibfnamefont {D.}~\bibnamefont {Bacon}}, \bibinfo {author} {\bibfnamefont {J.~C.}\ \bibnamefont {Bardin}}, \bibinfo {author} {\bibfnamefont {J.}~\bibnamefont {Basso}}, \bibinfo {author} {\bibfnamefont {A.}~\bibnamefont {Bengtsson}}, \bibinfo {author} {\bibfnamefont {S.}~\bibnamefont {Boixo}}, \bibinfo {author} {\bibfnamefont {G.}~\bibnamefont
  {Bortoli}}, \bibinfo {author} {\bibfnamefont {A.}~\bibnamefont {Bourassa}}, \bibinfo {author} {\bibfnamefont {J.}~\bibnamefont {Bovaird}}, \bibinfo {author} {\bibfnamefont {L.}~\bibnamefont {Brill}}, \bibinfo {author} {\bibfnamefont {M.}~\bibnamefont {Broughton}}, \bibinfo {author} {\bibfnamefont {B.~B.}\ \bibnamefont {Buckley}}, \bibinfo {author} {\bibfnamefont {D.~A.}\ \bibnamefont {Buell}}, \bibinfo {author} {\bibfnamefont {T.}~\bibnamefont {Burger}}, \bibinfo {author} {\bibfnamefont {B.}~\bibnamefont {Burkett}}, \bibinfo {author} {\bibfnamefont {N.}~\bibnamefont {Bushnell}}, \bibinfo {author} {\bibfnamefont {Y.}~\bibnamefont {Chen}}, \bibinfo {author} {\bibfnamefont {Z.}~\bibnamefont {Chen}}, \bibinfo {author} {\bibfnamefont {B.}~\bibnamefont {Chiaro}}, \bibinfo {author} {\bibfnamefont {J.}~\bibnamefont {Cogan}}, \bibinfo {author} {\bibfnamefont {R.}~\bibnamefont {Collins}}, \bibinfo {author} {\bibfnamefont {P.}~\bibnamefont {Conner}}, \bibinfo {author} {\bibfnamefont {W.}~\bibnamefont {Courtney}},
  \bibinfo {author} {\bibfnamefont {A.~L.}\ \bibnamefont {Crook}}, \bibinfo {author} {\bibfnamefont {B.}~\bibnamefont {Curtin}}, \bibinfo {author} {\bibfnamefont {D.~M.}\ \bibnamefont {Debroy}}, \bibinfo {author} {\bibfnamefont {A.}~\bibnamefont {Del Toro~Barba}}, \bibinfo {author} {\bibfnamefont {S.}~\bibnamefont {Demura}}, \bibinfo {author} {\bibfnamefont {A.}~\bibnamefont {Dunsworth}}, \bibinfo {author} {\bibfnamefont {D.}~\bibnamefont {Eppens}}, \bibinfo {author} {\bibfnamefont {C.}~\bibnamefont {Erickson}}, \bibinfo {author} {\bibfnamefont {L.}~\bibnamefont {Faoro}}, \bibinfo {author} {\bibfnamefont {E.}~\bibnamefont {Farhi}}, \bibinfo {author} {\bibfnamefont {R.}~\bibnamefont {Fatemi}}, \bibinfo {author} {\bibfnamefont {L.}~\bibnamefont {Flores~Burgos}}, \bibinfo {author} {\bibfnamefont {E.}~\bibnamefont {Forati}}, \bibinfo {author} {\bibfnamefont {A.~G.}\ \bibnamefont {Fowler}}, \bibinfo {author} {\bibfnamefont {B.}~\bibnamefont {Foxen}}, \bibinfo {author} {\bibfnamefont {W.}~\bibnamefont {Giang}},
  \bibinfo {author} {\bibfnamefont {C.}~\bibnamefont {Gidney}}, \bibinfo {author} {\bibfnamefont {D.}~\bibnamefont {Gilboa}}, \bibinfo {author} {\bibfnamefont {M.}~\bibnamefont {Giustina}}, \bibinfo {author} {\bibfnamefont {A.}~\bibnamefont {Grajales~Dau}}, \bibinfo {author} {\bibfnamefont {J.~A.}\ \bibnamefont {Gross}}, \bibinfo {author} {\bibfnamefont {S.}~\bibnamefont {Habegger}}, \bibinfo {author} {\bibfnamefont {M.~C.}\ \bibnamefont {Hamilton}}, \bibinfo {author} {\bibfnamefont {M.~P.}\ \bibnamefont {Harrigan}}, \bibinfo {author} {\bibfnamefont {S.~D.}\ \bibnamefont {Harrington}}, \bibinfo {author} {\bibfnamefont {O.}~\bibnamefont {Higgott}}, \bibinfo {author} {\bibfnamefont {J.}~\bibnamefont {Hilton}}, \bibinfo {author} {\bibfnamefont {M.}~\bibnamefont {Hoffmann}}, \bibinfo {author} {\bibfnamefont {S.}~\bibnamefont {Hong}}, \bibinfo {author} {\bibfnamefont {T.}~\bibnamefont {Huang}}, \bibinfo {author} {\bibfnamefont {A.}~\bibnamefont {Huff}}, \bibinfo {author} {\bibfnamefont {W.~J.}\ \bibnamefont
  {Huggins}}, \bibinfo {author} {\bibfnamefont {L.~B.}\ \bibnamefont {Ioffe}}, \bibinfo {author} {\bibfnamefont {S.~V.}\ \bibnamefont {Isakov}}, \bibinfo {author} {\bibfnamefont {J.}~\bibnamefont {Iveland}}, \bibinfo {author} {\bibfnamefont {E.}~\bibnamefont {Jeffrey}}, \bibinfo {author} {\bibfnamefont {Z.}~\bibnamefont {Jiang}}, \bibinfo {author} {\bibfnamefont {C.}~\bibnamefont {Jones}}, \bibinfo {author} {\bibfnamefont {P.}~\bibnamefont {Juhas}}, \bibinfo {author} {\bibfnamefont {D.}~\bibnamefont {Kafri}}, \bibinfo {author} {\bibfnamefont {K.}~\bibnamefont {Kechedzhi}}, \bibinfo {author} {\bibfnamefont {J.}~\bibnamefont {Kelly}}, \bibinfo {author} {\bibfnamefont {T.}~\bibnamefont {Khattar}}, \bibinfo {author} {\bibfnamefont {M.}~\bibnamefont {Khezri}}, \bibinfo {author} {\bibfnamefont {M.}~\bibnamefont {Kieferov{\'a}}}, \bibinfo {author} {\bibfnamefont {S.}~\bibnamefont {Kim}}, \bibinfo {author} {\bibfnamefont {A.}~\bibnamefont {Kitaev}}, \bibinfo {author} {\bibfnamefont {P.~V.}\ \bibnamefont {Klimov}},
  \bibinfo {author} {\bibfnamefont {A.~R.}\ \bibnamefont {Klots}}, \bibinfo {author} {\bibfnamefont {A.~N.}\ \bibnamefont {Korotkov}}, \bibinfo {author} {\bibfnamefont {F.}~\bibnamefont {Kostritsa}}, \bibinfo {author} {\bibfnamefont {J.~M.}\ \bibnamefont {Kreikebaum}}, \bibinfo {author} {\bibfnamefont {D.}~\bibnamefont {Landhuis}}, \bibinfo {author} {\bibfnamefont {P.}~\bibnamefont {Laptev}}, \bibinfo {author} {\bibfnamefont {K.-M.}\ \bibnamefont {Lau}}, \bibinfo {author} {\bibfnamefont {L.}~\bibnamefont {Laws}}, \bibinfo {author} {\bibfnamefont {J.}~\bibnamefont {Lee}}, \bibinfo {author} {\bibfnamefont {K.}~\bibnamefont {Lee}}, \bibinfo {author} {\bibfnamefont {B.~J.}\ \bibnamefont {Lester}}, \bibinfo {author} {\bibfnamefont {A.}~\bibnamefont {Lill}}, \bibinfo {author} {\bibfnamefont {W.}~\bibnamefont {Liu}}, \bibinfo {author} {\bibfnamefont {A.}~\bibnamefont {Locharla}}, \bibinfo {author} {\bibfnamefont {E.}~\bibnamefont {Lucero}}, \bibinfo {author} {\bibfnamefont {F.~D.}\ \bibnamefont {Malone}}, \bibinfo
  {author} {\bibfnamefont {J.}~\bibnamefont {Marshall}}, \bibinfo {author} {\bibfnamefont {O.}~\bibnamefont {Martin}}, \bibinfo {author} {\bibfnamefont {J.~R.}\ \bibnamefont {McClean}}, \bibinfo {author} {\bibfnamefont {T.}~\bibnamefont {McCourt}}, \bibinfo {author} {\bibfnamefont {M.}~\bibnamefont {McEwen}}, \bibinfo {author} {\bibfnamefont {A.}~\bibnamefont {Megrant}}, \bibinfo {author} {\bibfnamefont {B.}~\bibnamefont {Meurer~Costa}}, \bibinfo {author} {\bibfnamefont {X.}~\bibnamefont {Mi}}, \bibinfo {author} {\bibfnamefont {K.~C.}\ \bibnamefont {Miao}}, \bibinfo {author} {\bibfnamefont {M.}~\bibnamefont {Mohseni}}, \bibinfo {author} {\bibfnamefont {S.}~\bibnamefont {Montazeri}}, \bibinfo {author} {\bibfnamefont {A.}~\bibnamefont {Morvan}}, \bibinfo {author} {\bibfnamefont {E.}~\bibnamefont {Mount}}, \bibinfo {author} {\bibfnamefont {W.}~\bibnamefont {Mruczkiewicz}}, \bibinfo {author} {\bibfnamefont {O.}~\bibnamefont {Naaman}}, \bibinfo {author} {\bibfnamefont {M.}~\bibnamefont {Neeley}}, \bibinfo {author}
  {\bibfnamefont {C.}~\bibnamefont {Neill}}, \bibinfo {author} {\bibfnamefont {A.}~\bibnamefont {Nersisyan}}, \bibinfo {author} {\bibfnamefont {H.}~\bibnamefont {Neven}}, \bibinfo {author} {\bibfnamefont {M.}~\bibnamefont {Newman}}, \bibinfo {author} {\bibfnamefont {J.~H.}\ \bibnamefont {Ng}}, \bibinfo {author} {\bibfnamefont {A.}~\bibnamefont {Nguyen}}, \bibinfo {author} {\bibfnamefont {M.}~\bibnamefont {Nguyen}}, \bibinfo {author} {\bibfnamefont {M.~Y.}\ \bibnamefont {Niu}}, \bibinfo {author} {\bibfnamefont {T.~E.}\ \bibnamefont {O'Brien}}, \bibinfo {author} {\bibfnamefont {A.}~\bibnamefont {Opremcak}}, \bibinfo {author} {\bibfnamefont {J.}~\bibnamefont {Platt}}, \bibinfo {author} {\bibfnamefont {A.}~\bibnamefont {Petukhov}}, \bibinfo {author} {\bibfnamefont {R.}~\bibnamefont {Potter}}, \bibinfo {author} {\bibfnamefont {L.~P.}\ \bibnamefont {Pryadko}}, \bibinfo {author} {\bibfnamefont {C.}~\bibnamefont {Quintana}}, \bibinfo {author} {\bibfnamefont {P.}~\bibnamefont {Roushan}}, \bibinfo {author}
  {\bibfnamefont {N.~C.}\ \bibnamefont {Rubin}}, \bibinfo {author} {\bibfnamefont {N.}~\bibnamefont {Saei}}, \bibinfo {author} {\bibfnamefont {D.}~\bibnamefont {Sank}}, \bibinfo {author} {\bibfnamefont {K.}~\bibnamefont {Sankaragomathi}}, \bibinfo {author} {\bibfnamefont {K.~J.}\ \bibnamefont {Satzinger}}, \bibinfo {author} {\bibfnamefont {H.~F.}\ \bibnamefont {Schurkus}}, \bibinfo {author} {\bibfnamefont {C.}~\bibnamefont {Schuster}}, \bibinfo {author} {\bibfnamefont {M.~J.}\ \bibnamefont {Shearn}}, \bibinfo {author} {\bibfnamefont {A.}~\bibnamefont {Shorter}}, \bibinfo {author} {\bibfnamefont {V.}~\bibnamefont {Shvarts}}, \bibinfo {author} {\bibfnamefont {J.}~\bibnamefont {Skruzny}}, \bibinfo {author} {\bibfnamefont {V.}~\bibnamefont {Smelyanskiy}}, \bibinfo {author} {\bibfnamefont {W.~C.}\ \bibnamefont {Smith}}, \bibinfo {author} {\bibfnamefont {G.}~\bibnamefont {Sterling}}, \bibinfo {author} {\bibfnamefont {D.}~\bibnamefont {Strain}}, \bibinfo {author} {\bibfnamefont {M.}~\bibnamefont {Szalay}}, \bibinfo
  {author} {\bibfnamefont {A.}~\bibnamefont {Torres}}, \bibinfo {author} {\bibfnamefont {G.}~\bibnamefont {Vidal}}, \bibinfo {author} {\bibfnamefont {B.}~\bibnamefont {Villalonga}}, \bibinfo {author} {\bibfnamefont {C.}~\bibnamefont {Vollgraff~Heidweiller}}, \bibinfo {author} {\bibfnamefont {T.}~\bibnamefont {White}}, \bibinfo {author} {\bibfnamefont {C.}~\bibnamefont {Xing}}, \bibinfo {author} {\bibfnamefont {Z.~J.}\ \bibnamefont {Yao}}, \bibinfo {author} {\bibfnamefont {P.}~\bibnamefont {Yeh}}, \bibinfo {author} {\bibfnamefont {J.}~\bibnamefont {Yoo}}, \bibinfo {author} {\bibfnamefont {G.}~\bibnamefont {Young}}, \bibinfo {author} {\bibfnamefont {A.}~\bibnamefont {Zalcman}}, \bibinfo {author} {\bibfnamefont {Y.}~\bibnamefont {Zhang}}, \bibinfo {author} {\bibfnamefont {N.}~\bibnamefont {Zhu}}, \ and\ \bibinfo {author} {\bibfnamefont {G.~Q.}\ \bibnamefont {AI}},\ }\bibfield  {title} {\enquote {\bibinfo {title} {Suppressing quantum errors by scaling a surface code logical qubit},}\ }\href {\doibase
  10.1038/s41586-022-05434-1} {\bibfield  {journal} {\bibinfo  {journal} {Nature}\ }\textbf {\bibinfo {volume} {614}},\ \bibinfo {pages} {676} (\bibinfo {year} {2023})}\BibitemShut {NoStop}%
\bibitem [{\citenamefont {Emerson}\ \emph {et~al.}(2005)\citenamefont {Emerson}, \citenamefont {Alicki},\ and\ \citenamefont {{\.Z}yczkowski}}]{emerson2005scalable}%
  \BibitemOpen
  \bibfield  {author} {\bibinfo {author} {\bibfnamefont {J.}~\bibnamefont {Emerson}}, \bibinfo {author} {\bibfnamefont {R.}~\bibnamefont {Alicki}}, \ and\ \bibinfo {author} {\bibfnamefont {K.}~\bibnamefont {{\.Z}yczkowski}},\ }\bibfield  {title} {\enquote {\bibinfo {title} {{Scalable noise estimation with random unitary operators}},}\ }\href@noop {} {\bibfield  {journal} {\bibinfo  {journal} {Journal of Optics B: Quantum and Semiclassical Optics}\ }\textbf {\bibinfo {volume} {7}} (\bibinfo {year} {2005})}\BibitemShut {NoStop}%
\bibitem [{\citenamefont {Radford}\ \emph {et~al.}(2019)\citenamefont {Radford}, \citenamefont {Wu}, \citenamefont {Child}, \citenamefont {Luan}, \citenamefont {Amodei}, \citenamefont {Sutskever} \emph {et~al.}}]{radford2019language}%
  \BibitemOpen
  \bibfield  {author} {\bibinfo {author} {\bibfnamefont {A.}~\bibnamefont {Radford}}, \bibinfo {author} {\bibfnamefont {J.}~\bibnamefont {Wu}}, \bibinfo {author} {\bibfnamefont {R.}~\bibnamefont {Child}}, \bibinfo {author} {\bibfnamefont {D.}~\bibnamefont {Luan}}, \bibinfo {author} {\bibfnamefont {D.}~\bibnamefont {Amodei}}, \bibinfo {author} {\bibfnamefont {I.}~\bibnamefont {Sutskever}},  \emph {et~al.},\ }\bibfield  {title} {\enquote {\bibinfo {title} {Language models are unsupervised multitask learners},}\ }\href@noop {} {\bibfield  {journal} {\bibinfo  {journal} {OpenAI blog}\ }\textbf {\bibinfo {volume} {1}},\ \bibinfo {pages} {9} (\bibinfo {year} {2019})}\BibitemShut {NoStop}%
\bibitem [{\citenamefont {Childs}\ \emph {et~al.}(2021)\citenamefont {Childs}, \citenamefont {Su}, \citenamefont {Tran}, \citenamefont {Wiebe},\ and\ \citenamefont {Zhu}}]{childs2021theory}%
  \BibitemOpen
  \bibfield  {author} {\bibinfo {author} {\bibfnamefont {A.~M.}\ \bibnamefont {Childs}}, \bibinfo {author} {\bibfnamefont {Y.}~\bibnamefont {Su}}, \bibinfo {author} {\bibfnamefont {M.~C.}\ \bibnamefont {Tran}}, \bibinfo {author} {\bibfnamefont {N.}~\bibnamefont {Wiebe}}, \ and\ \bibinfo {author} {\bibfnamefont {S.}~\bibnamefont {Zhu}},\ }\bibfield  {title} {\enquote {\bibinfo {title} {Theory of trotter error with commutator scaling},}\ }\href@noop {} {\bibfield  {journal} {\bibinfo  {journal} {Physical Review X}\ }\textbf {\bibinfo {volume} {11}},\ \bibinfo {pages} {011020} (\bibinfo {year} {2021})}\BibitemShut {NoStop}%
\bibitem [{\citenamefont {Low}\ and\ \citenamefont {Chuang}(2017)}]{low2017optimal}%
  \BibitemOpen
  \bibfield  {author} {\bibinfo {author} {\bibfnamefont {G.~H.}\ \bibnamefont {Low}}\ and\ \bibinfo {author} {\bibfnamefont {I.~L.}\ \bibnamefont {Chuang}},\ }\bibfield  {title} {\enquote {\bibinfo {title} {Optimal hamiltonian simulation by quantum signal processing},}\ }\href@noop {} {\bibfield  {journal} {\bibinfo  {journal} {Phys. Rev. Lett.}\ }\textbf {\bibinfo {volume} {118}},\ \bibinfo {pages} {010501} (\bibinfo {year} {2017})}\BibitemShut {NoStop}%
\bibitem [{\citenamefont {Wood}\ \emph {et~al.}(2015)\citenamefont {Wood}, \citenamefont {Biamonte},\ and\ \citenamefont {Cory}}]{wood2015tensornetworksgraphicalcalculus}%
  \BibitemOpen
  \bibfield  {author} {\bibinfo {author} {\bibfnamefont {C.~J.}\ \bibnamefont {Wood}}, \bibinfo {author} {\bibfnamefont {J.~D.}\ \bibnamefont {Biamonte}}, \ and\ \bibinfo {author} {\bibfnamefont {D.~G.}\ \bibnamefont {Cory}},\ }\href {https://arxiv.org/abs/1111.6950} {\enquote {\bibinfo {title} {Tensor networks and graphical calculus for open quantum systems},}\ } (\bibinfo {year} {2015}),\ \Eprint {http://arxiv.org/abs/1111.6950} {arXiv:1111.6950 [quant-ph]} \BibitemShut {NoStop}%
\bibitem [{\citenamefont {Harris}\ \emph {et~al.}(2020)\citenamefont {Harris}, \citenamefont {Millman}, \citenamefont {Van Der~Walt}, \citenamefont {Gommers}, \citenamefont {Virtanen}, \citenamefont {Cournapeau}, \citenamefont {Wieser}, \citenamefont {Taylor}, \citenamefont {Berg}, \citenamefont {Smith} \emph {et~al.}}]{harris2020array}%
  \BibitemOpen
  \bibfield  {author} {\bibinfo {author} {\bibfnamefont {C.~R.}\ \bibnamefont {Harris}}, \bibinfo {author} {\bibfnamefont {K.~J.}\ \bibnamefont {Millman}}, \bibinfo {author} {\bibfnamefont {S.~J.}\ \bibnamefont {Van Der~Walt}}, \bibinfo {author} {\bibfnamefont {R.}~\bibnamefont {Gommers}}, \bibinfo {author} {\bibfnamefont {P.}~\bibnamefont {Virtanen}}, \bibinfo {author} {\bibfnamefont {D.}~\bibnamefont {Cournapeau}}, \bibinfo {author} {\bibfnamefont {E.}~\bibnamefont {Wieser}}, \bibinfo {author} {\bibfnamefont {J.}~\bibnamefont {Taylor}}, \bibinfo {author} {\bibfnamefont {S.}~\bibnamefont {Berg}}, \bibinfo {author} {\bibfnamefont {N.~J.}\ \bibnamefont {Smith}},  \emph {et~al.},\ }\bibfield  {title} {\enquote {\bibinfo {title} {Array programming with numpy},}\ }\href@noop {} {\bibfield  {journal} {\bibinfo  {journal} {Nature}\ }\textbf {\bibinfo {volume} {585}},\ \bibinfo {pages} {357} (\bibinfo {year} {2020})}\BibitemShut {NoStop}%
\bibitem [{\citenamefont {Developers}(2025)}]{CirqDevelopers_2025}%
  \BibitemOpen
  \bibfield  {author} {\bibinfo {author} {\bibfnamefont {C.}~\bibnamefont {Developers}},\ }\href {\doibase 10.5281/ZENODO.4062499} {\emph {\bibinfo {title} {Cirq}}}\ (\bibinfo  {publisher} {Zenodo},\ \bibinfo {year} {2025})\BibitemShut {NoStop}%
\bibitem [{\citenamefont {Broughton}\ \emph {et~al.}(2020)\citenamefont {Broughton}, \citenamefont {Verdon}, \citenamefont {McCourt}, \citenamefont {Martinez}, \citenamefont {Yoo}, \citenamefont {Isakov}, \citenamefont {Massey}, \citenamefont {Niu}, \citenamefont {Halavati}, \citenamefont {Peters} \emph {et~al.}}]{broughton2020tensorflow}%
  \BibitemOpen
  \bibfield  {author} {\bibinfo {author} {\bibfnamefont {M.}~\bibnamefont {Broughton}}, \bibinfo {author} {\bibfnamefont {G.}~\bibnamefont {Verdon}}, \bibinfo {author} {\bibfnamefont {T.}~\bibnamefont {McCourt}}, \bibinfo {author} {\bibfnamefont {A.~J.}\ \bibnamefont {Martinez}}, \bibinfo {author} {\bibfnamefont {J.~H.}\ \bibnamefont {Yoo}}, \bibinfo {author} {\bibfnamefont {S.~V.}\ \bibnamefont {Isakov}}, \bibinfo {author} {\bibfnamefont {P.}~\bibnamefont {Massey}}, \bibinfo {author} {\bibfnamefont {M.~Y.}\ \bibnamefont {Niu}}, \bibinfo {author} {\bibfnamefont {R.}~\bibnamefont {Halavati}}, \bibinfo {author} {\bibfnamefont {E.}~\bibnamefont {Peters}},  \emph {et~al.},\ }\bibfield  {title} {\enquote {\bibinfo {title} {Tensorflow quantum: A software framework for quantum machine learning},}\ }\href {https://arxiv.org/abs/2003.02989} {\bibfield  {journal} {\bibinfo  {journal} {arXiv preprint arXiv:2003.02989}\ } (\bibinfo {year} {2020})}\BibitemShut {NoStop}%
\bibitem [{\citenamefont {Huang}\ \emph {et~al.}(2023{\natexlab{a}})\citenamefont {Huang}, \citenamefont {Chen},\ and\ \citenamefont {Preskill}}]{huang2022learning}%
  \BibitemOpen
  \bibfield  {author} {\bibinfo {author} {\bibfnamefont {H.-Y.}\ \bibnamefont {Huang}}, \bibinfo {author} {\bibfnamefont {S.}~\bibnamefont {Chen}}, \ and\ \bibinfo {author} {\bibfnamefont {J.}~\bibnamefont {Preskill}},\ }\bibfield  {title} {\enquote {\bibinfo {title} {Learning to predict arbitrary quantum processes},}\ }\href {\doibase 10.1103/PRXQuantum.4.040337} {\bibfield  {journal} {\bibinfo  {journal} {PRX Quantum}\ }\textbf {\bibinfo {volume} {4}},\ \bibinfo {pages} {040337} (\bibinfo {year} {2023}{\natexlab{a}})}\BibitemShut {NoStop}%
\bibitem [{\citenamefont {Huang}\ \emph {et~al.}(2023{\natexlab{b}})\citenamefont {Huang}, \citenamefont {Tong}, \citenamefont {Fang},\ and\ \citenamefont {Su}}]{huang2023learning}%
  \BibitemOpen
  \bibfield  {author} {\bibinfo {author} {\bibfnamefont {H.-Y.}\ \bibnamefont {Huang}}, \bibinfo {author} {\bibfnamefont {Y.}~\bibnamefont {Tong}}, \bibinfo {author} {\bibfnamefont {D.}~\bibnamefont {Fang}}, \ and\ \bibinfo {author} {\bibfnamefont {Y.}~\bibnamefont {Su}},\ }\bibfield  {title} {\enquote {\bibinfo {title} {Learning many-body hamiltonians with heisenberg-limited scaling},}\ }\href@noop {} {\bibfield  {journal} {\bibinfo  {journal} {Physical Review Letters}\ }\textbf {\bibinfo {volume} {130}},\ \bibinfo {pages} {200403} (\bibinfo {year} {2023}{\natexlab{b}})}\BibitemShut {NoStop}%
\bibitem [{\citenamefont {Caro}\ \emph {et~al.}(2022)\citenamefont {Caro}, \citenamefont {Huang}, \citenamefont {Cerezo}, \citenamefont {Sharma}, \citenamefont {Sornborger}, \citenamefont {Cincio},\ and\ \citenamefont {Coles}}]{caro2022generalization}%
  \BibitemOpen
  \bibfield  {author} {\bibinfo {author} {\bibfnamefont {M.~C.}\ \bibnamefont {Caro}}, \bibinfo {author} {\bibfnamefont {H.-Y.}\ \bibnamefont {Huang}}, \bibinfo {author} {\bibfnamefont {M.}~\bibnamefont {Cerezo}}, \bibinfo {author} {\bibfnamefont {K.}~\bibnamefont {Sharma}}, \bibinfo {author} {\bibfnamefont {A.}~\bibnamefont {Sornborger}}, \bibinfo {author} {\bibfnamefont {L.}~\bibnamefont {Cincio}}, \ and\ \bibinfo {author} {\bibfnamefont {P.~J.}\ \bibnamefont {Coles}},\ }\bibfield  {title} {\enquote {\bibinfo {title} {Generalization in quantum machine learning from few training data},}\ }\href@noop {} {\bibfield  {journal} {\bibinfo  {journal} {Nature communications}\ }\textbf {\bibinfo {volume} {13}},\ \bibinfo {pages} {4919} (\bibinfo {year} {2022})}\BibitemShut {NoStop}%
\bibitem [{\citenamefont {Caro}\ \emph {et~al.}(2023)\citenamefont {Caro}, \citenamefont {Huang}, \citenamefont {Ezzell}, \citenamefont {Gibbs}, \citenamefont {Sornborger}, \citenamefont {Cincio}, \citenamefont {Coles},\ and\ \citenamefont {Holmes}}]{caro2023out}%
  \BibitemOpen
  \bibfield  {author} {\bibinfo {author} {\bibfnamefont {M.~C.}\ \bibnamefont {Caro}}, \bibinfo {author} {\bibfnamefont {H.-Y.}\ \bibnamefont {Huang}}, \bibinfo {author} {\bibfnamefont {N.}~\bibnamefont {Ezzell}}, \bibinfo {author} {\bibfnamefont {J.}~\bibnamefont {Gibbs}}, \bibinfo {author} {\bibfnamefont {A.~T.}\ \bibnamefont {Sornborger}}, \bibinfo {author} {\bibfnamefont {L.}~\bibnamefont {Cincio}}, \bibinfo {author} {\bibfnamefont {P.~J.}\ \bibnamefont {Coles}}, \ and\ \bibinfo {author} {\bibfnamefont {Z.}~\bibnamefont {Holmes}},\ }\bibfield  {title} {\enquote {\bibinfo {title} {Out-of-distribution generalization for learning quantum dynamics},}\ }\href@noop {} {\bibfield  {journal} {\bibinfo  {journal} {Nature Communications}\ }\textbf {\bibinfo {volume} {14}},\ \bibinfo {pages} {3751} (\bibinfo {year} {2023})}\BibitemShut {NoStop}%
\bibitem [{\citenamefont {Bochnak}\ \emph {et~al.}(1998)\citenamefont {Bochnak}, \citenamefont {Coste},\ and\ \citenamefont {Roy}}]{bochnak1998real}%
  \BibitemOpen
  \bibfield  {author} {\bibinfo {author} {\bibfnamefont {J.}~\bibnamefont {Bochnak}}, \bibinfo {author} {\bibfnamefont {M.}~\bibnamefont {Coste}}, \ and\ \bibinfo {author} {\bibfnamefont {M.}~\bibnamefont {Roy}},\ }\href@noop {} {\emph {\bibinfo {title} {Real Algebraic Geometry}}},\ \bibinfo {series} {Ergebnisse der Mathematik und ihrer Grenzgebiete (3)}, Vol.~\bibinfo {volume} {36}\ (\bibinfo  {publisher} {Springer},\ \bibinfo {address} {Berlin, Heidelberg},\ \bibinfo {year} {1998})\BibitemShut {NoStop}%
\bibitem [{\citenamefont {Knuth}\ and\ \citenamefont {Yao}(1976)}]{Knuth1976TheCO}%
  \BibitemOpen
  \bibfield  {author} {\bibinfo {author} {\bibfnamefont {D.~E.}\ \bibnamefont {Knuth}}\ and\ \bibinfo {author} {\bibfnamefont {A.~C.-C.}\ \bibnamefont {Yao}},\ }\bibfield  {title} {\enquote {\bibinfo {title} {The complexity of nonuniform random number generation},}\ \ }(\bibinfo {year} {1976})\BibitemShut {NoStop}%
\bibitem [{\citenamefont {Bennett}(1973)}]{bennett1978reversible}%
  \BibitemOpen
  \bibfield  {author} {\bibinfo {author} {\bibfnamefont {C.~H.}\ \bibnamefont {Bennett}},\ }\bibfield  {title} {\enquote {\bibinfo {title} {Logical reversibility of computation},}\ }\href {\doibase 10.1147/rd.176.0525} {\bibfield  {journal} {\bibinfo  {journal} {IBM Journal of Research and Development}\ }\textbf {\bibinfo {volume} {17}},\ \bibinfo {pages} {525} (\bibinfo {year} {1973})}\BibitemShut {NoStop}%
\bibitem [{\citenamefont {Toffoli}(1980)}]{toffoli1980reversible}%
  \BibitemOpen
  \bibfield  {author} {\bibinfo {author} {\bibfnamefont {T.}~\bibnamefont {Toffoli}},\ }\bibfield  {title} {\enquote {\bibinfo {title} {Reversible computing},}\ }in\ \href@noop {} {\emph {\bibinfo {booktitle} {International colloquium on automata, languages, and programming}}}\ (\bibinfo {organization} {Springer},\ \bibinfo {year} {1980})\ pp.\ \bibinfo {pages} {632--644}\BibitemShut {NoStop}%
\bibitem [{\citenamefont {Nielsen}\ and\ \citenamefont {Chuang}(2010)}]{nielsen2010quantum}%
  \BibitemOpen
  \bibfield  {author} {\bibinfo {author} {\bibfnamefont {M.~A.}\ \bibnamefont {Nielsen}}\ and\ \bibinfo {author} {\bibfnamefont {I.~L.}\ \bibnamefont {Chuang}},\ }\href@noop {} {\emph {\bibinfo {title} {Quantum computation and quantum information}}}\ (\bibinfo  {publisher} {Cambridge university press},\ \bibinfo {year} {2010})\BibitemShut {NoStop}%
\bibitem [{\citenamefont {Aaronson}(2005)}]{aaronson2005quantum}%
  \BibitemOpen
  \bibfield  {author} {\bibinfo {author} {\bibfnamefont {S.}~\bibnamefont {Aaronson}},\ }\bibfield  {title} {\enquote {\bibinfo {title} {Quantum computing, postselection, and probabilistic polynomial-time},}\ }\href@noop {} {\bibfield  {journal} {\bibinfo  {journal} {Proceedings of the Royal Society A: Mathematical, Physical and Engineering Sciences}\ }\textbf {\bibinfo {volume} {461}},\ \bibinfo {pages} {3473} (\bibinfo {year} {2005})}\BibitemShut {NoStop}%
\bibitem [{\citenamefont {Toda}(1991)}]{toda1991pp}%
  \BibitemOpen
  \bibfield  {author} {\bibinfo {author} {\bibfnamefont {S.}~\bibnamefont {Toda}},\ }\bibfield  {title} {\enquote {\bibinfo {title} {Pp is as hard as the polynomial-time hierarchy},}\ }\href@noop {} {\bibfield  {journal} {\bibinfo  {journal} {SIAM Journal on Computing}\ }\textbf {\bibinfo {volume} {20}},\ \bibinfo {pages} {865} (\bibinfo {year} {1991})}\BibitemShut {NoStop}%
\bibitem [{\citenamefont {Han}\ \emph {et~al.}(1997)\citenamefont {Han}, \citenamefont {Hemaspaandra},\ and\ \citenamefont {Thierauf}}]{han1997threshold}%
  \BibitemOpen
  \bibfield  {author} {\bibinfo {author} {\bibfnamefont {Y.}~\bibnamefont {Han}}, \bibinfo {author} {\bibfnamefont {L.~A.}\ \bibnamefont {Hemaspaandra}}, \ and\ \bibinfo {author} {\bibfnamefont {T.}~\bibnamefont {Thierauf}},\ }\bibfield  {title} {\enquote {\bibinfo {title} {Threshold computation and cryptographic security},}\ }\href@noop {} {\bibfield  {journal} {\bibinfo  {journal} {SIAM Journal on Computing}\ }\textbf {\bibinfo {volume} {26}},\ \bibinfo {pages} {59} (\bibinfo {year} {1997})}\BibitemShut {NoStop}%
\end{thebibliography}%








\appendix


 

\begin{center}
\vspace{3em}
\textbf{\Large Appendices}
\end{center}
\tableofcontents

\renewcommand{\appendixname}{APPENDIX}
\renewcommand{\thesubsection}{\Alph{section}.\arabic{subsection}}
\renewcommand{\thesubsubsection}{\Alph{section}.\arabic{subsection}.\alph{subsubsection}}
\makeatletter
\renewcommand{\p@subsection}{}
\renewcommand{\p@subsubsection}{}
\renewcommand{\theequation}{S.\arabic{subsection}.\arabic{equation}}
\makeatother

\renewcommand{\thefigure}{S\arabic{figure}}
\renewcommand{\thetable}{S\arabic{table}}
\renewcommand{\figurename}{Figure}
\setcounter{figure}{0}
\setcounter{secnumdepth}{3}

\section{A brief review on quantum information and computation}

This section provides a concise review of basic concepts and notations in quantum information theory used throughout the remaining appendices.

\subsection{Quantum states and operations}

In quantum mechanics, a pure state of an $n$-qubit system is represented by a unit vector $\ket{\psi}$ in a $2^n$-dimensional complex Hilbert space $\mathcal{H} = (\mathbb{C}^2)^{\otimes n}$. The computational basis states are denoted as $\ket{0}, \ket{1}$ for a single qubit, and $\ket{x}$ for $x \in \{0,1\}^n$ in the $n$-qubit case. A quantum state can exist as a linear combination $\ket{\psi} = \sum_{x \in \{0,1\}^n} \alpha_x \ket{x}$ with complex amplitudes $\alpha_x$ satisfying $\sum_{x} |\alpha_x|^2 = 1$. Mixed states, representing statistical ensembles of pure states, are described by density matrices~$\rho$, which are positive semidefinite Hermitian operators with unit trace. For an ensemble given by $\{(p_i, \ket{\psi_i})\}$, the density matrix is $\rho = \sum_i p_i \ketbra{\psi_i}{\psi_i}$.

The evolution of closed quantum systems is governed by unitary operators $U$, satisfying $U^\dagger U = U U^\dagger = I$. Measurements in quantum mechanics are described by projection operators. A measurement in the computational basis transforms the state $\rho$ to $\ketbra{x}{x}$ with probability $p(x) = \bra{x}\rho\ket{x}$. More generally, quantum operations are described by completely positive trace-preserving (CPTP) maps $\mathcal{E}$, which can be represented using Kraus operators $\{E_k\}$ as $\mathcal{E}(\rho) = \sum_k E_k \rho E_k^\dagger$, where $\sum_k E_k^\dagger E_k = I$.

\subsection{Quantum circuits and complexity classes}

Quantum circuits provide a framework for describing quantum computations. The standard building blocks are single-qubit gates (e.g., Hadamard $H$, Pauli gates $X$, $Y$, $Z$, $Z$ rotation gates $R_Z(\theta) = \exp(-i \theta Z / 2)$) and two-qubit gates (e.g., CNOT, CZ). These gates are universal, meaning any unitary operation can be approximated to arbitrary precision using a finite sequence of these gates. A quantum circuit consists of $n$ qubits initialized in the state $\ket{0}^{\otimes n}$, followed by a sequence of quantum gates, and finally measurement of some or all qubits. The depth of a circuit is the length of the longest path from input to output, while its size is the total number of gates.

Complexity classes categorize computational problems based on the resources required by the algorithms. The class $\mathsf{BQP}$ (Bounded-error Quantum Polynomial time) consists of decision problems solvable with bounded error by a quantum computer in polynomial time. The class $\mathsf{BPP}$ (Bounded-error Probabilistic Polynomial time) consists of decision problems solvable with bounded error by a classical computer with access to random bits in polynomial time. The conjecture $\mathsf{BPP} \neq \mathsf{BQP}$ posits that quantum computers can efficiently solve problems that classical computers cannot. We will also consider the class $\mathsf{P/poly}$ (Polynomial time with polynomial-size advice) consists of decision problems solvable with by polynomial-size classical circuits. Adleman's theorem states that $\mathsf{BPP} \subseteq \mathsf{P/poly}$.

\subsection{Quantum channels and distance measures}

The total variation distance (TVD) between two probability distributions $p$ and $q$ is defined as:
\begin{equation}
\mathrm{TVD}(p, q) = \frac{1}{2}\sum_x |p(x) - q(x)|.
\end{equation}
For quantum states, the trace distance between density matrices $\rho$ and $\sigma$ is defined as:
\begin{equation}
\|\rho - \sigma\|_1 = \frac{1}{2}\mathrm{Tr}|\rho - \sigma|.
\end{equation}
The diamond norm provides a measure for the distance between quantum channels. For CPTP maps $\mathcal{E}$ and $\mathcal{F}$, the diamond norm distance is defined as:
\begin{equation}
\|\mathcal{E} - \mathcal{F}\|_\diamond = \max_{\rho} \|(\mathcal{E} \otimes I - \mathcal{F} \otimes I)(\rho)\|_1,
\end{equation}
where the maximization is over all density matrices $\rho$ of appropriate dimension, including an ancillary system, and $\|\cdot\|_1$ is the trace norm. This measure is operationally significant because it bounds the maximum probability of distinguishing between two channels by any physical experiment.

\subsection{Cross-entropy benchmarking}
\label{app:cross-entropy}
In the development of random circuit sampling~\cite{boixo2018characterizing,arute2019quantum}, an important question is the best way to characterize high fidelity implementation of the circuits.  One family of measures that works well for random circuits is so-called cross-entropy benchmarking (XEB)~\cite{boixo2018characterizing}.  The metrics in this family assume strong access to the ideal probability distribution for a bitstring $s$, $p(s)$, typically derived from simulation of the quantum circuit.  

We use the linear XEB fidelity in this work for all cases where XEB is referred to:
\begin{align}
    \mathcal{F}_\mathrm{XEB}(E) = \frac{\langle p(s) \rangle_{s \in E} - \langle p(s) \rangle_{s \in \mathcal U} }{\langle p(s) \rangle_{s \in p} - \langle p(s)\rangle_{s \in \mathcal U}}.
\end{align}
Here, $\langle \rangle_{s \in S}$ denotes average over samples from a distribution $S$, $E$ is the experimental distribution, $\mathcal{U}$ is the uniform distribution over bitstrings, and $p$ is true distribution for the exact target.  In the large system size limit $n \gg 1$, this converges to the simpler and more common expression
\begin{align}
    \mathcal{F}_\mathrm{XEB}(E) = \left\langle 2^n p(s) - 1 \right\rangle_{s \in E}
\end{align}
where $n$ is the total number of qubits. While both estimators require the true probability distribution that can be hard to obtain in general, it can also be used for specialized circuit families where probabilities are easy to obtain like Clifford circuits to facilitate benchmarking as we do in the main text.

\subsection{Quantum learning theory}

Quantum machine learning explores how quantum systems can be leveraged for learning tasks. A quantum neural network (QNN) \cite{farhi2018classification} is a parameterized quantum circuit $U(\vec{\theta})$ that maps input quantum states to output quantum states. The parameters $\vec{\theta}$ are trained to minimize a cost function that depends on the specific learning task. To overcome the barren plateau problem in QNN training \cite{mcclean2018barren}, where gradients vanish exponentially with system size, local cost functions are commonly employed \cite{cerezo2021cost}. These functions focus on local properties rather than global properties, providing more favorable optimization landscapes.

Quantum learning theory investigates fundamental questions about what unknown objects are efficiently learnable in the quantum world. In this manuscript, we will rely heavily on the classical shadow formalism \cite{huang2020predicting, huang2021efficient, huang2021provably, elben2022randomized}. Classical shadows provide an efficient method to extract classical information from quantum states using randomized measurements. For an unknown state $\rho$, a classical shadow $S_T(\rho)$ is constructed from $T$ randomized measurements, which enable the estimation of expectation values of $\exp(\mathcal{O}(T))$ observables up to a small error.

\section{Device details}
\label{app:device-details}
The experimental data reported in this work was collected on a superconducting qubit processor~\cite{arute2019quantum,morvan2024phase}.  This device had 68 qubits with a typical connectivity per-qubit of roughly $3.47$.  While the detailed per qubit and pair error rates for this specific experiment are reported below, typical error rates for this architecture are around $10^{-3}$ for single qubit gates, $10^{-2}$ for two-qubit gates, and $10^{-2}$ for measurement operations. We use the same device used in ~\cite{morvan2024phase} with similar performance. We provide detailed error statistics in Figs.~\ref{fig:benchmark3},~\ref{fig:benchmark4} to illustrate the difference in single qubit and two qubit gate performance.  This architecture has enabled both successful surface code error correction experiments~\cite{google2023suppressing}, and random circuit sampling experiments~\cite{arute2019quantum,morvan2024phase,gao2025establishing} that are expected to require $>10^{25}$ years to complete classically with current hardware and algorithms. We employ dynamical decoupling and readout-debiasing in all of these experiment runs. The following figures provide the benchmarks for the hardware used in our experiments.

\begin{figure}[ht]
    \centering
    \includegraphics[width=1\textwidth]{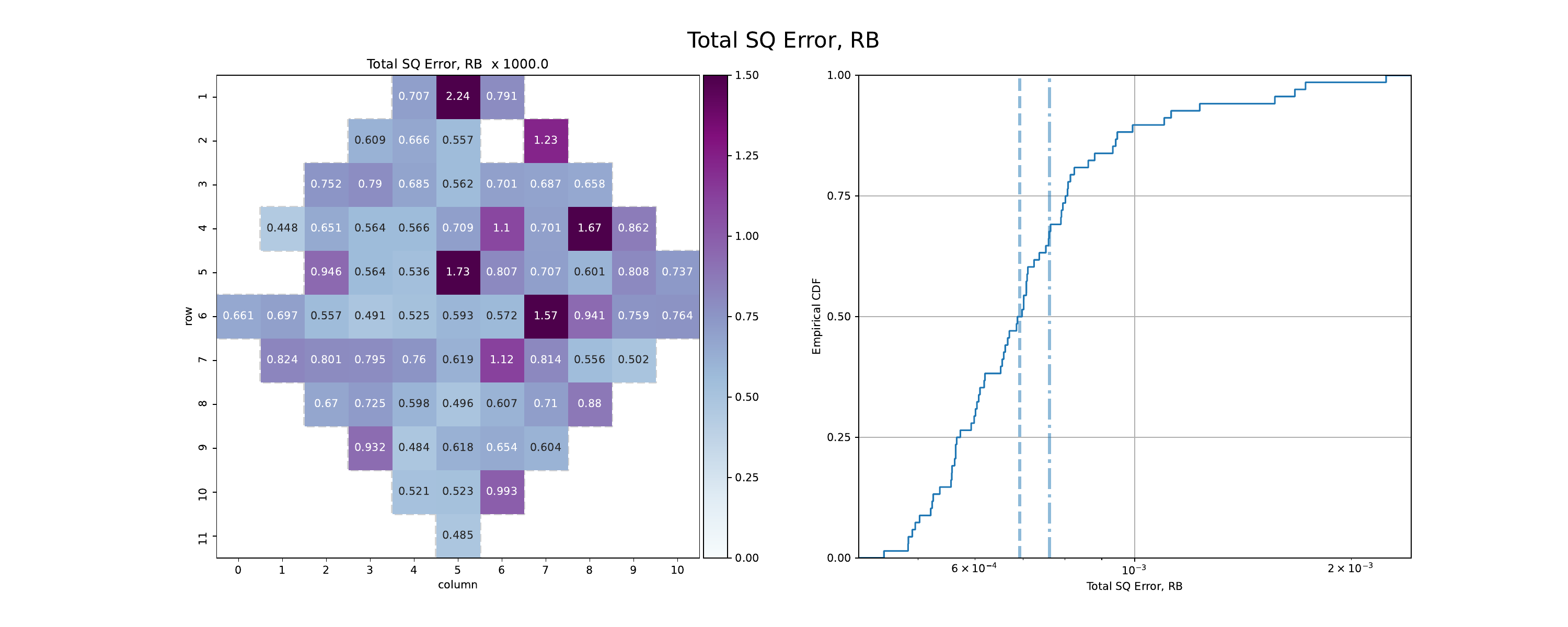}
    \caption{Single qubit pauli errors, computed with randomized benchmarking \cite{emerson2005scalable}. We show the Pauli error on each qubit as well as the integrated histogram with mean and median. To convert between Pauli Error and Average RB error under a single qubit depolarizing channel we use $p_q = \frac{3}{2} \epsilon_{\text{RB}}$}
    \label{fig:benchmark3}
\end{figure}

\begin{figure}[ht]
    \centering
    \includegraphics[width=1\textwidth]{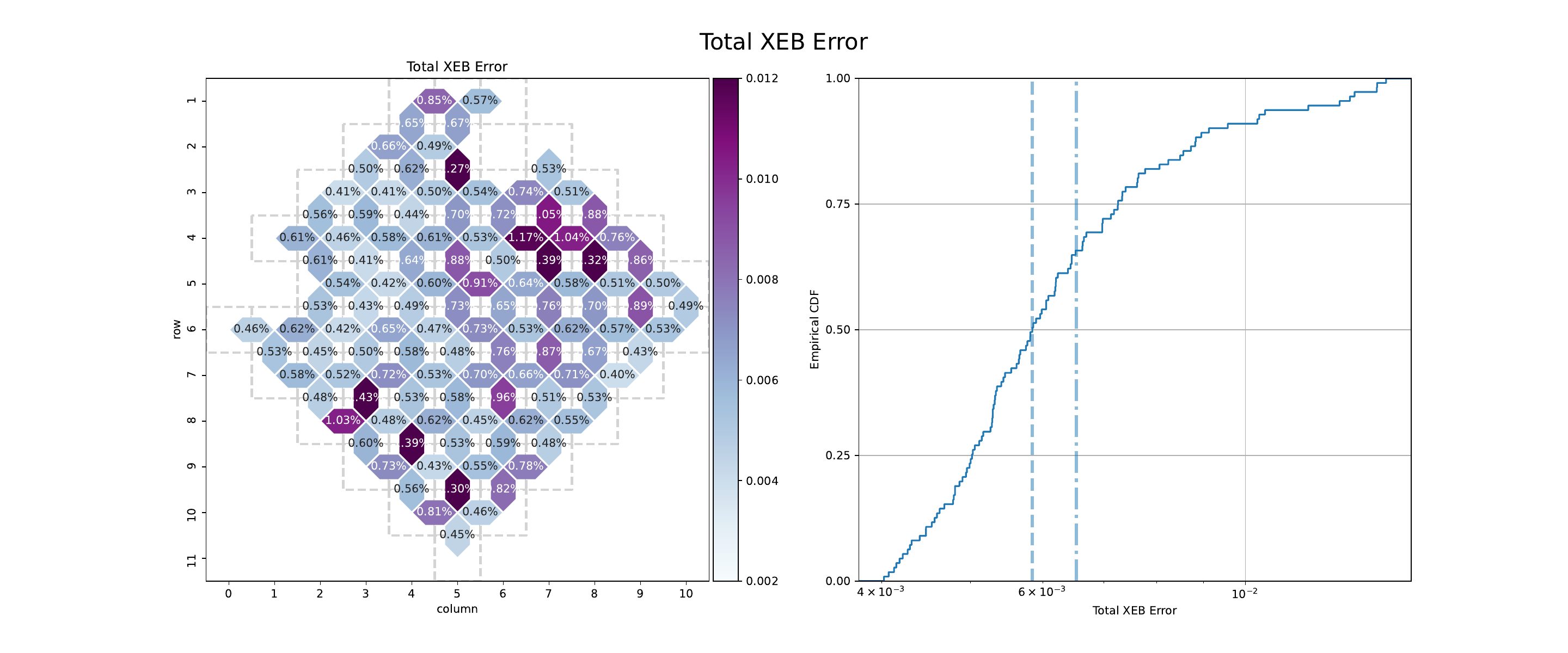}
    \caption{Two qubit total Pauli error, measured with parallel two qubit XEB (includes single qubit errors as well) \cite{arute2019quantum}. We show the error percentage between all couplers in the grid as well as the integrated histogram with mean and median.}
    \label{fig:benchmark4}
\end{figure}

\section{Experiments on learning generative model for classical bitstrings}

\subsection{Problem formulation and experimental setup}
Of the applications discussed, perhaps the most interesting is the ability to efficiently learn and sample from hard classical bitstring distributions as this closes the loop on generative quantum advantage for a problem that may be specified entirely within the classical domain.  Here we consider the task that given an $n$ bit string $x$ as input, we wish to sample an $n$ bit string $y$ from a distribution $p(y|x)$ where this distribution will be hard to sample from classically in the large size limit, and $p(y|x)$ is known through a relatively small number $N$ of training samples of the form $(x_i, y_i)$ for $i$ from $1$ to $N$. 

As a machine learning model, we consider an instantaneously deep quantum neural network (IDQNN) in 3D with free parameters $\beta$, as detailed in the general case in Appendix ~\ref{app:inst-deep-QNN}, which also elaborates on the learning and inference procedures.  The algorithm proceeds in two parts: training and inference.
\begin{enumerate}
  \item \textbf{Training:} Given a training dataset $\{(x_i, y_i)\}_{i=1}^N$, we learn the $\beta$'s that completely determine our machine learning model.  The learning step may be done efficiently classically for this model and to high precision as the number of samples $N$ scales as $1/\varepsilon^2$ for the desired error $\varepsilon$.
  \item \textbf{Inference:} In this step, given any input string $x$, we use our trained model on the quantum computer to draw samples $y$. The quality of the inference is evaluated by cross-entropy benchmarking (XEB) using the drawn samples and the true probability distribution $p(y|x)$ determined by the exact quantum model.
\end{enumerate}

To evaluate empirical performance at larger sizes than are easy to evaluate exactly with XEB, we use related Clifford circuits and a technique to map existing beyond-classical results into our framework. Both are detailed in subsequent sections.
We compare to the performance of classical ML models such as transformers and XGBoost given the same training dataset $\{(x_i, y_i)\}_{i=1}^N$ and test input $x$. To show that we can learn and perform in the classically intractable regime, we exploit a correspondence between shallow and deep circuits to map recent beyond-classical demonstrations~\cite{morvan2024phase} exactly into our shallow circuit framework.

\subsection{Numerical methodology}
\subsubsection{Learning the quantum model}
\label{app:num-learning-method}

The shallow quantum generative model that we focus on for this experiment has the important property that spatially local measurement statistics completely determine the values of $\beta$ and vice versa, while sampling from the entire, non-local distribution remains hard without a quantum computer. This means that we can efficiently emulate the learning process on an arbitrarily sized network and understand its behavior as a function of the number of samples, the size of the network, and the specific distribution from which $x$ is drawn.

To achieve this, we take a model with a given set of $\beta$ parameters and a distribution $D(x)$ over input bitstrings $x$ with a local decoupling property and use it to produce local measurement data for each qubit for a given number of samples. The local decoupling property requires that for any index $j$ in the bitstring $x$, the probability that the neighboring bits of the $j$-th bit are $1$ while the $j$-th bit itself is $0$ for an input bitstring $x$ drawn from $D$ is at least inverse-polynomial in the system size. This property is satisfied by most input distributions $D(x)$ that do not have full support on all-zero or all-one input bitstrings. Examples of distributions that satisfy the local decoupling property include:
\begin{itemize}
    \item Uniform distribution over all input bitstrings.
    \item Each bit in the input bitstring is sampled independently and identically distributed (i.i.d.) to be $0$ with probability $1-p$ or $1$ with probability $p$ for any $\frac{1}{\mathrm{poly}(n)} \leq p \leq 1 - \frac{1}{\mathrm{poly}(n)}$.
    \item Any mixture of local-decoupling distributions.
    \item Any mixture of distributions such that at least $\frac{1}{\mathrm{poly}(n)}$ fraction is local decoupling.
\end{itemize}
The local decoupling property allows us to efficiently learn the model parameters $\beta$ for an arbitrarily large system from a classical training dataset $\{(x_i, y_i)\}_{i=1}^N$ with polynomial sample size $N = \mathrm{poly}(n)$, where $x_i$ is sampled from $D$ and $y_i$ is sampled according to the ideal distribution $p(y_i | x_i)$.

From the training dataset $\{(x_i, y_i)\}_{i=1}^N$, we consider a subset of the classical samples to learn the value of $\beta^j$ for any $j$. In particular, we focus on the $x_i$'s such that the $j$-th bit is $0$ while all the bits neighboring bit $j$ are $1$. We denote the $j$-th bit of the output bitstring $y_i$ such that the corresponding input bitstring $x_i$ satisfies the above condition as $y^{j}_{(t)}$ for $t$ from $1$ to $N_{\mathrm{sp}}$. Using this subset of the training set, an empirical estimate for $\beta^j$ is given by
\begin{align}
  \hat{\beta}^j = \frac{1}{2} \arccos\left( 1 - 2\frac{1}{N_{\mathrm{sp}}} \sum_{t=1}^{N_{\mathrm{sp}}} y^{j}_{(t)} \right).
\end{align}
This learning procedure is described in more detail in Appendix~\ref{app:quantum-advantage-bitstring}. The learned values of $\beta$ for a finite sample size can then be used to define a learned model to be compared in both experiment and simulation to the exact model. We observe empirically that this learning is extremely efficient even for beyond-classical circuits (described later) in Fig.~\ref{fig:classical_learning}c, requiring only $10^{6}$ samples to reach accuracies beyond the precision of current quantum hardware, achieving typical NISQ infidelity of $10^{-3}$ on all gates across the model as determined by the best two-qubit gates found in highly tuned experiments like Ref.~\cite{morvan2024phase}.

In the experiments, we consider the distribution $D(x)$ to be the following:
\begin{itemize}
    \item With probability $1/3$, every bit in $x$ is $0$, i.e., $x = 0^n$.
    \item With probability $1/3$, every bit in $x$ is sampled i.i.d. to be $0$ with probability $0.6$ and $1$ with probability $0.4$.
    \item With probability $1/3$, every bit in $x$ is sampled i.i.d. to be $0$ with probability $0.2$ and $1$ with probability $0.8$.
\end{itemize}
Since this is a uniform mixture of two local-decoupling distributions and a singleton distribution, $D(x)$ is a local decoupling distribution. Hence, the training of the generative quantum model can be performed efficiently even on a classical computer.

We intentionally place heavy weight on the all-zero input bitstring $x = 0^n$ since we will focus on performing inference for this input bitstring. This input bitstring $x = 0^n$ provides no useful information for learning the quantum generative model parameters using our proposed method since it lacks any $1$ bits. In contrast, having data from $x = 0^n$ makes the task much easier for classical models as they can directly observe the output distributions for the specific input that will be tested during inference. Note that the learning task remains highly nontrivial and classically hard even when we provide significant training data on the input bitstring $x = 0^n$ to be tested in the inference stage. This differs from classification tasks where having training data for the test point immediately reveals the answer to the classifier. For generative tasks, having access to the testing input $x = 0^n$ in the training set can still remain challenging because the generative models must learn the distribution $p(y | x = 0^n)$. One could imagine asking a large language model to ``Create a random cat image,'' which remains a nontrivial task even if this question has appeared many times in its training set since the generative AI model needs to learn how to generate a random image of a cat.

As we have shown above, for any locally decoupling distribution $D(x)$, the learning of the quantum model can be performed efficiently on a classical computer since it only depends on the local neighborhood. However, inference on the same quantum model remains hard for a wide range of input bitstrings $x$. As shown explicitly in Figure~\ref{fig:2Dshallow-1Ddeep}, each bit in $x$ equal to $1$ corresponds to performing a mid-circuit measurement and qubit reset, whereas each bit equal to $0$ corresponds to performing a quantum gate. As the number of $1$'s increases, the inference problem becomes easier for methods like tensor network simulations that can take advantage of intermediate measurement results to perform vertex projections. However, as long as the fraction of $1$'s is sufficiently low, one should expect the inference problem to remain computationally hard for classical computers. 

For example, when the probability of having $1$ on each bit is $\frac{1}{\mathrm{poly}(n)}$ and i.i.d., the resulting distribution $D(x)$ will satisfy the local decoupling property, and hence the quantum model will remain easy to learn. However, performing inference on input bitstrings $x$ sampled from $D(x)$ remains hard since having mid-circuit measurement and reset at a few random locations within a deep quantum circuit remains computationally hard for classical computers. Furthermore, the study of measurement-induced phase transitions suggests that even when the fraction of gate locations where a mid-circuit measurement occurs is a constant, when the constant is below a threshold, the resulting quantum state conditioned on the measurement outcomes will still be highly entangled and hence sampling from a deep quantum circuit with a small constant fraction of mid-circuit measurements remains classically hard. Moreover, we note that even in the presence of local, finite temperature thermal noise in the quantum device, the simulation of shallow quantum circuits similar to the generative quantum models considered in this work is still expected to be classically hard asymptotically~\cite{bergamaschi2024quantum}.

Clearly, completely cutting a system into pieces via measurement will reduce the classical simulation difficulty. More generally, the fraction of $1$'s could reduce the effective number of qubits substantially for methods like tensor network approaches, which must be taken into consideration when trying to identify specific beyond-classical regimes up to constant factors. Moreover, if one dimension grows much more rapidly in size than the other, shape bottlenecks may permit additional simulation enhancements that improve classical approximations of the quantum circuit even further, which must be carefully accounted for. We leave these detailed studies of finite crossovers of computational hardness transitions as we vary the fraction of $1$'s in the input bitstring $x$ as open questions.

\subsubsection{Learning with classical ML models}

We compare the performance of all of these cases to state of the art sequence to sequence models on the data such as transformers and boosting models.  The classical input data $x$ is input first as the sequence of bits, then each of the output bits is predicted using the inputs and the previous output bits as input, $(x, y)$.  The order of the input and output bits is flattened into 1D consistently in each step by unwinding the 3 dimensions in a standard way to match the input and output formats of these models.  In the case of the boosting model, we use independent boosting models for each output bit in order, trained on the data to predict the next bit.  

In more detail, to compare to classical ML models for the same data, we train a combination of XGBoost and Transformer models to autoregressively predict the next output bit. We train $d * n$ XGBoost classifiers (one for each next bit prediction task) each classifier has the following configuration:

\begin{verbatim}
  n_estimators=100, eta=0.3, max_depth=6, subsample=1.0, tree_method='auto'  
\end{verbatim}

For the transformer model, we use a decoder only architecture with between $10^6$ and $10^8$ parameters (the only modification is we don't need token embeddings for a vocabulary of just 0 and 1 bits). We train a single model for each problem instance with the following configuration

\begin{verbatim}
n_layers=4-12, ff_dim=128-1024, n_heads=2-12, hidden_size=32-512
\end{verbatim}

We train with a linear ramp cosine decay learning rate for 50 epochs.  Structurally and intutively these models are very similar to the GPT2-small networks~\cite{radford2019language}.

Once these three generative models have been "trained", we sample $200000$ bitstrings from each of them and compare the KL divergence as well as the XEB score between these samples and the original problem instance.  This data is plotted in Fig.~\ref{fig:classical_learning} alongside the data from the quantum models and related Clifford models.  We see that while these models are accurate at very small sizes, their performance quickly degrades in comparison to the quantum models.

Both classical models achieve a similar level of performance as the system size grows as quantified by XEB from the true output distribution.  However the performance of both was quite poor, and substantially worse than the quantum models even implemented with real hardware noise up to scaling.  This however, does not address the question of quantum inspired models powered by advanced classical simulation, which we address subsequently.

\subsubsection{Beyond classical performance}

In order to benchmark the learning and inference steps we run both exact numerical simulations for small sizes as well as examine a set of related Clifford experiments for which efficient simulation is possible at larger sizes. This is shown in Fig.~\ref{fig:classical_learning} d) (left panel). However, this does not directly address the question of learning and inference with circuits that are believed to be beyond classical, such as those implemented in Ref.~\cite{morvan2024phase}.

To address performance and learning in the beyond classical regime, we take an approach that utilizes the universality of our shallow circuit learning distributions, an emulation of the learning process, and existing results in the literature to provably predict performance.  More specifically, we take the following approach:

\begin{enumerate}
    \item We take explicit beyond classical circuits from Ref.~\cite{morvan2024phase}, and using the universality of our sampling distribution, map them into a shallow 3D quantum circuit via refitting to layers of Hadamards and CZ's as in our deep construction, specialized to the case of $x=000...$, $y=000...$. This means that in the case of an all 1 input state and all $0$ output, we sample exactly from any particular circuit taken from Ref.~\cite{morvan2024phase}, and for any other output state $y$, we sample from an equivalently hard circuit with some number of single qubit $Z$ gates randomly injected.  This yields our shallow model.
    \item Then, as before, we emulate the training of our model $\beta$s with local sampling and fitting, and study the convergence of the error in $\beta$ with respect to number of samples.
    \item After learning the model to high precision for a number of different samples, we translate these errors back to the original circuits, i.e. their original compilations with adjusted parameters, by reversing step 1 numerically.  We then plot the individual two-qubit gate errors as a function of the number of samples in Fig.~\ref{fig:classical_learning} c), to show that even for these circuits, it is easy to get below the precision of the machine with about $10^6$ samples.  We define typical NISQ infidelity to be the best value of the two-qubit gate error achieved in Ref.~\cite{morvan2024phase}.
\end{enumerate}

As we are able to learn from data efficiently in these circuits to beyond the precision of the machine implementing these experiments in addition to an exact circuit correspondence, we use as a proxy for our performance on these circuits their reported XEB values.  This is plotted in Fig.~\ref{fig:classical_learning} d) (right panel) with the 3-Patch data from Ref.~\cite{morvan2024phase}. Also in this plot are the estimated maximum possible fidelities achievable in simulation with the Frontier supercomputer when allotted different total FLOP budgets. Simulation methods to attempt to spoof XEB follow closely to the ones described in Ref.~\cite{morvan2024phase}, with key assumptions found in TABLE 1. Using $\text{FrontierFLOPS}_{\text{target fidelity}} \approx  \text{FrontierFLOPS}_{\text{perfect fidelity}} * F_{\text{target}}$, where we know $\text{FrontierFLOPS}_{\text{perfect fidelity}}$ from Ref.~\cite{morvan2024phase} and fix $\text{FrontierFLOPS}_{\text{target fidelity}}$ to one of $10^{20}$, $10^{25}$, $10^{30}$ we can plot the maximum possible $F_{\text{target}}$ XEB fidelities on Frontier for these FLOP budgets. Finally we plot the beyond classical supremacy frontier (vertical line) corresponding to the beyond classical circuits run in Ref.~\cite{morvan2024phase}. We see from this that our learning technique achieves generative quantum advantage by performing inference on a select input present in our training set $x=000...0$, exploiting the availability of quantum models against both classical models and advanced classical simulation of quantum models, in the beyond-classical circuit regime.

\subsection{Experimental implementation}
\label{app:classical_learning_expt}
Performing a prototype experiment helps to clarify the concrete steps required and the current state of the art of quantum hardware.  As always, generality is at odds with parsimoniousness of resources, so we make a number of choices to aid in experimental implementation without explicitly sacrificing opportunities for quantum advantage in some reasonable limit.  Here we detail the experimental design for the generative classical bitstring task.

As we are implementing on a 2D Sycamore architecture that has as strengths high quality and fast gates, we consider a shallow 3D, IDQNN model that will be implemented as $(2+1)$D borrowing one dimension for time as detailed in more generality in Appendix~\ref{app:inst-deep-QNN}.  The input bitstring $x$ in the deep model determines both the initial state, and classically controlled measurement and feedforward operations later in the circuit.  The output bitstring is determined by the measurement of all the qubits in the shallow circuit, or equivalently mid-circuit measurements in the deep network.

The results for these experiments and other classical simulations are shown in Fig.~\ref{fig:classical_learning}.  To provide a bit more detail in the presentation of the results we expand on some of the subfigures here.  In Fig.~\ref{fig:classical_learning}c, we first map a particular beyond classical circuit from Ref.~\cite{morvan2024phase} into the form of our IDQNN circuits.  This circuit has 67 physical qubits, and the mapping into this representation corresponds to the point at the dashed vertical line in Fig.~\ref{fig:classical_learning}e, mapping to roughly $34,000$ shallow qubits.  We then perform a classical simulation of the learning process on the distribution $D(x)$ for this circuit representative, and evaluate the distribution of gate fidelities with respect to the true gates as a function of the number of samples, and this is plotted as the violin histogram in Fig.~\ref{fig:classical_learning}c.

In Fig.~\ref{fig:classical_learning}d, we plot the new experimental results taken on a device in this paper.  For our implementation, we bypass the need for mid circuit measurement by only evaluating the circuits in hardware for the $x=000...$ state and randomly initializing all but the final layer of $y$.  This corresponds to the most challenging circuits in terms of entanglement generation, and hence is a good benchmark for performance.

To evaluate performance experimentally, we then implement the deep forms of the circuit on the superconducting quantum processor. Three repetitions of the CZ circuits sketched in Fig.~\ref{fig:classical_learning}a are used, totaling to a model depth of $12$, to balance classical hardness of simulation and experimental fidelity and represents a spatial dimension of $12$ represented in the time axis.  The total shallow qubits is $12$ times the number of physical qubits used in the deep circuit, which we range up to a maximum value of $n=68$.  This corresponds to a shallow circuit performed on $816$ qubits.  
In order to control for different sources of error, we implement both the circuits with the ideal $\beta$'s (Q. Expt(Ideal)) and the learned $\beta$'s (Q. Expt(Learned)) from the emulation and learning procedure above, and evaluate their XEB score with the ideal circuit (Quantum Sim).  In simulation, we find that the learning procedure works to a very high degree of accuracy (XEB score $\approx 1$ for all sizes) and in experiment we find that the learned and ideal values of $\beta$ perform similarly, with degradation only due to finite fidelity of the device.  For these experiments we use a fixed number of samples of $200000$ to learn the values of $\beta$ to a finite precision.  Moreover, for each size of circuits we draw $5$ independent sets of $\beta$s to repeat the learning and inference process.  We note that in the experiments, we perform only dynamical decoupling during long periods of idling and a basic form of readout debiasing that involves performing an X gate on all qubits before the end of half the runs, and flipping the values to collect half the bitstrings in this fashion and half without any additional action.  No additional error mitigation is implemented. 

In order to predict performance of quantum learning at sizes that are not easy to simulate and verify classically, we also implement a set of Clifford test experiments (Q. Expt Clifford).  This is done by taking the $\beta$'s that define the shallow learning model, and rounding $2 \beta / \pi$ to the nearest integer to give a related Clifford model.  Note that this model is evaluated with respect to the ideal Clifford model rather than the ideal original $\beta$ circuit.  This model can then be evaluated efficiently classically, and used to check the accuracy of the corresponding model in the quantum case.  We observe that up until the point we cannot predict performance classically, this model tracks the experimental performance extremely well with regards to XEB.  This model allows us to observe strong experimental performance up to the equivalent of $816$ shallow qubits by using our deep model.  

In Fig.~\ref{fig:classical_learning}e, we use the circuits from mapped from Ref.~\cite{morvan2024phase} into our IDQNN model to determine the number of shallow qubits that would be in our model, and use the simulation and performance data from that same work to establish what the XEB would be for a high performance classical simulation of the same data.  We note that this inference is once again for the $x=000...0$ input, which maximizes the classical hardness.  Although the learned circuit in Fig.~\ref{fig:classical_learning}c is the same circuit, it is learned on a different input distribution, and we discuss the expected changes in hardness if one were to do inference on a broad class of distributions in Appendix~\ref{app:num-learning-method}.

\subsubsection{Example 1D IDQNN mapping}
\label{app:example_IDQNN_mapping}
The description of the most general IDQNN found in Appendix.~\ref{app:inst-deep-QNN} can be made easier to understand through a simple example on 4 qubits, where the model is shallow in $2D$ and the mapping to $(1+1)D$ is easy to visualize in a quantum circuit diagram.  For illustration purposes, in 1D at small sizes the deep representation of the shallow circuit may be depicted as in Fig.~\ref{fig:2Dshallow-1Ddeep}
\begin{figure}
    \centering
    \includegraphics[width=1.0\linewidth]{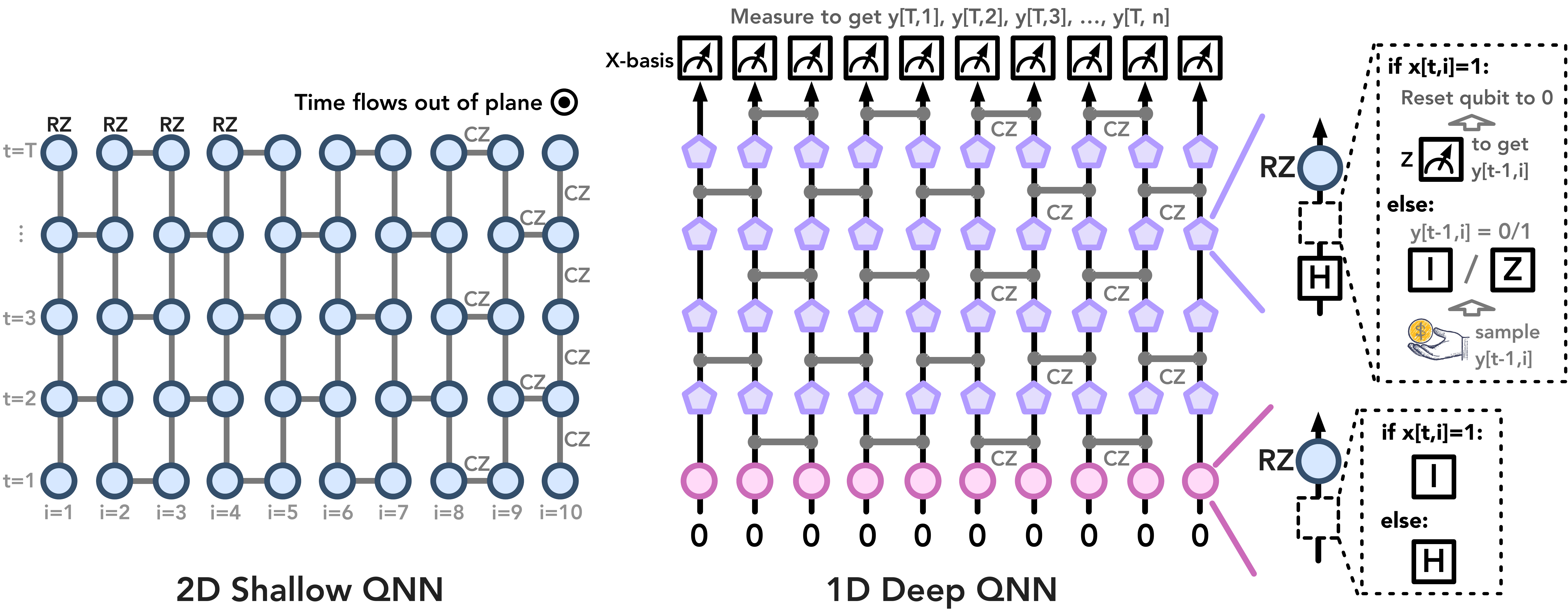}
    \caption{Example mapping from a shallow 2D IDQNN to its (1+1)D deep representation.  On the left, we depict a shallow circuit on a 2D grid of qubits indexed by $(i, t)$.  It consists of a layer of controlled Hadamards, RZ rotations by a single qubit parameter $\beta_{i, t}$, CZ gates indicated by connections between qubits, and measurements in the X basis to map to outputs $y_{i, t}$.  On the right, we depict the deep $(1+1)$D deep mapping amenable to being run on current hardware with time running from bottom to top, and the physical qubits being those in the lowest layer.  The operations are expanded upon in the side graphic showing the details of feedforward required in general cases, and we include a pseudo-code description of both in Appendix~\ref{app:example_IDQNN_mapping} for additional clarity.}
    \label{fig:2Dshallow-1Ddeep}
\end{figure}

To make this figure more clear, we spell out here the steps on the shallow circuit in 2D as well as the equivalent construction for a deep 1D circuit in the following pseudocode.  We assume that the 2D shallow model is defined by a graph $G$ that has qubits as nodes and connections between them as edges. For simplicity and maximum classical hardness, we assume here the time direction is fully connected, and the non-trivial edges will only be needed in the spatial direction.  For example, see the gray lines in Fig.~\ref{fig:2Dshallow-1Ddeep} on the 2D Shallow QNN.  The shallow and deep models are equivalent in that given the inputs $x$ indexed by $(t, i)$, they sample the bits $y$ also indexed by $(t, i)$ with precisely the same probability.  The equivalent shallow and deep models are made more explicit by their implementations which we precisely give as Algorithm~\ref{alg:2d_shallow_model} and ~\ref{alg:1p1d_deep_model}.

\begin{algorithm}[ht]
\caption{2D Shallow Model}\label{alg:2d_shallow_model}
\begin{algorithmic}[1] 
\Require inputs $\{x_{t,i}\}$ 
\Statex $T$ total time.
\Statex $G$ is a 2D graph with node labels indexing qubits by $(t, i)$.
\Statex $\beta_{t,i}$ are the model parameters.
\Statex The final circuit will be 2D shallow on $T * N$ qubits.
\Ensure Bitstring measurements $\{y_{t,i}\}$.
\State Initialize all qubits to the state $\ket{0}$.
\State Apply $H^{1-x_{t, i}}$ to all qubits $(t, i)$.
\State Apply $R_Z(\beta_{t, i})$ to all qubits $(t, i)$.
\State Apply CZ gates between all neighboring qubits. \Comment{Neighbors are defined by the edges in $G$, see Fig.~\ref{fig:2Dshallow-1Ddeep}}
\State Measure all qubits $(t, i)$ in the X-basis.
\State Assign the measurement result for qubit $(t, i)$ to the classical bit $y_{t, i}$.
\end{algorithmic}
\end{algorithm}

\begin{algorithm}[ht]
\caption{(1+1)D Deep Model}\label{alg:1p1d_deep_model}
\begin{algorithmic}[1] 
\Require  Input $\{x_{t,i}\}$.
\Statex $T$ total time.
\Statex $G$ is a 2D graph with node labels indexing qubits by $(t, i)$.
\Statex $\beta_{t,i}$ are the model parameters. 
\Statex The final circuit will be 1D Deep on $N$ qubits
\Ensure Bitstring measurements $\{y_{t,i}\}$.

\State Initialize all qubits to the state $\ket{0}$.
\State Initialize all bits $y_{t,i}$ randomly to 0 or 1 with equal probability.

\Comment{Operations at time $t=1$}
\State Apply $H^{1-x_{1,i}}$ to all qubits $i$.
\State Apply $R_Z(\beta_{1,i})$ to all qubits $i$.
\ForAll{edges $((1,i), (1,j))$ in $G$}
    \State Apply CZ gate between qubits $i$ and $j$.
\EndFor

\Comment{Main loop for time steps $t=2$ to $T$}
\For{$t \gets 2$ to $T$}
    \ForAll{edges $((t,i), (t,j))$ in $G$}
        \State Apply CZ gate between qubits $i$ and $j$.
    \EndFor
    
    \ForAll{qubits $i$}
        \State Apply $H^{1-x_{t,i}}$ to qubit $i$.
        \If{$x_{t,i} = 1$}
            \State Measure qubit $i$ in the Z-basis and assign the result to $y_{t-1, i}$.
        \EndIf
        \State Apply $Z^{y_{t-1,i}}$ followed by $R_Z(\beta_{t,i})$ to qubit $i$.
    \EndFor
\EndFor

\Comment{Final measurement at time $T$}
\ForAll{qubits $i$ at time $T$}
    \State Measure qubit $i$ in the X-basis and assign the result to $y_{T,i}$.
\EndFor

\end{algorithmic}
\end{algorithm}

\section{Experiments on generating quantum circuits to speed up simulation}
\label{app:speeding-simulation-experiment}
\subsection{Problem formulation and experimental setup}

We consider the problem where we are given a local Hamiltonian:
\begin{equation}
    H = \sum_{i} h_i,
\end{equation}
where each $h_i$ is a Hermitian operator acting on a constant number of qubits in an $n$-qubit system, and an evolution time $t$.
Our objective is to efficiently implement the $n$-qubit unitary:
\begin{equation}
    U = \exp(-i t H),
\end{equation}
on a quantum device. For most quantum simulation algorithms, $U$ would require a very deep quantum circuit, especially for long evolution times $t$. However, when the Hamiltonian possesses certain inherent structures, one may be able to implement $U$ with a significantly more efficient quantum circuit at some times or at all times.
Our goal is to develop an algorithm that automatically generates a low-depth circuit implementation of $U$ when such an implementation exists. 

The learning algorithm can operate in two distinct modes:
\begin{enumerate}
    \item \textbf{Classical simulation:} Perform Heisenberg evolution of local operators under $H$ via light-cone simulation to learn the local inversions of the unitary $U$.
    \item \textbf{Quantum simulation:} Utilize quantum algorithms (e.g., Trotter~\cite{childs2021theory} or QSP~\cite{low2017optimal}) to simulate Hamiltonian dynamics under $H$ and learn the local inversions of the unitary $U$.
\end{enumerate}
If a low-depth implementation of the unitary $U$ exists, one can learn local inversions that are low-depth. Then one can compile the local inversions into a low-depth circuit implementation of $U$.
Classical simulation (Mode 1) works efficiently when the Hamiltonian dynamics exhibits a slowly growing light cone.  Recall that even though observable light cone simulation is efficient here, taking samples from the final circuit in inference is still classically intractable in the most general case.  Using classical simulation has the significant advantage of not requiring the execution of any deep quantum circuits on quantum hardware during the learning process and generally much better scaling as a function of precision. Only after a low-depth circuit is generated do we need to use quantum hardware to implement $U$. In contrast, quantum simulation (Mode 2) remains viable even when the Hamiltonian dynamics produces a very large light cone during intermediate evolution times (though at the final time $t$, the existence of a low-depth implementation of $U$ implies that the light cone must eventually contract).

Mode 1 is particularly suitable for near-term and early fault-tolerant quantum computers where implementing deep quantum circuits is prohibitively challenging.
Mode 2 is designed for future large-scale fault-tolerant quantum computers where we can execute deep quantum circuits a limited number of times to gather necessary data, after which we can deploy the much more efficient circuit implementation to evolve under $U$.
In our current experiments, we focus on Mode 1, as we do not yet have access to a full-fledged fault-tolerant quantum computer.

The remainder of this appendix describes the details of our experiments, divided into numerical methodology and experimental implementation.
In the numerical methodology subsection, we elaborate on how the classical simulation and the learning of local inversions are performed for a family of Hamiltonians exhibiting slowly-growing light cones.
In the experimental implementation subsection, we detail how we generate a low-depth circuit for implementing the time-evolved operator $U = \exp(-i t H)$ for times $t \geq 20000$, how we map this circuit onto a Sycamore quantum processor, our data collection procedures, and subsequent data processing techniques.

\subsection{Numerical methodology}
\label{app:fitting_numerical_methodology}

From the problem formulation, the goal is to develop an algorithm that automatically generates a low-depth circuit implementation of $U = \exp(- i t H)$. In this experiment, we consider $H$ to be a known 1D local Hamiltonian with a given spatial geometry. To find a low-depth circuit implementation of $U$, we utilize a learning algorithm that focuses on finding local inversions of $k$ qubits under the action of the unitary $U$ on $n \geq k$ qubits. Depending on the level of knowledge of the structure of the unknown unitary, there are a number of strategies that can be employed to efficiently determine both the structure and content of the local inversion, as detailed in Ref.~\cite{huang2024learning}.

If we operate in Mode 2 (Quantum Simulation), we can obtain local measurements from each of the possible local bases of the $k$ input qubits ($\{ \ket{0}, \ket{1}, \ket{+}, \ket{-}, \ket{+y}, \ket{-y}\}$) and the known lightcones of those $k$ input qubits in a procedure analogous to classical shadows~\cite{huang2020predicting} and described in more detail in Appendix~\ref{sec:math-learning-QNC0}.  This method allows one to efficiently determine the local inversion for each of the groups of $k$ qubits using only samples from the device with a modest number of measurements.  However, as a sampling procedure it has a scaling of $\text{poly}(\epsilon^{-1})$ and for small systems where one has more knowledge of the unitary it can be advantageous to use a different procedure. 

In particular, we consider operating in Mode 1 (Classical simulation) where the light cone is small enough to be simulated efficiently for some time.  This allows us to study the training procedure with a lower overhead than sampling while keeping the loss landscape and phenomenological features of learning the same.  In particular, for the known full unitary $U$ and a trial unitary that we are optimizing over $U_t$, we directly implement the loss function
\begin{align}
    \sum_k||\text{Tr}_{n \setminus k}\; \mathcal{L} \left(U U_t^\dagger\right) - I_4||_2^2
\end{align}
where $\mathcal{L}(A)$ is the superoperator map, and the trace $\tr_{n \setminus k}$ is taken over all qubits except the input qubit $k$, such that the resulting map is $4$ dimensional in its superoperator matrix representation and the difference with the identity, $I_4$, can be computed.  Note that the actions of taking an operator to its super operator and partial trace require some nuance to correctly implement as detailed in Ref.~\cite{wood2015tensornetworksgraphicalcalculus}, and our notation hides some of that detail.  For efficiency, one can truncate qubit elements and gates of $U$ and $U_t$ not in the relevant lightcones which allows this procedure to be efficient as a function of the number of qubits.  

For each of the ansatz described above, we utilize random restarts with the circuit parameters initialized uniformly in the interval $[0, 1)$.  After initialization, we train the circuits with either ADAM or gradient descent with a backtracking line search (BGD).  We note that while ADAM is expected to perform well in either case, BGD is only expected to perform well with a deterministic or high precision cost function like we expect in Mode 1.  In such cases it can be much more efficient than a standard learning schedule, but it is not generally as robust to stochastic sampling techniques.

For ADAM we use a fairly standard set of learning parameters $\beta_1=0.85, \beta_2=0.9995, \epsilon=10^{-8}$ with a learning rate that we vary between $0.1$ and $0.5$.  For BGD we use $\alpha=0.5, \beta=0.8$ with $\alpha$ setting the threshold for decrease and $\beta$ setting the geometric decrease factor, and we similarly vary the step size / learning rate between $0.1$ and $0.5$.  For both methods we take 150 iterations for all experiments.

Due to the deterministic nature of the cost function for Mode 1, it is safe to take as the best point for each training run that which had the lowest overall cost function value.  For each target set of qubits we are optimizing over, we take the best optima from all restarts to define the local inversion for that set of qubits.  We then combine the best local inversion for each subset of qubits to find the full circuit.

In order to evaluate the difficulty of learning the circuit with a given grouping of qubits, for example with a single qubit, 4 qubits, or the whole circuit, we look at the probability of success upon restart of successful learning.  Due to the way in which we group qubits and can learn each qubit block independently, this definition is a little bit nuanced.  Specifically, we define success within a qubit block, as the maximum error for the channel for any individual qubit to be below a threshold $\delta = 10^{-2}$.  Then, because the cost is dominated by the number of restarts required to succeed on the hardest block of qubits, we take the overall success probability for a given grouping to be the minimum success probability across the blocks.  Together this defines the numerical success probability for learning the circuit as
\begin{align}
\label{eq:fitting_probability}
    \min_{B_i} \text{Pr} \left[ \max_{k \in B_i} ||\text{Tr}_{n \setminus k}\; \mathcal{L} \left(U U_{it}^\dagger\right) - I_4||_2^2 \leq \delta \right]
\end{align}
where $B_i$ are the qubit blocks in the grouping and $U_{it}$ is the trial local inversion specialized to qubit block $B_i$. 

To examine the case of compressing physical evolution circuits with hidden structure, we consider physical evolutions of Hamiltonians of separable Hamiltonians hidden by constant depth circuits.  More specifically, consider the Hamiltonian
\begin{align}
    H = U^\dagger \left( \sum_j h_j Z_j \right) U
\end{align}
where $h_j$ is draw uniformly at random from $[-1,1]$, $Z_j$ is the Pauli Z operator on the $j$'th qubit, and $U$ is a constant depth random unitary.  In this case, we construct $U$ as a 1D random unitary, where in the first layer random phased XZ gates are added to each pair of adjacent qubits starting from qubit $1$, followed by a CZ gate between the qubits, followed by another set of phased XZ gates on each of the qubits.  This is then repeated starting at qubit 2, such that a 1D brickwork is created. The PhasedXZ gate is defined as $\text{Ph}_{XZ} = Z^{-a} X^x Z^a Z^z$ in time-order left to right, where $X, Z$ are Pauli operators, and $x, z, a$ are fixed precision numbers taken in the interval $[0, 4]$.  In our experiments, $x$, $z$, $a$ are initialized uniformly at random inside the interval $[0,4]$.  As an example on 4 qubits, with each of the gates randomized independently, 

{\footnotesize
\begin{align}
\Qcircuit @C=1em @R=1.5em {
    & \gate{\text{Ph}(x^{1}, z^{1}, a^{1})} & \ctrl{1} & \gate{\text{Ph}(x^{2}, z^{2}, a^{2})} & \qw       & \gate{\text{Ph}(x^{3}, z^{3}, a^{3})} & \qw \\
    & \gate{\text{Ph}(x^{4}, z^{4}, a^{4})} & \ctrl{-1} & \gate{\text{Ph}(x^{5}, z^{5}, a^{5})} & \ctrl{1}& \gate{\text{Ph}(x^{6}, z^{6}, a^{6})}&\qw \\
    & \gate{\text{Ph}(x^{7}, z^{7}, a^{7})} & \ctrl{1}  & \gate{\text{Ph}(x^{8}, z^{8}, a^{8})} &\ctrl{-1} & \gate{\text{Ph}(x^{9}, z^{9}, a^{9})}&\qw  \\
    & \gate{\text{Ph}(x^{10}, z^{10}, a^{10})} & \ctrl{-1} & \gate{\text{Ph}(x^{11}, z^{11}, a^{11})} &\qw       &  \gate{\text{Ph}(x^{12}, z^{12}, a^{12})} &  \qw \\
}
\end{align}}
\;

Once the scrambling unitary $U$ has been fixed, we use it to express the time evolution circuits for the physical simulation.  If one were to naively expand $U H U^\dagger$ into Pauli terms, without recognizing the overall structure, implementing this naively with CNOT ladders under Trotterization, parallelizing over qubit blocks, and performing a single term at a time is untenable at reasonable levels of noise today on almost any processor.  

However, for this particular Hamiltonian, the time-evolution circuit obviously has a simple form for any time $t \in \mathbb{R}$, namely the unitary $U$, $R_z(-2 h_j t)$ applied to all qubits, and the inverse $U^\dagger$, where $R_z(\theta)$ is a parameterized rotation about $Z$ by $\theta$ equivalent to $\exp(-i Z \theta / 2)$.  This structure can be learned even without detailed knowledge of its presence via the sewing technique. We use this circuit to simulate the time-dynamics across a range of time steps at both short and long times.  Namely, $t \in 3 \pi / 40 * k + 0.001 \; \forall \; k \in [0, 40] \cup [10^6, 10^6+40]$, and we read out $\langle Z_j \rangle$ for each qubit $j$. 

For learning these circuits, we use Mode 1 of Classical Simulation as described above on blocks of 4 qubits, further described in the experimental implementation section.  We use an ansatz that trains on single qubit rotations around $x$, $y$, and $z$, with a fixed 2 qubit gate structure that respects the symmetry of performing both a $U$ and $U^\dagger$.  In particular, we use 

{\footnotesize
\begin{align}
\Qcircuit @C=1em @R=1.5em {
     & \gate{\text{R}(x^{1},y^{1},z^{1})} & \ctrl{1} & \gate{\text{R}(x^{2},y^{2},z^{2})} & \qw       & \gate{\text{R}(x^{3},y^{3},z^{3})} & \qw       & \gate{\text{R}(x^{4},y^{4},z^{4})} & \qw & \ctrl{1} & \qw       & \gate{\text{R}(x^{5},y^{5},z^{5})}\\
     & \gate{\text{R}(x^{6},y^{6},z^{6})} & \ctrl{-1} & \gate{\text{R}(x^{7},y^{7},z^{7})} & \ctrl{1} & \gate{\text{R}(x^{8},y^{8},z^{8})} & \ctrl{1}& \gate{\text{R}(x^{9},y^{9},z^{9})} &\qw & \ctrl{-1} & \qw       & \gate{\text{R}(x^{10},y^{10},z^{10})}\\
     & \gate{\text{R}(x^{11},y^{11},z^{11})} & \ctrl{1}  & \gate{\text{R}(x^{12},y^{12},z^{12})} &\ctrl{-1}  & \gate{\text{R}(x^{13},y^{13},z^{13})} &\ctrl{-1} & \gate{\text{R}(x^{14},y^{14},z^{14})}&\qw & \ctrl{1} & \qw       & \gate{\text{R}(x^{15},y^{15},z^{15})}\\
     & \gate{\text{R}(x^{16},y^{16},z^{16})} & \ctrl{-1} & \gate{\text{R}(x^{17},y^{17},z^{17})} &\qw       &  \gate{\text{R}(x^{18},y^{18},z^{18})} &\qw       &  \gate{\text{R}(x^{19},y^{19},z^{19})} &  \qw & \ctrl{-1} & \qw       & \gate{\text{R}(x^{20},y^{20},z^{20})}\\
}
\end{align}}
\;

for a 4-qubit circuit, where $\text{R}(x,y,z) = \text{R}_X(x) \text{R}_Y(y) \text{R}_Z(z)$, $\text{R}_P(\theta) = \exp(-i P \theta / 2)$, and every $x,y,z$ in the ansatz are allowed to vary independently, and are initialized randomly uniformly in $[0,1)$.  We note that several of the internal rotations could be combined into a single rotation, but that does not nessecarily preserve the overall training landscape, and this is the precise form used in our learning experiments here.

We save optimizer state and metrics along the way and save the results for all trials run.  For the combination of the compressing physical dynamics and SWAP circuits, we run approximately 24,000 trial runs with $150$ optimizer steps each across a variety of different ansatz and starting conditions as described above. These simulation runs were carried out using a combination of Numpy~\cite{harris2020array}, Cirq~\cite{CirqDevelopers_2025} and TFQ~\cite{broughton2020tensorflow} on GCP with a total FLOP budget of around $10^{17}$ (O(1) ExaFLOPs) in total compute - not FLOPs per second).

\subsection{Experimental implementation}
For this experiment, we are testing the more general learning approach described in Appendix~\ref{sec:math-learning-QNC0}, that requires $n$ ancilla qubits to utilize the sewing method.  Both the scrambling unitary that is used to hide the Hamiltonian and the ansatz circuit are defined in 1D, and this structure is mimicked in the experimental layout.  In particular, within the 2D grid of Sycamore, these experiments were laid out on a 1D line on the chip with available ancilla for each qubit on the line on either side (the line is connected with each neighbor, but the ancilla are not). We took 200000 samples for each problem instance circuit and computed the qubit mean values across qubits within the circuit for different values of time.  For these experiments we utilized up to 40 physical qubits, with 20 used for the system and 20 used for the ancilla register in sewing.

An important consideration in experimental implementation using the sewing trick is to minimize the circuit depth if possible.  This both lowers exposure to decoherence errors and improves the time to execute a given circuit for any given implementation.  The learning technique as described in Appendix~\ref{sec:math-learning-QNC0} discusses only the case of a single qubit local inversion where learning is the easiest, but one may learn local inversions for blocks of qubits at a time, and so long as the block size is a constant the learning remains efficient (see next section as well).  This is reflected in the cost function for training in Mode 1 that is summed over all the qubits in a learning block.  While the learning becomes more challenging with a larger block size, there are sometimes experimental benefits in the inference step to using larger blocks, such as shorter depth circuits or opportunities for parallelism.

In this experiment, we choose an experimental block size of $k=4$.  To see why this is an advantageous choice, consider the layout mapped in Fig.~\ref{fig:mbl_results}b that demonstrates this scheme allows execution of the circuit in a depth of 2 layers.  For a single qubit in 1D, the scrambling circuit we are considering can spread the light cone $3$ qubits up and $3$ qubits down from the block under consideration, which represents the blocking light cone of that local inversion.  If one does a single qubit local inversion, the light cone blocks up to 7 qubits from having actions performed on them for other local inversions, while implementing the action for only a single qubit.  Hence, while a constant depth implementation is possible, the depth is likely to need to be at least 7, which makes experimental implementation in conjunction with the sewing trick unwieldy.  In contrast, if the qubit block size is 4 and judiciously chosen, then only 2 qubits on either side of the block are impeded, and 4 qubits are implemented at a time.  This means the overall circuit can be implemented in depth 2 repetitions of the local inversions, a substantial improvement in implementation efficiency.

For the chip being used, the CZ is a native hardware gate that is individually calibrated.  The CZ connectivity is displayed by the couplers between qubits on the superconducting qubit grid depicted in Fig.~\ref{fig:mbl_results}a.  For the single qubit gates, we compile into PhasedXZ gates, and the single qubit error rate has a typical value of $10^{-3}$.  

The raw results are quite good due to the quality of the device, but here we describe a simple error mitigation technique that further improved the results with respect to the exact solution.  This error mitigation technique is only used in the inference step, as the learning is done classically in Mode 1 as described.  When the ansatz circuit above sets all the trainable parameters to $0$, it is logically the identity circuit.  If we perform no circuit cancellation at that point, and run it on the device, it experiences a structurally very similar noise pattern to the true circuit, but with exactly known reference values.  Denote this faulty identity channel as $\mathcal{E}$.  For a given observable on qubit $i$, $O_i$ with measurement outcomes $\{0, 1\}$, we define the zero and one reference values as the values of those observables measured on the all $0$ state $\ket{000...}$. 

That is, let
\begin{align}
    \rho_0 &= \mathcal{E} \left[ \ket{000...} \bra{000...} \right] \\
    O_i^0 &= \tr[\rho_0 O_i] 
\end{align}
with these reference values, we define the error mitigated observables as
\begin{align}
    M(O_i) &= \frac{\langle O_i \rangle - O_i^0}{1 - 2 O_i^0}
\end{align}
where $\langle O_i \rangle$ is the empirically measured expected value of $O_i$ on the ansatz circuit with the desired parameters.  For Pauli operators $P_i$, we relate them to an $O_i$ via $O_i = 1 - (P_i + 1)/2$.  Empirically we find that this mitigation technique improves the results, though we note the raw values are quite good as well for these experiments as can be seen in Figs.~\ref{fig:mbl_results}c and ~\ref{fig:mbl_results}d.

\section{Numerical experiments on learning SWAP circuits}
\label{app:num-exp-swap-circuit-learning}
In order to best isolate the consequences for efficiency of training, we also run numerical experiments on extremely simple circuits for which we have analytical results on the landscapes and detailed analysis in Appendix~\ref{sec:exp-many-local-minima-param-shallow}.  More specifically, we design experiments to study the probability of success as a function of the system size and size of the block under consideration in terms of the proliferation of local minima.  This has a direct consequence in terms of the probability of success of restarted local descent methods as described in Appendix~\ref{app:fitting_numerical_methodology}, and the probability of success described in Eq.~\ref{eq:fitting_probability} is evaluated and plotted in Fig.~\ref{app:fitting_numerical_methodology}.  Here, we focus on an extremely simple example to control for excess complications, namely a classical circuit that SWAPs every 4th qubit.  More explicitly one may write this circuit as
\begin{align}
\Qcircuit @C=1em @R=1.5em {
    & \qswap & \qw \\
    & \qw \qwx & \qw \\
    & \qw \qwx & \qw \\
    & \qswap \qwx & \qw\\
}
\end{align}
repeated over all disjoint groups of $4$ qubits, arranged in 1D.  To further emphasize the essential challenges, we also use a stilted ansatz that closely mimics the form of the target circuit with only slightly different connectivity.  Explicitly we use

\begin{align}
\Qcircuit @C=1em @R=1.5em {
     & \multigate{1}{\text{SWAP}^{p^{0,0}}} & \qw & \qw & \multigate{1}{\text{SWAP}^{p^{0,1}}} & \qw & \qw\\
     & \ghost{\text{SWAP}^{p^{0,0}}} & \multigate{1}{\text{SWAP}^{p^{1,0}}} & \qw & \ghost{\text{SWAP}^{p^{0,0}}} & \multigate{1}{\text{SWAP}^{p^{1,1}}} & \qw \\
     & \multigate{1}{\text{SWAP}^{p^{2,0}}} & \ghost{\text{SWAP}^{p^{0,0}}} & \qw & \multigate{1}{\text{SWAP}^{p^{2,1}}} & \ghost{\text{SWAP}^{p^{0,0}}} & \qw \\
     & \ghost{\text{SWAP}^{p^{0,0}}} & \qw & \qw & \ghost{\text{SWAP}^{p^{0,0}}} & \qw & \qw \\
}
\end{align}
\;

where each $p^{i,j}$ is allowed to be independently initialized and optimized, the gate $\text{SWAP}^p$ is the SWAP gate raised to the $p$'th power, and this is two layers of of nearest neighbor SWAPs in 1D, clearly capable of expressing the SWAP-4 target circuit.   Perhaps surprisingly, without the use of the sewing trick, even with knowledge of the circuit structure, it is incredibly hard to learn due to the proliferation of local minima in the training landscape.  As this circuit is classical, this is a consequence of circuit architecture and information readout rather than genuinely quantum effects, however divide and conquer learning combined with the sewing trick is able to overcome this limitation regardless.  The SWAP gate to the $p$'th power has the more specific matrix representation in the computational basis
\begin{align}
    \text{SWAP}^p = \left( 
    \begin{array}{cccc}
    1 & 0 & 0 & 0 \\
    0 & e^{i \pi p / 2} \text{cos}(\pi p / 2) & -i e^{i \pi p / 2} \text{sin}(\pi p / 2) & 0 \\
    0 & -i e^{i \pi p / 2} \text{sin}(\pi p / 2) & e^{i \pi p / 2} \text{cos}(\pi p / 2) & 0 \\
    0 & 0 & 0 & 1
    \end{array}
    \right)
\end{align} 

However, the sewing technique allows one to fix the number of qubits under consideration in the target, and scalable stop the proliferation of local minima, allowing simple restart and local minimization to be an efficient strategy.  For any fixed number of target qubits $k$.  In this work, we compare $k=1$ and $k=4$ to the cost of learning the whole circuit with this ansatz and empirically support the exponential advantage at practical sizes of interest.

We use the same methods as outlined in Appendix~\ref{app:fitting_numerical_methodology} to fit the above ansatz circuits and determine their probability of success.  For this set of experiments, we restrict ourselves to numerical simulation.

\section{Mathematical framework for learning shallow quantum circuits} \label{sec:math-learning-QNC0}

In this appendix, we present the mathematical framework for learning and training shallow quantum circuits from a classical dataset.

\subsection{Dataset} \label{sec:dataset-learn-shallow-qc}

Our approach uses a classical dataset obtained from randomized measurements on the unknown quantum circuit. The dataset construction procedure is as follows:

\begin{definition}[Randomized measurement dataset]
Given an unknown $n$-qubit unitary $U$, a randomized measurement dataset of size $N$ is constructed as:
\begin{equation}
\mathcal{T}_U(N) = \left\{ \ket{\psi_\ell} = \bigotimes_{i=1}^n \ket{\psi_{\ell, i}}, \ket{\phi_\ell} = \bigotimes_{i=1}^n \ket{\phi_{\ell, i}} \right\}_{\ell=1}^N,
\end{equation}
where each sample is obtained through the following procedure:
\begin{enumerate}
    \item Sample an input product state $\ket{\psi_\ell} = \bigotimes_{i=1}^n \ket{\psi_{\ell, i}}$, with each $\ket{\psi_{\ell, i}}$ being a uniformly random single-qubit stabilizer state from $\{\ket{0}, \ket{1}, \ket{+}, \ket{-}, \ket{y+}, \ket{y-}\}$.
    \item Apply the unknown unitary $U$ to $\ket{\psi_\ell}$.
    \item Measure each qubit of $U \ket{\psi_\ell}$ in a random Pauli basis chosen from $\{X, Y, Z\}$. The measurement collapses the state to a product state $\ket{\phi_\ell} = \bigotimes_{i=1}^n \ket{\phi_{\ell, i}}$.
\end{enumerate}
This dataset can be represented efficiently on a classical computer using $\mathcal{O}(Nn)$ bits.
\end{definition}

\subsection{Algorithm 1: Known structure with direct Heisenberg sewing} \label{sec:alg1-known-direct}

We begin by presenting an algorithm for learning a shallow quantum neural network when both the geometric connectivity and circuit architecture are known. Our goal is to learn an unknown $n$-qubit unitary $U$ that belongs to a specified parameterized circuit family.

\subsubsection{Circuit architecture and lightcone structure}

Consider a parameterized quantum circuit architecture $\mathcal{A}$. While the architecture is known, the specific gate parameters, and hence the particular unitary $U \in \mathcal{A}$ we wish to learn, are unknown. The key insight is that the geometric structure of $\mathcal{A}$ constrains the support of Heisenberg-evolved Pauli observables.

\begin{definition}[Backward lightcone for circuit architecture]
For a constant-depth circuit architecture $\mathcal{A}$ and qubit $i \in \{1, \ldots, n\}$, the backward lightcone $C_i$ is the maximal set of qubits that can influence the $i$-th qubit output across all possible unitaries in the architecture:
\begin{equation}
C_i = \{j \in \{1, \ldots, n\} : \exists P \in \{X, Y, Z\}, U \in \mathcal{A}, \text{ such that } [U^\dagger P_i U, P_j] \neq 0\}.
\end{equation}
Since $\mathcal{A}$ consists of constant-depth circuits, we have $|C_i| = \mathcal{O}(1)$ for all $i$.
\end{definition}

The lightcones $\{C_i\}_{i=1}^n$ are determined purely by the circuit architecture and can be computed efficiently from the geometric connectivity pattern. For any specific unitary $U \in \mathcal{A}$, the Heisenberg-evolved observables $U^\dagger P_i U$ have support contained within $C_i$.

\subsubsection{Pauli coefficient parameterization}

We parameterize the Heisenberg-evolved observables directly through their Pauli decompositions:

\begin{definition}[Pauli coefficient parameterization]
For each qubit $i \in \{1, \ldots, n\}$ and single-qubit Pauli observable $P \in \{X, Y, Z\}$, we parameterize:
\begin{equation}
U^\dagger P_i U = \sum_{Q \in \mathrm{Pauli}(C_i)} \alpha_{P_i, Q} Q,
\end{equation}
where 
\begin{equation}
\mathrm{Pauli}(C_i) = \{Q \in \{I, X, Y, Z\}^{\otimes n} : Q_j = I \text{ for all } j \notin C_i\}
\end{equation}
is the set of Pauli operators supported on the lightcone $C_i$. We denote the entire parameter vector as $\vec{\alpha}$. The total number of parameters is $3n \cdot \max_i 4^{|C_i|} = \mathcal{O}(n),$ since each $|C_i| = \mathcal{O}(1)$.
\end{definition}

\subsubsection{Loss function}

Given the randomized measurement dataset $\mathcal{T}_U(N)$, we define the loss function:
\begin{equation}
\mathcal{L}(\vec{\alpha}) = \frac{1}{N} \sum_{\ell=1}^N \sum_{i=1}^n \sum_{P \in \{X, Y, Z\}} \sum_{Q \in \mathrm{Pauli}(C_i)} \left| \alpha_{P_i, Q} - 3^{|Q|+1} \bra{\phi_{\ell, i}} P \ket{\phi_{\ell, i}} \cdot \bra{\psi_\ell} Q \ket{\psi_\ell} \right|^2.
\end{equation}
This is a standard least-squares optimization in the parameter space. As we will demonstrate rigorously in Section~\ref{sec:all-local-minima-are-global}, this optimization problem is strongly convex and possesses a unique global minimum $\vec{\alpha}^*$, ensuring that any local optimization algorithm will converge to the optimal solution.
Furthermore, it is not hard to show that $\sum_{Q \in \mathrm{Pauli}(C_i)} \alpha^*_{P_i, Q} Q \approx U^\dagger P_i U$ up to an $\epsilon$ error under $\norm{\cdot}_\infty$ using a dataset $\mathcal{T}_U(N)$ of size $N = \mathcal{O}(1/\epsilon^2)$.

\subsubsection{Direct Heisenberg sewing}

\begin{definition}[Direct Heisenberg sewing]
Given learned coefficients $\vec{\alpha}^*$, we construct the sewed $2n$-qubit unitary through direct projection of Heisenberg operators. For each qubit $i$, define the sewing operator:
\begin{equation}
\mathcal{S}_i(\vec{\alpha}^*) = \frac{1}{2} I \otimes I + \frac{1}{2} \sum_{P \in \{X, Y, Z\}} \sum_{Q \in \mathrm{Pauli}(C_i)} P_i \otimes \alpha^*_{P_i, Q} Q.
\end{equation}
Since each $\mathcal{S}_i(\vec{\alpha}^*)$ may not be unitary, we apply unitary projection $\mathrm{Proj}_{\mathsf{U}}(\cdot)$, where for a matrix $M$:
\begin{equation}
\mathrm{Proj}_{\mathsf{U}}(M) = \arg\min_{U: U^\dagger U = I} \|U - M\|_F^2 = VW^\dagger
\end{equation}
if $M = V\Sigma W^\dagger$ is the singular value decomposition. The complete sewed unitary is:
\begin{equation}
U_{\mathrm{sew}}^{\mathrm{Heis}}(\vec{\alpha}^*) := S \left[\prod_{i=1}^n \mathrm{Proj}_{\mathsf{U}} \left( \mathcal{S}_i(\vec{\alpha}^*) \right) \right],
\end{equation}
where $S$ denotes the global swap operator between the two $n$-qubit registers. The learned $n$-qubit channel is:
\begin{equation}
\hat{\mathcal{E}}^{\mathrm{Heis}}(\rho) := \mathrm{Tr}_{\leq n}\left(U_{\mathrm{sew}}^{\mathrm{Heis}}(\vec{\alpha}^*) \left(\ketbra{0^n}{0^n} \otimes \rho\right) \left(U_{\mathrm{sew}}^{\mathrm{Heis}}(\vec{\alpha}^*)\right)^\dagger\right).
\end{equation}
\end{definition}

We can obtain an upper bound on the error between the learned channel $\hat{\mathcal{E}}^{\mathrm{Heis}}$ and the ideal unitary channel $\mathcal{U}(\cdot) = U (\cdot) U^\dagger$ as follows. The proof of Theorem~\ref{thm:direct-sewing-accuracy} is deferred to Section~\ref{subsec:technical-proofs}.

\begin{theorem}[Error bound on direct Heisenberg sewing] \label{thm:direct-sewing-accuracy}
Given an unknown $n$-qubit unitary $U$ and learned coefficients $\vec{\alpha}^*$ such that for each $i \in \{1,\ldots,n\}$, $\sum_{P \in \{X, Y, Z\}} \left\|\sum_{Q \in \mathrm{Pauli}(C_i)} \alpha^*_{P_i, Q} Q - U^\dagger P_i U\right\|_\infty \leq \varepsilon_i,$ then:
\begin{equation}
\|U_{\mathrm{sew}}^{\mathrm{Heis}}(\vec{\alpha}^*) - U^\dagger \otimes U\|_\infty \leq \sum_{i=1}^n \varepsilon_i.
\end{equation}
Hence, we have $\norm{\hat{\mathcal{E}}^{\mathrm{Heis}} - \mathcal{U}}_\diamond \leq 2 \sum_{i=1}^n \varepsilon_i$.
\end{theorem}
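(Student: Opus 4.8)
The plan is to first pin down the \emph{error-free} version of the sewed operator and show it equals $U^\dagger\otimes U$, and then track how the per-qubit learning error $\varepsilon_i$ and the unitary projection propagate through the $n$-fold product. The starting point is the single-qubit identity $\tfrac12(I\otimes I + X\otimes X + Y\otimes Y + Z\otimes Z)=\SWAP$. Applying it qubitwise shows that if the coefficients were exact, i.e.\ $\sum_Q \alpha^*_{P_i,Q}Q = U^\dagger P_i U$ (consistent, since $U^\dagger P_i U$ is supported on $C_i$), then $\mathcal{S}_i$ collapses to the unitary $\mathcal{S}_i^\star := \tfrac12\big(I\otimes I + \sum_P P_i\otimes U^\dagger P_i U\big) = (I\otimes U^\dagger)\,\SWAP_i\,(I\otimes U)$, where $\SWAP_i$ exchanges qubit $i$ of the two $n$-qubit registers. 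The $\SWAP_i$ act on disjoint qubit pairs, hence commute, so $\prod_{i=1}^n\mathcal{S}_i^\star = (I\otimes U^\dagger)\,S\,(I\otimes U)$ with $S=\prod_i\SWAP_i$ the global register swap; therefore $S\prod_i\mathcal{S}_i^\star = S(I\otimes U^\dagger)S(I\otimes U) = (U^\dagger\otimes I)(I\otimes U) = U^\dagger\otimes U$, which is exactly the target object.

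Next I would bound each factor. Since $\mathcal{S}_i(\vec\alpha^*)-\mathcal{S}_i^\star = \tfrac12\sum_P P_i\otimes\big(\sum_Q\alpha^*_{P_i,Q}Q - U^\dagger P_i U\big)$ and $\norm{\cdot}_\infty$ is multiplicative under tensor products with $\norm{P_i}_\infty=1$, the hypothesis $\sum_P \norm{\sum_Q\alpha^*_{P_i,Q}Q - U^\dagger P_i U}_\infty \le \varepsilon_i$ gives $\norm{\mathcal{S}_i(\vec\alpha^*)-\mathcal{S}_i^\star}_\infty \le \tfrac12\varepsilon_i$. Because $\mathcal{S}_i^\star$ is unitary and $\mathrm{Proj}_{\mathsf U}$ returns the polar factor, which by the Fan--Hoffman theorem is the nearest unitary in \emph{every} unitarily invariant norm---in particular the operator norm---we have $\norm{\mathrm{Proj}_{\mathsf U}(\mathcal{S}_i(\vec\alpha^*))-\mathcal{S}_i(\vec\alpha^*)}_\infty \le \norm{\mathcal{S}_i^\star-\mathcal{S}_i(\vec\alpha^*)}_\infty$, so the triangle inequality yields $\norm{\mathrm{Proj}_{\mathsf U}(\mathcal{S}_i(\vec\alpha^*))-\mathcal{S}_i^\star}_\infty \le \varepsilon_i$.

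The two products $\prod_{i=1}^n\mathrm{Proj}_{\mathsf U}(\mathcal{S}_i(\vec\alpha^*))$ and $\prod_{i=1}^n\mathcal{S}_i^\star$ are both products of unitaries, so the standard hybrid/telescoping bound $\norm{\prod_i A_i-\prod_i B_i}_\infty\le\sum_i\norm{A_i-B_i}_\infty$ (valid whenever every factor has norm at most $1$), combined with unitarity of $S$, gives $\norm{U_{\mathrm{sew}}^{\mathrm{Heis}}(\vec\alpha^*)-U^\dagger\otimes U}_\infty\le\sum_{i=1}^n\varepsilon_i$. For the channel statement I would invoke $\norm{V(\cdot)V^\dagger - W(\cdot)W^\dagger}_\diamond\le 2\norm{V-W}_\infty$ for unitaries $V,W$ (proved by inserting $V(\cdot)W^\dagger$ and using $\norm{\cdot}_1$-contractivity, including on an ancilla); since $\rho\mapsto\ketbra{0^n}{0^n}\otimes\rho$ and the partial trace $\tr_{\le n}$ are CPTP and therefore cannot increase diamond distance, and since $U^\dagger\otimes U$ sandwiched between them reproduces $\mathcal{U}$ exactly, we conclude $\norm{\hat{\mathcal{E}}^{\mathrm{Heis}}-\mathcal{U}}_\diamond\le 2\sum_{i=1}^n\varepsilon_i$.

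The only step that demands genuine care is controlling the unitary-projection error in operator norm: a naive estimate via $\norm{\mathrm{Proj}_{\mathsf U}(M)-M}_F$ followed by a norm comparison would inject a dimension factor and destroy the scaling. Using instead that the polar factor is simultaneously optimal in all unitarily invariant norms is what keeps the bound dimension-free and linear in $\sum_i\varepsilon_i$; with that in hand, the remainder is bookkeeping with triangle inequalities and the exact structure established in the first step.
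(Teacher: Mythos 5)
Your proposal is correct and follows essentially the same route as the paper's proof: establish the exact identity $S\prod_i\bigl(\tfrac12 I\otimes I+\tfrac12\sum_P P_i\otimes U^\dagger P_i U\bigr)=U^\dagger\otimes U$, bound each factor's deviation by $\varepsilon_i/2$, double it to account for the unitary projection, and telescope over the product of unitaries. The one place you are more explicit than the paper is in invoking the Fan--Hoffman theorem to justify that the polar factor (defined in the paper as the Frobenius-norm minimizer) is also the nearest unitary in operator norm --- a fact the paper asserts without citation but uses identically.
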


\subsection{Algorithm 2: Known structure with local inversion sewing} \label{sec:alg2-known-inversion}

The second algorithm learns the same Heisenberg-evolved observables but replaces the unitary group projection with local inversion training for each qubit to construct the final circuit.

\subsubsection{Local inversion construction}

We begin with the general concept of local inversion before specializing to our learned coefficients.

\begin{definition}[$\varepsilon$-approximate local inversion]
Given $n$-qubit unitaries $U$ and $V_i$, we say $V_i$ is an $\varepsilon$-approximate local inversion of $U$ on the $i$-th qubit if:
\begin{equation}
\sum_{P \in \{X, Y, Z\}} \| V_i^\dagger U^\dagger P_i U V_i - P_i\|_\infty \leq \varepsilon
\end{equation}
where $P_i$ denotes the Pauli operator $P$ acting on the $i$-th qubit and identity on all other qubits.
\end{definition}

Given the learned coefficients $\vec{\alpha}^*$ from the convex optimization, we construct local inversion unitaries:

\begin{definition}[Local inversion from coefficients]
For each qubit $i$, define the target operator:
\begin{equation}
\hat{O}_{i,P} = \sum_{Q \in \mathrm{Pauli}(C_i)} \alpha^*_{P_i, Q} Q.
\end{equation}
The local inversion $V_i$ is optimized to satisfy:
\begin{equation}
\sum_{P \in \{X, Y, Z\}} \| V_i^\dagger \hat{O}_{i,P} V_i - P_i\|_\infty \leq \varepsilon_i.
\end{equation}
\end{definition}

Since $\hat{O}_{i,P}$ has support on a constant-size lightcone $C_i$, the local inversion $V_i$ can be constrained to act only on a constant number of qubits and have constant depth. Since $V_i$ acts on $\mathcal{O}(1)$ qubits, the optimization can be performed using various methods, such as brute-force search, gradient descent with restart, L-BFGS, Newton method, Bayesian optimization. All of these approaches have constant runtime for any fixed $\varepsilon_i$ due to the constant dimensionality of the optimization space. Furthermore, the optimization landscape will only contain a constant number of local minima due to the constant dimensionality; see Section~\ref{subsubsection:const-local-min-local-inv}.

\subsubsection{Local inversion sewing}

\begin{definition}[Sewing local inversions]
Given $n$-qubit unitaries $V_1, \ldots, V_n$, define the sewed $2n$-qubit unitary as:
\begin{equation}
U_{\mathrm{sew}}^{\mathrm{inv}}(V_1, \ldots, V_n) := S \left[\prod_{i=1}^n \left(V_i \otimes \Id \right) S_{i} \left(V_i^\dagger \otimes \Id\right) \right],
\end{equation}
where $S_{i}$ is the swap operator for the $i$-th qubit between the two $n$-qubit systems, and $S$ is the swap operator for all $n$ qubits.
The learned $n$-qubit channel is then defined as:
\begin{equation}
\hat{\mathcal{E}}^{\mathrm{inv}}(\rho) := \mathrm{Tr}_{>n}(U_{\mathrm{sew}}^{\mathrm{inv}}(\rho \otimes \ketbra{0^n}{0^n})(U_{\mathrm{sew}}^{\mathrm{inv}})^\dagger).
\end{equation}
\end{definition}

We can obtain an upper bound on the error between the learned channel $\hat{\mathcal{E}}^{\mathrm{inv}}$ and the ideal unitary channel $\mathcal{U}(\cdot) = U (\cdot) U^\dagger$ as follows. The proof of Theorem~\ref{thm:sewing-accuracy} is deferred to Section~\ref{subsec:technical-proofs}.

\begin{theorem}[Error bound on sewing local inversions] \label{thm:sewing-accuracy}
Given an unknown $n$-qubit unitary $U$, if $V_i$ is an $\varepsilon_i$-approximate local inversion of $U$ on the $i$-th qubit for each $i$, then:
\begin{equation}
\|U_{\mathrm{sew}}(V_1, \ldots, V_n) - U \otimes U^\dagger\|_\infty \leq \frac{1}{2} \sum_{i=1}^n \varepsilon_i.
\end{equation}
Hence, we have $\norm{\hat{\mathcal{E}}^{\mathrm{inv}} - \mathcal{U}}_\diamond \leq \sum_{i=1}^n \varepsilon_i$.
\end{theorem}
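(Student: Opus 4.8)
The strategy mirrors the one behind Theorem~\ref{thm:direct-sewing-accuracy}: compare the sewed unitary $U_{\mathrm{sew}}^{\mathrm{inv}}(V_1,\dots,V_n)=S\prod_{i=1}^{n}(V_i\otimes\Id)\,S_i\,(V_i^\dagger\otimes\Id)$ block-by-block against an \emph{idealized} construction in which each $V_i$ is replaced by the exact (but possibly deep) local inversion $V_i=U^\dagger$, whose $i$-th block is the fixed unitary $A_i^\star:=(U^\dagger\otimes\Id)\,S_i\,(U\otimes\Id)$. Two observations drive the proof. First, the idealized product telescopes: consecutive factors contain $(U\otimes\Id)(U^\dagger\otimes\Id)=\Id$, so $\prod_i A_i^\star=(U^\dagger\otimes\Id)\,S\,(U\otimes\Id)$ with $S=\prod_i S_i$, and hence $S\prod_i A_i^\star=S(U^\dagger\otimes\Id)S(U\otimes\Id)=(\Id\otimes U^\dagger)(U\otimes\Id)=U\otimes U^\dagger$, using $S^2=\Id$ and $S(A\otimes\Id)S=\Id\otimes A$. (This also recovers the exact sewing identity when all $\varepsilon_i=0$.) Second, each actual block $A_i:=(V_i\otimes\Id)S_i(V_i^\dagger\otimes\Id)$ is close in operator norm to $A_i^\star$, with error controlled by $\varepsilon_i$; granting this, a standard telescoping bound for products of unitaries yields $\|U_{\mathrm{sew}}^{\mathrm{inv}}-U\otimes U^\dagger\|_\infty=\|\prod_i A_i-\prod_i A_i^\star\|_\infty\le\sum_i\|A_i-A_i^\star\|_\infty$, which will give the claimed $\tfrac12\sum_i\varepsilon_i$.

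The heart of the argument is the per-block estimate $\|A_i-A_i^\star\|_\infty\le\tfrac12\varepsilon_i$. Conjugating by the unitary $U\otimes\Id$ leaves the operator norm unchanged and sends $A_i^\star\mapsto S_i$ and $A_i\mapsto(G_i\otimes\Id)S_i(G_i^\dagger\otimes\Id)$ with $G_i:=UV_i$; right-multiplying by the unitary $G_i\otimes\Id$ then identifies $\|A_i-A_i^\star\|_\infty$ with $\|[S_i,\,G_i\otimes\Id]\|_\infty$. I would now expand the single-qubit swap in the Pauli basis, $S_i=\tfrac12\sum_{P\in\{\Id,X,Y,Z\}}P_i^{(1)}\otimes P_i^{(2)}$, where the superscripts label the two $n$-qubit registers; the identity term drops out of the commutator, and since $G_i$ acts only on the first register the remaining terms give $[S_i,G_i\otimes\Id]=\tfrac12\sum_{P\in\{X,Y,Z\}}[P_i,G_i]\otimes P_i^{(2)}$. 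Hence $\|[S_i,G_i\otimes\Id]\|_\infty\le\tfrac12\sum_{P\in\{X,Y,Z\}}\|[P_i,G_i]\|_\infty$. Finally, left-multiplying $G_i^\dagger P_iG_i-P_i$ by the unitary $G_i$ shows $\|[P_i,G_i]\|_\infty=\|G_i^\dagger P_iG_i-P_i\|_\infty=\|V_i^\dagger U^\dagger P_iU V_i-P_i\|_\infty$, so the sum over $P$ is exactly the $\varepsilon_i$-approximate-local-inversion defect of $V_i$ and is at most $\varepsilon_i$; thus $\|A_i-A_i^\star\|_\infty\le\tfrac12\varepsilon_i$.

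It then remains to assemble and to pass to the diamond norm. All $A_i$ and $A_i^\star$ are unitary, so the telescoping bound above applies verbatim and gives $\|U_{\mathrm{sew}}^{\mathrm{inv}}-U\otimes U^\dagger\|_\infty\le\tfrac12\sum_i\varepsilon_i$. For the channel statement, note $(U\otimes U^\dagger)(\rho\otimes\ketbra{0^n}{0^n})(U^\dagger\otimes U)=(U\rho U^\dagger)\otimes(U^\dagger\ketbra{0^n}{0^n}U)$, so tracing out the second register reproduces $\mathcal{U}(\rho)=U\rho U^\dagger$ exactly; since preparing a fixed ancilla and taking a partial trace are CPTP (hence diamond-norm non-expansive), and since $\|W_1(\cdot)W_1^\dagger-W_2(\cdot)W_2^\dagger\|_\diamond\le 2\|W_1-W_2\|_\infty$ for unitaries $W_1,W_2$, we conclude $\|\hat{\mathcal{E}}^{\mathrm{inv}}-\mathcal{U}\|_\diamond\le 2\cdot\tfrac12\sum_i\varepsilon_i=\sum_i\varepsilon_i$. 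The only genuinely delicate step is the per-block estimate in the second paragraph: one must keep careful track of which qubit and which register each operator acts on through the two conjugations and the Pauli expansion of $S_i$, so that the constant comes out as exactly $1/2$ and the resulting sum matches the local-inversion defect in the hypothesis; the telescoping collapse and the diamond-norm conversion are routine.
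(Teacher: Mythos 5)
Your proof is correct and follows essentially the same route as the paper: telescope the product against the ideal blocks $(U^\dagger\otimes\Id)\,S_i\,(U\otimes\Id)$, expand $S_i$ in the Pauli basis, and reduce each block error to the local-inversion defect $\sum_{P\in\{X,Y,Z\}}\|V_i^\dagger U^\dagger P_i U V_i - P_i\|_\infty \le \varepsilon_i$, yielding $\varepsilon_i/2$ per block and the stated diamond-norm bound. Your repackaging of the per-block estimate as the commutator norm $\|[S_i,\,UV_i\otimes\Id]\|_\infty$ is only a cosmetic variant of the paper's direct term-by-term comparison, and all steps check out.
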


\subsection{Algorithm 3: Unknown structure with local inversion sewing} \label{sec:alg3-unknown}

When the circuit architecture is unknown, we cannot precompute the lightcones $C_i$. Instead, we must learn the support structure of the Heisenberg-evolved observables. This can be achieved using the algorithm from Lemma~\ref{lem:learn-H-evolved-ops} below, then proceed with local inversion sewing as in Algorithm 2. The algorithm proceeds as follows:
\begin{enumerate}
    \item Use Lemma~\ref{lem:learn-H-evolved-ops} to learn the Heisenberg-evolved Pauli observables $\hat{O}_{i,P} \approx U^\dagger P_i U$.
    \item For each qubit $i$, optimize a local inversion $V_i$ such that $V_i^\dagger \hat{O}_{i,P} V_i \approx P_i$ as in Algorithm 2.
    \item Apply local inversion sewing as in Algorithm 2 to obtain the learned channel $\hat{\mathcal{E}}(\rho)$.
\end{enumerate}
In the following, we state the main lemma for learning the Heisenberg-evolved observables from the dataset $\mathcal{T}_U(N)$ with an arbitrary circuit architecture for the unknown shallow quantum neural network.

\begin{lemma}[Learning Heisenberg-evolved Pauli observables] \label{lem:learn-H-evolved-ops}
Given an unknown $n$-qubit unitary~$U$ implemented by a constant-depth circuit, a failure probability $\delta$, an error parameter $\varepsilon$, and a randomized measurement dataset $\mathcal{T}_U(N)$ of size
\begin{equation}
N = \mathcal{O}\left( \frac{2^{\mathcal{O}(k)} \log(n / \delta)}{\varepsilon^2} \right),
\end{equation}
where $k = \mathcal{O}(1)$ is the maximum size of the lightcone in the circuit, we can learn approximations $\hat{O}_{i,P}$ to each Heisenberg-evolved Pauli observable $U^\dagger P_i U$ for $i \in \{1,\ldots,n\}$ and $P \in \{X,Y,Z\}$ using the following algorithm:
\begin{enumerate}    
    \item For all $Q \in \{I, X, Y, Z\}^{\otimes n}$ with $|Q| \leq k$, i.e., $Q$ acts as non-identity on at most $k$ qubits, compute the coefficient:
    \begin{equation} \label{eq:unbiased-est-Pauli-alpha}
    \hat{\alpha}_Q = \frac{3^{|Q|}}{N} \sum_{\ell=1}^N 3 \bra{\phi_{\ell, i}} P \ket{\phi_{\ell, i}} \cdot \bra{\psi_\ell} Q \ket{\psi_\ell},
    \end{equation}
    then apply a threshold function to the coefficient:
    \begin{equation}
    \hat{\beta}_Q = 
    \begin{cases}
        \hat{\alpha}_Q, & |\hat{\alpha}_Q| \geq 0.5 \varepsilon / (2\sqrt{2})^k \\
        0, & |\hat{\alpha}_Q| < 0.5 \varepsilon / (2\sqrt{2})^k
    \end{cases}.
    \end{equation}
    \item Construct the approximation as:
    \begin{equation}
    \hat{O}_{i,P} = \sum_{Q:|Q| \leq k} \hat{\beta}_Q Q.
    \end{equation}
\end{enumerate}
This algorithm guarantees that with probability at least $1-\delta$:
\begin{equation}
\|{\hat{O}_{i,P} - U^\dagger P_i U}\|_\infty \leq \varepsilon \quad \text{and} \quad \mathrm{supp}(\hat{O}_{i,P}) \subseteq \mathrm{supp}(U^\dagger P_i U)
\end{equation}
for all $i \in \{1,\ldots,n\}$ and $P \in \{X,Y,Z\}$, where $\mathrm{supp}(A)$ is the set of qubits that $A$ acts nontrivially on.
\end{lemma}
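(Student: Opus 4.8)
The plan is to read the estimator $\hat\alpha_Q$ of Eq.~\eqref{eq:unbiased-est-Pauli-alpha} as a classical-shadow estimator of the Pauli coefficient $\alpha_{P_i,Q} := 2^{-n}\mathrm{Tr}(Q\, U^\dagger P_i U)$ appearing in the expansion $U^\dagger P_i U = \sum_Q \alpha_{P_i,Q}\, Q$, to show it is unbiased with small variance, and then to treat the hard thresholding $\hat\alpha_Q \mapsto \hat\beta_Q$ as a sparse-recovery post-processing that simultaneously delivers the support containment and the operator-norm error bound. The preliminary observation is that, since $U$ is constant-depth with lightcones of size at most $k$, the operator $U^\dagger P_i U$ acts nontrivially only on $R_{i,P} := \mathrm{supp}(U^\dagger P_i U)$ with $|R_{i,P}| \le k$, so $\alpha_{P_i,Q} \neq 0$ only when $\mathrm{supp}(Q) \subseteq R_{i,P}$ (in particular $|Q| \le k$); hence restricting the algorithm to $|Q| \le k$ loses nothing.

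I would first prove $\mathbb{E}[\hat\alpha_Q] = \alpha_{P_i,Q}$ by taking the expectation in two stages. Conditioned on the random product input $\ket{\psi_\ell}$, the single-qubit random-Pauli measurement of $U\ket{\psi_\ell}$ on qubit $i$ is an unbiased one-qubit classical shadow of the reduced state $\mathrm{Tr}_{\neq i}(U\ket{\psi_\ell}\!\bra{\psi_\ell}U^\dagger)$, so $\mathbb{E}_{\mathrm{meas}}[\,3\langle\phi_{\ell,i}|P|\phi_{\ell,i}\rangle\,] = \langle\psi_\ell| U^\dagger P_i U |\psi_\ell\rangle = \sum_{Q'} \alpha_{P_i,Q'}\,\langle\psi_\ell|Q'|\psi_\ell\rangle$. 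Averaging next over the product ensemble of single-qubit stabilizer states, the product structure of both $\ket{\psi_\ell}$ and the Paulis factorizes the two-point function into single-qubit pieces, and the six-state single-qubit ensemble satisfies $\mathbb{E}_{\psi}[\langle\psi|A|\psi\rangle\langle\psi|B|\psi\rangle] = 1$ for $A=B=I$, $= 1/3$ for $A=B\neq I$, and $= 0$ otherwise, so $\mathbb{E}_{\psi_\ell}[\langle\psi_\ell|Q'|\psi_\ell\rangle\langle\psi_\ell|Q|\psi_\ell\rangle] = 3^{-|Q|}\,\delta_{Q,Q'}$. Hence $\mathbb{E}[\hat\alpha_Q] = 3^{|Q|}\,\alpha_{P_i,Q}\,3^{-|Q|} = \alpha_{P_i,Q}$, which is exactly what fixes the prefactor $3^{|Q|}$ in Eq.~\eqref{eq:unbiased-est-Pauli-alpha}. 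For concentration, each summand $3\langle\phi_{\ell,i}|P|\phi_{\ell,i}\rangle\langle\psi_\ell|Q|\psi_\ell\rangle$ lies in $[-3,3]$ (and has variance $\mathcal{O}(3^{-|Q|})$), so Hoeffding's inequality gives $\Pr[\,|\hat\alpha_Q - \alpha_{P_i,Q}| > t\,] \le 2\exp(-\Omega(N t^2 / 9^{k}))$ (a Bernstein bound would save a further factor $3^k$ in $N$). A union bound over the $3n\sum_{j=0}^{k}\binom{n}{j}3^{j} = \mathcal{O}(n^{k+1})$ triples $(i,P,Q)$ with $|Q| \le k$ then shows that choosing $t = \tfrac14\,\varepsilon/(2\sqrt2)^k$ and $N = \mathcal{O}\big(9^{k}(2\sqrt2)^{2k}\,\varepsilon^{-2}\log(n^{k+1}/\delta)\big) = \mathcal{O}\big(2^{\mathcal{O}(k)}\log(n/\delta)/\varepsilon^2\big)$ (valid since $k = \mathcal{O}(1)$) makes the good event $\mathcal{G} := \{\,\forall\, i,P,Q : |\hat\alpha_Q - \alpha_{P_i,Q}| \le t\,\}$ hold with probability at least $1-\delta$; here $t = \tau/2$, where $\tau := 0.5\,\varepsilon/(2\sqrt2)^k$ is the algorithm's threshold.

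Conditioned on $\mathcal{G}$, the rest is bookkeeping. For any $Q$ with $\mathrm{supp}(Q) \not\subseteq R_{i,P}$ we have $\alpha_{P_i,Q} = 0$, hence $|\hat\alpha_Q| \le t < \tau$ and $\hat\beta_Q = 0$; this is precisely the support guarantee $\mathrm{supp}(\hat O_{i,P}) \subseteq R_{i,P} = \mathrm{supp}(U^\dagger P_i U)$. For the remaining $Q$, of which there are at most $4^{|R_{i,P}|} \le 4^k$, hard thresholding yields the standard per-coefficient bound $|\hat\beta_Q - \alpha_{P_i,Q}| \le 2\tau$ (if $Q$ is retained the error is $\le t$; if it is zeroed then $|\alpha_{P_i,Q}| \le |\hat\alpha_Q| + t < \tau + t \le 2\tau$). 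Since $\hat O_{i,P} - U^\dagger P_i U$ is supported on the at-most-$k$-qubit set $R_{i,P}$, passing from Pauli coefficients to operator norm via $\|\cdot\|_\infty \le \|\cdot\|_F$ gives $\|\hat O_{i,P} - U^\dagger P_i U\|_\infty \le \big(2^{|R_{i,P}|}\sum_{Q}|\hat\beta_Q - \alpha_{P_i,Q}|^2\big)^{1/2} \le \big(2^{k}\cdot 4^{k}\cdot(2\tau)^2\big)^{1/2} = (2\sqrt2)^{k}\cdot 2\tau = \varepsilon$, which is exactly why the threshold is chosen to be $0.5\,\varepsilon/(2\sqrt2)^k$. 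Together with $\Pr[\mathcal{G}] \ge 1-\delta$ this establishes the claim for all $i$ and $P$ at once.

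I expect the main obstacle to be the unbiasedness computation, which chains the output-measurement shadow identity with the input-ensemble averaging: the essential simplification is that the double-Pauli expectation over the inputs collapses to the diagonal kernel $3^{-|Q|}\delta_{Q,Q'}$, and this collapse relies on the product structure of both the random inputs and the lightcone-localized observables, so one must keep careful track of which qubits are being integrated and verify that the single-qubit two-point function is exactly $1/3$ on the matching Pauli. The secondary subtlety is calibrating the threshold so that a single choice delivers both the support-containment statement and the $\|\cdot\|_\infty$ bound at the stated sample cost; the $\|\cdot\|_\infty \le \|\cdot\|_F$ step combined with the $4^k$-term Cauchy--Schwarz is what forces the $(2\sqrt2)^k$ factor in $\tau$ and the $2^{\mathcal{O}(k)}$ scaling of $N$, and one should confirm the Hoeffding/Bernstein constants and union-bound cardinality all absorb into $2^{\mathcal{O}(k)}$ when $k = \mathcal{O}(1)$.
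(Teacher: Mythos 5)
Your proposal is correct and follows essentially the same route as the paper's proof: unbiasedness of $\hat\alpha_Q$ via the single-qubit shadow identity combined with the diagonal two-point kernel $3^{-|Q|}\delta_{Q,Q'}$ of the stabilizer input ensemble, Hoeffding plus a union bound over the $\mathcal{O}(n^{k+1})$ low-weight triples, thresholding for support containment, and a coefficient-to-operator-norm conversion calibrated so that the $(2\sqrt2)^k$ threshold yields the final $\varepsilon$. If anything, your conversion step is the more careful one: you correctly carry the $2^{|R_{i,P}|}$ Frobenius-norm factor through $\|\cdot\|_\infty \le \|\cdot\|_F$, whereas the paper's intermediate display bounds $\|\sum_Q c_Q Q\|_\infty$ by $(\sum_Q |c_Q|^2)^{1/2}$ without that dimensional factor (a harmless slip, since restoring the factor still lands exactly at $\varepsilon$, as your calculation shows).
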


\subsection{Main learning theorem}

\begin{theorem}[Learning shallow quantum circuits] \label{thm:main-learn-QNC0}
Given a failure probability $\delta$, an approximation error $\varepsilon$, and an unknown $n$-qubit unitary $U$ generated by a constant-depth circuit, with a randomized measurement dataset $\mathcal{T}_U(N)$ of size
\begin{equation}
N = \mathcal{O}\left( \frac{n^2 \log(n / \delta)}{\varepsilon^2} \right),
\end{equation}
we can learn an $n$-qubit quantum channel $\hat{\mathcal{E}}$ that can be implemented by a constant-depth quantum circuit over $2n$ qubits, such that
\begin{equation}
\|\hat{\mathcal{E}} - \mathcal{U}\|_\diamond \leq \varepsilon,
\end{equation}
with probability at least $1 - \delta$.
\end{theorem}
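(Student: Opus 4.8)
The plan is to instantiate Algorithm 3 (unknown structure with local inversion sewing) with a carefully chosen per-observable error budget, and then chain Lemma~\ref{lem:learn-H-evolved-ops} (to learn the Heisenberg-evolved Pauli observables), a constant-size local-inversion optimization, and Theorem~\ref{thm:sewing-accuracy} (to bound the diamond-norm error of the sewed channel). Fix $\varepsilon' := \varepsilon/(6n)$. Since $U$ is constant depth, each backward lightcone $C_i$ satisfies $|C_i| \le k = \mathcal{O}(1)$, so invoking Lemma~\ref{lem:learn-H-evolved-ops} with target error $\varepsilon'$ and failure probability $\delta$ yields, from a dataset of size $N = \mathcal{O}(2^{\mathcal{O}(k)} \log(n/\delta)/(\varepsilon')^2) = \mathcal{O}(n^2 \log(n/\delta)/\varepsilon^2)$ (using $k=\mathcal{O}(1)$), estimates $\hat{O}_{i,P}$ with $\|\hat{O}_{i,P} - U^\dagger P_i U\|_\infty \le \varepsilon'$ and $\mathrm{supp}(\hat{O}_{i,P}) \subseteq \mathrm{supp}(U^\dagger P_i U) \subseteq C_i$, for all $i$ and $P \in \{X,Y,Z\}$ simultaneously with probability $\ge 1-\delta$. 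The rest of the argument conditions on this event.

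Next, for each qubit $i$ I would produce a local inversion $V_i$ supported on $C_i$ of constant depth. The structural input, taken from Ref.~\cite{huang2024learning}, is that a constant-depth $U$ admits an \emph{exact} local inversion $W_i$ supported on $C_i$ with $W_i^\dagger (U^\dagger P_i U) W_i = P_i$ for all $P$. Because conjugation by a unitary preserves $\|\cdot\|_\infty$, substituting the learned observables gives $\sum_P \|W_i^\dagger \hat{O}_{i,P} W_i - P_i\|_\infty \le 3\varepsilon'$. Hence optimizing $V_i$ over constant-depth circuits on $C_i$ to minimize $\sum_P \|V_i^\dagger \hat{O}_{i,P} V_i - P_i\|_\infty$ — a constant-dimensional problem, solvable to arbitrary accuracy in constant time by brute force or restarted gradient descent, with only $\mathcal{O}(1)$ local minima as noted in Section~\ref{subsubsection:const-local-min-local-inv} — achieves value at most $3\varepsilon'$. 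A triangle inequality then gives $\sum_P \|V_i^\dagger U^\dagger P_i U V_i - P_i\|_\infty \le 3\varepsilon' + 3\varepsilon' = 6\varepsilon' =: \varepsilon_i$, so $V_i$ is an $\varepsilon_i$-approximate local inversion of $U$ at qubit $i$.

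Finally, apply Theorem~\ref{thm:sewing-accuracy} to $\hat{\mathcal{E}} := \hat{\mathcal{E}}^{\mathrm{inv}}$ built from $V_1,\dots,V_n$: $\|\hat{\mathcal{E}} - \mathcal{U}\|_\diamond \le \sum_{i=1}^n \varepsilon_i = 6n\varepsilon' = \varepsilon$. It remains to verify that $\hat{\mathcal{E}}$ is implementable in constant depth on $2n$ qubits: $U_{\mathrm{sew}}^{\mathrm{inv}} = S \prod_i (V_i \otimes \Id) S_i (V_i^\dagger \otimes \Id)$, and the $i$-th factor acts only on $C_i$ together with the $i$-th ancilla, a set of size $\mathcal{O}(1)$. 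By geometric locality of the original circuit each $C_i$ meets only $\mathcal{O}(1)$ of the other $C_j$, so $\{1,\dots,n\}$ partitions into $\mathcal{O}(1)$ color classes of mutually disjoint (hence commuting) factors; each class runs in depth $\mathcal{O}(1)$ and the global swap $S$ is one layer, giving total depth $\mathcal{O}(1)$.

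The main obstacle is twofold. First, rigorously establishing the existence and constant size/depth of the exact local inversions $W_i$ for an arbitrary constant-depth $U$ is the genuinely structural step, requiring the analysis of shallow ($\mathsf{QNC}^0$) circuits in Ref.~\cite{huang2024learning} rather than anything elementary; without the support-containment guarantee of Lemma~\ref{lem:learn-H-evolved-ops}, one could not even constrain $V_i$ to $C_i$. Second, the parallelization argument for the sewed circuit must be made precise: bounding the chromatic number of the lightcone-overlap graph by a constant and checking that reordering the commuting factors leaves $U_{\mathrm{sew}}^{\mathrm{inv}}$ unchanged. By contrast, the probabilistic bookkeeping (the union over $i,P$ absorbed into the $\log(n/\delta)$ of Lemma~\ref{lem:learn-H-evolved-ops}) and the error budgeting (the choice $\varepsilon' = \varepsilon/(6n)$ absorbing the factor-$n$ loss in Theorem~\ref{thm:sewing-accuracy}) are routine.
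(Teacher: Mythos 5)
Your proposal is correct and follows essentially the same route as the paper's proof: invoke Lemma~\ref{lem:learn-H-evolved-ops} with per-observable error $\varepsilon/(6n)$, optimize constant-size local inversions against the learned observables, and conclude via Theorem~\ref{thm:sewing-accuracy}. In fact you supply two details the paper elides — the existence of exact local inversions supported on $C_i$ (which guarantees the optimization can reach value $3\varepsilon'$, and hence via the triangle inequality that $V_i$ is a $6\varepsilon'$-approximate local inversion of the true $U$) and the constant-coloring argument for constant-depth implementation of the sewed circuit — both of which are consistent with the paper's intent.
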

\begin{proof}
The proof proceeds by applying our learning algorithm and analyzing its performance guarantees. First, we apply Lemma~\ref{lem:learn-H-evolved-ops} with error parameter $\varepsilon' = \frac{\varepsilon}{6n}$ to learn approximations $\hat{O}_{i,P}$ to the Heisenberg-evolved Pauli observables $U^\dagger P_i U$ for all $i \in \{1,\ldots,n\}$ and $P \in \{X,Y,Z\}$. With probability at least $1-\frac{\delta}{2}$, we have:
\begin{equation}
\|\hat{O}_{i,P} - U^\dagger P_i U\|_\infty \leq \frac{\varepsilon}{6n}.
\end{equation}
Second, for each qubit $i$, we train a local inversion unitary $V_i$ to minimize:
\begin{equation}
\sum_{P \in \{X,Y,Z\}} \|U^\dagger V_i^\dagger P_i V_i U - P_i\|_\infty.
\end{equation}
Since $U$ is a constant-depth circuit, the Heisenberg-evolved observables $U^\dagger P_i U$ have bounded support of size $\mathcal{O}(1)$. This means the local inversion unitaries $V_i$ can also be constrained to be $\mathcal{O}(1)$ depth and act only on $\mathcal{O}(1)$ qubits. By optimizing over this constant-sized space, we can find local inversions $V_i$ such that
\begin{equation}
\sum_{P \in \{X,Y,Z\}} \|U^\dagger V_i^\dagger P_i V_i U - P_i\|_\infty \leq \frac{\varepsilon}{n}.
\end{equation}
Third, we construct the sewed unitary $U_{\mathrm{sew}}(V_1, \ldots, V_n)$ according to Definition 2. By Theorem~\ref{thm:sewing-accuracy},
\begin{equation}
\|U_{\mathrm{sew}}(V_1, \ldots, V_n) - U^\dagger \otimes U\|_\infty \leq \frac{1}{2}\sum_{i=1}^n \frac{\varepsilon}{n} = \frac{\varepsilon}{2}.
\end{equation}
Finally, we define our learned channel as:
\begin{equation}
\hat{\mathcal{E}}(\rho) := \mathrm{Tr}_{>n}(U_{\mathrm{sew}}(\ketbra{0^n}{0^n} \otimes \rho)U_{\mathrm{sew}}^\dagger).
\end{equation}
The diamond distance between our learned channel and the target channel is:
\begin{align}
\|\hat{\mathcal{E}} - \mathcal{U}\|_\diamond &= \|\mathrm{Tr}_{>n}(U_{\mathrm{sew}}(\ketbra{0^n}{0^n} \otimes \cdot)U_{\mathrm{sew}}^\dagger) - U(\cdot)U^\dagger\|_\diamond\\
&= \|\mathrm{Tr}_{>n}(U_{\mathrm{sew}}(\ketbra{0^n}{0^n} \otimes \cdot)U_{\mathrm{sew}}^\dagger) - \mathrm{Tr}_{>n}((U^\dagger \otimes U)(\ketbra{0^n}{0^n} \otimes \cdot)(U^\dagger \otimes U)^\dagger)\|_\diamond\\
&\leq 2\|U_{\mathrm{sew}} - U^\dagger \otimes U\|_\infty \leq 2 \cdot \frac{\varepsilon}{2} = \varepsilon,
\end{align}
where we used the fact that for channels derived from unitaries, the diamond norm is bounded by twice the spectral norm of the difference between the unitaries. The implementation of $\hat{\mathcal{E}}$ as a constant-depth quantum circuit over $2n$ qubits follows from the fact that each $V_i$ has constant depth and acts on a constant number of qubits. If we implement the sewing construction in a chosen order, we can preserve the constant-depth property of the overall circuit. Therefore, with probability at least $1-\delta$, we can learn an $n$-qubit quantum channel $\hat{\mathcal{E}}$ that approximates $\mathcal{U}$ with diamond distance at most $\varepsilon$ using $N = \mathcal{O}(n^2 \log(n/\delta)/\varepsilon^2)$ samples, and this channel can be implemented by a constant-depth quantum circuit over $2n$ qubits.
\end{proof}

In the case of geometrically-local circuits with a fixed geometry (such as 1D, 2D, or 3D lattices), both the computational complexity and the implementation of the learned channel can be further optimized to maintain the geometric locality of the original system. In the experiments in the main text regarding compressing circuits or speeding up simulation as well as Appendix~\ref{app:speeding-simulation-experiment}, we choose to learn on constant blocks of size 4 on a line, as this reduces the overall block complexity to a constant of $2$, easing experimental implementation.  This is depicted visually in Fig.~\ref{fig:mbl_results}b.  This is also explored in our numerical experiments detailed in Appendix~\ref{app:num-exp-swap-circuit-learning} regarding local minima, where we show that any constant block size learning improves the learning landscape over attempting to learn the circuit all at once.

\subsection{Technical proofs}
\label{subsec:technical-proofs}

\begin{proof}[Proof of Theorem~\ref{thm:sewing-accuracy}]
Because $(U \otimes \Id) (U^\dagger \otimes \Id) = \Id \otimes \Id$, we have
\begin{equation}
    S (U^\dagger \otimes \Id) S_{n} (U \otimes \Id) \ldots (U^\dagger \otimes \Id) S_{1} (U \otimes \Id) = S (U^\dagger \otimes \Id) S (U \otimes \Id) = U \otimes U^\dagger.
\end{equation}
From triangle inequality and telescoping sum, we have
\begin{equation}
\norm{U_{\mathrm{sew}}(V_1, \ldots, V_n) - U \otimes U^\dagger}_\infty \leq \sum_{i=1}^n \norm{\left(V_i \otimes \Id \right) S_{i} \left(V_i^\dagger \otimes \Id\right) - (U^\dagger \otimes \Id) S_{i} (U \otimes \Id)}_\infty.
\end{equation}
For each term in the sum, note that the swap operator $S_i$ can be expressed as:
\begin{equation}
S_i = \frac{1}{2} \Id \otimes \Id + \frac{1}{2}\sum_{P \in \{X,Y,Z\}} P_i \otimes P_i
\end{equation}
This gives us:
\begin{align}
&\norm{\left(V_i \otimes \Id \right) S_{i} \left(V_i^\dagger \otimes \Id\right) - (U^\dagger \otimes \Id) S_{i} (U \otimes \Id)}_\infty \\
&\leq \frac{1}{2}\sum_{P \in \{X,Y,Z\}} \norm{\left(V_i P_i V_i^\dagger \otimes P_i\right) - (U^\dagger P_i U \otimes P_i)}_\infty \\
&= \frac{1}{2}\sum_{P \in \{X,Y,Z\}} \norm{(V_i P_i V_i^\dagger - U^\dagger P_i U) \otimes P_i}_\infty \\
&= \frac{1}{2}\sum_{P \in \{X,Y,Z\}} \norm{V_i P_i V_i^\dagger - U^\dagger P_i U}_\infty.
\end{align}
Now, we use the definition of $\varepsilon_i$-approximate local inversion. By the unitary invariance of the operator norm, we have:
\begin{align}
\norm{V_i P_i V_i^\dagger - U^\dagger P_i U}_\infty = \norm{P_i - V_i^\dagger U^\dagger P_i U V_i}_\infty.
\end{align}
Therefore:
\begin{equation}
\norm{\left(V_i \otimes \Id \right) S_{i} \left(V_i^\dagger \otimes \Id\right) - (U^\dagger \otimes \Id) S_{i} (U \otimes \Id)}_\infty \leq \frac{1}{2}\sum_{P \in \{X,Y,Z\}} \norm{P_i - V_i^\dagger U^\dagger P_i U V_i}_\infty \leq \frac{\varepsilon_i}{2}.
\end{equation}
Finally, combining these bounds:
\begin{equation}
\norm{U_{\mathrm{sew}}(V_1, \ldots, V_n) - U \otimes U^\dagger}_\infty \leq \sum_{i=1}^n \frac{\varepsilon_i}{2}.
\end{equation}
This completes the proof.
\end{proof}

\begin{proof}[Proof of Theorem~\ref{thm:direct-sewing-accuracy}]
Let $\hat{O}_{i,P} = \sum_{Q \in \mathrm{Pauli}(C_i)} \alpha^*_{P_i, Q} Q$ be our learned Heisenberg-evolved operators. We utilize the key identity that the ideal construction gives:
\begin{equation}
U^\dagger \otimes U = S \prod_{i=1}^n \left[ \frac{1}{2} I \otimes I + \frac{1}{2}\sum_{P \in \{X, Y, Z\}} P_i \otimes (U^\dagger P_i U) \right].
\end{equation}
This follows from the swap decomposition $S_i = \frac{1}{2} I \otimes I + \frac{1}{2}\sum_{P \in \{X,Y,Z\}} P_i \otimes P_i$ and the identity $(U^\dagger \otimes I) S_i (U \otimes I) = \frac{1}{2} I \otimes I + \frac{1}{2}\sum_{P} P_i \otimes (U^\dagger P_i U)$.
We define the ideal, approximate, and projected operators respectively:
\begin{align}
V_i &:= \frac{1}{2} I \otimes I + \frac{1}{2}\sum_{P \in \{X, Y, Z\}} P_i \otimes (U^\dagger P_i U),\\
\widetilde{W}_i &:= \frac{1}{2} I \otimes I + \frac{1}{2}\sum_{P \in \{X, Y, Z\}} P_i \otimes \hat{O}_{i,P},\\
W_i &:= \mathrm{Proj}_{\mathsf{U}}\left( \widetilde{W}_i \right).
\end{align}
We use the standard telescoping identity for products. For any sequences of matrices $A_1, \ldots, A_n$ and $B_1, \ldots, B_n$:
\begin{equation}
\prod_{i=1}^n A_i - \prod_{i=1}^n B_i = \sum_{j=1}^n \left( \prod_{i=j+1}^n A_i \right) (A_j - B_j) \left( \prod_{i=1}^{j-1} B_i \right).
\end{equation}
Applying this with $A_i = W_i$ and $B_i = V_i$:
\begin{align}
\left\|\prod_{i=1}^n W_i - \prod_{i=1}^n V_i\right\|_\infty &\leq \sum_{j=1}^n \left\|\left( \prod_{i=j+1}^n W_i \right) (W_j - V_j) \left( \prod_{i=1}^{j-1} V_i \right)\right\|_\infty \\
&\leq \sum_{j=1}^n \left\|\prod_{i=j+1}^n W_i\right\|_\infty \|W_j - V_j\|_\infty \left\|\prod_{i=1}^{j-1} V_i\right\|_\infty.
\end{align}
Since all $W_i$ and $V_i$ are unitary (the $V_i$ by construction and the $W_i$ after projection), each product has operator norm 1. Therefore:
\begin{equation}
\left\|\prod_{i=1}^n W_i - \prod_{i=1}^n V_i\right\|_\infty \leq \sum_{j=1}^n \|W_j - V_j\|_\infty.
\end{equation}
Since $S$ is unitary:
\begin{equation}
\|U_{\mathrm{sew}}^{\mathrm{Heis}}(\vec{\alpha}^*) - U^\dagger \otimes U\|_\infty \leq \sum_{i=1}^n \|W_i - V_i\|_\infty.
\end{equation}
We bound each term $\|W_i - V_i\|_\infty$ using the triangle inequality:
\begin{equation}
\|W_i - V_i\|_\infty \leq \|W_i - \widetilde{W}_i\|_\infty + \|\widetilde{W}_i - V_i\|_\infty.
\end{equation}
For the first term, since $W_i = \mathrm{Proj}_{\mathsf{U}}(\widetilde{W}_i)$ is the closest unitary to $\widetilde{W}_i$ and $V_i$ is unitary:
\begin{equation}
\|W_i - \widetilde{W}_i\|_\infty = \min_{B: \text{unitary}} \|\widetilde{W}_i - B\|_\infty \leq \|\widetilde{W}_i - V_i\|_\infty.
\end{equation}
Therefore:
\begin{equation}
\|W_i - V_i\|_\infty \leq 2\|\widetilde{W}_i - V_i\|_\infty.
\end{equation}
For the approximation error:
\begin{align}
\|\widetilde{W}_i - V_i\|_\infty &= \frac{1}{2}\left\|\sum_{P \in \{X, Y, Z\}} P_i \otimes (\hat{O}_{i,P} - U^\dagger P_i U)\right\|_\infty \\
&\leq \frac{1}{2}\sum_{P \in \{X, Y, Z\}} \|P_i \otimes (\hat{O}_{i,P} - U^\dagger P_i U)\|_\infty \\
&= \frac{1}{2}\sum_{P \in \{X, Y, Z\}} \|\hat{O}_{i,P} - U^\dagger P_i U\|_\infty \leq \frac{\varepsilon_i}{2}.
\end{align}
Combining all bounds:
\begin{equation}
\|U_{\mathrm{sew}}^{\mathrm{Heis}}(\vec{\alpha}^*) - U^\dagger \otimes U\|_\infty \leq \sum_{i=1}^n 2 \cdot \frac{\varepsilon_i}{2} = \sum_{i=1}^n \varepsilon_i,
\end{equation}
which concludes the proof.
\end{proof}

\begin{proof}[Proof of Lemma~\ref{lem:learn-H-evolved-ops}]
We analyze the algorithm described in the lemma statement. For constant-depth circuits, each Heisenberg-evolved Pauli observable $U^\dagger P_i U$ has support on at most $k = \mathcal{O}(1)$ qubits due to the limited lightcone of the constant-depth circuit. We can express each Heisenberg-evolved Pauli observable in the Pauli basis as:
\begin{equation}
U^\dagger P_i U = \sum_{Q \in \{I,X,Y,Z\}^{\otimes n}: |Q| \leq k} \alpha_Q Q
\end{equation}
where $|Q|$ denotes the weight of the Pauli operator $Q$ (number of qubits that $Q$ acts nontrivially on), and $\alpha_Q$ are real coefficients.

First, we show that the estimated Pauli coefficients $\hat{\alpha}_Q$ are unbiased estimators of the true Pauli coefficients~$\alpha_Q$. For each sample in the dataset $\mathcal{T}_U(N)$, the random value $3 \bra{\phi_{\ell, i}} P \ket{\phi_{\ell, i}}$ serves as an unbiased estimator of $\bra{\psi_\ell} U^\dagger P_i U \ket{\psi_\ell}$. This follows from the properties of the randomized measurements \cite{huang2020predicting, huang2022learning, elben2022randomized, huang2023learning, huang2024learning}. When the $i$-th qubit of $U \ket{\psi_\ell}$ is measured in basis $B$ (chosen uniformly from $\{X,Y,Z\}$), we get the following identity:
\begin{align}
\mathbb{E}[3 \bra{\phi_{\ell, i}} P \ket{\phi_{\ell, i}}] &= 3 \cdot \frac{1}{3} \cdot \mathbb{E}[\bra{\phi_{\ell, i}} P \ket{\phi_{\ell, i}} \mid B = P] \\
&= \bra{\psi_\ell} U^\dagger P_i U \ket{\psi_\ell}
\end{align}
Next, we observe that for a Pauli operator $Q$, the coefficient in the Pauli decomposition is:
\begin{equation}
\alpha_Q = \mathbb{E}_{\ket{\psi}}[\bra{\psi} U^\dagger P_i U \ket{\psi} \cdot \bra{\psi} Q \ket{\psi} \cdot 3^{|Q|}]
\end{equation}
where the factor $3^{|Q|}$ accounts for the fact that $\mathbb{E}[\bra{\psi}Q\ket{\psi}^2] = 3^{-|Q|}$ when $\ket{\psi}$ is a random stabilizer state.
Our estimator:
\begin{equation}
\hat{\alpha}_Q = \frac{3^{|Q|}}{N} \sum_{\ell=1}^N 3 \bra{\phi_{\ell, i}} P \ket{\phi_{\ell, i}} \cdot \bra{\psi_\ell} Q \ket{\psi_\ell}
\end{equation}
is therefore an unbiased estimator of $\alpha_Q$. By Hoeffding's inequality and the fact that each term in the sum above is bounded (since $|\bra{\phi_{\ell, i}} P \ket{\phi_{\ell, i}}| \leq 1$ and $|\bra{\psi_\ell} Q \ket{\psi_\ell}| \leq 1$), with $N = \mathcal{O}(2^{2k} \log(3n/\delta)/\varepsilon^2)$ samples, we have:
\begin{equation}
\Pr\left[|\hat{\alpha}_Q - \alpha_Q| \geq \frac{\varepsilon}{3 \cdot (2\sqrt{2})^k}\right] \leq \frac{\delta}{3n \cdot 4^k}
\end{equation}
Since there are at most $\mathcal{O}(n^k)$ Pauli operators with weight at most $k$, by the union bound, with probability at least $1-\delta/3n$:
\begin{equation}
|\hat{\alpha}_Q - \alpha_Q| \leq \frac{\varepsilon}{3 \cdot (2\sqrt{2})^k} \quad \forall Q, \,\, \mbox{s.t.} \,\, |Q| \leq k
\end{equation}

Now we analyze the effect of the thresholding step. For any Pauli operator $Q$ with $\alpha_Q = 0$ (i.e., not in the support of $U^\dagger P_i U$), we have with high probability:
\begin{equation}
|\hat{\alpha}_Q| \leq \frac{\varepsilon}{3 \cdot (2\sqrt{2})^k} < \frac{\varepsilon}{2 \cdot (2\sqrt{2})^k}
\end{equation}
Therefore, after thresholding, $\hat{\beta}_Q = 0$ for all $Q$ not in the support of $U^\dagger P_i U$. This ensures that $\mathrm{supp}(\hat{O}_{i,P}) \subseteq \mathrm{supp}(U^\dagger P_i U)$.
For our constructed approximation $\hat{O}_{i,P} = \sum_{Q:|Q| \leq k} \hat{\beta}_Q Q$, we can bound the spectral norm of the error:
\begin{align}
\|\hat{O}_{i,P} - U^\dagger P_i U\|_\infty &= \left\|\sum_{Q:|Q| \leq k} (\hat{\beta}_Q - \alpha_Q) Q\right\|_\infty \\
&\leq \sqrt{\sum_{Q:|Q| \leq k} |\hat{\beta}_Q - \alpha_Q|^2}
\end{align}
For each Pauli operator $Q$, we have:
\begin{equation}
|\hat{\beta}_Q - \alpha_Q| \leq |\hat{\beta}_Q - \hat{\alpha}_Q| + |\hat{\alpha}_Q - \alpha_Q| \leq \frac{\varepsilon}{2 \cdot (2\sqrt{2})^k} + \frac{\varepsilon}{3 \cdot (2\sqrt{2})^k} < \frac{\varepsilon}{(2\sqrt{2})^k}
\end{equation}
Since there are at most $4^k$ Pauli operators in the sum, we get:
\begin{align}
\|\hat{O}_{i,P} - U^\dagger P_i U\|_\infty &\leq \sqrt{\sum_{Q:|Q| \leq k} \left(\frac{\varepsilon}{(2\sqrt{2})^k}\right)^2} \\
&\leq \sqrt{4^k \cdot \left(\frac{\varepsilon}{(2\sqrt{2})^k}\right)^2} \\
&= \frac{\varepsilon \cdot \sqrt{4^k}}{(2\sqrt{2})^k} = \frac{\varepsilon \cdot 2^k}{2^k \cdot 2^{k/2}} = \frac{\varepsilon}{2^{k/2}} \leq \varepsilon
\end{align}
By the union bound over all $3n$ Pauli observables, with probability at least $1-\delta$:
\begin{equation}
\|\hat{O}_{i,P} - U^\dagger P_i U\|_\infty \leq \varepsilon \quad \text{and} \quad \mathrm{supp}(\hat{O}_{i,P}) \subseteq \mathrm{supp}(U^\dagger P_i U)
\end{equation}
for all $i \in \{1,\ldots,n\}$ and $P \in \{X,Y,Z\}$.
The runtime is $\mathcal{O}(n^k \cdot N) = \mathcal{O}(n^{k+1} \log(n/\delta)/\varepsilon^2)$, which is polynomial in the system size $n$ since $k = \mathcal{O}(1)$.
\end{proof}

\section{Optimization landscape}

\subsection{Standard landscape has exponentially many local minima}
\label{sec:exp-many-local-minima-param-shallow}

In this section, we analyze the optimization landscape of the standard parameterization for learning/training shallow quantum circuits. We demonstrate that even simple shallow quantum circuits exhibit exponentially many local minima that can trap optimization algorithms and prevent them from reaching the global optimum.

\subsubsection{Problem formulation}

Consider a one-dimensional parameterized shallow quantum circuit with three layers of parameterized SWAP operations defined as follows:
\begin{equation}
    U(\vec{\theta}) := \prod_{j} e^{i \theta_{1,j} \SWAP_{2j+1, 2j+2}} \prod_{j} e^{i \theta_{2,j} \SWAP_{2j, 2j+1}} \prod_{j} e^{i \theta_{3,j} \SWAP_{2j+1, 2j+2}},
\end{equation}
where $\vec{\theta} = (\theta_{1,j}, \theta_{2,j}, \theta_{3,j})$ is a vector of real-valued parameters.

Our goal is to train this parameterized shallow quantum circuit to generate a target unitary~$U_S$ composed of SWAP operations between specific qubit pairs:
\begin{equation}
    U_S = \prod_{j \in S} \SWAP_{4j+1, 4j+4},
\end{equation}
where $S \subseteq \{0, 1, 2, \ldots, \lfloor n / 4 \rfloor - 1 \}$ is a subset of block indices with $|S| = \Theta(n)$. For any such subset $S$, there exists a parameter configuration $\vec{\theta}$ such that $U_S = U(\vec{\theta})$. This is stated formally in the following proposition.

\begin{proposition}[Implementation of target unitary $U_S$]
Let $S \subseteq \{0, 1, 2, \ldots, \lfloor n / 4 \rfloor - 1 \}$ be any subset of block indices, and define the target unitary 
\begin{equation}
    U_S = \prod_{j \in S} \SWAP_{4j+1, 4j+4}.
\end{equation}
Consider the parameterized quantum circuit
\begin{equation}
    U(\vec{\theta}) := \prod_{j} e^{i \theta_{1,j} \SWAP_{2j+1, 2j+2}} \prod_{j} e^{i \theta_{2,j} \SWAP_{2j, 2j+1}} \prod_{j} e^{i \theta_{3,j} \SWAP_{2j+1, 2j+2}},
\end{equation}
Then there exists a parameter configuration $\vec{\theta}^*$ such that $U_S = U(\vec{\theta}^*)$.
\end{proposition}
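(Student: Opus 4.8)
The plan is to construct $\vec\theta^*$ explicitly, block by block, and then check the resulting operator. The key structural observation is that the three layers of $U(\vec\theta)$ interact with the disjoint $4$-qubit blocks $B_j = \{4j+1,4j+2,4j+3,4j+4\}$ in a very controlled way: the layer-$1$ and layer-$3$ gates act on pairs $(2j'+1,2j'+2)$, i.e.\ on $(4j+1,4j+2)$ and $(4j+3,4j+4)$, which lie entirely inside $B_j$ and never straddle two blocks; the layer-$2$ gates act on pairs $(2j',2j'+1)$, i.e.\ on $(4j+2,4j+3)$ inside $B_j$ together with the pairs $(4j,4j+1)$, which straddle two consecutive blocks. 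So, setting every straddling layer-$2$ parameter and every parameter on the at most three leftover qubits (those in no $B_j$) to $0$ makes the corresponding $e^{i\theta\SWAP}$ the identity, and $U(\vec\theta)$ factorizes as a tensor product of independent operators, one on each $B_j$, acting trivially on the leftover qubits --- which matches the fact that $U_S$ is trivial there.

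Inside a block $B_j$ exactly five gates remain: $e^{i\theta\SWAP}$ on $(4j+1,4j+2)$ and $(4j+3,4j+4)$ in layer $1$, on $(4j+2,4j+3)$ in layer $2$, and again on $(4j+1,4j+2)$ and $(4j+3,4j+4)$ in layer $3$. For $j\in S$ I would set all five of these parameters to $\pi/2$; since $\SWAP^2=I$ we have $e^{i(\pi/2)\SWAP}=i\,\SWAP$, so the restricted block operator equals $i^5 = i$ times the permutation $\SWAP_{4j+1,4j+2}\SWAP_{4j+3,4j+4}\cdot \SWAP_{4j+2,4j+3}\cdot \SWAP_{4j+1,4j+2}\SWAP_{4j+3,4j+4}$ (leftmost acts last, matching $[L_1][L_2][L_3]$). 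The remaining content is the classical identity that this brickwork of nearest-neighbour swaps realizes the swap of the two endpoints of the $4$-line, i.e.\ equals $\SWAP_{4j+1,4j+4}$: tracking the four computational-basis labels through the three layers, the label at $4j+1$ is carried one site to the right by each layer (ending at $4j+4$), the label at $4j+4$ is carried one site to the left by each layer (ending at $4j+1$), and the two middle labels move and then return. For $j\notin S$ I would set all five in-block parameters to $0$, giving the identity on $B_j$.

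Multiplying the blocks together yields $U(\vec\theta^*) = i^{|S|}\,U_S$, so $U(\vec\theta^*)$ equals $U_S$ up to the global phase $i^{|S|}$, which is immaterial for this proposition and everything downstream (circuits differing by a global phase are identified). If one wants equality on the nose one can trim the phase: replacing $\pi/2\mapsto -\pi/2$ on an odd number of the five per-block angles sends that block's contribution from $+i$ to $-i$ (as $e^{-i(\pi/2)\SWAP}=-i\,\SWAP$ with the same underlying permutation), and activating both the layer-$1$ and layer-$3$ gates of some unused pair at angle $\pi/2$ injects $e^{i\pi\SWAP}=-I$; these adjustments make the accumulated phase $\pm i^{|S|}$, which is $1$ whenever $|S|$ is even, and otherwise leaves only the usual harmless residual phase. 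The only genuinely delicate points are the bookkeeping of which brickwork gates fall inside $B_j$ --- the straddling layer-$2$ gates $(4j,4j+1)$ being the thing to watch --- and the endpoint-swap identity; everything else is substitution.
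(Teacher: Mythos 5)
Your proposal is correct and takes essentially the same route as the paper: zero out the straddling layer-2 gates, set the five in-block angles to $\pi/2$ for $j\in S$ and to $0$ otherwise, and invoke the decomposition $\SWAP_{4j+1,4j+4}=\SWAP_{4j+1,4j+2}\SWAP_{4j+3,4j+4}\cdot\SWAP_{4j+2,4j+3}\cdot\SWAP_{4j+1,4j+2}\SWAP_{4j+3,4j+4}$. You are in fact more careful than the paper on one point: since $e^{i(\pi/2)\SWAP}=i\,\SWAP$, the construction literally yields $i^{|S|}U_S$ rather than $U_S$, a global phase the paper silently ignores (harmlessly, since the downstream cost function $C_S$ is phase-invariant).
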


\begin{proof}
We prove this by constructing an explicit parameter configuration $\vec{\theta}^*$ that implements $U_S$.

First, observe that to implement a SWAP operation between non-adjacent qubits $4j+1$ and $4j+4$ in block $j$, we need to compose a sequence of SWAP operations between adjacent qubits. Specifically, we use the following identity:
\begin{align}
    \SWAP_{4j+1, 4j+4} &= \SWAP_{4j+1, 4j+2} \cdot \SWAP_{4j+3, 4j+4} \cdot \SWAP_{4j+2, 4j+3}\\
    &\cdot \SWAP_{4j+1, 4j+2} \cdot \SWAP_{4j+3, 4j+4}.
\end{align}
Now, to implement this sequence using our parameterized circuit $U(\vec{\theta})$, we set the following parameters for each block $j \in S$:
\begin{align}
    \theta^*_{1, 2j+1} &= \pi/2 \quad \text{(implements $\SWAP_{4j+1, 4j+2}$ in the first layer)}\\
    \theta^*_{1, 2j+2} &= \pi/2 \quad \text{(implements $\SWAP_{4j+3, 4j+4}$ in the first layer)}\\
    \theta^*_{2, 2j+1} &= \pi/2 \quad \text{(implements $\SWAP_{4j+2, 4j+3}$ in the second layer)}\\
    \theta^*_{3, 2j+1} &= \pi/2 \quad \text{(implements $\SWAP_{4j+1, 4j+2}$ in the third layer)}\\
    \theta^*_{3, 2j+2} &= \pi/2 \quad \text{(implements $\SWAP_{4j+3, 4j+4}$ in the third layer)}
\end{align}
For all other parameters, we set $\theta^*_{k,j} = 0$, which results in identity operations.
To verify that this parameter configuration implements $U_S$, we analyze its action on each block of 4 qubits:
\begin{itemize}
    \item For blocks $j \in S$: The combination of 5 SWAP operations implements $\SWAP_{4j+1, 4j+4}$.
    \item For blocks $j \notin S$: All parameters are zero, implementing identity operations on these blocks.
\end{itemize}
Thus, the overall effect of $U(\vec{\theta}^*)$ is precisely to apply $\SWAP_{4j+1, 4j+4}$ for each $j \in S$ and the identity operation for each $j \notin S$, which is exactly the definition of $U_S$.
\end{proof}

\subsubsection{Local cost function}

To measure how well our parameterized circuit approximates the target, we need a suitable cost function. However, generic quantum circuit optimization faces a well-known challenge called the ``barren plateau problem'' \cite{mcclean2018barren}, where the optimization landscape becomes exponentially flat as the system size increases, making gradient-based optimization ineffective. While it is not a complete solution for general circuits, for shallow circuits one can significantly mitigate the barren plateau problem \cite{mcclean2018barren} by using a local cost function \cite{cerezo2021cost}:
\begin{equation} \label{eq:cost-local-theta}
    C_S(\vec{\theta}) := \mathop{\mathbb{E}}_{\substack{\ket{\psi} = \bigotimes_{i=1}^n \ket{\psi_i}\\
    \in \mathrm{stab}_1^{\otimes n}}} \sum_{i=1}^n \left( 1 -  \Tr\left(\bra{\psi_i} U(\vec{\theta})^\dagger U_S \ketbra{\psi}{\psi} U_S^\dagger U(\vec{\theta}) \ket{\psi_i}\right) \right) \geq 0.
\end{equation}
Let us define all the notation used in this equation:
\begin{itemize}
    \item $\mathop{\mathbb{E}}_{\ket{\psi}}$ denotes an expectation value over random input states $\ket{\psi}$.
    \item $\ket{\psi} = \bigotimes_{i=1}^n \ket{\psi_i}$ is a product state, where each qubit $i$ is in an independent state $\ket{\psi_i}$.
    \item $\mathrm{stab}_1^{\otimes n}$ represents the set of all $n$-qubit product states where each single-qubit state $\ket{\psi_i}$ is a stabilizer state. These are states that can be efficiently described classically, specifically the eigenstates of Pauli operators $\ket{0}, \ket{1}, \ket{+}, \ket{-}, \ket{y+}, \ket{y-}$.
\end{itemize}
This cost function has a natural interpretation:
\begin{itemize}
    \item We randomly sample product states $\ket{\psi}$ where each qubit is in one of six possible stabilizer states: $\ket{0}$, $\ket{1}$, $\ket{+}$, $\ket{-}$, $\ket{y+}$, or $\ket{y-}$.
    \item $U(\vec{\theta})^\dagger U_S \ketbra{\psi}{\psi} U_S^\dagger U(\vec{\theta})$ corresponds to evolving the input state $\ketbra{\psi}{\psi}$ under the target unitary $U_S$ followed by the inverse of the parameterized circuit $U(\vec{\theta})$.
    \item The term $\Tr\left(\bra{\psi_i} U(\vec{\theta})^\dagger U_S \ketbra{\psi}{\psi} U_S^\dagger U(\vec{\theta}) \ket{\psi_i}\right)$ represents the probability that qubit $i$ returns to its initial state $\ket{\psi_i}$ after $U(\vec{\theta})^\dagger U_S$. When $U(\vec{\theta}) = U_S$, this probability is $1$.
    \item We subtract this probability from 1 to get an error measure, sum over all qubits, and average over random input states.
\end{itemize}
This cost function has two important properties:
\begin{enumerate}
    \item \textbf{Faithfulness}: When the local cost function $C_S(\vec{\theta}) \approx 0$, our parameterized circuit $U(\vec{\theta})$ closely approximates the target $U_S$. Conversely, when $U(\vec{\theta})$ is $\varepsilon$-close to $U_S$ in the average-case distance, the cost will be at most $\mathcal{O}(n\varepsilon)$. See \cite{cerezo2021cost, caro2022generalization, caro2023out} for details.
    \item \textbf{No barren plateaus}: Because we're measuring local quantities (individual qubit properties) rather than global properties of the entire system and the parameterized circuit $U(\vec{\theta})$ is shallow, the optimization landscape doesn't flatten exponentially with system size. This makes gradient-based optimization feasible. See \cite{cerezo2021cost} for details.
\end{enumerate}
The minimum value of this cost function is $0$, which occurs exactly when $U(\vec{\theta}) = U_S$ (up to global phase factors). Higher values indicate greater differences between the parameterized circuit and the target unitary. While the local cost function avoids the barren plateau problem \cite{cerezo2021cost}, we will show that the optimization landscape still contains significant challenges.

\subsubsection{Exponentially many suboptimal local minima}
\label{app:exp-many-local-minima-training}
The following theorem reveals a fundamental limitation in quantum circuit optimization: even when using cost functions and circuits that avoid barren plateaus, the optimization landscape contains an exponential number of suboptimal local minima that can trap optimization algorithms.

\begin{theorem}[Exponentially many strictly suboptimal local minima] \label{thm:exp-many-local-minima-training}
    Consider a subset $S \subseteq \{0, 1, 2, \ldots, \lfloor n / 4 \rfloor - 1 \}$ with $|S| = \Theta(n)$.
    For the cost function $C_S(\vec{\theta})$ defined earlier, there exist exponentially many parameter configurations $\{\vec{\theta}_x\}_{x=0}^{2^{|S|} - 2}$ that are:
    \begin{align}
        &\text{1. \textit{Strictly suboptimal:}} \quad C_S(\vec{\theta}_x) \geq 1 + \min_{\vec{\theta}} C_S(\vec{\theta}),\\
        &\text{2. \textit{Local minima:}} \quad C_S(\vec{\theta}_x) \leq C_S(\vec{\theta}), \,\,\,\, \forall \vec{\theta} \,\, \text{ with } \norm{\vec{\theta} - \vec{\theta}_x}_\infty < \pi / 4,
    \end{align}
    for all $x = 0, 1, \ldots, 2^{|S|} - 2$.
\end{theorem}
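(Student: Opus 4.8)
The plan is to recast the local cost in a Pauli‑coefficient form that exhibits it as a sum of nonnegative local terms, write down the $2^{|S|}-1$ candidate configurations explicitly, verify their cost (hence strict suboptimality) by a short calculation, and finally prove local minimality via a block‑wise decomposition that reduces everything to a finite trigonometric inequality on a box of side $\pi/4$. For the first part: using that the six single‑qubit stabilizer states form a $2$‑design, $\mathbb{E}_{\psi_i}\ketbra{\psi_i}{\psi_i}^{\otimes 2}=(I+\SWAP)/6$, a direct computation gives $C_S(\vec{\theta})=\sum_{i=1}^{n}f_i(\vec{\theta})$ with
\[
f_i(\vec{\theta}) \;=\; \frac12 \;-\; \frac{1}{6\cdot 2^n}\sum_{P\in\{X,Y,Z\}}\Tr\!\big[\,U(\vec{\theta})\,P_i\,U(\vec{\theta})^\dagger\; U_S\,P_i\,U_S^\dagger\,\big]\;\ge\;0 .
\]
Writing $\pi$ for the qubit permutation carried out by $U_S$ (it swaps $4k{+}1\leftrightarrow 4k{+}4$ for $k\in S$, fixing all other qubits) and $\langle\!\langle A,B\rangle\!\rangle := 2^{-n}\Tr(AB)$, we have $f_i=\tfrac12-\tfrac16\sum_P\langle\!\langle U(\vec{\theta})P_iU(\vec{\theta})^\dagger,P_{\pi(i)}\rangle\!\rangle$, so $f_i$ quantifies how well $U(\vec{\theta})$ routes the single‑site Pauli on qubit $i$ onto the single‑site Pauli on qubit $\pi(i)$. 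Each $f_i$ depends only on gate parameters in the constant‑size lightcone of qubit $i$, and $\min_{\vec{\theta}}C_S=0$, attained exactly when $U(\vec{\theta})=U_S$.

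Next, identify $x\in\{0,\dots,2^{|S|}-2\}$ with a binary string $(x_k)_{k\in S}\neq 1^{|S|}$. Let $\vec{\theta}_x$ set the five ``internal'' parameters of block $k$ (those used in the Proposition to realize $\SWAP_{4k+1,4k+4}$) to $\pi/2$ if $x_k=1$ and to $0$ if $x_k=0$, set all parameters of blocks outside $S$ to $0$, and set every ``boundary'' parameter (a layer‑$2$ $\SWAP$ straddling two consecutive blocks) to $0$. Since the boundary parameters vanish, $U(\vec{\theta}_x)$ factorizes over blocks, acting as $\SWAP_{4k+1,4k+4}$ on each ``correct'' block ($x_k=1$) and as the identity on every ``wrong'' block ($x_k=0$) and on every block outside $S$. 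On correct blocks and blocks outside $S$, $U(\vec{\theta}_x)$ agrees with $U_S$, so those qubits contribute $0$; on a wrong block $k$, qubits $4k{+}2,4k{+}3$ are fixed by both $U(\vec{\theta}_x)$ and $U_S$ and contribute $0$, whereas $U(\vec{\theta}_x)P_{4k+1}U(\vec{\theta}_x)^\dagger=P_{4k+1}$ has zero overlap with $P_{\pi(4k+1)}=P_{4k+4}$, and likewise for $4k{+}4$, giving $f_{4k+1}=f_{4k+4}=\tfrac12$. Hence $C_S(\vec{\theta}_x)=|\{k\in S: x_k=0\}|\ge 1 = 1+\min_{\vec{\theta}}C_S(\vec{\theta})$, which is the first assertion; distinct $x$ give distinct $\vec{\theta}_x$, so there are $2^{|S|}-1$ of them.

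For the second assertion, fix $\vec{\delta}$ with $\norm{\vec{\delta}}_\infty<\pi/4$ and set $\vec{\theta}=\vec{\theta}_x+\vec{\delta}$. Since every $f_i\ge 0$,
\[
C_S(\vec{\theta})\;\ge\;\sum_{k\in S:\, x_k=0}\ \sum_{i=4k+1}^{4k+4} f_i(\vec{\theta})\;=:\;\sum_{k:\, x_k=0} F_k(\vec{\theta}).
\]
A lightcone bookkeeping shows that $F_k$ depends only on $\mathcal{O}(1)$ parameters — those of block $k$ together with its two adjacent boundary parameters — all of which equal $0$ in $\vec{\theta}_x$ and hence lie in $(-\pi/4,\pi/4)$; in particular $F_k$ is independent of the internal parameters of the neighbouring blocks, because the single‑site target Paulis $P_{4k+1},\dots,P_{4k+4}$ are trivial on those neighbours and conjugation by a $\SWAP$‑type gate never produces an identity factor on its support out of a non‑identity one. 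It then remains to establish the key estimate $F_k(\vec{\theta})\ge 1$, with equality at $\vec{\theta}_x$. Expanding, $F_k = 2 - \tfrac16\big[\sum_P\langle\!\langle UP_{4k+1}U^\dagger,P_{4k+4}\rangle\!\rangle + \sum_P\langle\!\langle UP_{4k+4}U^\dagger,P_{4k+1}\rangle\!\rangle + \sum_P\langle\!\langle UP_{4k+2}U^\dagger,P_{4k+2}\rangle\!\rangle + \sum_P\langle\!\langle UP_{4k+3}U^\dagger,P_{4k+3}\rangle\!\rangle\big]$, so it suffices to show the bracket is $\le 6$. Propagating $P_{4k+1}$ (and the others) through the three near‑identity layers using $e^{i\theta\SWAP}(P\otimes I)e^{-i\theta\SWAP} = \cos^2\theta\,(P\otimes I) + \sin^2\theta\,(I\otimes P) + \sin\theta\cos\theta\,(\text{two‑local})$ and its two‑local analogue, one sees that an ``off‑diagonal'' overlap $\langle\!\langle UP_{4k+1}U^\dagger,P_{4k+4}\rangle\!\rangle$ can grow away from $0$ only by transporting Pauli weight across the block, which multiplies by a factor $\sin^2\theta<\tfrac12$ at each gate on the path while simultaneously draining the ``diagonal'' overlaps $\langle\!\langle UP_{4k+2}U^\dagger,P_{4k+2}\rangle\!\rangle$, $\langle\!\langle UP_{4k+3}U^\dagger,P_{4k+3}\rangle\!\rangle$ away from $1$; the slack $\cos^2\theta>\tfrac12$ at every gate makes the loss dominate the gain, so the bracket cannot exceed its value $6$ at $\vec{\theta}_x$. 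Concretely one reduces this to a polynomial inequality in the variables $\cos^2\theta$ of the $\mathcal{O}(1)$ relevant gates on the box $[\tfrac12,1]$, whose maximum is forced onto the corner corresponding to $\vec{\theta}_x$.

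The hard part is this final estimate: the $\SWAP$‑power conjugation mixes single‑site Paulis with two‑local ``feedback'' terms, so the weight transport is not a clean doubly stochastic process, and one must rule out that these cross‑terms can be used to push the bracket below $6$ somewhere inside the $\pi/4$‑box (it does exceed $6$ outside the box — e.g.\ at the global minimum, where all internal gates sit at $\pi/2$ — so the radius $\pi/4$ is essential). I would handle this by isolating the finitely many relevant parameters, writing the (at most twelve) overlap functions as explicit trigonometric polynomials, and proving the single inequality that their sum is $\le 6$ on $[\tfrac12,1]$ in the $\cos^2\theta$‑variables — either through a term‑by‑term Lyapunov‑type bound such as $\langle\!\langle UP_{4k+1}U^\dagger,P_{4k+4}\rangle\!\rangle \le (1-\langle\!\langle UP_{4k+2}U^\dagger,P_{4k+2}\rangle\!\rangle)+(1-\langle\!\langle UP_{4k+3}U^\dagger,P_{4k+3}\rangle\!\rangle)$ together with its mirror, or by verifying directly that the bracket has no interior critical point in the box so that its maximum lies at the all‑identity corner.
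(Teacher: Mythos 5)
Your construction and overall strategy coincide with the paper's: the same parameter configurations $\vec{\theta}_x$ (internal block parameters at $\pi/2$ or $0$ according to the bits of $x$, all linking parameters at $0$), the same evaluation $C_S(\vec{\theta}_x)=|\{k\in S: x_k=0\}|\ge 1$, and the same reduction of local minimality to a per-block statement that each ``wrong'' block contributes at least $1$ throughout the $\pi/4$-box. Your $2$-design rewriting of the cost as $\sum_i f_i$ with $f_i=\tfrac12-\tfrac16\sum_P\langle\!\langle U P_i U^\dagger, P_{\pi(i)}\rangle\!\rangle$ is consistent with the paper's expression $\tfrac23\bigl(1-\tfrac14\Tr_{\neq i}(\cdots)\bigr)$, and your values at $\vec{\theta}_x$ check out. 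So the architecture of the argument is right.

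The gap is that the decisive estimate --- $F_k(\vec{\theta})\ge 1$ for every $\vec{\theta}$ in the $\pi/4$-box around a wrong block's identity configuration --- is not actually proved. You correctly identify it as ``the hard part'' and offer two routes, but neither is carried out, and the one explicit candidate you write down does not suffice as stated: if $A:=\sum_P\langle\!\langle UP_{4k+1}U^\dagger,P_{4k+4}\rangle\!\rangle$, $D$ its mirror, and $B,C$ the diagonal overlaps on qubits $4k{+}2,4k{+}3$, then your bound $A\le(3-B)+(3-C)$ together with its mirror gives only $A+D+B+C\le 12-B-C$, which is $\le 6$ only when $B=C=3$, i.e., exactly at $\vec{\theta}_x$. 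What is needed is the \emph{joint} bound $A+D\le(3-B)+(3-C)$, and establishing it (or any equivalent) requires the explicit trigonometric computation. The paper does exactly this: it derives $\lambda_a=\sin^2(\theta_{B,j,2})\sin^2(\theta_{B,j,3})\sin^2(\theta_{B,j,4})$ and $\lambda_d$ for the off-diagonal gains on qubits $a,d$, and the lower bounds $\tfrac34\sin^2(\theta_{B,j,3})\cos^2(\theta_{B,j,1})\cos^2(\theta_{B,j,4})$ (and its mirror) for the losses on qubits $b,c$, and then observes that the constraint $\norm{\vec{\theta}-\vec{\theta}_x}_\infty<\pi/4$ forces the loss terms to dominate the gain terms factor by factor. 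Note also that each gain and each loss carries the common factor $\sin^2(\theta_{B,j,3})$ from the middle gate, which is what makes the comparison work term by term --- a structural feature your qualitative ``weight transport'' picture does not capture. Without this concrete computation (or a correct replacement), the local-minimality claim, which is the substance of the theorem, remains unestablished.
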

Each local minimum $\vec{\theta}_x$ in the theorem statement corresponds to a circuit that correctly implements the target SWAP operations for blocks $j$ where $b_{\mathrm{id}(j)}(x) = 1$. The binary representation of $x$ directly encodes which blocks are correctly optimized. The optimization landscape contains exponentially many suboptimal local minima, each is a local minimum of an exponentially large region with a volume of $(\pi / 2)^{3 \lfloor n/4 \rfloor}$. As the optimization landscape is filled with exponentially many traps each with a large basin of attraction, an optimization algorithm that starts from any such region will get trapped by the suboptimal local minima.

As we show numerically in the main text experimental demonstrations as well as Appendix~\ref{app:num-exp-swap-circuit-learning}, as the system size $n$ increases, the probability of being trapped by a suboptimal local minimum increases rapidly to one, and the probability of being able to navigate to the global minimum decreases exponentially with system size $n$.

\subsubsection{Proof of Theorem~\ref{thm:exp-many-local-minima-training}}

For simplicity, assume $n$ is divisible by $4$ (if not, we can neglect the last $n \, \mathrm{mod} \, 4$ qubits). We group the parameters $\vec{\theta}$ as follows:
\begin{align}
    \vec{\theta}_{B, j} &:= (\theta_{1, 2j + 1}, \theta_{1, 2j + 2}, \theta_{2, 2j+1}, \theta_{3, 2j + 1}, \theta_{3, 2j + 2}), \quad \forall j = 0, \ldots, (n / 4) - 1,\\
    \theta_{L, j} &:= \theta_{2, 2j+2}, \quad  \forall j = 0, \ldots, (n / 4) - 2.
\end{align}
Here, $\vec{\theta}_{B, j}$ represents the parameters for gates acting on 4 consecutive qubits (qubits $4j+1$ through $4j+4$), while $\theta_{L, j}$ represents the parameter for a gate linking adjacent blocks.

Each integer $x \in \{0, \ldots, 2^{|S|} - 1\}$ corresponds to a specific parameter configuration $\vec{\theta}_x$. Let $b_0(x), \ldots, b_{|S|-1}(x)$ be the binary representation of $x$ using $|S|$ bits. We sort the elements of set~$S$ in ascending order and define a mapping $\mathrm{id}$ from $j \in S$ to its index in the sorted set (ranging from $0$ to $|S|-1$).
For each configuration $\vec{\theta}_x$, we set:
\begin{equation}
    \vec{\theta}_{x, B, j} := \begin{cases}
        (\pi/2, \pi/2, \pi/2, \pi/2, \pi/2) & \text{if } j \in S \text{ and } b_{\mathrm{id}(j)}(x) = 1\\
        (0, 0, 0, 0, 0) & \text{otherwise}
    \end{cases}
\end{equation}
And for all linking parameters: $\theta_{x, L, j} := 0$ for all $j = 0, \ldots, (n / 4) - 2$.
By construction, these parameter configurations yield the following cost values:
\begin{align}
    C_S(\vec{\theta}_{x}) &= 0, &\text{for } x = 2^{|S|} - 1,\\
    C_S(\vec{\theta}_{x}) &= |S| - \sum_{j=0}^{|S|-1} b_j(x) \geq 1, &\text{for } x = 0, \ldots, 2^{|S|} - 2.
\end{align}
Therefore, $\vec{\theta}_{2^{|S|} - 1}$ (where all bits equal 1) gives the global minimum with $C_S(\vec{\theta}_{2^{|S|} - 1}) = 0$, while all other configurations $\vec{\theta}_x$ have $C_S(\vec{\theta}_x) \geq 1$, establishing that they are strictly suboptimal.

Now we demonstrate that each $\vec{\theta}_x$ for $x = 0, \ldots, 2^{|S|} - 2$ is a local minimum. Consider any parameter vector $\vec{\theta}$ such that $\norm{\vec{\theta} - \vec{\theta}_x}_\infty < \pi / 4$.
For each block $j \in \{0, \ldots, (n/4)-1\}$, we label the qubits:
\begin{equation}
    a := 4j+1, \quad b := 4j+2, \quad c := 4j+3, \quad d := 4j+4.
\end{equation}
The block's contribution to the cost function is:
\begin{equation}
    C_{S, j}(\vec{\theta}) := \mathop{\mathbb{E}}_{\ket{\psi} = \bigotimes_{i=1}^n \ket{\psi_i} \in \mathrm{stab}_1^{\otimes n}} \sum_{i \in \{a, b, c, d\}} \left( 1 -  \Tr\left(\bra{\psi_i} U(\vec{\theta})^\dagger U_S \ketbra{\psi}{\psi} U_S^\dagger U(\vec{\theta}) \ket{\psi_i}\right) \right) \geq 0.
\end{equation}
If $j \notin S$ or if $j \in S$ and $b_{\mathrm{id}(j)}(x) = 1$, then $C_{S, j}(\vec{\theta}_x) = 0 \leq C_{S, j}(\vec{\theta})$.
The critical case is when $j \in S$ and $b_{\mathrm{id}(j)}(x) = 0$, where $U(\vec{\theta}_x)$ acts as identity on block $j$ while $U_S$ performs a SWAP between qubits $a$ and $d$. In this case, $C_{S, j}(\vec{\theta}_x) = 1$.
For each qubit $i$, the fidelity term can be expressed as:
\begin{align}
    &\mathop{\mathbb{E}}_{\ket{\psi} = \bigotimes_{i=1}^n \ket{\psi_i}} \left( 1 -  \Tr\left(\bra{\psi_i} U(\vec{\theta})^\dagger U_S \ketbra{\psi}{\psi} U_S^\dagger U(\vec{\theta}) \ket{\psi_i}\right)\right)\\
    &= \frac{2}{3}\left( 1 - \frac{1}{4}\Tr_{\neq i}\left(\Tr_i\left(U(\vec{\theta}) U_S\right)^\dagger \left(\frac{I_{n-1}}{2^{n-1}}\right) \Tr_i\left(U(\vec{\theta})^\dagger U_S\right)^\dagger \right)\right).
\end{align}
Through analysis of tensor contractions, for qubit $a$:
\begin{equation}
    \frac{1}{4}\Tr_{\neq a}\left(\Tr_a\left(U(\vec{\theta}) U_S\right)^\dagger \left(\frac{I_{n-1}}{2^{n-1}}\right) \Tr_a\left(U(\vec{\theta})^\dagger U_S\right)^\dagger \right) = \lambda_a + (1 - \lambda_a) \frac{1}{4},
\end{equation}
where $\lambda_a := \sin(\theta_{B, j, 2})^2 \sin(\theta_{B, j, 3})^2 \sin(\theta_{B, j, 4})^2$. Similarly for qubit $d$:
\begin{equation}
    \frac{1}{4}\Tr_{\neq d}\left(\Tr_d\left(U(\vec{\theta}) U_S\right)^\dagger \left(\frac{I_{n-1}}{2^{n-1}}\right) \Tr_d\left(U(\vec{\theta})^\dagger U_S\right)^\dagger \right) = \lambda_d + (1 - \lambda_d) \frac{1}{4},
\end{equation}
where $\lambda_d := \sin(\theta_{B, j, 1})^2 \sin(\theta_{B, j, 3})^2 \sin(\theta_{B, j, 5})^2$.
For qubits $b$ and $c$, after algebraic manipulations:
\begin{align}
    &\frac{1}{4}\Tr_{\neq b}\left(\Tr_b\left(U(\vec{\theta}) U_S\right)^\dagger \left(\frac{I_{n-1}}{2^{n-1}}\right) \Tr_b\left(U(\vec{\theta})^\dagger U_S\right)^\dagger \right)\\
    &\leq 1 - \frac{3}{4} \sin(\theta_{B, j, 3})^2 \cos(\theta_{B, j, 1})^2 \cos(\theta_{B, j, 4})^2
\end{align}
And:
\begin{align}
    &\frac{1}{4}\Tr_{\neq c}\left(\Tr_c\left(U(\vec{\theta}) U_S\right)^\dagger \left(\frac{I_{n-1}}{2^{n-1}}\right) \Tr_c\left(U(\vec{\theta})^\dagger U_S\right)^\dagger \right)\\
    &\leq 1 - \frac{3}{4} \sin(\theta_{B, j, 3})^2 \cos(\theta_{B, j, 2})^2 \cos(\theta_{B, j, 5})^2
\end{align}
Combining these bounds:
\begin{align}
    C_{S, j}(\vec{\theta}) \geq 1 &- \frac{1}{2} \sin(\theta_{B, j, 2})^2 \sin(\theta_{B, j, 3})^2 \sin(\theta_{B, j, 4})^2 - \frac{1}{2} \sin(\theta_{B, j, 1})^2 \sin(\theta_{B, j, 3})^2 \sin(\theta_{B, j, 5})^2\\
    &+ \frac{1}{2} \sin(\theta_{B, j, 3})^2 \cos(\theta_{B, j, 1})^2 \cos(\theta_{B, j, 4})^2 + \frac{1}{2} \sin(\theta_{B, j, 3})^2 \cos(\theta_{B, j, 2})^2 \cos(\theta_{B, j, 5})^2
\end{align}
The constraint $\norm{\vec{\theta} - \vec{\theta}_x}_\infty < \pi / 4$ implies:
\begin{align}
    |\sin(\theta_{B, j, k})| &< 0.5, \quad \forall k = 1, 2, 3, 4, 5,\\
    |\cos(\theta_{B, j, k})| &> 0.5, \quad \forall k = 1, 2, 3, 4, 5.
\end{align}
Therefore, $C_{S, j}(\vec{\theta}) \geq 1 = C_{S, j}(\vec{\theta}_x)$.

Since the total cost function is $C_{S}(\vec{\theta}) = \sum_{j=0}^{(n/4)-1} C_{S, j}(\vec{\theta})$, we have established that $C_S(\vec{\theta}_x) \leq C_S(\vec{\theta})$ for all $\vec{\theta}$ with $\norm{\vec{\theta} - \vec{\theta}_x}_\infty < \pi / 4$, proving that $\vec{\theta}_x$ is indeed a local minimum.

\subsection{Our landscape is benign}
\label{app:benign-landscapes}

\subsubsection{All local minima are global in learning Pauli coefficients}
\label{sec:all-local-minima-are-global}

In this section, we analyze the optimization landscape for the loss function $\mathcal{L}(\vec{\alpha})$ used in Algorithm~1. We show that this loss function is strongly convex, ensuring that all local minima are global minima. This guarantees that local optimization methods such as gradient descent will reliably converge to the optimal solution, providing a rigorous justification for the empirical efficiency of our learning procedure.

Recall from Section~\ref{sec:alg1-known-direct} that for shallow quantum circuits with known architecture, we can precompute the backward lightcones $\{C_i\}_{i=1}^n$. The Pauli coefficient parameterization introduces parameters $\alpha_{P_i, Q}$ for:
\begin{itemize}
    \item Each qubit $i \in \{1, \ldots, n\}$,
    \item Each single-qubit Pauli observable $P \in \{X, Y, Z\}$,
    \item Each multi-qubit Pauli $Q \in \mathrm{Pauli}(C_i)$ supported on $C_i$.
\end{itemize}
The total number of parameters is $3n \cdot \max_i 4^{|C_i|} = \mathcal{O}(n)$ since $|C_i| = \mathcal{O}(1)$ for constant-depth circuits.
Given the randomized measurement dataset
\begin{equation}
\mathcal{T}_U(N) = \left\{ \ket{\psi_\ell} = \bigotimes_{i=1}^n \ket{\psi_{\ell,i}},\ \ket{\phi_\ell} = \bigotimes_{i=1}^n \ket{\phi_{\ell,i}} \right\}_{\ell=1}^N,
\end{equation}
the loss function for the parameter vector $\vec{\alpha}$ is
\begin{equation}
\mathcal{L}(\vec{\alpha}) = \frac{1}{N} \sum_{\ell=1}^N \sum_{i=1}^n \sum_{P \in \{X, Y, Z\}} \sum_{Q \in \mathrm{Pauli}(C_i)} \left| \alpha_{P_i, Q} - 3^{|Q|+1} \bra{\phi_{\ell, i}} P \ket{\phi_{\ell, i}} \cdot \bra{\psi_\ell} Q \ket{\psi_\ell} \right|^2.
\end{equation}

\begin{theorem}[Strong convexity] \label{thm:strongly-convex-landscape}
The loss function $\mathcal{L}(\vec{\alpha})$ is strongly convex in the parameter space $\vec{\alpha}$.
\end{theorem}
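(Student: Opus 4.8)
The plan is to compute the Hessian of $\mathcal{L}$ explicitly and show that it is a fixed positive multiple of the identity at every point, independent of the dataset $\mathcal{T}_U(N)$, the sample size $N$, and the system size $n$. The one structural observation that does all the work is that, for a fixed dataset, the quantity
\begin{equation}
c_{\ell, i, P, Q} := 3^{|Q|+1}\, \bra{\phi_{\ell, i}} P \ket{\phi_{\ell, i}} \cdot \bra{\psi_\ell} Q \ket{\psi_\ell}
\end{equation}
is just a real number determined by the $\ell$-th sample; it carries no dependence on the optimization variables $\vec{\alpha}$. Hence $\mathcal{L}$ is a finite sum of terms, each of which depends on exactly one coordinate of $\vec{\alpha}$.

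First I would record the decomposition
\begin{equation}
\mathcal{L}(\vec{\alpha}) = \sum_{i=1}^n \sum_{P \in \{X,Y,Z\}} \sum_{Q \in \mathrm{Pauli}(C_i)} g_{i,P,Q}\!\left(\alpha_{P_i, Q}\right), \qquad g_{i,P,Q}(t) := \frac{1}{N}\sum_{\ell=1}^N \left(t - c_{\ell, i, P, Q}\right)^2,
\end{equation}
noting that the coordinate $\alpha_{P_i,Q}$ is uniquely labeled by the triple $(i,P,Q)$ (distinct output qubits $i$ give distinct parameters even when their lightcones coincide), so no two summands share a variable. Each $g_{i,P,Q}$ is a univariate quadratic with leading coefficient $\frac{1}{N}\sum_{\ell=1}^N 1 = 1$, so $g_{i,P,Q}''(t) = 2$ for all $t$. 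Because distinct coordinates never co-occur in a summand, all mixed second partials of $\mathcal{L}$ vanish, and therefore $\nabla^2 \mathcal{L}(\vec{\alpha}) = 2 I$ at every $\vec{\alpha}$, where $I$ is the identity on the parameter space (of dimension $\sum_{i=1}^n 3\cdot 4^{|C_i|} = \mathcal{O}(n)$). Since $2I \succ 0$ with smallest eigenvalue $2$, the function $\mathcal{L}$ is $\mu$-strongly convex with modulus $\mu = 2$, and this bound holds uniformly over all $N \geq 1$ and all datasets.

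I do not expect a genuine obstacle: the only point requiring care is to make explicit that the target values $c_{\ell,i,P,Q}$ are data-dependent constants rather than functions of $\vec{\alpha}$, after which separability and the constancy of the Hessian are immediate. If the coefficients $\alpha_{P_i,Q}$ are taken to be complex rather than real, the identical argument applies after writing $|z-c|^2 = (\mathrm{Re}\,z - c)^2 + (\mathrm{Im}\,z)^2$, again yielding Hessian $2I$ over the doubled real parameter space. As a byproduct of the diagonal-Hessian form, one reads off that the unique global minimizer is the per-coordinate sample mean $\alpha^*_{P_i,Q} = \frac{1}{N}\sum_{\ell=1}^N c_{\ell,i,P,Q}$, which is precisely the unbiased estimator underlying the claim that $\sum_{Q \in \mathrm{Pauli}(C_i)} \alpha^*_{P_i,Q} Q \approx U^\dagger P_i U$; this remark also establishes ``all local minima are global'' as an immediate corollary of strong convexity.
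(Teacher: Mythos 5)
Your proof is correct and follows essentially the same route as the paper: both reduce to observing that $\mathcal{L}$ is a sum of squared deviations from data-dependent constants, compute the Hessian to be exactly $2I$ everywhere, and conclude $2$-strong convexity (the paper writes this in vector form as $\frac{1}{N}\sum_\ell \|\vec{\alpha}-\vec{d}_\ell\|_2^2$, while you make the coordinate-wise separability explicit). Your additional remark identifying the minimizer as the per-coordinate sample mean is consistent with the paper's ``$+\text{constant}$'' rewriting and with its Corollary on global optimality of local minima.
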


\begin{proof}
The loss function can be rewritten in matrix form as:
\begin{equation}
\mathcal{L}(\vec{\alpha}) = \frac{1}{N} \sum_{\ell=1}^N \|\vec{\alpha} - \vec{d}_\ell\|_2^2 = \left\|\vec{\alpha} - \frac{1}{N}\sum_{\ell=1}^N \vec{d}_\ell\right\|_2^2 + \text{constant},
\end{equation}
where $\vec{d}_\ell$ is the vector of empirical estimates:
\begin{equation}
(\vec{d}_\ell)_{P_i, Q} = 3^{|Q|+1} \bra{\phi_{\ell, i}} P \ket{\phi_{\ell, i}} \cdot \bra{\psi_\ell} Q \ket{\psi_\ell}.
\end{equation}
Since $\mathcal{L}(\vec{\alpha})$ is a sum of squared $\ell_2$ norms, it is strongly convex with:
\begin{itemize}
    \item Hessian $\nabla^2 \mathcal{L}(\vec{\alpha}) = 2I$ (where $I$ is the identity matrix)
    \item Minimum eigenvalue $\lambda_{\min} = 2 > 0$
\end{itemize}
Therefore, $\mathcal{L}(\vec{\alpha})$ is $2$-strongly convex.
\end{proof}

\begin{corollary}[Global optimality of local minima] \label{cor:global-local-minima}
The loss function $\mathcal{L}(\vec{\alpha})$ has a unique global minimum $\vec{\alpha}^*$, and:
\begin{enumerate}
    \item Every local minimum is global.
    \item Every critical point satisfies $\vec{\alpha} = \vec{\alpha}^*$.
\end{enumerate}
\end{corollary}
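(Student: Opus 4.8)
The plan is to derive the corollary directly from the strong convexity proved in Theorem~\ref{thm:strongly-convex-landscape}; in fact the proof there already produces the explicit identity $\mathcal{L}(\vec{\alpha}) = \bigl\|\vec{\alpha} - \bar{\vec{d}}\bigr\|_2^2 + \mathrm{const}$, where $\bar{\vec{d}} = \frac{1}{N}\sum_{\ell=1}^N \vec{d}_\ell$ with $(\vec{d}_\ell)_{P_i,Q} = 3^{|Q|+1}\bra{\phi_{\ell,i}}P\ket{\phi_{\ell,i}}\bra{\psi_\ell}Q\ket{\psi_\ell}$, together with the constant Hessian $\nabla^2\mathcal{L}(\vec{\alpha}) = 2I$. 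Since the optimization is over the full, unconstrained Euclidean parameter space of dimension $3n\cdot\max_i 4^{|C_i|}$, all standard first-order and convexity arguments apply without boundary caveats.

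First I would establish existence and uniqueness of the global minimizer. Strong convexity implies strict convexity and coercivity, so $\mathcal{L}$ attains its infimum at exactly one point $\vec{\alpha}^*$; from the explicit quadratic form one reads off $\vec{\alpha}^* = \bar{\vec{d}}$ and $\mathcal{L}(\vec{\alpha}) - \mathcal{L}(\vec{\alpha}^*) = \|\vec{\alpha} - \vec{\alpha}^*\|_2^2$, which vanishes if and only if $\vec{\alpha} = \vec{\alpha}^*$.

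Next, for claim (2), I would use the first-order condition: any critical point satisfies $\nabla\mathcal{L}(\vec{\alpha}) = 2(\vec{\alpha} - \vec{\alpha}^*) = 0$, forcing $\vec{\alpha} = \vec{\alpha}^*$. Equivalently, the strong-convexity lower bound $\mathcal{L}(\vec{\beta}) \geq \mathcal{L}(\vec{\alpha}) + \langle\nabla\mathcal{L}(\vec{\alpha}),\vec{\beta}-\vec{\alpha}\rangle + \|\vec{\beta}-\vec{\alpha}\|_2^2$ shows that once the gradient vanishes at $\vec{\alpha}$, every other point has strictly larger loss. For claim (1), since $\mathcal{L}$ is a quadratic polynomial in $\vec{\alpha}$ and hence everywhere differentiable, every local minimum is a critical point, so by (2) it equals $\vec{\alpha}^*$ and is therefore the unique global minimum.

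There is essentially no serious obstacle: the content is entirely in Theorem~\ref{thm:strongly-convex-landscape}, and what remains is the textbook fact that a differentiable strongly convex function on $\mathbb{R}^d$ has a single critical point coinciding with its global minimum. The only point I would state explicitly — the closest thing to a ``hard part'' — is that the Pauli coefficient parameterization imposes no constraints on $\vec{\alpha}$, so ``local minimum'' genuinely means ``stationary point of an unconstrained smooth function''; with that in hand, the chain local minimum $\Rightarrow$ critical point $\Rightarrow \vec{\alpha} = \vec{\alpha}^*$ is immediate.
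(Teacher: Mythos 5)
Your proposal is correct and follows exactly the same route as the paper, which derives the corollary in one line from the strong convexity established in Theorem~\ref{thm:strongly-convex-landscape}. You merely spell out the standard details (explicit minimizer $\vec{\alpha}^* = \bar{\vec{d}}$, vanishing gradient at critical points, unconstrained domain) that the paper leaves implicit.
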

\begin{proof}
These properties follow directly from strong convexity: strictly convex functions have a unique global minimizer, and any stationary point must be this minimum.
\end{proof}

\subsubsection{Constant local minima in learning local inversions}
\label{subsubsection:const-local-min-local-inv}

We now analyze the landscape for the non-convex optimization problem underlying Algorithms~2 and~3. Specifically, we show that although the objective is non-convex, it is defined over a compact manifold of constant dimension, and the number of local minima remains constant. As a result, gradient descent with a constant number of random restarts converges to a global minimum with high probability. This is formally stated below.

\begin{theorem}[Benign Landscape] \label{thm:benign-landscape}
Let
\begin{equation}
\hat{O}_{i,P} = \sum_{Q \in \mathrm{Pauli}(C_i)} \alpha^*_{P_i, Q} Q, \quad P \in \{X, Y, Z\},
\end{equation}
be the learned Heisenberg observables supported on a constant-size lightcone \(C_i\) with \(|C_i| = k = \mathcal{O}(1)\). Consider the optimization problem:
\begin{equation}
\min_{V_i \in \mathrm{SU}(2^k)} f(V_i), \quad \text{where} \quad f(V_i) = \sum_{P \in \{X, Y, Z\}} \|V_i^\dagger \hat{O}_{i,P} V_i - P_i\|_\infty.
\end{equation}
Then the landscape of $f$ over the compact manifold $\mathrm{SU}(2^k)$ exhibits the following properties:
\begin{enumerate}
    \item[\bf (a)] The set of critical points of $f$, which contains all local minima, consists of a finite number of connected components. This number is bounded by a constant that depends only on $k$.
    \item[\bf (b)] The basin of attraction corresponding to the set of global minima occupies a non-zero fraction of the total volume of $\mathrm{SU}(2^k)$. Consequently, a gradient-based method with a constant number of random restarts finds a global minimum with arbitrarily high probability.
\end{enumerate}
\end{theorem}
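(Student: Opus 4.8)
The plan is to exploit that the whole optimization lives on the compact real--algebraic manifold $\mathrm{SU}(2^k)$ of \emph{constant} dimension $d = 4^k - 1 = \mathcal{O}(1)$, and that $f$ is a \emph{tame} (semialgebraic) objective of bounded algebraic complexity: the map $V \mapsto V^\dagger \hat O_{i,P} V - P_i$ is degree $2$ in the $4^k$ real coordinates of $V$, and $\norm{\cdot}_\infty$ is the largest singular value, which is semialgebraic, so $f$ is semialgebraic and the number and degree of polynomials needed to describe $f$, its Clarke subdifferential, and the equations cutting out $\mathrm{SU}(2^k)$ all depend on $k$ alone. I would also note that replacing $\norm{\cdot}_\infty$ by the smooth surrogate $\norm{\cdot}_F^2$ turns $f$ into a degree--$4$ polynomial on $\mathrm{SU}(2^k)$, and that the conclusions below are insensitive to this choice of unitarily invariant norm; this gives a clean path when non-smoothness is inconvenient. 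Two classical inputs then carry the proof: (i) a semialgebraic subset of $\mathbb{R}^N$ cut out by a bounded number of polynomials of bounded degree has a bounded number of connected components (Milnor--Thom, with the quantitative refinements of Basu--Pollack--Roy); and (ii) tame locally Lipschitz functions obey the \L{}ojasiewicz (sub)gradient inequality and a nonsmooth Morse--Sard property, so a descent iteration has finite trajectory length, converges to a Clarke--critical point, and the set of critical \emph{values} is finite.

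\textbf{Part (a).} I would let $\mathrm{Crit}(f) \subseteq \mathrm{SU}(2^k)$ be the Clarke--critical set. Since $f$ is locally Lipschitz, $0 \in \partial f$ at every local minimum, so $\mathrm{Crit}(f)$ contains all local minima (and all global minima, which exist by compactness). The set $\mathrm{Crit}(f)$ is semialgebraic and its defining system inherits size and degree bounds depending only on $k$, so the Milnor--Thom / Basu--Pollack--Roy bound yields at most $C(k)$ connected components, which is the claim.

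\textbf{Part (b).} I would set $f^\star := \min_V f(V)$, attained by compactness, and $\mathcal{G} := \{V : f(V) = f^\star\}$. By finiteness of critical values, let $c_2$ be the smallest critical value strictly above $f^\star$ (take $c_2 := 1 + \max f$ if none exists), fix any $c \in (f^\star, c_2)$, and put $\Omega := \{V : f(V) < c\}$, which is open and nonempty, hence of positive volume. I claim $\Omega$ lies in the basin of attraction of $\mathcal{G}$: with a descent method (gradient descent on the smooth surrogate with a small step or a backtracking line search makes $f$ non-increasing along the iterates, so they cannot leave a sublevel set), $\Omega$ is forward-invariant, the \L{}ojasiewicz inequality forces convergence to some critical $V^\star \in \overline{\Omega}$ with $f(V^\star) \le f(V_0) < c$, and by the choice of $c$ the only critical points below level $c$ lie in $\mathcal{G}$. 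Finally, drawing each restart from the Haar measure on $\mathrm{SU}(2^k)$ (or any density bounded below) puts the start in $\Omega$ with probability $p := \mathrm{vol}(\Omega)/\mathrm{vol}(\mathrm{SU}(2^k)) > 0$, a constant independent of the ambient system size $n$; with $r$ restarts the success probability is at least $1 - (1-p)^r \ge 1 - \eta$ once $r = \lceil \ln(1/\eta)/p \rceil = \mathcal{O}(1)$.

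\textbf{Expected main obstacle.} The hard part will be making Part (a) fully rigorous with the literal non-smooth $\norm{\cdot}_\infty$: one must check that the Clarke subdifferential of the top-singular-value function, pulled back through the quadratic map $V \mapsto V^\dagger \hat O_{i,P} V - P_i$ and summed over $P \in \{X,Y,Z\}$, is semialgebraic with a description (count and degree of polynomials, number of Boolean combinations) bounded in $k$, and then invoke the correct quantitative Milnor--Thom bound to produce $C(k)$. The clean workaround is to run the whole argument with the Frobenius surrogate, for which $f$ is a genuine degree--$4$ polynomial on the variety $\mathrm{SU}(2^k)$ and the complexity bookkeeping is immediate; since every property used --- a finite, $k$-bounded set of critical components, the \L{}ojasiewicz inequality, and a positive-volume global basin --- is stable under changing the unitarily invariant matrix norm, nothing is lost. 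A secondary subtlety is that the plain subgradient method is not a descent method, so for the non-smooth objective one should either adopt a descent variant or pass to the surrogate before invoking sublevel-set invariance; transferring the continuous-flow argument to an implementable discrete iteration is otherwise covered by the convergence theory for tame optimization.
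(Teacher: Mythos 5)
Your proposal is correct and, for part (a), follows essentially the same route as the paper: both arguments observe that $f$ is a locally Lipschitz, semialgebraic function on the compact semialgebraic manifold $\mathrm{SU}(2^k)$ of constant dimension, that local minima are Clarke critical points, and that quantitative real algebraic geometry (the paper cites Bochnak--Coste--Roy where you cite Milnor--Thom/Basu--Pollack--Roy) bounds the number of connected components of the critical set by a constant depending only on $k$. For part (b) your argument diverges in a way worth noting: the paper simply asserts that because finitely many basins of attraction partition a manifold of finite volume, the basin of the global minima must have positive measure --- an inference that is not airtight as stated, since a finite partition can in principle contain cells of measure zero. You instead invoke the tame Morse--Sard property to get finiteness of critical \emph{values}, choose a level $c$ strictly between $f^\star$ and the next critical value, and show the open nonempty sublevel set $\{f < c\}$ is forward-invariant under a descent method and hence contained in the global basin; positivity of its Haar volume is then immediate. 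This buys a cleaner and more defensible version of the paper's claim, at the cost of needing the \L{}ojasiewicz/Kurdyka machinery for convergence of the discrete iteration, which you correctly flag (along with the Frobenius surrogate as a workaround for the non-smoothness of $\|\cdot\|_\infty$) as the main technical bookkeeping to discharge. The one caveat is that if you carry out the argument entirely on the Frobenius surrogate, you are formally proving the landscape statement for a different objective than the $\|\cdot\|_\infty$ one in the theorem statement; the semialgebraic component count transfers, but the basin argument should be run for whichever objective the descent method actually minimizes.
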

\begin{proof}
The objective function is $f(V_i) = \sum_{P \in \{X, Y, Z\}} \|V_i^\dagger \hat{O}_{i,P} V_i - P_i\|_\infty$. The domain of optimization is the special unitary group $\mathrm{SU}(2^k)$, which is a compact real-algebraic manifold. The function $f$ is a finite sum of terms, each being a composition of a smooth (real-analytic) unitary conjugation map $V_i \mapsto V_i^\dagger \hat{O}_{i,P} V_i$ and the spectral norm $\|\cdot\|_\infty$. The spectral norm is a locally Lipschitz function. As a composition and finite sum of such functions, $f$ is itself locally Lipschitz. Furthermore, both the function $f$ and the domain $\mathrm{SU}(2^k)$ are semi-algebraic. A function is semi-algebraic if its graph can be described by a finite number of polynomial equations and inequalities. The operators $\hat{O}_{i,P}$ and $P_i$ are matrices with constant entries, and the operations of matrix multiplication, addition, and conjugation are polynomial in the entries of $V_i$. The spectral norm is the maximum of the absolute values of eigenvalues, hence $f$ is semi-algebraic.

For a locally Lipschitz function on a compact domain, local minima can only occur at Clarke critical points, i.e., points $V_i$ where the Clarke subdifferential $\partial f(V_i)$ contains the zero vector. Clarke subdifferential $\partial f(V_i)$ is the convex hull of all possible limit points of gradients from nearby points. A fundamental theorem of real algebraic geometry states that the set of critical points of a semi-algebraic function on a compact semi-algebraic set has a finite number of connected components \cite{bochnak1998real}. This number is bounded by a constant dependent only on the degree of the polynomials defining the function and domain, which in this case depends only on $k = \mathcal{O}(1)$. As the local minima form a subset of the critical points, the set of local minima also consists of a finite number of connected components. This proves the first statement.

For the second statement, we consider the dynamics of gradient descent on $\mathrm{SU}(2^k)$. Since $\mathrm{SU}(2^k)$ is a compact manifold, it possesses a finite, non-zero volume (Haar measure), which we denote $\mathrm{Vol}(\mathrm{SU}(2^k))$. The basins of attraction corresponding to the finite number of local minimal components partition the manifold. Let $\mathcal{B}_g$ be the union of the basins of attraction for all global minima. As there are a finite number of basins partitioning a finite volume, the basin for any given minimum must have a non-zero measure. Thus, $\mathrm{Vol}(\mathcal{B}_g) > 0$. The probability $p$ that an initialization point drawn uniformly at random from $\mathrm{SU}(2^k)$ lies within the basin of a global minimum is therefore non-zero with probability $p = \frac{\mathrm{Vol}(\mathcal{B}_g)}{\mathrm{Vol}(\mathrm{SU}(2^k))} > 0.$ With $R$ independent random restarts, the probability of failing to initialize in $\mathcal{B}_g$ in all attempts is $(1-p)^R$. Since $0 < p \le 1$, this failure probability can be made arbitrarily small by choosing a constant $R$. Thus, gradient descent with a constant number of restarts converges to a global minimum with high probability. This establishes the second statement.
\end{proof}


\section{Generative quantum advantage for classical bitstrings}
\label{app:sec-gen-adv}

In this section, we establish generative quantum advantage for sampling classical bitstrings. These results follow relatively directly from our efficient algorithm for learning any shallow quantum circuits (see Appendix~\ref{sec:math-learning-QNC0} and \cite{huang2024learning}) and standard techniques for establishing the classical computational hardness for sampling from constant-depth quantum circuits \cite{gao2017quantum, bermejo2018architectures, haferkamp2020closing}. However, as we are explicitly using the mappings between shallow $D$-dimensional quantum circuits and deep $(D-1)$-dimensional quantum circuits considered in \cite{gao2017quantum, bermejo2018architectures, haferkamp2020closing} in the experiments, we will introduce this mapping in detail using the language of generative models.

\subsection{Problem formulation and universal approximation}

We begin by defining the task of learning to generate classical bitstrings.

\begin{definition}[The task of learning to generate classical bitstrings]
We are given a dataset of input-output bitstring pairs $\{ (x_i, y_i) \, | \, x_i \in \{0, 1\}^n, y_i \in \{0, 1\}^m \}_{i=1}^N$. Each output bitstring $y_i$ is sampled according to an unknown conditional distribution $p(y_i | x_i)$. The goal is to learn a model from the dataset that can generate new output bitstrings $y$ according to the unknown distribution $p(y | x)$ for any given new input bitstring $x$.
\end{definition}

Next, we define generative quantum neural networks (QNN). These are trainable quantum machine learning models that can be used to learn conditional probability distributions and generate new classical bitstrings from any given input bitstrings. The parameters are given by a vector $\vec{\theta}$.

\begin{definition}[Generative QNNs]\label{def:gen-qnn}
A generative QNN over $\ell$ qubits performs the following operations:
\begin{enumerate}
    \item Encodes the input bitstring $x \in \{0, 1\}^n$ into an $\ell$-qubit input product state $\ket{\psi_x}$ and a local measurement basis $\mathcal{M}_{x}$ over $m \leq \ell$ qubits.
    \item Evolves the input product state $\ket{\psi_x}$ under an $\ell$-qubit parameterized quantum circuit $C(\vec{\theta})$.
    \item Measures the first $m$ qubits of the $\ell$-qubit output state $C(\vec{\theta}) \ket{\psi_x}$ in the basis $\mathcal{M}_{x}$.
\end{enumerate}
The QNN generates an $m$-bit string $y \in \{0, 1\}^m$ according to a distribution $p_{\text{QNN}}(y | x; \vec{\theta})$ that depends on the input bitstring $x \in \{0, 1\}^n$ and the trainable circuit parameters $\vec{\theta}$.
\end{definition}

Even when the measurement is generalized to a POVM conditioned on $x$, $\mathcal{M}_x$ forms a probability distribution for $p(y | x)$.  The specific mapping between the outcome of the generalized measurement and the bitstring $y$ provides additional freedom in the mapping.  We note that for any particular choice of $\vec \theta$, we could absorb $C(\vec \theta) U_x$ into a definition of $\mathcal{M}_x(\vec \theta)$, where $U_x \ket{0} = \ket{\psi_x}$, but we choose to keep them conceptually separate here.

\subsubsection{Tomographically completeness}

Some generative QNNs are tomographically complete, meaning that by going through all exponential choices of input bitstring $x$ and looking at the conditional probability distribution $p_{\mathrm{QNN}}(y | x; \vec{\theta})$, one could obtain complete knowledge about the parameterized quantum circuit $C(\vec{\theta})$. This is formally stated as follows.

\begin{definition}[Tomographically-complete generative QNNs] \label{def:tomo-comp-gen-QNNs}
A generative QNN associated with $\ket{\psi_x}, \mathcal{M}_x, C(\vec \theta)$ over $\ell$ qubits is tomographically-complete if the following are true.
\begin{itemize}
    \item The input bitstring $x \in \{0, 1\}^n$ can be represented as blocks of bits,
    \begin{equation}
        x = (x^{(1, \mathsf{st})}, \ldots, x^{(\ell, \mathsf{st})}, \ldots, x^{(1, \mathsf{mt})}, \ldots, x^{(\ell, \mathsf{mt})}),
    \end{equation}
    where $x^{(i, \mathsf{st})}$ specifies a single-qubit state $\ket{\psi_{x^{(i, \mathsf{st})}}}$ on the $i$-th qubit and $x^{(i, \mathsf{mt})}$ specifies a single-qubit two-outcome measurement $\mathcal{M}_{x^{(i, \mathsf{mt})}}$ on the $i$-th qubit.
    \item $\forall i \in \{1, \ldots, \ell\}$, the span of $\left\{ \ketbra{\psi_{x^{(i, \mathsf{st})}}}{\psi_{x^{(i, \mathsf{st})}}}, \forall x^{(i, \mathsf{st})} \right\}$ is the space for single-qubit density matrices ($2\times 2$ positive-semidefinite complex matrices with trace one).
    \item $\forall i \in \{1, \ldots, \ell\}$, the $i$-th qubit measurement $\left\{ \mathcal{M}_{x^{(i, \mathsf{mt})}}, \forall x^{(i, \mathsf{mt})} \right\}$ is tomographically complete.
    \item All $\ell$ qubits are measured to generate $y \in \{0, 1\}^m$, i.e., $\ell = m$.
\end{itemize}
These conditions imply that the generative model $p_{\mathrm{QNN}}(y | x; \vec{\theta})$ encode full knowledge of the unitary $C(\vec{\theta})$.
\end{definition}

\subsubsection{Universal approximation}

We establish that generative QNNs possess universal approximation capabilities for arbitrary conditional distributions over classical bitstrings.

\begin{proposition}[Universal approximation] \label{app:prop-universal-approx}
For any number of bits $n, m$, any conditional distribution $p(y | x)$ where $x \in \{0, 1\}^n, y \in \{0, 1\}^m$, and any approximation error $\epsilon > 0$, there exists a generative QNN that can approximate $p(y | x)$ up to $\epsilon$ error in total variation distance (TVD).
\end{proposition}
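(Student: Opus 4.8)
The plan is to exhibit an explicit generative QNN that reproduces any target $p(y\mid x)$ exactly, and then invoke universality of the gate set to approximate it to within $\epsilon$ by an honest parameterized circuit. Concretely, I would fix $\ell = n + m$ qubits, partitioned into an $m$-qubit output block (qubits $1,\dots,m$) and an $n$-qubit input block (qubits $m+1,\dots,m+n$), and set the input encoding to $\ket{\psi_x} := \ket{0}^{\otimes m} \otimes \ket{x}$. This is a product state, as demanded by Definition~\ref{def:gen-qnn}, with each single-qubit factor depending on at most one bit of $x$; the local measurement basis $\mathcal{M}_x$ is simply the computational ($Z$) basis on the first $m$ qubits, which I take to be independent of $x$.

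For each $x \in \{0,1\}^n$ the vector $\ket{\sqrt{p_x}} := \sum_{y \in \{0,1\}^m} \sqrt{p(y\mid x)}\,\ket{y}$ is a unit vector since $\sum_y p(y\mid x) = 1$, so there is an $m$-qubit unitary $V_x$ with $V_x\ket{0}^{\otimes m} = \ket{\sqrt{p_x}}$ (any unit vector is the image of $\ket{0^m}$ under some unitary, which one then extends arbitrarily to a full unitary). I form the block-diagonal controlled unitary $C := \sum_{x \in \{0,1\}^n} V_x \otimes \ketbra{x}{x}$, which is unitary on $\ell$ qubits and satisfies $C\ket{\psi_x} = \ket{\sqrt{p_x}} \otimes \ket{x}$; note that $C$ itself does not depend on $x$, the dependence entering only through the input state, exactly as Definition~\ref{def:gen-qnn} requires. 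Measuring the first $m$ qubits of $C\ket{\psi_x}$ in the computational basis then yields $y$ with probability exactly $\lvert\langle y \mid \sqrt{p_x}\rangle\rvert^2 = p(y\mid x)$, so the idealized generative QNN with circuit $C$, encoding $\ket{\psi_x}$, and measurement $\mathcal{M}_x$ reproduces $p(y\mid x)$ with zero error.

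To obtain a genuine parameterized circuit I would decompose $C$ into a universal gate set and apply the Solovay--Kitaev theorem (equivalently, density of finite gate sequences in the unitary group) to get a parameterized circuit $C(\vec\theta)$ and a value $\vec\theta$ with $\norm{C(\vec\theta) - C}_\infty \le \epsilon/2$. Since the QNN output states are pure, $\norm{C(\vec\theta)\ketbra{\psi_x}{\psi_x}C(\vec\theta)^\dagger - C\ketbra{\psi_x}{\psi_x}C^\dagger}_1 \le 2\norm{C(\vec\theta) - C}_\infty \le \epsilon$, and the total variation distance between the outcomes of a fixed measurement on two states is bounded by their trace distance; hence $\mathrm{TVD}\!\left(p_{\mathrm{QNN}}(\cdot\mid x;\vec\theta),\, p(\cdot\mid x)\right) \le \epsilon$ for every $x$, which is the desired statement.

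The argument is conceptually routine, so I do not anticipate a genuine obstacle; the only places needing care are checking that the chosen encoding meets the structural constraints of Definition~\ref{def:gen-qnn} (product input state, local measurement basis, $m \le \ell$) and the short continuity step passing from operator-norm closeness of circuits to TVD closeness of the generated distributions. The one caveat worth flagging is cost rather than correctness: the controlled unitary $C$ generically has size $\mathrm{poly}(2^n, 2^m)$, so this establishes universal expressibility but not efficiency — efficiently trainable realizations are precisely what the structured families (tomographically-complete shallow QNNs and IDQNNs) of the subsequent appendices are designed to provide.
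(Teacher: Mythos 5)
Your proof is correct, but it takes a genuinely different route from the paper's. The paper reduces to a \emph{classical} fact: any conditional distribution can be $\epsilon$-approximated by a reversible classical circuit acting on $x$, $r$ uniformly random ancilla bits, and $l$ work bits (citing standard results of Knuth, Bennett, and Toffoli); it then runs that permutation circuit coherently on $\ket{x}\otimes\ket{+^r}\otimes\ket{0^l}$ and measures the first $m$ qubits, so the randomness of the model comes from the $\ket{+}$ ancillas and the $\epsilon$ comes from the classical approximation step. You instead build an \emph{exact} quantum realization: a multiplexed state-preparation unitary $\sum_x V_x\otimes\ketbra{x}{x}$ that prepares the coherent amplitude encoding $\sum_y\sqrt{p(y\mid x)}\ket{y}$ on the output register, so the ideal model reproduces $p(y\mid x)$ with zero error and $\epsilon$ enters only through Solovay--Kitaev compilation into a gate set (a step the paper does not even bother with). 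Your continuity chain from operator-norm closeness of unitaries to TVD closeness of outcome distributions is sound (the constant is not tight, but that is immaterial), and your encoding satisfies the structural requirements of Definition~\ref{def:gen-qnn}: product input state, $x$-independent circuit, local computational-basis measurement on the first $m$ qubits. Both constructions are exponentially costly in general, which you correctly flag and which is consistent with the paper's own remark that universality says nothing about efficiency; the paper's version keeps the randomness source explicitly classical, while yours is more self-contained and exact at the ideal level.
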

\begin{proof}
For any conditional distribution $p(y | x)$ and any approximation error $\epsilon > 0$, there exists a reversible classical circuit $C$ taking as input the $n$-bit string $x$, $r$ random ancilla bits, and $l$ deterministic bits initialized at $0$, such that the distribution over the first $m$ output bits is $\epsilon$-close to the distribution $p(y | x)$ in total variation distance. This follows from standard circuit approximation results for probability distributions \cite{Knuth1976TheCO, bennett1978reversible, toffoli1980reversible}.

By setting $\ell = n + r + l$, we can define a quantum circuit that implements exactly the same operations as the reversible classical circuit $C$. The quantum circuit acts on an initial $\ell$-qubit state $\ket{x} \otimes \ket{+^r} \otimes \ket{0^l}$ and measures in the computational basis $\{\ket{b} : b \in \{0, 1\}^m\}$ after the computation.
Such a quantum circuit forms a generative QNN that approximates $p(y | x)$ up to $\epsilon$ error in total variation distance.
\end{proof}

While this proposition establishes the theoretical expressivity of QNNs for approximating any conditional probability distributions, the circuit size required for a given approximation error $\epsilon$ can be substantial. In analogy to the preparation of arbitrary quantum states, in the worst case, the number of gates scales exponentially in $n$ and $m$. However, for many practical distributions, the required circuit complexity is significantly lower.  We will focus here on cases here that permit efficient quantum training and inference.

\subsection{Instantaneously-deep quantum neural networks}
\label{app:inst-deep-QNN}

We now introduce a generative quantum model that we refer to as instantaneously-deep QNNs. These shallow quantum neural networks effectively sample and simulate the behavior of much deeper quantum circuits for certain generative tasks using a remarkable phenomenon unique to quantum computation. This phenomenon is well known in quantum computing, e.g., it was exploited in measurement-based quantum computation \cite{briegel2009measurement} to achieve universal quantum computation. It is also the central building block for proving that shallow quantum circuits are classically hard to simulate \cite{gao2017quantum, bermejo2018architectures, haferkamp2020closing}. The formal statement of this phenomenon is given below. A schematic example for the special case of 2D mapped to (1+1)D is given in Fig.~\ref{fig:2Dshallow-1Ddeep} in Appendix~\ref{app:num-learning-method}.

\begin{definition}[Instantaneously-deep QNNs] \label{def:inst-deep-QNN}
Given any number of qubits $n$ and a connectivity graph $G$ over $n$ qubits, an instantaneously-deep QNN is a generative QNN with a circuit geometry defined by $G$. The QNN\footnote{For simplicity, we assume a single parameter $n$. However, one could easily extend the definition to having different parameters.} has $n$ real parameters $\vec{\theta} \in \mathbb{R}^n$, takes an input bitstring $x \in \{ 0, 1 \}^{n}$, and generates an output bitstring $y \in \{0, 1\}^n$. The $i$-th bit of the bitstring $x$ determines the input state for the $i$-th qubit chosen from $\{ \ket{0}, \ket{+} = \frac{1}{\sqrt{2}}(\ket{0} + \ket{1}) \}$. We use the mapping convention $0 \mapsto \ket{+}$ and $1 \mapsto \ket{0}$. The quantum circuit associated with the QNN consists of:
\begin{enumerate}
    \item Preparing the $n$-qubit product input state $\ket{\psi_x}$ according to bitstring $x$.
    \item Applying a layer of $R_Z(\theta_i) = e^{-i\theta_i Z/2}$ rotation gates to all $n$ qubits, where $\{\theta_i\}_{i=1}^n$ are $n$ trainable parameters.
    \item Applying a layer of $CZ$ gates on all edges in the connectivity graph $G$.
    \item Measuring the $n$-qubit final state in the Pauli-$X$ basis.
\end{enumerate}
This quantum circuit is shallow with $\mathcal{O}(1)$ circuit depth regardless of the number of qubits.
\end{definition}

The following lemma shows that despite their constant depth, these shallow QNNs can effectively implement deep random quantum circuits when arranged with an appropriate connectivity structure.
This lemma uses exactly the same mechanism as in measurement-based quantum computation \cite{briegel2009measurement}.

\begin{lemma}[Generating deep QNNs with shallow QNNs]
\label{lemma:deep-from-shallow}
For any dimension $D \geq 2$, consider a geometry $G$ consisting of qubits arranged in a $D$-dimensional lattice of size $n = n_1 \times n_2 \times \ldots \times n_D$. We label each qubit by a $D$-dimensional coordinate $(i_1, \ldots, i_D)$ with $1 \leq i_j \leq n_j, \forall 1 \leq j \leq D$. The geometry $G$ is a standard Euclidean lattice where some edges with the same $i_1$ coordinate removed. This can be formally stated as follows:
\begin{itemize}
    \item Any edge in $G$ connects $(i_1, i_2, \ldots, i_D)$ and $(i_1', i_2', \ldots, i'_D)$ with $\sum_{j=1}^D |i_j - i'_j| = 1$.
    \item For all $(i_1, i_2, \ldots, i_D)$ and $(i_1', i_2', \ldots, i'_D)$ with $\sum_{j=1}^D |i_j - i'_j| = 1$, there is always an edge in $G$ if $i_1 \neq i_1'$, but may not have an edge in $G$ if $i_1 = i_1'$.
\end{itemize}
The output distribution of the instantaneously-deep QNN associated with~$G$ is equivalent to sampling and running a deep random quantum circuit over a $(D-1)$-dimensional lattice with depth $n_1$. Given an input bitstring $x \in \{0, 1\}^n$, the distribution over $y \in \{0, 1\}^n$ can be generated as follows:
\begin{enumerate}
    \item Initialize an $(n_2 \cdot n_3 \cdot \ldots \cdot n_D)$-qubit state $\ket{\phi}$ to be the product state given by $x$ restricted to the coordinates $(1, i_2, \ldots, i_D)$, where the mapping is $0 \mapsto \ket{+}$ and $1 \mapsto \ket{0}$ for each qubit.
    \item Set $i_1 \leftarrow 1$. This parameter indexes both the first coordinate of the shallow quantum circuit and the circuit layer of the corresponding deep quantum circuit.
    \item Apply a layer of $R_Z$ rotations to qubits at positions $(i_1, i_2, \ldots, i_D)$ with rotation angles $\theta_{(i_1, i_2, \ldots, i_D)}$ determined by the QNN, for all $i_2, \ldots, i_D$.
    \item Apply a layer of $CZ$ gates to $\ket{\psi}$ on edges between $(i_1, i_2, \ldots, i_D)$ and $(i_1, i_2', \ldots, i'_D)$ in $G$ for all coordinates $1 \leq i_2, i_2' \leq n_2, \ldots, 1 \leq i_D, i'_D \leq n_D$.
    \item If $i_1 < n_1$, generate the bits for $y_i$ with $i = (i_1, i_2, \ldots, i_D)$ for all $i_2, \ldots, i_D$ through the following steps:
    \begin{itemize}
        \item If $x_{(i_1+1, i_2, \ldots, i_D)} = 0$, apply an $H$ gate irrespective of $y_i$, set $y_i$ to be a random bit, and apply a $Z$ gate to the $(i_2, \ldots, i_D)$-th qubit of $\ket{\phi}$ if $y_i = 1$ (do nothing if $y_i = 0$).
        \item If $x_{(i_1+1, i_2, \ldots, i_D)} = 1$, measure the $(i_2, \ldots, i_D)$-th qubit of $\ket{\phi}$ in the $X$ basis, set $y_i$ to be the measurement outcome, and reset the $(i_2, \ldots, i_D)$-th qubit of $\ket{\phi}$ to $\ket{0}$.
    \end{itemize}
    Increment $i_1$ by one and return to Step 3.
    \item If $i_1 = n_1$, generate the random bits for $y_i$ with $i = (n_1, i_2, \ldots, i_D)$ for all $i_2, \ldots, i_D$ by measuring all $n_2 \cdot n_3 \cdot \ldots \cdot n_D$ qubits in the state $\ket{\phi}$ in the $X$ basis.
\end{enumerate}
\end{lemma}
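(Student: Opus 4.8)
The plan is to establish the equivalence through the measurement-based quantum computation (MBQC) mechanism~\cite{briegel2009measurement, gao2017quantum}: the constant-depth shallow circuit is a decorated cluster state, and reading out its qubits layer by layer along the first coordinate $i_1$ teleports a logical register of $n_2 n_3 \cdots n_D$ qubits through a depth-$n_1$ circuit on the $(D{-}1)$-dimensional sublattice (indexing a wire by its transverse coordinates $\vec i = (i_2,\dots,i_D)$). First I would record that the pre-measurement state of the instantaneously-deep QNN is
\[
\ket{\Psi_x} = \Bigl(\prod_{(u,v)\in G} CZ_{uv}\Bigr)\Bigl(\prod_{v} R_Z(\theta_v)\Bigr)\bigotimes_{v}\ket{\psi_{x_v}}, \qquad \ket{\psi_{x_v}}\in\{\ket{+},\ket{0}\}.
\]
Since $R_Z$ and $CZ$ are both diagonal they commute, so each $R_Z(\theta_{(i_1,\vec i)})$ may be regarded as applied immediately before slice $i_1$ is processed; and since every qubit is finally measured in the Pauli-$X$ basis, the joint distribution of the outcomes $\{y_v\}$ is unchanged if I read them out in any order, so I choose to read slice $i_1=1$ first, then $i_1=2$, and so on. I split the edges of $G$ into temporal edges, joining $(i_1,\vec i)$ to $(i_1{+}1,\vec i)$ and always present, and spatial edges inside a single slice, present as dictated by $G$.

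\medskip
\noindent\textbf{Teleportation identity.} The workhorse is the bundled cluster-state teleportation identity. Let a set of source qubits carry a joint state $\ket{\Lambda}$, let each source be joined by a $CZ$ to one fresh target qubit, and measure all sources in the $X$ basis with outcomes $s_j$. Expanding $\ket{\Lambda}=\sum_{\vec z}c_{\vec z}\ket{\vec z}$ and using $CZ(\ket{z}\otimes\ket{+})=\ket{z}\otimes Z^{z}\ket{+}$ and $CZ(\ket{z}\otimes\ket{0})=\ket{z}\otimes\ket{0}$ yields two cases: (i) if a target is $\ket{+}$ (its next input bit is $0$), that $s_j$ is uniform and independent and the targets end up holding $\bigl(\bigotimes_j X^{s_j}H\bigr)\ket{\Lambda}$, so the effective logical operation on that wire is a Hadamard up to a random Pauli byproduct --- which, after the Hadamard change of frame on the frontier qubit dictated by the final $X$-basis readout, is exactly the $H$ and conditional $Z^{y}$ of the first branch of Step~5; (ii) if a target is $\ket{0}$ (its next input bit is $1$), the temporal edge acts trivially, measuring the source in $X$ simply reads that logical qubit out (Born rule, not uniform), and the untouched $\ket{0}$ target serves as a freshly reset wire qubit, matching the second branch of Step~5.

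\medskip
\noindent\textbf{Induction over slices.} I would then induct on $i_1\in\{0,1,\dots,n_1\}$ on the invariant: after reading out slices $1,\dots,i_1$ with outcomes matching the bits $\{y_{(k,\vec i)}\}_{k\le i_1}$ produced by the deep procedure, the conditional state of the remaining qubits equals $\bigl(\prod_{e\in G:\,e\subseteq\text{slices}>i_1}CZ_e\bigr)$ applied to a tensor product over wires in which each qubit $(k,\vec i)$ with $k>i_1{+}1$ is still in $\ket{\psi_{x_{(k,\vec i)}}}$ and the frontier qubit $(i_1{+}1,\vec i)$ holds a byproduct Pauli $B_{\vec i}$ times the state of wire $\vec i$ produced by the deep procedure after its $i_1$-th iteration. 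The base case $i_1=0$ is the initialization in Step~1 of the deep procedure. For the inductive step, applying $R_Z(\theta_{(i_1{+}1,\vec i)})$ to each frontier qubit (Step~3), then the spatial $CZ$'s of slice $i_1{+}1$ (Step~4, the entangling layer of the $(i_1{+}1)$-th deep layer), then reading out slice $i_1{+}1$ via the teleportation identity, reestablishes the invariant with $i_1\mapsto i_1{+}1$, each $B_{\vec i}$ being updated by commuting it past the just-applied $R_Z$'s (flipping the relevant rotation-angle signs), the spatial $CZ$'s (which may turn a single-qubit Pauli into a two-qubit one on a neighboring wire), and the Hadamard/Pauli of the teleportation. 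For $i_1=n_1$ no temporal edges remain, so reading out slice $n_1$ in $X$ is Step~6, and the accumulated byproducts fix the distribution of the last layer of $y$'s just as the deep procedure tracks it; comparing the two processes outcome by outcome yields equality of the joint distributions over $\{y_v\}$.

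\medskip
\noindent\textbf{Main obstacle.} The delicate part is the byproduct bookkeeping: tracking how each teleportation's random Pauli correction propagates through the later $R_Z$ gates, spatial $CZ$'s and Hadamards, and checking --- bit for bit, in the conventions of Step~5 --- that it coincides with the $Z$ corrections and random bits recorded there. In particular the wire-cut branch, where an interior $\ket{0}$ trivializes its incident $CZ$'s and splits a wire into independent segments, must be fitted into this bookkeeping so that the isolated $\ket{0}$ qubit behaves exactly as a measured-and-reset wire. The one remaining point --- that a whole slice is read out at once while the logical register is already entangled across wires from earlier layers --- is immediate, since the single-wire identity tensorizes over the disjoint source/target pairs.
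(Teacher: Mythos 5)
Your plan is correct and follows essentially the same route as the paper: reorder the commuting diagonal gates and $X$-basis measurements so the shallow circuit is read out slice by slice along $i_1$, apply the cluster-state teleportation identity separately to $\ket{+}$ targets (wire carried forward) and $\ket{0}$ targets (trivialized $CZ$, Born-rule readout, fresh $\ket{0}$ wire), and induct on the slice index with the frontier qubits playing the role of the deep register. The one place where you diverge is the byproduct handling, and it is worth noting that the obstacle you flag as ``delicate'' largely dissolves in the paper's version: the $H$ and conditional $Z^{y}$ of Step~5 are \emph{part of the deep procedure itself} and are exactly the inverse of the teleportation byproduct $HZ^{y}$, so the paper's invariant is literal equality of the frontier state with $\ket{\phi^{(i_1)}}$ after each round --- no Pauli $B_{\vec i}$ ever survives to be commuted through later $R_Z$'s or spatial $CZ$'s. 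Your deferred-correction scheme can be made to work, but as written it invites exactly the trouble you anticipate: an uncorrected $X$-type byproduct would flip subsequent rotation angles and spawn $Z$'s on neighboring wires, so the effective circuit you track would no longer be the one the lemma's procedure specifies, and you would owe an extra argument that these modifications leave the joint outcome distribution invariant. Adopting the immediate-correction convention (equivalently, keeping the byproduct in its $Z^{y}$ form, which commutes with everything diagonal, and cancelling it against Step~5 right away) turns your inductive step into the paper's direct two-line state computation and removes the bookkeeping entirely.
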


\begin{proof}
We show that the shallow QNN simulates a depth-$n_1$ circuit on a $(D-1)$-dimensional lattice of size $n_2 \times \cdots \times n_D$. The first dimension coordinate $i_1$ in our shallow QNN effectively encodes the circuit depth, while the remaining dimensions $(i_2,\ldots,i_D)$ encode the spatial layout of qubits in the simulated deep circuit.

\vspace{0.75em}
\noindent \textit{Mapping Between Shallow QNN and Deep QNN}
\vspace{0.5em}

\noindent Let us define the correspondence precisely:
\begin{itemize}
    \item Each ``slice'' of the $D$-dimensional lattice in the shallow QNN at a fixed coordinate $i_1$ corresponds to one circuit layer of the deep QNN.
    \item Qubits at positions $(i_1, i_2, \ldots, i_D)$ in the shallow QNN correspond to qubits at positions $(i_2, \ldots, i_D)$ in layer $i_1$ of the deep QNN.
    \item Edges within a slice that connect vertices with the same $i_1$ coordinate implement the entangling operations within that circuit layer of the deep QNN.
    \item The quantum state flows from slice $i_1$ to slice $i_1+1$ through a quantum teleportation mechanism \cite{briegel2009measurement} enabled by the $CZ$ gates across different slices.
\end{itemize}
We denote $\mathcal{S}_{D-1} = \{(i_2, \ldots, i_D) : 1 \leq i_j \leq n_j, 2 \leq j \leq D\}$ to be the set of all $(D-1)$-dimensional coordinates in our lattice. Each coordinate in $\mathcal{S}_{D-1}$ corresponds to a qubit position in the deep QNN.

Let $|\phi^{(i_1)}\rangle$ denote the state of the $(D-1)$-dimensional system after processing layer~$i_1$. We prove by induction that this state correctly represents the state of the deep QNN after $i_1$ layers. To process the shallow QNN layer by layer, we use the fact that $CZ$ gates and $R_Z$ gates commute. Hence, we can order them in the following way.

\vspace{0.75em}
\noindent \textit{Base Case: Initialization and First Layer ($i_1 = 1$)}
\vspace{0.5em}

\noindent We initialize an $(n_2 \cdot \ldots \cdot n_D)$-qubit quantum register with states determined by the input bits $x_{(1, i_2, \ldots, i_D)}$, mapping $0 \mapsto |+\rangle$ and $1 \mapsto |0\rangle$ for each qubit. This exactly matches the initialization of the corresponding deep QNN. After initialization, we apply:
\begin{enumerate}
    \item $R_Z(\theta_{(1, i_2, \ldots, i_D)})$ rotations to each qubit.
    \item $CZ$ gates between qubits at positions $(1, i_2, \ldots, i_D)$ and $(1, i_2', \ldots, i_D')$ where adjacent positions differ by exactly one coordinate.
\end{enumerate}
These operations implement the first layer of entangling gates and single-qubit rotations in the deep QNN. Hence, $|\phi^{(1)}\rangle$ correctly represents the state after the first layer of the deep QNN.

\vspace{0.75em}
\noindent \textit{Inductive Step: Propagation from Layer $i_1$ to $i_1+1$}
\vspace{0.5em}

\noindent Assume that after processing layer $i_1$, the state $|\phi^{(i_1)}\rangle$ is equivalent to the state after $i_1$ layers of the deep circuit. We now show that after processing layer $i_1+1$, the resulting state $|\phi^{(i_1+1)}\rangle$ is equivalent to the state after $i_1+1$ layers of the deep circuit. For any $1 \leq i_1 < n_1$, the algorithm processes the next layer as follows:
\begin{itemize}
    \item Entangling and Rotation Operations: First, the algorithm applies:
    \begin{enumerate}
        \item $R_Z(\theta_{(i_1, i_2, \ldots, i_D)})$ rotations to each qubit.
        \item $CZ$ gates between qubits at positions $(i_1, i_2, \ldots, i_D)$ and $(i_1, i_2', \ldots, i_D')$ where spatial coordinates differ by exactly one position.
    \end{enumerate}
    These are the same as the entangling and single-qubit gates at depth $i_1$ of the deep QNN, applied to $|\phi^{(i_1)}\rangle$.
    \item Quantum State Transfer Mechanism: The next step is to transfer the state from layer $i_1$ to layer $i_1+1$. For each position $(i_2, \ldots, i_D) \in \mathcal{S}_{D-1}$, we process the qubit at $(i_1, i_2, \ldots, i_D)$ based on the value of $x_{(i_1+1, i_2, \ldots, i_D)}$:
    \begin{itemize}
        \item Case 1 ($x_{(i_1+1, i_2, \ldots, i_D)} = 0$): We have the following:
        \begin{enumerate}
            \item The qubit at position $(i_1, i_2, \ldots, i_D)$ is a component of $\ket{\phi^{(i_1)}}$,
            \begin{equation}
                \ket{\phi^{(i_1)}} = \ket{0} \otimes \ket{\widetilde{\phi}^{(i_1)}_0} + \ket{1} \otimes \ket{\widetilde{\phi}^{(i_1)}_1},
            \end{equation}
            for some unnormalized states $\ket{\widetilde{\phi}^{(i_1)}_0}, \ket{\widetilde{\phi}^{(i_1)}_1}$.
            \item The qubit at position $(i_1+1, i_2, \ldots, i_D)$ is initialized to $|+\rangle = \frac{1}{\sqrt{2}}(|0\rangle + |1\rangle)$ as specified by the condition on the input bit $x_{(i_1+1, i_2, \ldots, i_D)} = 0$.
            \item The $CZ$ gate between these qubits creates the entangled state:
            \begin{align}
                &\ket{0} \otimes \frac{\ket{0} + \ket{1}}{\sqrt{2}} \otimes \ket{\widetilde{\phi}^{(i_1)}_0} + \ket{1} \otimes \frac{\ket{0} - \ket{1}}{\sqrt{2}} \otimes \ket{\widetilde{\phi}^{(i_1)}_1}\\
                &= \frac{\ket{+} + \ket{-}}{\sqrt{2}} \otimes \frac{\ket{0} + \ket{1}}{\sqrt{2}} \otimes \ket{\widetilde{\phi}^{(i_1)}_0} + \frac{\ket{+} - \ket{-}}{\sqrt{2}} \otimes \frac{\ket{0} - \ket{1}}{\sqrt{2}} \otimes \ket{\widetilde{\phi}^{(i_1)}_1}\\
               &= \frac{\ket{+}}{2} \otimes \left( \ket{0} \otimes \left( \ket{\widetilde{\phi}^{(i_1)}_0} + \ket{\widetilde{\phi}^{(i_1)}_1} \right) + \ket{1} \otimes \left (\ket{\widetilde{\phi}^{(i_1)}_0} - \ket{\widetilde{\phi}^{(i_1)}_1} \right) \right)\\
               &+ \frac{\ket{-}}{2} \otimes \left( \ket{0} \otimes \left( \ket{\widetilde{\phi}^{(i_1)}_0} - \ket{\widetilde{\phi}^{(i_1)}_1} \right) + \ket{1} \otimes \left (\ket{\widetilde{\phi}^{(i_1)}_0} + \ket{\widetilde{\phi}^{(i_1)}_1} \right) \right)\\
               &= \frac{\ket{+}}{\sqrt{2}} \otimes \left( \ket{+} \otimes  \ket{\widetilde{\phi}^{(i_1)}_0} + \ket{-} \otimes \ket{\widetilde{\phi}^{(i_1)}_1} \right) + \frac{\ket{-}}{\sqrt{2}} \otimes \left( \ket{+} \otimes  \ket{\widetilde{\phi}^{(i_1)}_0} - \ket{-} \otimes \ket{\widetilde{\phi}^{(i_1)}_1} \right).
            \end{align}
            \item Measuring the first qubit in the $X$ basis yields a uniformly random outcome $y_{(i_1, i_2, \ldots, i_D)}$ and collapses the state to one of two cases:
            \begin{itemize}
                \item If $y_{(i_1, i_2, \ldots, i_D)} = 0$: the remaining state collapses to
                \begin{equation}
                    \ket{+} \otimes  \ket{\widetilde{\phi}^{(i_1)}_0} + \ket{-} \otimes \ket{\widetilde{\phi}^{(i_1)}_1}.
                \end{equation}
                \item If $y_{(i_1, i_2, \ldots, i_D)} = 1$: the remaining state collapses to
                \begin{equation}
                    \ket{+} \otimes  \ket{\widetilde{\phi}^{(i_1)}_0} - \ket{-} \otimes \ket{\widetilde{\phi}^{(i_1)}_1}.
                \end{equation}
            \end{itemize}
            The first qubit measured in the $X$ basis is excluded in the above equations.
            \item Applying the $H$ gate irrespective of $y_{(i_1, i_2, \ldots, i_D)}$ turns the state into
            \begin{align}
                &\ket{0} \otimes  \ket{\widetilde{\phi}^{(i_1)}_0} + \ket{1} \otimes \ket{\widetilde{\phi}^{(i_1)}_1} & \mbox{if $y_{(i_1, i_2, \ldots, i_D)} = 0$},\\
                &\ket{0} \otimes  \ket{\widetilde{\phi}^{(i_1)}_0} - \ket{1} \otimes \ket{\widetilde{\phi}^{(i_1)}_1} & \mbox{if $y_{(i_1, i_2, \ldots, i_D)} = 1$}.
            \end{align}
            \item Applying a $Z$ gate to the $(i_2, \ldots, i_D)$-th qubit of $\ket{\phi}$ if $y_{(i_1, i_2, \ldots, i_D)} = 1$ maps the state to $\ket{0} \otimes  \ket{\widetilde{\phi}^{(i_1)}_0} + \ket{1} \otimes \ket{\widetilde{\phi}^{(i_1)}_1} = \ket{\phi^{(i_1)}}$ in both cases of $y_{(i_1, i_2, \ldots, i_D)}$.
        \end{enumerate}
        In Case 1, we have teleported the qubit at position $(i_1, i_2, \ldots, i_D)$ to the qubit at position $(i_1+1, i_2, \ldots, i_D)$ while maintaining the state $\ket{\phi^{(i_1)}}$ unchanged.
        \item Case 2 ($x_{(i_1+1, i_2, \ldots, i_D)} = 1$): We have the following:
        \begin{enumerate}
        \item The qubit at position $(i_1+1, i_2, \ldots, i_D)$ is initialized to $|0\rangle$.
        \item The $CZ$ gate between the current qubit and $|0\rangle$ has no effect since $CZ|b\rangle|0\rangle = |b\rangle|0\rangle$ for any $b \in \{0,1\}$.
        \item Measuring the qubit at position $(i_1, i_2, \ldots, i_D)$ in the $X$ basis is equivalent to measuring the qubit at position $(i_2, \ldots, i_D)$ in the state $\ket{\phi^{(i_1)}}$ in the $X$ basis. This generates a measurement outcome $y_{(i_1, i_2, \ldots, i_D)} \in \{0, 1\}$ and collapses the state $\ket{\phi^{(i_1)}}$ according to the measurement outcome.
        \item We reset the qubit at position $(i_2, \ldots, i_D)$ in the state $|\phi^{(i_1)}\rangle$ to $|0\rangle$.
    \end{enumerate}
    In Case 2, the qubit at position $(i_1, i_2, \ldots, i_D)$ is not teleported, but the  qubit at position $(i_2, \ldots, i_D)$ in the state $\ket{\phi^{(i_1)}}$ is measured and reset to $|0\rangle$.
    \end{itemize}
    \item State Evolution: After processing all qubits at layer $i_1$, the state $|\phi^{(i_1+1)}\rangle$ correctly represents the state after $i_1+1$ layers of the deep QNN. This is because:
    \begin{enumerate}
        \item Qubits that received teleported states (at positions where $x_{(i_1+1, i_2, \ldots, i_D)} = 0$) carry forward the quantum information from the previous layer.
        \item Qubits initialized to $|0\rangle$ (at positions where $x_{(i_1+1, i_2, \ldots, i_D)} = 1$) match the mid-circuit measurement and qubit reset in the deep QNN.
        \item The pattern of entanglement created by the $CZ$ gates within each slice matches the entangling operations in the corresponding layer of the deep QNN.
\end{enumerate}
\end{itemize}
By induction, after all $n_1$ layers, the final state $|\phi^{(n_1)}\rangle$ is the same as the state after all $n_1$ layers of the deep QNN.

\vspace{0.75em}
\noindent \textit{Final Measurement}
\vspace{0.5em}

\noindent At the final slice ($i_1 = n_1$), all qubits are measured in the $X$ basis. The measurement outcomes $y_{(n_1, i_2, \ldots, i_D)}$ represent the final measurement results of the deep circuit. The complete output vector $y$ consists of both the intermediate measurement outcomes $y_{(i_1, i_2, \ldots, i_D)}$ for $1 \leq i_1 < n_1$, and the final measurement outcomes $y_{(n_1, i_2, \ldots, i_D)}$.

\vspace{0.75em}
\noindent \textit{Equivalence of the Output Distribution over $y$}
\vspace{0.5em}

\noindent The distribution over output vectors $y$ generated by this process is equivalent to the output distribution of the depth-$n_1$ circuit because the quantum state evolution through layers $i_1 = 1, 2, \ldots, n_1$ precisely simulates the evolution of the deep circuit through its $n_1$ layers, and the final measurement in the $X$ basis matches the measurement of the deep circuit at the last level. Therefore, despite having constant depth in terms of quantum circuit layout, the shallow QNN generates samples from the same distribution as the randomly sampled depth-$n_1$ QNN on an $(D-1)$-dimensional lattice of size $n_2 \times \cdots \times n_D$.
\end{proof}

We now show that for any polynomial-size quantum circuit $C$, there exists a family of deep quantum circuits containing $C$ such that an instantaneously-deep QNN generates output bitstrings that correspond to sampling a deep QNN in this family and generating an output bitstring from the deep QNN. To establish this result, we first compile the circuit $C$ into a 1D circuit $D$ consisting only of $H, R_Z(\theta), CZ$ gates in a layered pattern.

\begin{fact}[Universality of $H, R_Z(\theta), CZ$ \cite{nielsen2010quantum}] \label{fact:univ-HRZCZ}
For any $\epsilon > 0$ and any $m$-qubit circuit~$C$ with $\ell$ two-qubit gates, we can create an $m$-qubit 1D circuit~$D$ with $r = \mathrm{poly}(m, \ell, \log(1 / \epsilon))$ rounds of the following circuit layers:
\begin{enumerate}
    \item a layer of Hadamard gates on all $m$ qubits;
    \item a layer of $R_Z(\theta)$ gates on all qubits with independent $\theta$ on each of the $m$ qubits;
    \item a layer of $CZ$ gates on some choices of nearest neighbors on the 1D line,
\end{enumerate}
such that the unitaries implemented by $C$ and $D$ differ by at most $\epsilon$ error in diamond distance.
\end{fact}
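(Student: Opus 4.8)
The plan is to compile $C$ into the rigid periodic form of $D$ through a short chain of reductions, each incurring only polynomial overhead, and to observe that since continuous $R_Z(\theta)$ rotations are available the compilation can be made exact up to an irrelevant global phase. First I would reduce to a standard elementary gate set. Since $\mathrm{CNOT} = (I\otimes H)\,\mathrm{CZ}\,(I\otimes H)$, it suffices to realize single-qubit gates and $\mathrm{CNOT}$s. For each of the $\ell$ two-qubit gates of $C$, the KAK (Cartan) decomposition writes it as a product of at most three $\mathrm{CNOT}$s interleaved with single-qubit unitaries; and each single-qubit unitary is, up to a global phase that is invisible to the diamond distance, $R_Z(\alpha)R_X(\beta)R_Z(\gamma)$ with $R_X(\beta) = HR_Z(\beta)H$. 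After this stage $C$ is an equivalent circuit built from $O(\ell+m)$ words in $\{H, R_Z(\cdot)\}$ together with $O(\ell)$ $\mathrm{CZ}$s acting on arbitrary qubit pairs.

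Next I would impose the 1D nearest-neighbor constraint. I replace each $\mathrm{CZ}$ (or $\mathrm{CNOT}$) on qubits $a, b$ by a SWAP network that brings $a$ adjacent to $b$ along the line using $O(|a-b|) = O(m)$ nearest-neighbor $\mathrm{SWAP}$s and then uncomputes them; each $\mathrm{SWAP}$ is three $\mathrm{CNOT}$s, hence a bounded number of nearest-neighbor $\mathrm{CZ}$s and Hadamards. This multiplies the two-qubit count by $O(m)$, leaving a circuit with $O(m\ell)$ nearest-neighbor $\mathrm{CZ}$s and $O(m\ell)$ elementary $H/R_Z$ gadgets. Then I would normalize this into the periodic template ``$H^{\otimes m}$ layer, then an $R_Z$ layer with independent angles, then a $\mathrm{CZ}$ layer on chosen nearest neighbors.'' Since every round already contains a global Hadamard layer, since $H^2 = I$ lets a pair of rounds cancel unwanted Hadamards, and since the diagonal $R_Z$ and $\mathrm{CZ}$ layers mutually commute, a constant number of rounds suffices to realize any of the needed atomic operations — a targeted single-qubit $R_Z$, a targeted $R_X = HR_ZH$, or a chosen set of nearest-neighbor $\mathrm{CZ}$s. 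Concatenating over all $O(m\ell)$ gadgets yields $D$ with $r = O(m\ell)$ rounds implementing the same unitary as $C$ exactly, so the diamond distance is $0$. If in addition the angles of $D$ must be specified to finite precision, I would round each of the $\mathrm{poly}(m,\ell)$ angles to within $\epsilon/\mathrm{poly}(m,\ell)$; a perturbation of size $\delta$ in one gate changes the total unitary by at most $\delta$ in operator norm and the channel by at most $2\delta$ in diamond norm, and such errors add sub-additively, so the total error is at most $\epsilon$ while using $O(\log(m\ell/\epsilon))$ bits per angle, keeping $r = \mathrm{poly}(m,\ell,\log(1/\epsilon))$.

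The individual universality facts — KAK for two-qubit gates, Euler angles, and the exact universality of $\{H, R_Z(\cdot), \mathrm{CZ}\}$ — are standard and I would simply cite them. The real work, and the place most likely to hide an off-by-a-factor, is the last two stages: verifying that the SWAP routing keeps the overhead linear in $m$ rather than blowing up (by routing one gate at a time rather than naively permuting the whole register at once), and verifying that an arbitrary interleaving of $H$, $R_Z$, and $\mathrm{CZ}$ gates really does fit, with only $O(1)$ blow-up per gate, into the inflexible three-layer periodic structure that the statement of $D$ prescribes. This relies on the commutativity of the diagonal layers and on inserting Hadamard-canceling pairs of rounds, and is a careful bookkeeping argument rather than a conceptual difficulty.
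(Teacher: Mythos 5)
Your proposal is correct. Note that the paper offers no proof of this Fact at all --- it is stated as a standard compilation result with a citation to Nielsen and Chuang --- so there is no in-paper argument to compare against; your write-up supplies the missing chain of reductions, and every link holds. The KAK/Euler reduction to $\{H, R_Z(\cdot), \mathrm{CZ}\}$ and the $O(m)$-overhead SWAP routing onto the line are textbook. The only step that is specific to this Fact is forcing the resulting gate sequence into the rigid template in which \emph{every} round begins with $H^{\otimes m}$, and your mechanism for this is sound: a two-round gadget with zero rotation angles and an empty $\mathrm{CZ}$ layer in the appropriate slots realizes, exactly, a targeted $R_Z$ layer (the two Hadamard layers cancel around a trivial middle), a targeted $R_X = HR_ZH$ (the rotation sits between the two Hadamard layers, and $H^2=I$ on all untouched qubits), or an arbitrary nearest-neighbor $\mathrm{CZ}$ layer. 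This gives $r = O(m\ell)$ rounds and an \emph{exact} implementation up to global phase, which is strictly stronger than the stated bound; the $\log(1/\epsilon)$ slack in the Fact is then only needed if the angles must be written down to finite precision, exactly as you observe, and your accounting of how per-gate operator-norm perturbations accumulate into the diamond norm is correct. The one presentational caveat is that your argument implicitly requires $\theta=0$ rotations and empty $\mathrm{CZ}$ layers to be admissible choices within a ``round''; both are clearly intended by the statement (``some choices of nearest neighbors'' includes the empty choice), and indeed the paper's later use of this Fact in Definition~\ref{def:family-dqc} and Corollary~\ref{cor:inst-deep-QNN-any} depends on $D$ having precisely the layered form your construction produces.
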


We now define the family of deep quantum circuits, which corresponds to injecting different patterns of single-qubit $Z$ gates in each round of the circuit layers.

\begin{definition}[A family of deep quantum circuits] \label{def:family-dqc}
For any $\epsilon > 0$ and any $m$-qubit circuit~$C$ with $\ell$ two-qubit gates, consider the $m$-qubit 1D circuit $D$ with $r = \mathrm{poly}(m, \ell, \log(1 / \epsilon))$ rounds given in Fact~\ref{fact:univ-HRZCZ}. We define a family of deep quantum circuits $\{ D(y') \}_{y' \in \{0, 1\}^{(r-1) \times m}}$, where each circuit $D(y)$ is the same as $D$ but applies a $Z$ gate on the $i$-th qubit at the end of the $j$-th round if $y_{(j-1) \times m + i} = 1$ for all $i \in \{1, \ldots, m\}$ and $j \in \{1, \ldots, r-1\}$. This family contains a circuit $D(0^{(r-1) \times m})$ such that $D(0^{(r-1) \times m}) = D$ is $\epsilon$-close to $C$. From the $\log(1/\epsilon)$ scaling, we can choose $\epsilon$ to be exponentially small, so $D$ is exponentially close to $C$.
\end{definition}

The following corollary follows from Lemma~\ref{lemma:deep-from-shallow} and Fact~\ref{fact:univ-HRZCZ} and shows that the instantaneously-deep QNN samples a deep quantum circuits in the family $\{ D(y') \}$ and simulates the deep quantum circuit.

\begin{corollary}[Instantaneously-deep QNN for any deep quantum circuit] \label{cor:inst-deep-QNN-any}
For any $\epsilon > 0$ and any $m$-qubit circuit $C$ with $\ell$ two-qubit gates, consider the family of $m$-qubit 1D deep circuits $\{ D(y') \}$ with $r = \mathrm{poly}(m, \ell, \log(1 / \epsilon))$ given in Definition~\ref{def:family-dqc}. There exists an instantaneously-deep QNN over $n = m \times r$ qubits with a geometry $G$ equal to the 2D $m \times r$ square lattice with some edges removed, such that when the input is the all-zero bitstring $x = 0^n$, the output distribution over $y \in \{0, 1\}^n$ corresponds to sampling $y' \in \{0, 1\}^{(r-1) \times m}$ uniformly and setting the first $(r-1)\times m$ bits of $y$ to be equal to $y'$ and the last $m$ bits of $y$ to be equal to the output $m$-bit string from measuring the output state of the deep quantum circuit $D(y')$ in the standard basis.
\end{corollary}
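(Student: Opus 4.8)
The plan is to obtain Corollary~\ref{cor:inst-deep-QNN-any} by directly composing the gate-set compilation of Fact~\ref{fact:univ-HRZCZ} with the shallow-to-deep equivalence of Lemma~\ref{lemma:deep-from-shallow}, specialized to dimension $D=2$. First I would invoke Fact~\ref{fact:univ-HRZCZ} to compile the given $m$-qubit circuit $C$ with $\ell$ two-qubit gates into an $m$-qubit $1$D circuit $D$ consisting of $r = \mathrm{poly}(m,\ell,\log(1/\epsilon))$ rounds of the layered pattern (a Hadamard layer on all qubits; an $R_Z$ layer with independent angles; a $CZ$ layer on some nearest-neighbor pairs of the line), with $\|D - C\|_\diamond \leq \epsilon$. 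Definition~\ref{def:family-dqc} then packages $D$ into the family $\{D(y')\}_{y' \in \{0,1\}^{(r-1)\times m}}$, where $D(y')$ inserts a $Z$ on qubit $i$ at the end of round $j$ exactly when the corresponding bit of $y'$ is $1$, and $D(0^{(r-1)m}) = D$ is the $\epsilon$-approximation to $C$.

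Next I would exhibit the instantaneously-deep QNN realizing this family. Take $G$ to be the $2$D $m \times r$ square lattice, with the first coordinate $i_1 \in \{1,\dots,r\}$ encoding circuit depth and the second coordinate $i_2 \in \{1,\dots,m\}$ indexing the qubit line. Keep every inter-slice edge (as Lemma~\ref{lemma:deep-from-shallow} requires whenever $i_1 \neq i_1'$), and within the slice at depth $i_1$ keep the edge $\{(i_1,i_2),(i_1,i_2+1)\}$ exactly when round $i_1$ of $D$ has a $CZ$ on qubits $i_2, i_2+1$; set the QNN parameter $\theta_{(i_1,i_2)}$ to the $R_Z$ angle of qubit $i_2$ in round $i_1$. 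Since only within-slice edges were deleted, $G$ meets the hypotheses of Lemma~\ref{lemma:deep-from-shallow} with $n_1 = r$ and $n_2 = m$, so the IDQNN's output distribution is exactly the one produced by that lemma's slice-by-slice procedure on a $1$D line of $m$ qubits with depth $r$.

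I would then specialize the lemma's procedure to $x = 0^n$. With every input bit equal to $0$, Case~1 ($x_{(i_1+1,\cdot)}=0$) is triggered at every transition, so no mid-circuit measurement-and-reset ever occurs: the step from layer $i_1$ to layer $i_1+1$ emits a uniformly random bit $y_{(i_1,i_2)}$ from the intermediate $X$-basis measurement and transfers the quantum state forward, the only residue being a byproduct $Z^{y_{(i_1,i_2)}}$ on the transferred qubit. Inspecting the lemma's construction, this residual $Z$ occurs after round $i_1$'s $R_Z$ and $CZ$ layers and before round $i_1+1$'s Hadamard, i.e.\ ``at the end of round $i_1$''; hence, collecting the bits $\{y_{(i_1,i_2)}\}$ over $i_1 \in \{1,\dots,r-1\}$ into $y'$, the state entering the final slice is precisely the output of $D(y')$ in the sense of Definition~\ref{def:family-dqc}. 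Measuring that final slice produces the last $m$ bits of $y$, the first $(r-1)m$ bits are $y'$, and $y'$ is uniform because the intermediate outcomes are independent fair coins, so the joint law is ``uniform $y'$'' times ``readout of $D(y')$'', as claimed. A minor convention to dispatch along the way is that the IDQNN reads the final slice in the $X$ basis while the corollary phrases the readout in the standard basis; these are identified by absorbing a single $H^{\otimes m}$ into either the compilation of $D$ or the readout, which changes nothing structural.

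I expect the main obstacle to be exactly this last bookkeeping: one must propagate the Pauli byproducts generated by the feedforward-free teleportations through all $r-1$ transitions and verify that, in aggregate, they reproduce precisely the end-of-round $Z$-injection pattern that defines $\{D(y')\}$ and not some other Pauli dressing, and then check that the joint distribution of $(y',\ \text{final-slice outcomes})$ factorizes as stated. Everything else — matching the lattice structure of $G$ to the layer/qubit layout of $D$, and confirming that the restricted geometry still satisfies Lemma~\ref{lemma:deep-from-shallow}'s hypotheses — is routine.
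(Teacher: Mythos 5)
Your proposal is correct and follows exactly the route the paper intends: the paper states the corollary as an immediate consequence of Fact~\ref{fact:univ-HRZCZ} and Lemma~\ref{lemma:deep-from-shallow} (specialized to $D=2$ with $x=0^n$) without writing out a separate proof, and your composition of the two — matching intra-slice edges and angles to the rounds of $D$, identifying the teleportation byproduct $H Z^{y}$ with the next round's Hadamard layer plus the end-of-round $Z$-injection defining $D(y')$, and noting uniformity of the intermediate outcomes — is precisely that argument made explicit. The two bookkeeping points you flag (propagating the Pauli byproducts and the $X$-basis versus standard-basis readout of the final slice) are exactly the details the paper leaves implicit, and your resolutions are the right ones.
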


\subsection{Quantum advantage for learning to generate classical bitstrings}
\label{app:quantum-advantage-bitstring}

We begin by stating a conjecture regarding sampling deep quantum circuits and generating bitstrings from them.

\begin{conjecture}[Classical hardness of sampling deep quantum circuits] \label{conj:Hard-RCS}
There exists a family of polynomial-size quantum circuits $\{C_n\}_{n \geq 1}$ with $\{D_n(y')\}_{y'}$ constructed from $C_n$ as in Definition~\ref{def:family-dqc}, such that no families of polynomial-size classical circuits can take $n \geq 1$ and many random bits as input and output a uniformly random bitstring $y'$ and an output bitstring resulting from measuring the output state $D_n(y') \ket{0^n}$ in the standard basis.
\end{conjecture}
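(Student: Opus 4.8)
The plan is to establish Conjecture~\ref{conj:Hard-RCS} under the standard hypothesis that the non-uniform polynomial hierarchy does not collapse, by running the classical post-selection argument for quantum sampling hardness (Bremner--Jozsa--Shepherd; Terhal--DiVincenzo; Aaronson--Arkhipov) against the family of deep circuits produced by Corollary~\ref{cor:inst-deep-QNN-any}. \textbf{Setup.} Choose $\{C_n\}$ to be a universal family of polynomial-size circuits equipped with a designated post-selection register, chosen so that post-selected evaluations of $\{C_n\}$ capture all of $\mathsf{PostBQP}$. Compile each $C_n$ (width $m=\mathrm{poly}(n)$) via Fact~\ref{fact:univ-HRZCZ} into a circuit $D_n$ built from $r=\mathrm{poly}(n)$ layered rounds of $H$, $R_Z(\theta)$, and $CZ$ gates on a 1D line, with compilation error $\epsilon_n=\exp(-\mathrm{poly}(n))$, taking the polynomial in the exponent large enough to survive the post-selection amplification below (this keeps $r$ polynomial, since $r$ depends only on $\log(1/\epsilon_n)$). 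Form the family $\{D_n(y')\}_{y'\in\{0,1\}^{(r-1)m}}$ of Definition~\ref{def:family-dqc}. By Corollary~\ref{cor:inst-deep-QNN-any}, the instantaneously-deep QNN over $mr$ qubits with input $x=0^{mr}$ generates exactly the joint distribution
\[
\mathcal{D}_n(y',z)\;=\;2^{-(r-1)m}\,\bigl|\langle z|\,D_n(y')\,|0^m\rangle\bigr|^2,\qquad y'\in\{0,1\}^{(r-1)m},\; z\in\{0,1\}^m,
\]
which is precisely the distribution Conjecture~\ref{conj:Hard-RCS} asks a classical circuit family to reproduce.

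\textbf{Reduction.} Suppose, toward a contradiction, that a $\mathsf{P/poly}$ family of classical circuits samples $\mathcal{D}_n$ exactly. Given any $L\in\mathsf{PostBQP}$, embed its post-selected circuit inside $C_n$, run the classical sampler, and then post-select on two events: that the first $(r-1)m$ output bits equal $0^{(r-1)m}$ (so the effective circuit is $D_n(0^{(r-1)m})=D_n$, which is $\epsilon_n$-close to $C_n$ in diamond distance) and that $C_n$'s post-selection register reports the required value. Since $\mathsf{PostBPP}=\mathsf{BPP}_{\mathrm{path}}$ tolerates post-selection on any event of nonzero probability, and $\epsilon_n$ was taken exponentially small relative to the (exponentially small but nonzero) post-selection success probability so that error amplification is harmless, the post-selected output bit decides $L$ with bounded error; hence $\mathsf{PostBQP}\subseteq\mathsf{PostBPP}/\mathrm{poly}$. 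By Aaronson's identity $\mathsf{PostBQP}=\mathsf{PP}$, this gives $\mathsf{PP}\subseteq\mathsf{PostBPP}/\mathrm{poly}\subseteq\mathsf{BPP}^{\mathsf{NP}}/\mathrm{poly}$, so $\mathsf{PP}$ sits within a fixed level of the advice-augmented polynomial hierarchy. By Toda's theorem $\mathsf{PH}\subseteq\mathsf{P}^{\mathsf{PP}}$, and the standard Karp--Lipton-style argument then collapses the non-uniform polynomial hierarchy --- contradicting the hypothesis, so no such $\mathsf{P/poly}$ sampler exists, which is exactly Conjecture~\ref{conj:Hard-RCS} for this choice of $\{C_n\}$. (Equivalently one may bypass explicit post-selection using Stockmeyer approximate counting: $\Pr_r[\mathrm{sampler}(r)=(0^{(r-1)m},z)]=2^{-(r-1)m}|\langle z|D_n|0^m\rangle|^2$ becomes multiplicatively approximable in $\mathsf{FBPP}^{\mathsf{NP}}/\mathrm{poly}$, contradicting worst-case $\#\mathsf{P}$-hardness of the amplitude for a suitable $C_n$ and $z$.)

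\textbf{Main obstacle and caveat.} The complexity-theoretic chain is routine; the delicate part is checking that this very restricted family is genuinely post-selection universal --- that the layered 1D $H$/$R_Z$/$CZ$ compilation of Fact~\ref{fact:univ-HRZCZ}, together with the $y'$-indexed $Z$-injection structure of Definition~\ref{def:family-dqc}, still lets post-selection (on $y'$ and on ancillas) recover all of $\mathsf{PostBQP}$, and that the exponentially small compilation error does not spoil the post-selected decision after amplification. A secondary subtlety is carrying the argument through in the non-uniform ($\mathsf{P/poly}$) setting and invoking the correct non-uniform Karp--Lipton collapse. Finally, this argument proves only \emph{exact}-sampling hardness, which is all the conjecture asserts; upgrading to the noise-robust regime --- ruling out samplers merely within constant total-variation distance --- would additionally require anticoncentration of $\mathcal{D}_n$ (plausibly obtained by averaging over the random $Z$-injections $y'$, which randomize the circuit) together with an average-case $\#\mathsf{P}$-hardness conjecture for the amplitudes, as in the random-circuit-sampling literature; that last ingredient is open and is what keeps the full statement in the realm of conjecture.
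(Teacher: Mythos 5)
Your argument is essentially the paper's own proof of this statement (given as Proposition~\ref{prop:Hard-RCS}): assume a polynomial-size classical sampler exists, postselect on $y'=0^{(r-1)m}$ so that the sampled circuit is $D_n(0^{(r-1)m})$, which is exponentially close to $C_n$, conclude $\mathsf{PostBQP}\subseteq\mathsf{PostBPP/poly}$, and then combine $\mathsf{PostBQP}=\mathsf{PP}$, Toda's theorem, and a non-uniform Karp--Lipton-style containment to collapse the non-uniform polynomial hierarchy. The only cosmetic difference is that the paper runs the reduction in contrapositive form (assuming a sampler for \emph{every} circuit family, which sidesteps the need to verify that a single fixed universal $\{C_n\}$ with input $\ket{0^n}$ is postselection-complete for $\mathsf{PostBQP}$), but the complexity-theoretic chain is identical.
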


Using standard techniques in complexity theory \cite{gao2017quantum, bermejo2018architectures, haferkamp2020closing, bergamaschi2024quantum}, Conjecture~\ref{conj:Hard-RCS} can be established by relying on more fundamental results and conjectures in complexity theory, namely that the non-uniform polynomial hierarchy does not collapse. This is formally given by the following proposition.

\begin{proposition}
Assuming the non-uniform polynomial hierarchy does not collapse, Conjecture \ref{conj:Hard-RCS} is true. \label{prop:Hard-RCS}
\end{proposition}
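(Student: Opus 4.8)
The plan is to follow the now-standard route for proving classical-sampling hardness of quantum circuit families \cite{gao2017quantum, bermejo2018architectures, haferkamp2020closing, bergamaschi2024quantum}: assume a classical sampler exists, feed it to Stockmeyer's approximate-counting algorithm to pull a $\#\mathsf{P}$-hard quantity down into a low level of the polynomial hierarchy, and then conclude via Toda's theorem. I would argue by contradiction. Suppose there is a family of polynomial-size classical randomized circuits $\{\mathcal{C}_n\}$ that, on input $n$ and uniform random bits, outputs a pair $(y', z)$ with $y'$ uniform over $\{0,1\}^{(r-1)\times m}$ and $z$ distributed as the standard-basis measurement of $D_n(y')\ket{0^n}$, with $\{D_n(y')\}$ built from $C_n$ as in Definition~\ref{def:family-dqc}. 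Note that the conjecture concerns the \emph{deep} circuit family $\{D_n(y')\}$ directly, so the only novelty over the textbook deep random-circuit-sampling argument is the uniform-$y'$ wrapper, which turns out to be harmless.

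The first step is to exploit our freedom in choosing $C_n$. By Fact~\ref{fact:univ-HRZCZ} the layered gate set $\{H, R_Z(\theta), CZ\}$ is universal, so I may take $C_n$ to already be in this normal form, whence $D_n(0^{(r-1)\times m}) = D_n = C_n$ with \emph{no} compilation error. I then choose $C_n$, together with a post-selection gadget, so that for a designated output string $z_0$ the probability $|\langle z_0|C_n|0^m\rangle|^2$ encodes, up to an efficiently computable normalization, a $\#\mathsf{P}$-complete quantity such as the number of satisfying assignments of an associated formula. That such a family exists is the standard hardness-of-amplitudes fact underlying all of these arguments: post-selected polynomial-size quantum circuits decide all of $\mathsf{PostBQP}=\mathsf{PP}$, so multiplicatively approximating output probabilities of poly-size circuits is $\#\mathsf{P}$-hard in the worst case. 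Because the present statement concerns \emph{exact} sampling, I need no anticoncentration and no average-case hardness conjecture here.

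The second step is the Stockmeyer reduction. Given the hypothetical sampler $\mathcal{C}_n$, Stockmeyer's theorem lets me approximate, in randomized polynomial time with an $\mathsf{NP}$ oracle, the probability that $\mathcal{C}_n$ emits any prescribed pair $(y'_0, z_0)$ to within a $1\pm 1/\mathrm{poly}(n)$ multiplicative factor. Taking $y'_0 = 0^{(r-1)\times m}$ and using exactness of the sampler, this quantity equals $2^{-(r-1)m}\,|\langle z_0 | C_n | 0^m\rangle|^2$; multiplying by the known factor $2^{(r-1)m}$ yields a multiplicative approximation to $|\langle z_0|C_n|0^m\rangle|^2$ itself. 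Isolating the single slice $y'=0$ costs nothing exponential, precisely because Stockmeyer's guarantee is multiplicative and the omitted factor $2^{-(r-1)m}$ is exactly known. Treating a description of $\mathcal{C}_n$ as polynomial advice, this computes a $\#\mathsf{P}$-hard function --- hence all of $\mathsf{PP}$ --- inside $\mathsf{BPP}^{\mathsf{NP}}/\mathrm{poly}$, which sits in the third level of the non-uniform polynomial hierarchy. By Toda's theorem $\mathsf{PH}\subseteq\mathsf{P}^{\mathsf{PP}}$ (which relativizes and tolerates polynomial advice), the whole non-uniform polynomial hierarchy then collapses to its third level, contradicting the hypothesis. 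Hence no such $\{\mathcal{C}_n\}$ exists, which is exactly Conjecture~\ref{conj:Hard-RCS}.

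As for where the real work lies: the conceptual skeleton above is essentially textbook, and the two places that demand care are (i) pinning down the instantiation of $C_n$ and its post-selection gadget so that the $y'=0$ slice of the family $\{D_n(y')\}$ of Corollary~\ref{cor:inst-deep-QNN-any} delivers the \emph{exact} output probabilities of a circuit drawn from a family with provably $\#\mathsf{P}$-hard amplitudes, and (ii) the non-uniform bookkeeping, i.e.\ carrying the ``$/\mathrm{poly}$'' faithfully through Stockmeyer and Toda so the conclusion is a collapse of the \emph{non-uniform} hierarchy. Neither is an obstacle in principle --- both are handled in \cite{gao2017quantum, bermejo2018architectures, haferkamp2020closing} --- but step (i) is the one I would spend the most effort making watertight.
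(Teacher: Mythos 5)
Your proof is correct in outline, but it takes the other of the two classic routes to sampling-hardness results, so it is worth contrasting with what the paper actually does. The paper never invokes Stockmeyer: it assumes the classical sampler exists, feeds it as advice to a postselected classical machine, and postselects on the event $y'=0^{(r-1)\times m}$ together with the relevant output bit, thereby simulating postselection on $C_n$ directly. This yields $\mathsf{PostBQP}\subseteq\mathsf{PostBPP/poly}$, and the collapse then follows from the chain $\mathsf{PH}=\mathsf{P}^{\mathsf{PP}}=\mathsf{P}^{\mathsf{PostBQP}}\subseteq\mathsf{P}^{\mathsf{PostBPP/poly}}\subseteq\mathsf{\Sigma_3/poly}$, using Aaronson's $\mathsf{PostBQP}=\mathsf{PP}$, Toda's theorem, and the Han--Hemaspaandra--Thierauf containment with advice carried through. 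Your Stockmeyer route replaces the last of these ingredients with approximate counting in $\mathsf{BPP}^{\mathsf{NP}}/\mathrm{poly}$, and it handles the uniform-$y'$ wrapper by dividing out the exactly known factor $2^{-(r-1)m}$ rather than by postselecting on $y'=0$; both devices are sound. What the paper's route buys is that it never needs your step (i) at all: there is no need to instantiate $C_n$ with a gadget whose amplitudes are provably $\#\mathsf{P}$-hard to approximate multiplicatively, because the reduction operates at the level of decision problems in $\mathsf{PostBQP}$ for an arbitrary language, with $C_n$ simply taken to be the (normal-form compilation of the) circuit deciding that language. Your route, by contrast, makes the quantitative content more explicit (a multiplicative estimate of a single output probability), but the inference ``$\mathsf{PostBQP}=\mathsf{PP}$, hence multiplicative approximation of output probabilities is $\#\mathsf{P}$-hard'' is the one place where your sketch is glib: a multiplicative approximation of one probability does not literally compute a $\#\mathsf{P}$ function, and to close the argument you would either approximate both the postselection probability and the joint probability so as to decide the $\mathsf{PostBQP}$ promise, or pass through $\mathsf{coC_{=}P}$ and Toda--Ogiwara. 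You correctly flag this as the point needing care; the paper's postselection route sidesteps it entirely, which is presumably why it was chosen.
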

\begin{proof}
We prove the proposition by contradiction. Assume that Conjecture \ref{conj:Hard-RCS} is false. This means that for any family of polynomial-size quantum circuits $\{C_n\}_{n \geq 1}$, there exists a family of polynomial-size classical circuits $\{ C^{\mathrm{(cl)}}_n \}_{n \geq 1}$ that can sample from the output distribution described in Conjecture \ref{conj:Hard-RCS}. We now show that this implies the collapse of the non-uniform polynomial hierarchy.

Consider the complexity class $\mathsf{PostBQP}$, which contains all languages that can be recognized by a uniform family of polynomial-size quantum circuits $\{C_n\}_n$ with the ability to postselect on the last qubit being zero. Also, consider the complexity class $\mathsf{PostBPP/poly}$, which contains all languages that can be recognized by a probabilistic polynomial-time Turing machine that takes in a polynomial-size advice string and has the ability to postselect on bits. For any language $L$ in $\mathsf{PostBQP}$, we consider the uniform family of polynomial-size quantum circuits $\{C_n\}_n$ for recognizing $L$, and the associated (possibly non-uniform) family of polynomial-size classical circuits $\{C^{\mathrm{(cl)}}_n\}_n$ that can sample a uniform random bitstring $y'$ and a bitstring from measuring $D_n(y') \ket{0^n}$ in the standard basis. We now construct a probabilistic polynomial-time Turing machine $M$ that takes in a polynomial-size advice string and has the ability to postselect on bits to recognize $L \in \mathsf{PostBQP}$. For each input size $n$, we set the advice string to be a specification of $C^{\mathrm{(cl)}}_n$. Recall from Definition~\ref{def:family-dqc} that when $y'$ is all zero, we have $D_n(y')$ is exponentially close to $C_n$. By running $C^{\mathrm{(cl)}}_n$ as specified by the advice string on the Turing machine $M$ and postselecting on obtaining $y'$ being all zero and the last bit of the output from measuring $D_n(y') \ket{0^n}$ being zero, $M$ can simulate postselection on $C_n$ to recognize $L \in \mathsf{PostBQP}$. Hence, this implies that $\mathsf{PostBQP} \subseteq \mathsf{PostBPP/poly}$.

We now restate three well-known results from complexity theory. First, from \cite{aaronson2005quantum} showing the power of postselection in quantum computing, we have $\mathsf{PostBQP} = \mathsf{PP}$. Second, from Toda's theorem \cite{toda1991pp} showing the power of $\mathsf{PP}$, we have $\mathsf{PH} = \mathsf{P}^{\mathsf{PP}}$. Finally, following \cite{han1997threshold} with advice strings being passed through in the proof, we have $\mathsf{P}^{\mathsf{PostBPP/poly}} \subseteq \mathsf{\Sigma_3 /poly}$. Together, by combining $\mathsf{PostBQP} \subseteq \mathsf{PostBPP/poly}$, we have
\begin{equation}
\mathsf{PH} = \mathsf{P}^{\mathsf{PP}} = \mathsf{P}^{\mathsf{PostBQP}} \subseteq \mathsf{P}^{\mathsf{PostBPP/poly}} \subseteq \mathsf{\Sigma_3 /poly}.    
\end{equation}
Because the third level of non-uniform polynomial hierarchy is contained in the non-uniform polynomial hierarchy, $\mathsf{\Sigma_3 /poly} \subseteq \mathsf{PH/poly}$, we have $\mathsf{PH/poly} = \mathsf{\Sigma_3 /poly}$, which contradicts the assumption.
\end{proof}

The result above focuses on exact sampling. To extend the classical hardness to approximate sampling up to some small additive error in total variation distance, one can rely on the classical hardness of some counting problems; see \cite{gao2017quantum, bermejo2018architectures, haferkamp2020closing, bergamaschi2024quantum}. From Conjecture~\ref{conj:Hard-RCS} and the results established regarding the efficient learning of shallow QNNs and the properties of instantaneously-deep QNN, we can establish the following theorems.

\begin{theorem}[Generative advantage with tomographically-complete shallow QNNs] \label{thm:qadv-tomo-shallow-qnn}
Learning generative shallow QNNs that are tomographically-complete is quantumly easy but classically hard under Conjecture~\ref{conj:Hard-RCS}.
\end{theorem}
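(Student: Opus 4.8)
The plan is to establish both halves of the statement — quantum easiness and classical hardness — by leveraging the machinery already built up in the excerpt. For \emph{quantum easiness}, I would argue as follows. A tomographically-complete generative QNN (Definition~\ref{def:tomo-comp-gen-QNNs}) is associated with a constant-depth parameterized circuit $C(\vec\theta)$, and by the defining conditions the conditional distribution $p_{\mathrm{QNN}}(y\mid x;\vec\theta)$ encodes complete knowledge of the unitary $C(\vec\theta)$. Concretely, the single-qubit preparation blocks $x^{(i,\mathsf{st})}$ span the full single-qubit state space and the single-qubit measurements $x^{(i,\mathsf{mt})}$ are tomographically complete, so by varying $x$ over an appropriate polynomial-size set one obtains exactly the kind of randomized-measurement data described in the Randomized measurement dataset of Appendix~\ref{sec:math-learning-QNC0}. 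Given a dataset of $N = \mathcal{O}(n^2\log(n/\delta)/\varepsilon^2)$ input-output pairs, I would invoke Theorem~\ref{thm:main-learn-QNC0} (learning shallow quantum circuits) — or, since the architecture is known, Algorithm~1 with its strongly-convex loss from Theorem~\ref{thm:strongly-convex-landscape} — to recover a constant-depth circuit whose channel is $\varepsilon$-close in diamond norm to $C(\vec\theta)(\cdot)C(\vec\theta)^\dagger$. A quantum computer can then run this learned circuit on the encoded input $\ket{\psi_x}$ and measure in $\mathcal{M}_x$ to generate samples, so inference is quantumly efficient; and since the learning step (Algorithm~1) is a convex least-squares fit, it is even \emph{classically} efficient, which only strengthens the advantage claim on the learning side.

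For \emph{classical hardness}, I would show that some tomographically-complete shallow QNN embeds a circuit from the hard family of Conjecture~\ref{conj:Hard-RCS}, so that efficient classical inference would contradict that conjecture. The key ingredient is Corollary~\ref{cor:inst-deep-QNN-any}: for the hard circuit family $\{C_n\}$ with associated deep circuits $\{D_n(y')\}$ (Definition~\ref{def:family-dqc}), there is an instantaneously-deep QNN over $n = m\times r$ qubits whose output on input $x = 0^n$ is exactly a uniform sample $y'$ together with the standard-basis measurement of $D_n(y')\ket{0^n}$. The IDQNN (Definition~\ref{def:inst-deep-QNN}) is itself a constant-depth generative QNN, and one can promote it to a tomographically-complete one by augmenting the single-qubit encodings and measurements as in Definition~\ref{def:tomo-comp-gen-QNNs} while leaving the $x = 0^n$ behavior on the relevant register unchanged (the extra tomographic degrees of freedom only enlarge the input/output alphabet). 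If a classical algorithm could learn and then efficiently sample from an arbitrary tomographically-complete shallow QNN, then in particular it could sample from this one on input $x = 0^n$ — but that is precisely the task Conjecture~\ref{conj:Hard-RCS} (equivalently, under Proposition~\ref{prop:Hard-RCS}, the non-collapse of the non-uniform polynomial hierarchy) forbids. Hence classical inference is hard.

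The main obstacle I anticipate is the bookkeeping at the interface between the two definitions: the IDQNN of Definition~\ref{def:inst-deep-QNN} is \emph{not literally} tomographically complete as written (its input bits only select from $\{\ket{0},\ket{+}\}$ and it measures only in the $X$ basis), so one must carefully specify the enlarged encoding $\ket{\psi_x}$ and measurement family $\mathcal{M}_x$ that makes the model tomographically complete without destroying the deep-circuit-sampling behavior on the all-zero input. I would handle this by partitioning the input bitstring $x$ into the "functional" block that reproduces the IDQNN construction of Corollary~\ref{cor:inst-deep-QNN-any} and additional "tomography" blocks that, when set to a canonical value, recover exactly the IDQNN, but when varied sweep out a spanning set of single-qubit states and a tomographically complete measurement set on every qubit — e.g.\ by inserting a single-qubit unitary before preparation and a Pauli-basis choice before each readout. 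The remaining steps (verifying the $\mathcal{O}(1)$ depth is preserved, checking the data-access requirements of Theorem~\ref{thm:main-learn-QNC0} are met by sampling $x$ from a suitable distribution, and assembling the contradiction with Conjecture~\ref{conj:Hard-RCS}) are then routine. A secondary subtlety is that Conjecture~\ref{conj:Hard-RCS} as stated concerns exact sampling; to get hardness of \emph{learning to generate} up to small total-variation error one should additionally cite the approximate-sampling hardness results referenced after Proposition~\ref{prop:Hard-RCS} (from~\cite{gao2017quantum, bermejo2018architectures, haferkamp2020closing, bergamaschi2024quantum}), which I would do explicitly when stating the final hardness conclusion.
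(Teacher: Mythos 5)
Your proposal is correct and follows essentially the same route as the paper: quantum easiness via the equivalence of the tomographically-complete QNN's data with the randomized-measurement dataset and an appeal to Theorem~\ref{thm:main-learn-QNC0}, and classical hardness by embedding the hard family of Conjecture~\ref{conj:Hard-RCS} through Corollary~\ref{cor:inst-deep-QNN-any} into a tomographically-complete variant of the IDQNN whose canonical input recovers the all-zero-input behavior. The "bookkeeping" fix you anticipate is exactly the paper's definition of tomographically-complete instantaneously-deep QNNs (five input bits per qubit selecting among the six stabilizer states and the three Pauli bases, with the $\ket{+^n}$/$X$-basis setting reproducing the original IDQNN).
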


\begin{theorem}[Generative advantage with instantaneously-deep QNNs] \label{thm:qadv-inst-deep-qnn}
Learning generative instantaneously-deep QNNs is quantumly easy but classically hard under Conjecture~\ref{conj:Hard-RCS}.
\end{theorem}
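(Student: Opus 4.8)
The plan is to establish both Theorem~\ref{thm:qadv-tomo-shallow-qnn} and Theorem~\ref{thm:qadv-inst-deep-qnn} by assembling three ingredients already developed in the excerpt: (i) the efficient classical learnability of the relevant shallow QNN parameters from a polynomial-size randomized measurement (or local measurement) dataset; (ii) the exact shallow-to-deep mapping of Lemma~\ref{lemma:deep-from-shallow} together with Corollary~\ref{cor:inst-deep-QNN-any}, which shows an instantaneously-deep QNN (and likewise a tomographically-complete shallow QNN, which encodes the full unitary) implements a sampler for the family $\{D_n(y')\}$; and (iii) the classical hardness of sampling from that family, Conjecture~\ref{conj:Hard-RCS} (justified unconditionally modulo non-collapse of the non-uniform polynomial hierarchy by Proposition~\ref{prop:Hard-RCS}). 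The overall structure is: ``quantumly easy'' follows from the learning algorithm plus the fact that inference is just running a constant-depth quantum circuit; ``classically hard'' follows because a hypothetical efficient classical learner-and-sampler would yield an efficient classical sampler for $\{D_n(y')\}$, contradicting Conjecture~\ref{conj:Hard-RCS}.

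For the \emph{quantumly easy} direction I would argue as follows. For instantaneously-deep QNNs (Theorem~\ref{thm:qadv-inst-deep-qnn}): by Definition~\ref{def:inst-deep-QNN} the model has only $n$ real parameters $\vec\theta$, one $R_Z$ angle per qubit, and the architecture is fixed constant depth. As spelled out in Appendix~\ref{app:num-learning-method}, the local decoupling property of the training input distribution $D(x)$ lets one estimate each $\theta_i$ (equivalently each $\beta^i$) from the marginal statistics of output bit $i$ conditioned on the event that bit $i$ of $x$ is $0$ and its graph-neighbors are $1$; a Chernoff/Hoeffding bound gives $N=\mathcal{O}(\mathrm{poly}(n)/\varepsilon^2)$ samples to reach precision $\varepsilon$, and this estimation is a classical computation on the classical dataset. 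Thus training is even classically efficient. Inference --- generating $y\sim p_{\mathrm{QNN}}(y\mid x;\vec\theta)$ --- is performed by preparing $\ket{\psi_x}$, applying the constant-depth circuit, and measuring in the $X$ basis, which a quantum computer does in constant time. For tomographically-complete shallow QNNs (Theorem~\ref{thm:qadv-tomo-shallow-qnn}), I would instead invoke the main learning theorem, Theorem~\ref{thm:main-learn-QNC0} (and Lemma~\ref{lem:learn-H-evolved-ops}): from $N=\mathcal{O}(n^2\log(n/\delta)/\varepsilon^2)$ samples one learns a constant-depth circuit description of $C(\vec\theta)$ up to diamond-norm error $\varepsilon$; since the generative model is a fixed post-processing of outputs of $C(\vec\theta)$ on product inputs, the learned distribution is $\mathcal{O}(\varepsilon)$-close in TVD, and inference is again a constant-depth quantum computation. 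This handles quantum easiness on the nose.

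For the \emph{classically hard} direction, suppose toward contradiction that a polynomial-time classical algorithm $\mathcal{A}$ could, given a polynomial-size training sample, output a classical sampler that generates $y$ from $p_{\mathrm{QNN}}(y\mid x;\vec\theta)$ up to small TVD for each $x$. Take the circuit family $\{C_n\}$ and the associated family $\{D_n(y')\}$ from Conjecture~\ref{conj:Hard-RCS}, and use Corollary~\ref{cor:inst-deep-QNN-any} to build, for each $n$, an instantaneously-deep QNN over $m\times r=\mathrm{poly}(n)$ qubits whose output on input $x=0^n$ is exactly: a uniform $y'$ in the first $(r-1)m$ bits together with a standard-basis sample of $D_n(y')\ket{0^n}$ in the last $m$ bits. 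One generates a polynomial-size labeled training set for this IDQNN (the $\theta$'s are explicitly known, so the data can be produced classically under $D(x)$), feeds it to $\mathcal{A}$, and then runs the resulting classical sampler on input $0^n$; reading off the last $m$ bits and $y'$ gives exactly the distribution Conjecture~\ref{conj:Hard-RCS} asserts is classically hard --- contradiction. The same reduction applies verbatim to the tomographically-complete case, since such a QNN can be arranged to include the IDQNN-style construction (its measurements and product inputs are tomographically complete by Definition~\ref{def:tomo-comp-gen-QNNs}), so a classical learner-sampler for it again yields a classical sampler for $\{D_n(y')\}$.

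The main obstacle --- and the step deserving the most care --- is making the hardness reduction airtight with respect to approximation and to which input $x$ is queried. Conjecture~\ref{conj:Hard-RCS} as stated concerns \emph{exact} sampling, while a realistic classical learner only promises small TVD and only on inputs drawn from the training distribution $D(x)$; one must either (a) note that $x=0^n$ is placed with $\Omega(1)$ weight in $D(x)$ and argue that a sampler accurate in expectation over $D(x)$ must be accurate on $0^n$ for a non-negligible fraction of instances, or (b) route through the approximate-sampling strengthening of Conjecture~\ref{conj:Hard-RCS} (via anticoncentration and $\#\mathsf{P}$-hardness of output-probability estimation, as in~\cite{gao2017quantum, bermejo2018architectures, haferkamp2020closing, bergamaschi2024quantum}) and propagate the TVD error through Corollary~\ref{cor:inst-deep-QNN-any}. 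A secondary subtlety is confirming that restricting inference to $x=0^n$ does not evade the hardness: this is exactly the point that the $0^n$ input triggers the ``apply $H$, random bit, $Z$-correction'' branch of Lemma~\ref{lemma:deep-from-shallow} at every layer, so that the shallow circuit genuinely probabilistically realizes the deep family without feedforward, and Corollary~\ref{cor:inst-deep-QNN-any} is precisely the statement that on $x=0^n$ the output is a faithful sampler of $\{D_n(y')\}$. Once these error-handling points are pinned down, both theorems follow by the reduction sketched above.
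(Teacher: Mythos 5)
Your proposal follows essentially the same route as the paper in both directions: quantum easiness via the single-qubit $\arccos$ estimator for each $\theta_i$ extracted from samples where bit $i$ of $x$ is $0$ and its graph-neighbors are $1$ (occurring with constant probability for $D=\mathcal{O}(1)$), followed by constant-depth quantum inference; and classical hardness by invoking Corollary~\ref{cor:inst-deep-QNN-any} to turn a hypothetical classical learner-and-sampler evaluated at $x=0^n$ into a sampler for $\{D_n(y')\}$, contradicting Conjecture~\ref{conj:Hard-RCS}. One small correction: your parenthetical that the training data ``can be produced classically'' is false — sampling $y$ for $x=0^n$ (which carries constant weight in $D(x)$) is exactly the classically hard problem — but this does not damage the reduction, since the paper (and Conjecture~\ref{conj:Hard-RCS} itself) works with non-uniform polynomial-size classical circuits, so the polynomial-size dataset is simply hard-wired as advice. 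Your flag about exact versus approximate sampling is legitimate and is the same caveat the paper acknowledges, deferring the TVD-robust version to the counting-hardness arguments of the cited works.
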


\subsection{Proof of Theorem~\ref{thm:qadv-tomo-shallow-qnn}~and~\ref{thm:qadv-inst-deep-qnn}}

We first establish quantum easiness for both theorems, then proceed to prove classical hardness.

\subsubsection{Proof of quantum easiness for Theorem~\ref{thm:qadv-tomo-shallow-qnn}}

The quantum easiness follows from the learning algorithm presented in Appendix~\ref{sec:math-learning-QNC0}.
For simplicity, we consider the set of single-qubit states in Definition~\ref{def:tomo-comp-gen-QNNs} to be $\{\ket{0}, \ket{1}, \ket{+}, \ket{-}, \ket{y+}, \ket{y-}\}$ and the set of two-outcome measurements in Definition~\ref{def:tomo-comp-gen-QNNs} to be $X$, $Y$, or $Z$ basis measurements. Other tomographically-complete single-qubit input states and two-outcome measurements apply similarly after adapting the unbiased classical shadow estimator of Equation~\eqref{eq:unbiased-est-Pauli-alpha} in Lemma~\ref{lem:learn-H-evolved-ops}.

A classical dataset consisting of random samples $(x_i, y_i)$ for uniformly random input bitstring $x_i$ and the corresponding output bitstring $y_i$ sampled according to an unknown generative shallow QNN is equivalent to the classical dataset considered in Appendix~\ref{sec:dataset-learn-shallow-qc}. As a result, Theorem~\ref{thm:main-learn-QNC0} shows that a quantum computer can learn any unknown shallow quantum circuits up to a small diamond distance in polynomial time. More precisely, our theorem gives a classical algorithm that can efficiently learn a classical description of the shallow quantum circuit, but the classical algorithm cannot efficiently generate from the learned quantum circuit. This immediately implies a quantum algorithm can efficiently learn. After learning the shallow quantum circuits, a quantum computer can generate $y$ from the distribution $p_{\mathrm{QNN}}(y | x; \vec{\theta})$ associated with the generative QNN efficiently and accurately with small total variation distance for any input $x$.

\subsubsection{Proof of quantum easiness for Theorem~\ref{thm:qadv-inst-deep-qnn}}
\label{app:qadv-inst-deep-qnn-qease}
The quantum easiness can be shown directly using the structure of the instantaneously-deep QNN. Suppose we are given a dataset consisting of uniformly random input bitstrings $x \in \{0, 1\}^n$ and the corresponding output bitstrings $y \in \{0, 1\}^n$ sampled according to an unknown generative instantaneously-deep QNN.
From Lemma~\ref{lemma:deep-from-shallow}, for each qubit labeled by a $D$-dimensional coordinate $(i_1, \ldots, i_D)$ with $1 \leq i_j \leq n_j, \forall 1 \leq j \leq D$, we can consider the input bitstring when $x_{i_1, \ldots, i_D} = 0$ but $x_{i'_1, \ldots, i'_D} = 1$ for all neighboring qubits $(i'_1, \ldots, i'_D)$ with $1 \leq i'_j \leq n_j, \forall 1 \leq j \leq D$ with $\sum_{j=1}^D |i_j - i'_j| = 1$.
Let $N_{\mathrm{sp}}$ be the number of input bitstrings $x$ that satisfy this condition, and $y_i^{(t)}$ for $t = 1, \ldots, N_{\mathrm{sp}}$ be the measurement outcome on the $i$-th qubit for those input bitstrings.

From the $n_1$-depth quantum circuit simulation on the $(D-1)$-dimensional lattice, we can see that the measurement outcome $y_i$ for the input bitstrings satisfying the above condition is distributed according to the following experiment:
\begin{enumerate}
    \item Prepare a single-qubit $\ket{+}$ state,
    \item Apply an $R_Z(\theta_i)$ rotation with an unknown $\theta_i$,
    \item Measure the single-qubit state in the $X$ basis to obtain $y_i$.
\end{enumerate}
In this single-qubit experiment, we have:
\begin{equation}
    \Pr[y_i = 1] = \frac{1 - \cos(2 \theta_i)}{2}.
\end{equation}
Hence, we can learn $\theta_i$ up to a small error using the estimator:
\begin{equation}
    \hat{\theta}_i = \frac{1}{2} \arccos\left( 1 - 2\frac{1}{N_{\mathrm{sp}}} \sum_{t=1}^{N_{\mathrm{sp}}} y_{i}^{(t)} \right).
\end{equation}
Because the specified condition that all adjacent $x_{i'}$ are $1$ while the target $x_i$ is $0$ is satisfied with a constant probability for $D = \mathcal{O}(1)$ over random input bitstrings $x$, we have $N_{\mathrm{sp}} = \mathcal{O}(N)$ with high probability, where $N$ is the size of the training dataset (i.e., the total number of input bitstrings).
This immediately implies that any instantaneously-deep QNN can be efficiently learned to a small diamond distance.
After learning the instantaneously-deep QNN, a quantum computer can generate $y$ from the distribution $p_{\mathrm{QNN}}(y | x; \vec{\theta})$ associated with the generative QNN accurately with small total variation distance for any input bitstring $x \in \{0, 1\}^n$.

\subsubsection{Proof of classical hardness for Theorem~\ref{thm:qadv-inst-deep-qnn}.}

We first establish the classical hardness for learning generative instantaneously-deep QNNs (Theorem~\ref{thm:qadv-inst-deep-qnn}). Then, we use a result established in this proof to demonstrate the classical hardness for learning generative tomographically-complete shallow QNNs (Theorem~\ref{thm:qadv-tomo-shallow-qnn}).

We prove classical hardness for learning generative instantaneously-deep QNNs in Definition~\ref{def:inst-deep-QNN} by contradiction. Suppose that it is classically easy to learn generative models created by any family of polynomial-size instantaneously-deep QNNs. Consider the family of polynomial-size quantum circuits $\{C_n\}_{n \geq 1}$ from Conjecture~\ref{conj:Hard-RCS} with $\{D_n(y')\}_{y'}$ constructed from $C_n$ as in Definition~\ref{def:family-dqc}.  From Corollary~\ref{cor:inst-deep-QNN-any}, there exists a family of polynomial-size instantaneously-deep QNNs such that generating an output bitstring $y$ from the QNNs with the all-zero input bitstring $x$ is equivalent to sampling a uniformly random bitstring $y'$ and an output bitstring resulting from measuring the output state $D_n(y') \ket{0^n}$ in the standard basis. The classical easiness to learn generative models created by any family of polynomial-size instantaneously-deep QNNs, including the one associated with $D_n(y') \ket{0^n}$, implies that for any $n \geq 1$, there exists a $\mathrm{poly}(n)$-time classical algorithm such that given a classical dataset consisting of  $\mathrm{poly}(n)$ pairs of $(x_i, y_i)$ where $y_i$ is sampled according to the generative model $p(y | x_i)$ created by the instantaneously-deep QNN, the classical algorithm can sample from $p(y | x)$ for any $x$ in $\mathrm{poly}(n)$ time.

From this, we can combine both the polynomial-size classical dataset and the polynomial-time classical sampling algorithm into a family of polynomial-size classical circuit (or a polynomial-time classical algorithm with polynomial-size advice string). This family of polynomial-size classical circuits can take $n \geq 1$, the all-zero input bitstring $x$, and many random bits as input and output a uniformly random bitstring $y'$ and an output bitstring resulting from measuring the output state $D_n(y') \ket{0^n}$ in the standard basis. This contradict with Conjecture~\ref{conj:Hard-RCS} that no families of polynomial-size classical circuits can take $n \geq 1$ and many random bits as input and output a uniformly random bitstring $y'$ and an output bitstring resulting from measuring the output state $D_n(y') \ket{0^n}$ in the standard basis.

Together, we have shown that under Conjecture~\ref{conj:Hard-RCS}, there do not exist polynomial-size classical circuits for sampling from the generative model $p(y | x)$ created by polynomial-size instantaneously-deep QNNs with the all-zero input bitstring. Furthermore, this implies that it is classically hard to learn generative models defined by polynomial-size instantaneously-deep QNNs.

\subsubsection{Proof of classical hardness for Theorem~\ref{thm:qadv-tomo-shallow-qnn}}

The classical hardness in Theorem~\ref{thm:qadv-tomo-shallow-qnn} is proven by first proving the classical hardness of a variant of instantaneously-deep QNNs that is tomographically-complete. We begin by defining tomographically-complete instantaneously-deep QNNs:

\begin{definition}[Tomographically-complete instantaneously-deep QNNs]
Given any number of qubits $n$ and a connectivity graph $G$ over $n$ qubits, a tomographically-complete instantaneously-deep QNN is an $n$-qubit circuit with geometry defined by $G$. The QNN has $n$ real parameters $\vec{\theta} \in \mathbb{R}^n$, takes an input bitstring $x \in \{ 0, 1 \}^{5n}$, and generates an output bitstring $y \in \{0, 1\}^n$.  Consider $x$ to be $n$ blocks of $5$ bits. The $i$-th block of the bitstring $x$ determines:
\begin{itemize}
    \item The input state for the $i$-th qubit chosen from $\{ \ket{0}, \ket{1}, \ket{+}, \ket{-}, \ket{y+}, \ket{y-} \}$
    \item The measurement basis for the $i$-th qubit chosen from $\{X, Y, Z\}$
\end{itemize}
The distribution $p(y | x)$ associated with the tomographically-complete instantaneously-deep QNN can be sampled as follows:
\begin{enumerate}
    \item Prepare the $n$-qubit product input state according to bitstring $x$.
    \item Apply a single layer of $R_Z(\theta_i)$ rotation gates to all $n$ qubits, where $\{\theta_i\}_{i=1}^n$ are the $n$ trainable parameters, and $R_Z(\theta_i) = e^{-i\theta_i Z/2}$.
    \item Apply a layer of $CZ$ gates on all edges in the connectivity graph $G$.
    \item Measure the $n$-qubit final state in the Pauli basis according to bitstring $x$.
\end{enumerate}
This quantum circuit is shallow with $\mathcal{O}(1)$ circuit depth regardless of the number of qubits.
\end{definition}

Suppose that it is classically easy to learn generative models created by any polynomial-size tomographically-complete shallow QNNs.
Because a polynomial-size tomographically-complete instantaneously-deep QNN is a polynomial-size tomographically-complete shallow QNN, it is classically easy to learn generative models created by any polynomial-size tomographically-complete instantaneously-deep QNNs.
This implies that there exists a polynomial-time classical algorithm such that given a polynomial-size dataset consisting of pairs $(x_i, y_i)$ where $y_i$ is sampled according to the generative model $p(y | x_i)$ created by a polynomial-size tomographically-complete instantaneously-deep QNN, the classical algorithm can sample from $p(y | x)$ for any input bitstring $x$ in polynomial time.
From this, we can combine both the polynomial-size classical dataset and the polynomial-time classical sampling algorithm into a polynomial-size classical circuit to obtain that there exists a polynomial-size classical circuit $C$ such that given any input bitstring $x$ and many random bits, the classical circuit $C$ can sample the output bitstring $y$ according to the generative model $p(y | x)$.

Now, if we consider the input bitstring $x$ to specify the preparation of the state $\ket{+^n}$ and the measurement in the Pauli-$X$ basis for all qubits, then sampling output bitstring $y$ from the generative model $p(y | x)$ defined by the tomographically-complete instantaneously-deep QNN is the same as sampling from the corresponding instantaneously-deep QNN with the all-zero input bitstring in Definition~\ref{def:inst-deep-QNN}.
Hence, there exist polynomial-size classical circuits for sampling from the generative model created by polynomial-size instantaneously-deep QNNs with the all-zero input bitstring. As we have seen earlier, under Conjecture~\ref{conj:Hard-RCS}, there do not exist polynomial-size classical circuits for sampling from the generative model $p(y | x)$ created by polynomial-size instantaneously-deep QNNs with the all-zero input bitstring. As a result, under Conjecture~\ref{conj:Hard-RCS}, it is classically hard to learn generative models defined by polynomial-size tomographically-complete shallow QNNs.

\section{Generative quantum advantage for speeding up simulation}
\label{app:quantum-adv-compressing}

\begin{definition}[The task of speeding up physical simulation]
The input to the problem is a circuit $C$ that implements an $n$-qubit target unitary $U$ for simulating a physical system using $\mathrm{poly}(n)$ two-qubit gates, $\mathrm{poly}(n)$ ancilla qubits, and $\mathrm{poly}(n)$ depth. We are given the promise that there exists a more efficient approach for implementing the unitary $U$ using an $n$-qubit circuit of $\mathcal{O}(1)$ depth. The goal is to generate an output circuit $C'$ over $n$ qubits with only $\mathcal{O}(1)$ circuit depth that implements a quantum channel inverse-polynomially close to the target unitary $U$.
\end{definition}

\begin{theorem}[Generative quantum advantage for speeding up physical simulation]\label{thm:quantum-adv-compressing}
Under the standard assumption that $\mathsf{BPP} \neq \mathsf{BQP}$, the problem of generating a more efficient circuit $C'$ to speed up an input circuit $C$ is efficiently solvable on a quantum computer but classically intractable. Hence, there exists a generative quantum advantage for the task of speeding up physical simulation.
\end{theorem}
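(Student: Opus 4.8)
The plan for the positive direction is to reduce the task to the shallow-circuit learning algorithm of Theorem~\ref{thm:main-learn-QNC0}. A quantum computer runs the given polynomial-size circuit $C$ on random single-qubit stabilizer product inputs and measures each output qubit in a random Pauli basis, thereby assembling the randomized measurement dataset of Appendix~\ref{sec:dataset-learn-shallow-qc} with $N=\mathcal{O}(n^2\log(n/\delta)/\varepsilon^2)$ samples; this is the only step that touches the device, and it costs only $N$ executions of the poly-size circuit $C$. Because $U$ is promised to have an $\mathcal{O}(1)$-depth implementation, every Heisenberg-evolved Pauli $U^\dagger P_i U$ has $\mathcal{O}(1)$ support, so Lemma~\ref{lem:learn-H-evolved-ops} recovers these operators (and their supports) classically, constant-size local inversions $V_i$ are found by brute-force search over a constant-dimensional space, and local-inversion sewing (Theorem~\ref{thm:sewing-accuracy}) produces an $\mathcal{O}(1)$-depth circuit $C'$ (over $2n$ qubits, $n$ of them ancilla initialized to $\ket{0}$) whose induced channel on the system satisfies $\|\mathcal{E}_{C'}-\mathcal{U}\|_\diamond\le\varepsilon$. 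Choosing $\varepsilon=1/\mathrm{poly}(n)$ keeps $N$ polynomial, so the procedure is efficient on a quantum computer.

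\textbf{Classical hardness.} For the converse I would prove the contrapositive: a classical polynomial-time algorithm $\mathcal{A}$ for the compression task forces $\mathsf{BQP}\subseteq\mathsf{BPP}$. Fix an arbitrary $L\in\mathsf{BQP}$ with uniform circuit family $\{Q_x\}$. First amplify $Q_x$ (by $\Theta(n)$ repetitions and a reversible majority vote) into a circuit $\widetilde{Q}_x$ on a $\mathrm{poly}(n)$-qubit ancilla register whose designated ``answer'' qubit $r$ holds $L(x)$ with probability $\ge 1-2^{-n}$. Pick once and for all a fixed $\mathcal{O}(1)$-depth marker unitary $W$ on the $n$-qubit system register with $\langle 0^n|W^\dagger Z_1 W|0^n\rangle=0$, e.g.\ $W=H_1$ so $W^\dagger Z_1 W=X_1$. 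Then set
\begin{equation}
C \;:=\; \widetilde{Q}_x^{\dagger}\;\Lambda_r(W)\;\widetilde{Q}_x ,
\end{equation}
where $\Lambda_r(W)$ applies $W$ to the system controlled on qubit $r$. By construction $C$ has $\mathrm{poly}(n)$ depth and $\mathrm{poly}(n)$ ancilla, and—because $\widetilde{Q}_x^\dagger$ uncomputes $\widetilde{Q}_x$ while $\Lambda_r(W)$ only reads $r$—$C$ implements, up to diamond error $2^{-\Omega(n)}$, the channel $\rho\mapsto W^{L(x)}\rho\,(W^{L(x)})^\dagger$ on the $n$ system qubits, whose unitary $U=W^{L(x)}$ is genuinely $\mathcal{O}(1)$-depth (it is $W$ if $L(x)=1$ and the identity otherwise). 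Thus $C$ satisfies the promise of the task. Feeding $C$ to $\mathcal{A}$ returns an $\mathcal{O}(1)$-depth $C'$ with $\|\mathcal{E}_{C'}-\mathcal{U}\|_\diamond\le 1/\mathrm{poly}(n)$, so $\mathrm{Tr}[Z_1\,\mathcal{E}_{C'}(\ketbra{0^n}{0^n})]$ lies within $1/\mathrm{poly}(n)+2^{-\Omega(n)}$ of $\langle 0^n|(W^{L(x)})^\dagger Z_1 W^{L(x)}|0^n\rangle$, which is $1$ when $L(x)=0$ and $0$ when $L(x)=1$. Since $C'$ has constant depth, the Heisenberg-evolved $Z_1$ is supported on $\mathcal{O}(1)$ qubits and this expectation can be computed exactly by a light-cone calculation in classical polynomial (indeed constant) time. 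Hence $L\in\mathsf{BPP}$, and as $L$ was an arbitrary $\mathsf{BQP}$ language, $\mathsf{BPP}=\mathsf{BQP}$, contradicting the hypothesis; therefore no such $\mathcal{A}$ exists, and combined with the quantum easiness this is the claimed generative quantum advantage.

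\textbf{Main obstacle.} The delicate part is the hardness reduction, not the easiness. The promise confines $\mathcal{A}$ to inputs whose implemented unitary is constant-depth, so one cannot simply hand it a raw $\mathsf{BQP}$ circuit; the real work is to manufacture, from an arbitrary $\mathsf{BQP}$ instance, a polynomial-depth $C$ whose net action is \emph{genuinely} $\mathcal{O}(1)$-depth yet whose faithful constant-depth reconstruction still carries the bit $L(x)$ in a \emph{local} (hence classically readable) observable. The compute--conditionally-act--uncompute construction is precisely the device that reconciles these: conjugating the fixed constant-depth marker $W$ by a polynomial-depth $\widetilde{Q}_x$ that cleanly returns its ancillas to $\ket{0}$ collapses the net unitary back to constant depth, while a constant-depth circuit is classically tractable for local observables so the answer pops out of $C'$. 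I would then carefully track three error sources—the amplification/uncomputation error $2^{-\Omega(n)}$, any circuit-synthesis error in $\widetilde{Q}_x$, and the $1/\mathrm{poly}(n)$ compression error—checking they all remain comfortably below the constant gap between the two cases; and on the easiness side I would note that the $2n$-qubit ancilla-assisted output of Theorem~\ref{thm:main-learn-QNC0} should be read as meeting the statement's notion of an $\mathcal{O}(1)$-depth circuit (or invoke the geometrically-local refinement when $U$ is ancilla-free).
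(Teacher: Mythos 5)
Your proposal is correct and follows essentially the same route as the paper: the easiness direction is the identical reduction to the shallow-circuit learning algorithm of Appendix~\ref{sec:math-learning-QNC0}, and the hardness direction is the same compute--conditionally-act--uncompute trick that hides a (majority-amplified) $\mathsf{BQP}$ answer bit inside a polynomial-depth circuit whose net action is a constant-depth unitary on the system, read out classically via the light cone of $C'$. The only differences are cosmetic: the paper uses $n$ parallel copies of $D$ with a majority-controlled $X_1$ and phrases the uncomputation error via postselection, whereas you use a controlled-$W$ with $W=H_1$ and a diamond-norm bound, and you evaluate $\Tr[Z_1\,\mathcal{E}_{C'}(\ketbra{0^n}{0^n})]$ rather than testing whether $C'^\dagger Z_1 C'$ is closer to $\pm Z_1$.
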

\begin{proof}
We establish this theorem by proving the quantum easiness and the classical hardness for the task of speeding up physical simulation.

\vspace{1em}
\noindent \textbf{Quantum easiness:} We first establish quantum easiness as follows. We begin by implementing the polynomial-size input circuit $C$ on a quantum computer. Let $U$ be the target unitary associated with circuit $C$. We collect the training set defined in Section~\ref{sec:dataset-learn-shallow-qc} by executing circuit $C$ a polynomial number of times. In each run, we input different random product states and measure each qubit of the output state in a random Pauli basis. By the problem's promise, there exists an unknown $n$-qubit circuit $C^*$ of depth $\mathcal{O}(1)$ that also implements the target unitary $U$. Using the algorithm presented in Section~\ref{sec:math-learning-QNC0}, we can leverage the collected dataset to learn an $\mathcal{O}(1)$-depth circuit $C'$ such that the quantum channel implemented by $C'$ is inverse-polynomially close to the target unitary $U$. Since all quantum operations are performed efficiently, we have successfully generated an output circuit $C'$ that solves the task of speeding up physical simulation.

\vspace{1em}
\noindent \textbf{Classical hardness:} For classical hardness, we proceed by contradiction. Assume there exists a polynomial-time classical algorithm that solves this task. We will prove that this assumption implies the existence of a polynomial-time classical algorithm that can decide whether the probability of measuring $1$ on the first qubit of any polynomial-size quantum circuit is above $2/3$ or below $1/3$, given the promise that one of these two cases holds. This would contradict the assumption that $\mathsf{BPP} \neq \mathsf{BQP}$.

Given any polynomial-size quantum circuit $D$ over $n$ qubits, consider another quantum circuit $C$ over $n(n+1)$ qubits defined as:
\begin{equation}
    \bra{0^{n^2}} \cdot \left( D^{\dagger, \otimes n} \otimes \mathsf{Id} \right) \cdot \mathsf{MAJ}{1, n+1, \ldots, n(n-1)+1 \rightarrow n^2+1} \cdot \left( D^{\otimes n} \otimes \mathsf{Id} \right) \cdot \left( \ket{0^{n^2}} \otimes \ket{\psi} \right),
\end{equation}
where $\ket{\psi}$ is an $n$-qubit input state, $\mathsf{MAJ}{1, n+1, \ldots, n(n-1)+1 \rightarrow n^2+1}$ is a quantum circuit that computes the majority bit of the $n$ qubits indexed by $1$, $n+1$, \ldots, $n(n-1)+1$ and applies an $X$ gate to the $(n^2+1)$-th qubit, and $\bra{0^{n^2}}$ represents postselection on the event that the first $n^2$ qubits are all in state $\ket{0}$.
This construction implements a linear map over $n$ qubits. Due to the uncomputation step $D^{\dagger, \otimes n}$, this linear map approximates a unitary operation with exponentially small error (as the postselection succeeds with only an exponentially small failure probability). Therefore, circuit $C$ implements an $n$-qubit unitary $U$ up to an exponentially small error. Moreover, the $n$-qubit unitary $U$ has a specific structure: it acts only on the first qubit and is equivalent to an $X$ gate if the probability of measuring $1$ on the first qubit in circuit $D$ exceeds $2/3$, and equivalent to the identity gate if this probability is below $1/3$. This specific structure follows from the $\mathsf{MAJ}_{1, n+1, \ldots, n(n-1)+1 \rightarrow n^2+1}$ operation.

If a polynomial-time classical algorithm could solve this task by generating a constant-depth quantum circuit $C'$ that approximately implements the $n$-qubit unitary $U$, we could use $C'$ to determine whether $U$ is closer to an $X$ or an $I$ gate on the first qubit. We would simply take the Pauli operator $Z_1$ acting on the first qubit and classically compute the Heisenberg-evolved observable $C'^{\dagger} Z_1 C'$. This calculation is classically efficient because $C'$ has constant depth. By determining whether $C'^{\dagger} Z_1 C'$ is closer to $-Z_1$ or $Z_1$, we can decide whether $U$ more closely resembles an $X$ or an $I$ gate on the first qubit. This decision reveals whether the probability of measuring $1$ on the first qubit of the original quantum circuit $D$ is above $2/3$ or below $1/3$, which would imply $\mathsf{BPP} = \mathsf{BQP}$, contradicting our assumption.
\end{proof}

\end{document}